\newcites{out}{Additional References for the Appendix}
\newcommand{\@chapapp}{\relax}\makeatother
\newif\iflong
\newcommand{\appendixref}[1]{\iflong Appendix~\ref{#1}\else \cite{fullversion}\fi}
\newcommand{\proofsub}[1]{\noindent \textbf{(#1)}}
\newcommand{\CSA}{\textsc{csa}} \newcommand{\GMC}{{\textsc{gmc}}}
\newcommand{\DF}{{\textsc{sp}}}
\newcommand{\ER}{{\textsc{er}}}
\newcommand{\PG}{{\textsc{pg}}}
\newcommand{\MC}{-\textsc{mc}}
\newcommand{\SMC}{\textsc{smc}}
\newcommand{\PSPACE}{{\sc pspace}}
\newcommand{\FIFO}{{\sc fifo}}
\newcommand{\synk}{-synchronisable}
\newcommand{\exref}[1]{\ref{#1}}
\newcommand{\modifmark}{{\smash{$^\dagger$}}}
\newcommand{\asphi}{{\color{green!50!black!95}\varphi}}
\newcommand{\infBA}{\textsc{bi}}
\newcommand{\infOBI}{\textsc{obi}}
\newcommand{\infIBI}{\textsc{ibi}}
\newcommand{\infDIBI}{\textsc{sibi}}
\newcommand{\OBI}[1]{{#1}\textsc{-\infOBI}}
\newcommand{\IBI}[1]{{#1}\textsc{-\infIBI}}
\newcommand{\BA}[1]{{#1}\textsc{-\infBA}}
\newcommand{\DIBI}[1]{{#1}\textsc{-\infDIBI}}
\newcommand{\CIBI}[1]{{#1}\textsc{-cibi}}
\newcommand{\dependsin}[4]{#1 \vdash #2 \prec_{#3} #4}
\newcommand{\bindep}[3]{#1 \vdash #2 \prec #3}
\newcommand{\okmark}{{\color{green!50!black}\texttt{yes}}}
\newcommand{\komark}{{\color{red}\texttt{no}}}
\newcommand{\defi}{\overset{{\text{def}}}{=}}
\newcommand{\mmdef}{\defi}
\newcommand{\st}{\, \mid \,}
\newcommand{\qst}{:}
\DeclareMathOperator{\concat}{\cdot}
\newcommand{\naturals}{\mathbb{N}}
\newcommand{\ptp}[1]{{\color{blue!70!black!95}\mathtt{#1}}}
\newcommand{\msg}[1]{\mathit{#1}}
\newcommand{\p}{\ptp{p}}
\newcommand{\q}{\ptp{q}}
\newcommand{\rr}{\ptp{r}}
\newcommand{\s}{\ptp{s}}
\newcommand{\kleene}[1]{#1^{\ast}}
\newcommand{\subj}[1]{\mathit{subj}(#1)}
\newcommand{\obj}[1]{\mathit{obj}(#1)}
\newcommand{\chan}[1]{\mathit{chan}(#1)}
\newcommand{\action}{\ell} \newcommand{\acts}{\phi} \newcommand{\actsb}{\psi} \newcommand{\paset}{\Psi} \newcommand{\word}{w}
\newcommand{\emptyw}{\epsilon}
\newcommand{\vecemptyw}{\vec{\emptyw}}
\newcommand{\PSet}{\mathcal{P}} \newcommand{\CSet}{\mathcal{C}} \newcommand{\ASet}{\mathcal{A}} \newcommand{\ASetSend}{\mathcal{A}_!} \newcommand{\ASetRcv}{\mathcal{A}_?} \newcommand{\ASetC}{\ASet^\ast}
\newcommand{\ASetSendC}{\kleene{\ASetSend}}
\newcommand{\ASetRcvC}{\kleene{\ASetRcv}}
\newcommand{\ASigma}{\Sigma} \newcommand{\ASigmaC}{\ASigma^{\ast}} 
\newcommand{\PSENDRECEIVE}[2]{\ptp{#1} \dagger \msg{#2}}
\newcommand{\PSEND}[2]{\ptp{#1} ! \msg{#2}}
\newcommand{\PRECEIVE}[2]{\ptp{#1} ? \msg{#2}}
\newcommand{\csconf}[2]{(\vec{#1}; \vec{#2})}
\newcommand{\stablecsconf}[1]{(\vec{#1}; \vecemptyw)}
\newcommand{\TRANSS}[1]{\xrightarrow{#1}}
\newcommand{\TRANS}{\TRANSS{}}
\newcommand{\TRANSR}{\TRANS\negthickspace^\ast}
\newcommand{\bTRANSS}[2]{\xrightarrow{#1}_{#2}}
\newcommand{\bTRANSR}[1]{\TRANS\negthickspace^\ast_{#1}}
\newcommand{\kTRANSS}[1]{\bTRANSS{#1}{k}}
\newcommand{\rkTRANSS}[1]{\rbTRANSS{#1}{k}}
\newcommand{\rbTRANSS}[2]{\xrightharpoondown{#1}_{#2}}
\newcommand{\rkTRANSR}{\rkTRANSS{}\negthickspace^\ast}
\newcommand{\kTRANSR}{\kTRANSS{}\negthickspace^\ast}
\newcommand{\RS}{\mathit{RS}}
\newcommand{\grock}{\msg{r}}
\newcommand{\gpaper}{\msg{p}}
\newcommand{\gscissors}{\msg{s}}
\newcommand{\fontnode}[1]{\texttt{#1}}
\newcommand{\mbounded}{match-bounded}
\newcommand{\bounded}[2]{#1 \!\! \mid_{#2}}
\newcommand{\kexchange}[2]{#1 \!\! \parallel_{#2}}
\newcommand{\equivclass}[2]{[#1]_{#2}}
\DeclareMathOperator{\bisim}{\sim}
\DeclareMathOperator{\wbisim}{\approx}
\DeclareMathOperator{\resche}{\Bumpeq}
\newcommand{\nclosed}[3]{#2 \text{ is } \ensuremath{#1}\text{-closed for } #3}
\newcommand{\kclosed}[2]{\nclosed{k}{#1}{#2}}
\newcommand{\pclosed}[2]{\nclosed{k{+}1}{#1}{#2}}
\DeclareMathOperator{\eqpeerop}{\asymp}
\newcommand{\onpeer}[2]{\proj{#2}{#1}}
\newcommand{\eqpeer}[2]{#1 \eqpeerop #2}
\newcommand{\TS}[2]{\mathit{TS}_{#1}(#2)}
\newcommand{\kTS}[1]{\TS{k}{#1}}
\newcommand{\RTSACRO}{\mathit{RTS}}
\newcommand{\RTS}[2]{\RTSACRO_{#1}(#2)}
\newcommand{\kRTS}[1]{\RTS{k}{#1}}
\newcommand{\supermetaproj}[4]{#1_{#2}^{#3}(#4)}
\newcommand{\metaproj}[3]{\supermetaproj{\pi}{#1}{#2}{#3}}
\newcommand{\proj}[2]{\metaproj{\ptp{#2}}{}{#1}} 
\newcommand{\epsproj}[2]{\supermetaproj{\pi}{\ptp{#2}}{\epsilon}{#1}}
\newcommand{\sndproj}[2]{\supermetaproj{\sigma}{\ptp{#2}}{!}{#1}}
\newcommand{\sndrcvproj}[3]{\metaproj{\ptp{#2}}{#3}{#1}}
\newcommand{\esndproj}[2]{\metaproj{\ptp{#2}}{!}{#1}} \newcommand{\ercvproj}[2]{\metaproj{\ptp{#2}}{?}{#1}}
\newcommand{\lts}{\mathcal{T}}
\newcommand{\flabel}[1]{\ifthenelse{\equal{#1}{}}{\lambda}{\lambda(#1)}}
\newcommand{\sync}[1]{\mathit{sync}(#1)}
\newcommand{\metaszero}{t_0}
\newcommand{\spartition}[1]{\mathit{partition}(#1)}
\newcommand{\algoAssign}[0]{\ensuremath{\leftarrow}\ }
\newcommand{\sstack}{\mathit{stack}}
\newcommand{\svisited}{\mathit{visited}}
\newcommand{\sacc}{\mathit{accum}}
\newcommand{\pop}[1]{\mathit{pop}(#1)}
\newcommand{\push}[2]{\mathit{push}(#1,#2)}
\newcommand{\head}[1]{\mathit{head}(#1)}
\newcommand{\tail}[1]{\mathit{tail}(#1)}
\newcommand{\successor}[2]{\mathit{succ}(#1,#2)}
\newcommand{\pair}[2]{\langle #1, #2 \rangle}
\newcommand{\appfun}[2]{\mathit{f}(#1,#2)}
\newcommand{\rcvdir}[1]{\textit{rcv-dir}(#1)}
\newcommand{\snddir}[1]{\textit{snd-dir}(#1)}
\newcommand{\gatedistancein}{3pt}
\newcommand{\gatedistanceinand}{2pt}
\tikzset{
    every state/.style={minimum size=1pt,inner sep=1.2pt, initial text={}},
  mycfsm/.style={
    font=\scriptsize,
    initial where=left,
    initial distance=0.25cm,
    ->,>=stealth,auto, node distance=0.8cm and 0.8cm,
    scale=1, every node/.style={transform shape},
    baseline=(current  bounding  box.center)
  },
  ogate/.style = {
    diamond, draw, fill=white,
    minimum size=4mm,
    inner sep=0pt,
    postaction={path picture={        \draw[black]
        ([yshift=\gatedistancein]path picture bounding box.south) -- ([yshift=-\gatedistancein]path picture bounding box.north)
        ([xshift=-\gatedistancein]path picture bounding box.east) -- ([xshift=\gatedistancein]path picture bounding box.west)
        ;}}, drop shadow},
  agate/.style={draw,rectangle,
    minimum size=3mm,
    inner sep=0pt,
    fill=white,
    postaction={path picture={        \draw[black]
        ([yshift=\gatedistanceinand]path picture bounding box.south) --
        ([yshift=-\gatedistanceinand]path picture bounding box.north) ;}}, drop shadow},
  source/.style={draw,circle,fill=white,
    minimum size=3mm,
    inner sep=0pt, drop shadow},
  sink/.style={draw,circle,double,fill=white,
    minimum size=3mm,
    inner sep=0pt, drop shadow},
  intera/.style = {rectangle, draw=black, align=center, fill=white, rounded corners=0.1cm,
    minimum height=12,
    inner sep=2pt, drop shadow},
  line/.style = {draw,->, rounded corners=0.07cm,>=latex},
  venn/.style={preaction={fill, #1},opacity=0.6,anchor=south},
  exnode/.style={circle,draw=black,inner sep=1pt,fill=gray!20},
}
\begin{document}
\title{Verifying Asynchronous Interactions via Communicating Session
  Automata}

\renewcommand{\varnothing}{\emptyset}

\titlerunning{Verifying Asynchronous Interactions via Communicating Session
  Automata}
\author{Julien Lange\inst{1}
  \and
  Nobuko Yoshida \inst{2}}
\institute{University of Kent, Canterbury, United Kingdom 
  \and
  Imperial College London, London, United Kingdom}

{\def\addcontentsline#1#2#3{}\maketitle}

\begin{abstract}
  This paper proposes a sound procedure to verify properties of
communicating session automata (\CSA), i.e., communicating automata
that include multiparty session types.
We introduce a new \emph{asynchronous} compatibility property for
\CSA, called $k$-multiparty compatibility ($k$\MC), which is a strict
superset of the synchronous multiparty compatibility used in theories
and tools based on session types.
It is decomposed into two bounded properties: ($i$) a condition called
{\em $k$-safety} which guarantees that, within the bound, all sent
messages can be received and each automaton can make a move; 
and ($ii$) a condition called {\em $k$-exhaustivity} which guarantees
that all $k$-reachable send actions can be fired within the bound.
We show that $k$-exhaustivity implies existential boundedness, and
\emph{soundly and completely} characterises systems where each
automaton behaves equivalently under bounds greater than or equal to
$k$.
We show that checking $k$\MC\ is \PSPACE{}-complete, and demonstrate
its scalability empirically over large systems (using partial order
reduction).

   \end{abstract}

\section{Introduction}\label{sec:intro}

Communicating automata are a Turing-complete model of asynchronous
interactions~\cite{cfsm83} that has become one of the most prominent
for studying point-to-point communications over unbounded
first-in-first-out channels.
This paper focuses on a class of communicating automata, called
\emph{communicating session automata} (\CSA), which strictly includes
automata corresponding to \emph{asynchronous multiparty session
  types}~\cite{HondaYC08}.
Session types originated as a typing discipline for the
$\pi$-calculus~\cite{THK94,HVK98}, where a session type dictates the
behaviour of a process wrt.\ its communications.
Session types and related theories have been applied to the
verification and specification of concurrent and distributed systems
through their integration in several mainstream programming languages,
e.g., Haskell~\cite{LindleyM16,OrchardY16}, Erlang~\cite{NY2017},
F$\sharp$~\cite{NHYA2018},
Go~\cite{LangeNTY17,NgY16,LNTY2018,CastroHJNY19},
Java~\cite{HuY16,HY2017,KouzapasDPG16,SivaramakrishnanQZNE13},
OCaml~\cite{Padovani17}, C~\cite{NgYH12},
Python~\cite{DemangeonHHNY15,NY2017b,NBY2017},
Rust~\cite{JespersenML15}, and Scala~\cite{SY2016,ScalasDHY17}.
Communicating automata and asynchronous multiparty session
types~\cite{HondaYC08} are closely related: the latter can be seen as
a syntactical representation of the former~\cite{DenielouY12} where a
sending state corresponds to an internal choice and a
receiving state to an external choice.
This correspondence between communicating automata and multiparty
session types has become the foundation of many tools centred on
session types, e.g., for generating communication API from multiparty
session (global) types~\cite{HuY16,HY2017,ScalasDHY17,NHYA2018}, for
detecting deadlocks in message-passing
programs~\cite{TaylorTWD16,NgY16}, and for monitoring session-enabled
programs~\cite{BocchiCDHY17,NY2017,DemangeonHHNY15,NY2017b,NBY2017}.
These tools rely on a property called \emph{multiparty
  compatibility}~\cite{DY13,LTY15,BLY15}, which guarantees that
communicating automata representing session types interact correctly,
hence enabling the identification of correct protocols or the
detection of errors in endpoint programs.
Multiparty compatible communicating automata validate two essential
requirements for session types frameworks: every message that is sent
can be eventually received and each automaton can always eventually
make a move.
Thus, they satisfy the \emph{abstract} safety invariant $\asphi$ for
session types from~\cite{ScalasY19}, a prerequisite for session type
systems to guarantee safety of the typed processes.
Unfortunately, multiparty compatibility suffers from a severe
limitation: it requires that each execution of the system has a
synchronous equivalent. Hence, it rules out many correct systems.
Hereafter, we refer to this property as \emph{synchronous multiparty
  compatibility} (\SMC)
and explain its main limitation with Example~\ref{ex:intro}.

\begin{example}\label{ex:intro}
  The system in Figure~\ref{fig:running-example} contains an
  interaction pattern that is {\em not} supported by any definition of
  \SMC~\cite{DY13,LTY15,BLY15}.
    It consists of a client ($\ptp{c}$), a server ($\ptp{s}$), and a
  logger ($\ptp{l}$), which communicate via unbounded \FIFO\ channels.
    Transition $\PSEND{sr}{a}$ denotes that $\ptp{s}$ender puts
  (asynchronously) message $\msg{a}$ on channel $\ptp{sr}$; and
  transition $\PRECEIVE{sr}{a}$ denotes the consumption of $\msg{a}$
  from channel $\ptp{sr}$ by $\ptp{r}$eceiver.
        The $\ptp{c}$lient sends a $\msg{req}$uest and some $\msg{data}$ in
  a fire-and-forget fashion, before waiting for a response from the
  $\ptp{s}$erver.
                                        Because of the presence of this simple pattern, the system cannot be
  executed synchronously (i.e., with the restriction that a send
  action can only be fired when a matching receive is enabled), hence
    it is rejected by all definitions of \SMC\ from previous works, even
  though the system is safe (all sent messages are received and no
  automaton gets stuck).
              \end{example}
Synchronous multiparty compatibility is reminiscent of a strong form
of existential boundedness.
Among the existing sub-classes of communicating automata
(see~\cite{Muscholl10} for a survey), existentially $k$-bounded
communicating automata~\cite{GenestKM07} stand out because they can be
model-checked~\cite{bollig2010,GenestKM06} and they restrict the model
in a natural way: any execution can be rescheduled such that the
number of pending messages \emph{that can be received} is bounded by
$k$.
However, existential boundedness is generally
\emph{undecidable}~\cite{GenestKM07}, even for a fixed bound $k$. This
shortcoming makes it impossible to know when theoretical results are
applicable.

To address the limitation of \SMC\ and the shortcoming of existential
boundedness, we propose a (decidable) sufficient condition for
existential boundedness, called \emph{$k$-exhaustivity}, which serves
as a basis for a wider notion of new compatibility, called
\emph{$k$-multiparty compatibility} ($k$\MC) where
$k \in \naturals_{>0}$ is a bound on the number of pending messages in
each channel.
A system is $k$\MC\ when it is ($i$) \emph{$k$-exhaustive}, i.e., all
$k$-reachable send actions are enabled within the bound, and ($ii$)
\emph{$k$-safe}, i.e., within the bound $k$, all sent messages can be
received and each automaton can always eventually progress.
For example, the system in Figure~\ref{fig:running-example} is
$k$-multiparty compatible for any $k \in \naturals_{>0}$, hence it
does not lead to communication errors, see
Theorem~\ref{thm:soundness}.
The $k$\MC\ condition is a natural constraint for real-world
systems. 
Indeed any finite-state system is $k$-exhaustive (for $k$ sufficiently
large), while any system that is not $k$-exhaustive (resp.\ $k$-safe)
for any $k$ is unlikely to work correctly.
Furthermore, we show that if a system of \CSA\ validates $k$-exhaustivity,
then each automaton locally behaves equivalently under any bound
greater than or equal to $k$, a property that we call \emph{local
  bound-agnosticity}.
We give a \emph{sound and complete} characterisation of
$k$-exhaustivity for \CSA\ in terms of local
bound-agnosticity, see Theorem~\ref{thm:completeness}.
Additionally, we show that the complexity of checking $k$\MC\ is
\PSPACE-complete (i.e., no higher than related algorithms) and we
demonstrate empirically that its cost can be mitigated through (sound
and complete) partial order reduction.

In this paper, 
we consider \emph{communicating session automata} (\CSA), 
which cover the most common form of asynchronous multiparty session
types~\cite{CDPY2015} (see \emph{Remark~\ref{remark:basic-scribble}}),
and have been used as a basis to study properties and extensions of
session
types~\cite{DY13,HuY16,HY2017,LangeY16,BLY15,LY2017,NY2017,NY2017b,BYY14,NBY2017}.
More precisely, \CSA\ are deterministic automata, whose every state is
either sending (internal choice), receiving (external choice), or
final.
We focus on \CSA\ that preserve the intent of internal and external
choices from session types.
In these
\CSA, whenever an automaton is in a sending state, it can fire any
transition, no matter whether channels are bounded; when it is in a
receiving state then at most one action must be enabled.

\paragraph{Synopsis}
In \S~\ref{sec:cfsm}, we give the necessary background on
communicating automata and their properties, and introduce the
notions of output/input bound independence which guarantee that
internal/external choices are preserved in bounded semantics.
In \S~\ref{sec:kmc}, we introduce the definition of $k$-multiparty
compatibility ($k$\MC) and show that $k$\MC\ systems are safe for
systems which validate the bound independence properties.
In \S~\ref{sec:exist-bounded}, we formally relate existential
boundedness~\cite{kuske14,GenestKM07},
synchronisability~\cite{Bouajjani2018}, and $k$-exhaustivity.
In \S~\ref{sec:implementation} we present an implementation
(using partial order reduction) and an experimental evaluation of our
theory.
We discuss related works in \S~\ref{sec:related} and conclude in
\S~\ref{sec:conc}.

\iflong
Our implementation and benchmark data are available
online~\cite{kmc}.

\else
See~\cite{fullversion} for a full version of this paper (including
proofs and additional examples).
Our implementation and benchmark data are available
online~\cite{kmc}.
\fi

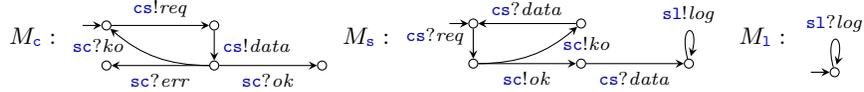
\begin{figure}[t]
  \centering
   \begin{tabular}{ccc}
    $ M_\ptp{c}:$
    \begin{tikzpicture}[mycfsm, node distance = 0.4cm and 1.3cm]
      \node[state, initial, initial where=left] (s0) {};
      \node[state, right=of s0] (s1) {};
      \node[state, below =of s1] (s2) {};
      \node[state, left=of s2] (s3) {};
      \node[state, right=of s2] (s4) {};
            \path
      (s0) edge node [above] {$\PSEND{cs}{req}$} (s1)
      (s1) edge node [right] {$\PSEND{cs}{data}$} (s2)
      (s2) edge [bend left=20] node [left, near end] {$\PRECEIVE{sc}{ko}$} (s0)
      (s2) edge node [below] {$\PRECEIVE{sc}{err}$} (s3)
      (s2) edge node [below] {$\PRECEIVE{sc}{ok}$} (s4)
      ;
    \end{tikzpicture}
    \quad
    &
      $ M_\ptp{s}:$
      \begin{tikzpicture}[mycfsm, node distance = 0.4cm and 1.3cm]
        \node[state, initial, initial where=left] (s0) {};
        \node[state, below=of s0] (s1) {};
        \node[state, right =of s0] (s2) {};
        \node[state, right=of s1] (s3) {};
        \node[state, right=of s3] (s4) {};
                        \path
        (s0) edge node [left, near start] {$\PRECEIVE{cs}{req}$} (s1)
        (s1) edge [bend right=20] node [right,near end] {$\PSEND{sc}{ko}$} (s2)
        (s2) edge node [above] {$\PRECEIVE{cs}{data}$} (s0)
        (s1) edge node [below] {$\PSEND{sc}{ok}$} (s3)
        (s3) edge node [below] {$\PRECEIVE{cs}{data}$} (s4)
        (s4) edge [loop above] node [above] {$\PSEND{sl}{log}$} (s4)
        ;
      \end{tikzpicture}
      \quad
    &
      $M_\ptp{l}:$
      \begin{tikzpicture}[mycfsm, node distance = 0.5cm and 1.3cm]
        \node[state, initial, initial where=left] (s0) {};
                \path
        (s0) edge [loop above] node [above] {$\PRECEIVE{sl}{log}$} (s0)
        ;
      \end{tikzpicture}
  \end{tabular}
          \caption{Client-Server-Logger example.}\label{fig:running-example}
\end{figure}

 \section{Communicating Automata and Bound Independence}
\label{sec:cfsm}
This section introduces notations and definitions of communicating
automata (following~\cite{LTY15,halfduplex}), as well as the notion of
output (resp.\ input) bound independence which enforces the intent of
internal (resp.\ external) choice in \CSA.

Fix a finite set $\PSet$ of \emph{participants} (ranged over by $\p$,
$\q$, $\rr$, $\ptp s$, etc.) and a finite alphabet $\ASigma$.
The set of \emph{channels} is
$\CSet \ \smash{\mmdef} \ \{\p\q \st \p,\q \in \PSet \text{ and }\p \neq \q \}$,
$\ASet \ \smash{\mmdef} \ \CSet \times \{!,?\} \times \ASigma$ is the set of
\emph{actions} (ranged over by $\action$), $\ASigmaC$ (resp. $\ASetC$)
is the set of finite words on $\ASigma$ (resp. $\ASet$).
Let $\word$ range over $\ASigmaC$, and $\acts, \actsb$ range over
$\ASetC$.
Also, $\emptyw$ ($\notin \ASigma \cup \ASet$) is the empty word,
$|\word|$ denotes the length of $\word$, and $\word \concat \word'$ is the
concatenation of $\word$ and $\word'$ (these notations are overloaded for
words in $\ASetC$).

\begin{definition}[Communicating automaton]\label{def:cfsm2} 
  A \emph{communicating automaton} is a finite transition
  system given by a triple $M = (Q,q_0,\delta)$
  where $Q$ is a finite set of {\em states}, $q_0\in Q$ is the initial
  state, and $\delta\ \subseteq \ Q \times \ASet \times Q$ is a set of
  \emph{transitions}.
\end{definition}

The transitions of a communicating automaton are labelled by actions
in $\ASet$ of the form $\PSEND{sr}{a}$, representing the
\emph{emission} of message $a$ from participant $\s$ to $\rr$, or
$\PRECEIVE{sr}{a}$ representing the \emph{reception} of $a$ by $\rr$.
Define $\subj{\PSEND{pq}{a}} = \subj{\PRECEIVE{qp}{a}} = \p$,
$\obj{\PSEND{pq}{a}} = \obj{\PRECEIVE{qp}{a}} = \q$, and
$\chan{\PSEND{pq}{a}} = \chan{\PRECEIVE{pq}{a}} = \ptp{pq}$.
The projection of $\action$ onto $\p$ is defined as
$\proj{\action}{p} = \action$ if $\subj{\action} = \p$ and
$\proj{\action}{p} = \emptyw$ otherwise.
Let $\dagger$ range over $\{!,?\}$, we define:
$\sndrcvproj{\PSENDRECEIVE{pq}{a}}{pq}{\dagger} = \msg{a}$
and
$\sndrcvproj{\PSENDRECEIVE{sr}{a}}{pq}{\dagger'}=\emptyw$ 
if either $\ptp{pq} \neq \ptp{sr}$ or $\dagger \neq \dagger'$.  We
extend these definitions to sequences of actions in the natural way.

A state $q\in Q$ with no outgoing transition is \emph{final}; $q$ is
\emph{sending} (resp.~\emph{receiving}) if it is not final and all its
outgoing transitions are labelled by send (resp. receive) actions,
and $q$ is \emph{mixed} otherwise.
$M=(Q, q_{0},\delta)$ is \emph{deterministic} if
$\forall (q,\action,q'), (q,\action',q'') \in \delta \qst \action =
\action' \xRightarrow{\quad} q' =q''$.
$M=(Q, q_{0},\delta)$ is \emph{send} (resp.\ \emph{receive})
\emph{directed} if for all sending (resp.\ receiving) $q \in Q$
and
$(q,\action,q'), (q,\action',q'') \in \delta \qst \obj{\action} =
\obj{\action'}$.
$M$ is \emph{directed} if it is send and receive directed.

\begin{remark}
  In this paper, we consider only deterministic communicating automata
  without mixed states, and call them \emph{Communicating Session
    Automata} (\CSA).
          We discuss possible extensions of our results beyond this class in
  Section~\ref{sec:conc}.
  \end{remark}

\begin{definition}[System]\label{def:cs}
  Given a communicating automaton $M_\p=(Q_\p, q_{0\p},\delta_\p)$
  for each $\p \in \PSet$, the tuple $S=(M_\p)_{\p\in\PSet}$ is a
  \emph{system}.
    A \emph{configuration} of $S$ is a pair $s = \csconf q w$ where
  $\vec{q}=(q_\p)_{\p\in\PSet}$ with $q_\p\in Q_\p$ and where
  $\vec{w}=(w_{\p\q})_{\p\q\in\CSet}$ with $w_{\p\q}\in
  \ASigma^\ast$; component $\vec q$ is the \emph{control state} and
  $q_\p\in Q_\p$ is the \emph{local state} of automaton $M_\p$. 
      The \emph{initial configuration} of $S$ is
  $s_0 = \stablecsconf{q_0}$ where
  $\vec{q_0} = (q_{0\p})_{\p\in\PSet}$ and we write ${\vecemptyw}$
  for the $\lvert \CSet \rvert$-tuple $(\emptyw, \ldots, \emptyw)$.
\end{definition}
Hereafter, we fix a communicating session automaton
$M_\p=(Q_\p, q_{0\p},\delta_\p)$ for each $\p \in \PSet$
and let $S=(M_\p)_{\p\in\PSet}$ be the corresponding system whose
initial configuration is $s_0$.
For each $\p \in \PSet$, we assume that
$\forall (q,\action,q') \in \delta_\p \qst \subj{\action} = \p$.
We assume that the components of a configuration are named
consistently, e.g., for $s' = \csconf{q'}{w'}$, we implicitly assume
that $\vec{q'}=(q'_\p)_{\p\in\PSet}$ and
$\vec{w'}=(w'_{\p\q})_{\p\q\in\CSet}$.

\begin{definition}[Reachable configuration]\label{def:rs}
  Configuration $s'=(\vec{q}';\vec{w}')$ is 
    \emph{reachable} from configuration $s=(\vec{q};\vec{w})$ by {\em
    firing transition $\action$}, written $s \TRANSS{\action} s'$ (or
  $s \TRANS{} s'$ when $\action$ is not relevant), if there are
  $\s, \rr \in \PSet$ and $a \in \ASigma$ such that either:
  \begin{enumerate}
  \item 
        (a) $\action= \PSEND{sr}{a}$ and $(q_\s,\action,q_\s')  \in \delta_{\s}$, 
        (b) $q_{\p}' = q_{\p}$ for all ${\p} \neq \s$, 
        (c) $w_{\s \rr}' = w_{\s \rr} \concat a$ and $w_{\p\q}'=w_{\p\q}$
    for all ${\p\q} \neq \s\rr$; or
          \item 
        (a) $\action= \PRECEIVE{sr}{a}$ and
    $(q_\rr,\action,q_\rr')\in \delta_\rr$,
            (b)  $q_{\p}' = q_{\p}$ for all ${\p} \neq \rr$,
        (c) $w_{\s \rr} = a \concat w_{\s \rr}' $, and
    $w_{\p\q}'=w_{\p\q}$ for all ${\p\q} \neq \ptp{sr}$.
  \end{enumerate} 
    \end{definition}

\begin{remark}
  Hereafter, we assume that any bound $k$ is finite and
  $k \in \naturals_{>0}$.
\end{remark}

We write $\TRANSR$ for the reflexive and transitive closure of
$\TRANSS{}$.
Configuration $\csconf{q}{w}$ is $k$-bounded if
$\forall \ptp{pq} \in \CSet \qst \lvert \word_{\p\q} \rvert \leq k$.
We write $s_1 \TRANSS{\action_1\cdots \action_n} s_{n+1}$ when
$s_1 \TRANSS{\action_1} s_2 \cdots s_n \TRANSS{\action_n} s_{n+1}$, for
some $s_2,\ldots,s_n$ (with $n \geq 0$); and
say that the \emph{execution} $\action_1\cdots \action_n$ is
\emph{$k$-bounded from} $s_1$ if $\forall 1 \leq i \leq n{+}1 \qst s_i$ is
$k$-bounded.
Given $\acts \in \ASetC$, we write $\p \notin \acts$ iff
$\acts = \acts_0 \concat \action \concat \acts_1 \implies
\subj{\action} \neq \p$.
We write $s \kTRANSS{\acts} s'$ if 
$s'$ is reachable with a $k$-bounded execution $\acts$ from $s$.
The set of \emph{reachable configurations of $S$} is $\RS(S)=\{
s \st s_0 \TRANSR s\}$.
The \emph{$k$-reachability set of $S$} is the largest subset
$\RS_k(S)$ of $\RS(S)$ within which each configuration $s$ can be
reached by a $k$-bounded execution from $s_0$.

Definition~\ref{def:k-safety} streamlines notions of safety from
previous works~\cite{BLY15,LTY15,DY13,halfduplex} (absence of
deadlocks, orphan messages, and unspecified receptions).
\begin{definition}[$k$-Safety]\label{def:k-safety}
  $S$ is $k$-\emph{safe} if the following holds 
  $ \forall \csconf q w \in \RS_k(S)$:
  
    \begin{description}
  \item[(\ER)]  \;     $\forall \p\q \in \CSet$, if $w_{\p\q} = \msg{a} \cdot w'$,
    then $\csconf q w \kTRANSR \kTRANSS{\PRECEIVE{pq}{a}}$.
  \item[(\PG)]  \;     $\forall\p \in \PSet$, if $q_\p$ is \emph{receiving},
    then $\csconf q w \kTRANSR \kTRANSS{\PRECEIVE{qp}{a}}$ for $\q \in
    \PSet$ and $\msg{a} \in \ASigma$.
  \end{description}
      We say that $S$ is \emph{safe} if it validates the unbounded version
  of $k$-safety ($\infty$-safe).
\end{definition}
Property (\ER), called \emph{eventual reception}, requires that any
sent message can always eventually be received (i.e., if $\msg{a}$ is
the head of a queue then there must be an execution that consumes
$\msg{a}$), and
Property (\PG), called \emph{progress}, requires
that any automaton in a receiving state can eventually make a move
(i.e., it can always eventually receive an \emph{expected} message).

We say that a configuration $s$ is \emph{stable} iff
$ s = \stablecsconf{q}$, i.e., all its queues are empty.
Next, we define the \emph{stable property} for systems of
communicating automata, following the definition from~\cite{DY13}.
\begin{definition}[Stable]\label{def:deadlockfree}
  $S$ has the \emph{stable property} (\DF) if
  $ \forall s \in \RS(S) \qst \exists \stablecsconf{q} \in \RS(S) \qst
  s \TRANSR \stablecsconf{q}$.
\end{definition}
A system has the stable property if it is possible to reach a stable
configuration from any reachable configuration. This property is
called \emph{deadlock-free} in~\cite{GenestKM07}.
The stable property implies the eventual reception property, but not
safety (e.g., an automaton may be waiting for an input in a stable
configuration, see Example~\ref {ex:dlf-not-safe}), and safety does
not imply the stable property, see Example~\ref{ex:fig-safe-notkmc}.

\begin{example}\label{ex:dlf-not-safe}
    The following system has the stable property, but it is not safe.
              \[
  \begin{array}{c@{\qquad}c@{\qquad}c}
    M_\s:
    \begin{tikzpicture}[mycfsm, node distance=0.6 and 1cm]
      \node[state, initial, initial where=above] (s2) {};
      \node[state, right=of s2] (s3) {};
      \node[state, left=of s2] (s4) {};
            \path
      (s2) edge  node [above] {$\PSEND{pq}{b}$} (s3)
      (s2) edge  node [above] {$\PSEND{pq}{a}$} (s4)
      ;
    \end{tikzpicture}
    \;
    &
      M_\ptp{q}:
      \begin{tikzpicture}[mycfsm, node distance=0.6 and 1cm]
        \node[state, initial, initial where=above] (s0) {};
        \node[state, left=of s0] (s1) {};
        \node[state, right=of s0] (s2) {};
        \node[state, right=of s2] (s3) {};
                \path
        (s0) edge node [above] {$\PRECEIVE{pq}{a}$} (s1)
        (s0) edge node [above] {$\PRECEIVE{pq}{b}$} (s2)
                (s2) edge node [above] {$\PSEND{qr}{c}$} (s3)
        ;
      \end{tikzpicture}
      \;
    & 
      M_\ptp{r}:
      \begin{tikzpicture}[mycfsm, node distance=0.6 and 1cm]
        \node[state, initial, initial where=above] (s0) {};
        \node[state, right=of s0] (s1) {};
                \path
        (s0) edge node [above] {$\PRECEIVE{qr}{c}$} (s1)
        ;
      \end{tikzpicture}
              \end{array}
  \]
        \end{example}

Next, we define two properties related to \emph{bound
  independence}. They specify classes of \CSA\ whose branching
behaviours are not affected by channel bounds.
\begin{definition}[$\OBI{k}$]\label{def:kobi}
  $S$ is $k$-\emph{output bound independent} ($\OBI{k}$), if
  $\forall s = \csconf{q}{w} \in \RS_k(S)$ and $\forall \p \in \PSet$, if
  $s \kTRANSS{\PSEND{pq}{a}}$, then
  $\forall (q_\p, \PSEND{pr}{b}, q'_\p) \in \delta_\p \qst
  s\kTRANSS{\PSEND{pr}{b}}$.
\end{definition}

\begin{definition}[$\IBI{k}$]\label{def:kibi}
  $S$ is $k$-\emph{input bound independent} ($\IBI{k}$), if 
  $\forall s = \csconf{q}{w} \in \RS_k(S)$ and $\forall \p \in \PSet$, if
  $s \kTRANSS{\PRECEIVE{qp}{a}}$, then
  $\forall \action \in \ASet \qst s \kTRANSS{\action} \, \land \,
  \subj{\action} = \p \xRightarrow{\quad} \action = \PRECEIVE{qp}{a}$.
\end{definition}

If $S$ is $\OBI{k}$, then any automaton that reaches a sending state
is able to fire any of its available transitions, i.e., sending states
model \emph{internal choices} which are not constrained by bounds
greater than or equal to $k$.
Note that the unbounded version of $\OBI{k}$ ($k=\infty$) is
trivially satisfied for any system due to unbounded asynchrony.
If $S$ is $\IBI{k}$, then any automaton that reaches a receiving state
is able to fire at most one transition, i.e., receiving states model
\emph{external choices} where the behaviour of the receiving automaton
is controlled exclusively by its environment.
We write $\infIBI$ for the unbounded version of $\IBI{k}$
($k = \infty$). 

Checking the $\infIBI$ property is generally undecidable.
However, systems consisting of (send and receive) \emph{directed}
automata are trivially $\IBI{k}$ and $\OBI{k}$ for all $k$,
this subclass of \CSA\ was referred to as \emph{basic}
in~\cite{DY13}.
We introduce larger decidable approximations of $\infIBI$ with
Definitions~\ref{def:non-csa-mc-dep} and~\ref{def:non-csa-mc}.

\begin{restatable}{proposition}{propdirectediobi}\label{prop:directed-iobi}
  (1)~If $S$ is send directed, then $S$ is $\OBI{k}$ for all
  $k \in \naturals_{>0}$.
    (2)~If $S$ is receive directed, then $S$ is $\infIBI$ (and $\IBI{k}$
  for all $k \in \naturals_{>0}$).
\end{restatable}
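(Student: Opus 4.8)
The plan is to prove both parts by directly unfolding Definitions~\ref{def:kobi} and~\ref{def:kibi} and exploiting the fact that directedness forces every outgoing transition of a sending (resp.\ receiving) state to target the same channel. For part~(1), I would fix $k \in \naturals_{>0}$, a configuration $s = \csconf{q}{w} \in \RS_k(S)$, a participant $\p$, and assume $s \kTRANSS{\PSEND{pq}{a}}$. The existence of this transition means $q_\p$ is a sending state: it has an outgoing send and, since \CSA\ have no mixed states, all its transitions are sends. In the bounded semantics this step is enabled precisely because queue $\ptp{pq}$ has room, i.e.\ $\lvert w_{\p\q} \rvert < k$. Now take any $(q_\p, \PSEND{pr}{b}, q'_\p) \in \delta_\p$. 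By send-directedness, $\obj{\PSEND{pr}{b}} = \obj{\PSEND{pq}{a}}$, so $\rr = \q$ and both transitions target the same channel $\ptp{pq}$; hence firing $\PSEND{pr}{b}$ appends to the very queue that already has room, keeping the configuration $k$-bounded, so $s \kTRANSS{\PSEND{pr}{b}}$, as required.

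For part~(2) I would argue symmetrically on the receive side. Fix $s = \csconf{q}{w} \in \RS(S)$ and $\p$ with $s \TRANSS{\PRECEIVE{qp}{a}}$; then $q_\p$ is receiving. Any action $\action$ with $\subj{\action} = \p$ fireable from $s$ must be an outgoing transition of $\p$'s current local state $q_\p$, and since \CSA\ have no mixed states it is a receive, say $\PRECEIVE{rp}{c}$. Receive-directedness gives $\obj{\PRECEIVE{rp}{c}} = \obj{\PRECEIVE{qp}{a}}$, i.e.\ $\rr = \q$, so $\action$ reads from channel $\ptp{qp}$. Fireability of a receive requires the consumed message to sit at the head of its queue; since both $\PRECEIVE{qp}{a}$ and $\PRECEIVE{qp}{c}$ read the head of the single queue $w_{\q\p}$, we get $c = a$ and hence $\action = \PRECEIVE{qp}{a}$. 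This establishes $\infIBI$; the identical head-of-queue argument, restricted to $s \in \RS_k(S)$ and $k$-bounded firings, yields $\IBI{k}$ for every $k$.

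I expect no deep obstacle here: both statements are essentially bookkeeping once the definitions are unfolded. The points needing care, rather than difficulty, are the interaction with the bounded semantics. On the send side I must record that the room constraint of a bounded send depends only on its channel, which is exactly what equal objects guarantee. On the receive side I should note that a single receive step never increases a queue, so a receive enabled from a $k$-bounded configuration stays $k$-bounded; this is what makes the bounded and unbounded fireability of subject-$\p$ actions coincide, so the same proof simultaneously gives $\infIBI$ and $\IBI{k}$ with no gap between them. Finally, invoking the no-mixed-states assumption of \CSA\ is essential in both parts, since it guarantees that a sending (resp.\ receiving) state has no competing receive (resp.\ send) transitions to consider.
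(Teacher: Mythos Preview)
Your proposal is correct and follows essentially the same approach as the paper: the paper's proof is the one-liner ``Immediate since each directed (\CSA) automaton has access to at most one channel from each state,'' and your argument simply unfolds that observation with the bookkeeping about queue length (send side) and head-of-queue (receive side) made explicit.
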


\begin{remark}
\label{remark:basic-scribble}
\CSA\ validating $\OBI{k}$ and $\infIBI$ strictly include the most
common forms of asynchronous multiparty session types, e.g., the
directed \CSA\ of~\cite{DY13}, and systems obtained by projecting
Scribble specifications (global types) which need to be receive
directed (this is called ``consistent external choice subjects''
in~\cite{HY2017}) and which validate $\OBI{1}$ by construction since
they are projections of synchronous specifications where choices must
be located at a unique sender.
  \end{remark}

 \section{Bounded Compatibility for \CSA}\label{sec:kmc}
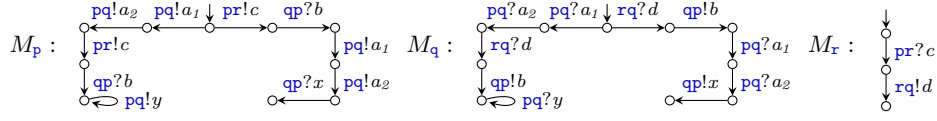
\begin{figure}[t]
  \centering
  \begin{tabular}{c@{\;}c@{\;}c}
    $M_\ptp{p}:$ \;
    \begin{tikzpicture}[mycfsm, node distance = 0.35cm and 0.7 cm]
      \node[state, initial, initial where=above,fill=white] (s0) {};
      \node[state, left=of s0] (sr1) {};
      \node[state, left=of sr1] (sr2) {};
      \node[state, below=of sr2] (sr3) {};
      \node[state, below=of sr3] (sr4) {};
                  \node[state, right=of s0] (sl1) {};
      \node[state, right=of sl1] (sl2) {};
      \node[state, below=of sl2] (sl3) {};
      \node[state, below=of sl3] (sl4) {};
      \node[state, left=of sl4,xshift=0pt] (sl5) {};
            \path
      (s0) edge  node [above] {$\PSEND{pq}{a_1}$} (sr1)
      (sr1) edge node [above] {$\PSEND{pq}{a_2}$} (sr2)
      (sr2) edge node [] {$\PSEND{pr}{c}$} (sr3)
      (sr3) edge node [] {$\PRECEIVE{qp}{b}$} (sr4)
      (sr4) edge [loop right] node [right] {$\PSEND{pq}{y}$} (sr4)
            (s0) edge  node [above] {$\PSEND{pr}{c}$} (sl1)
      (sl1) edge node [above] {$\PRECEIVE{qp}{b}$} (sl2)
      (sl2) edge node [] {$\PSEND{pq}{a_1}$} (sl3)
      (sl3) edge node [] {$\PSEND{pq}{a_2}$} (sl4)
      (sl4) edge node [above] {$\PRECEIVE{qp}{x}$} (sl5)
      ;
    \end{tikzpicture}
    &
      $M_\ptp{q}:$ \;
    \begin{tikzpicture}[mycfsm, node distance = 0.35cm and 0.7cm]
      \node[state, initial, initial where=above,fill=white] (s0) {};
      \node[state, left=of s0] (sr1) {};
      \node[state, left=of sr1] (sr2) {};
      \node[state, below=of sr2] (sr3) {};
      \node[state, below=of sr3] (sr4) {};
                  \node[state, right=of s0] (sl1) {};
      \node[state, right=of sl1] (sl2) {};
      \node[state, below=of sl2] (sl3) {};
      \node[state, below=of sl3] (sl4) {};
      \node[state, left=of sl4,xshift=0pt] (sl5) {};
            \path
      (s0) edge node [above] {$\PRECEIVE{pq}{a_1}$} (sr1)
      (sr1) edge node [above] {$\PRECEIVE{pq}{a_2}$} (sr2)
      (sr2) edge node [] {$\PRECEIVE{rq}{d}$} (sr3)
      (sr3) edge node [] {$\PSEND{qp}{b}$} (sr4)
      (sr4) edge [loop right] node [right] {$\PRECEIVE{pq}{y}$} (sr5)
            (s0) edge node [above] {$\PRECEIVE{rq}{d}$} (sl1)
      (sl1) edge node [above] {$\PSEND{qp}{b}$} (sl2)
      (sl2) edge node [] {$\PRECEIVE{pq}{a_1}$} (sl3)
      (sl3) edge node [] {$\PRECEIVE{pq}{a_2}$} (sl4)
      (sl4) edge node [above] {$\PSEND{qp}{x}$} (sl5)
      ;
    \end{tikzpicture}
    &
      $M_\ptp{r}:$ \;
      \begin{tikzpicture}[mycfsm, node distance = 0.35cm and 0.8cm]
        \node[state, initial, initial where=above] (s0) {};
        \node[state, below=of s0] (s1) {};
        \node[state, below=of s1] (s2) {};
                \path
        (s0) edge node [right] {$\PRECEIVE{pr}{c}$} (s1)
        (s1) edge node [right] {$\PSEND{rq}{d}$} (s2)
        ;
      \end{tikzpicture}
          \end{tabular}
  \caption{Example of a \emph{non}-\infIBI\ and \emph{non}-safe system.}
  \label{fig:ex-nondirected}
\end{figure}

In this section, we introduce \emph{$k$-multiparty compatibility}
($k$\MC) and study its properties wrt.\ safety of communicating
session automata (\CSA) which are $\OBI{k}$ and $\infIBI$.
Then, we soundly and completely characterise $k$-exhaustivity in terms
of local bound-agnosticity, a property which guarantees that
communicating automata behave equivalently under any bound greater
than or equal to $k$.

\subsection{Multiparty Compatibility}
The definition of $k$\MC\ is divided in two parts: ($i$)
\emph{$k$-exhaustivity} guarantees that the set of $k$-reachable
configurations contains enough information to make a sound decision
wrt.\ safety of the system; and ($ii$) \emph{$k$-safety}
(Definition~\ref{def:k-safety}) guarantees that a subset of all
possible executions is free of any communication errors.
Next, we define $k$-exhaustivity, then $k$-multiparty compatibility.
Intuitively, a system is $k$-exhaustive if for all $k$-reachable
configurations, whenever a send action is enabled, then it can be
fired within a $k$-bounded execution.

\begin{definition}[$k$-Exhaustivity]\label{def:exhaustive}
  $S$ is $k$-\emph{exhaustive} if
  $\forall  \csconf{q}{w} \in \RS_k(S)$ and $\forall \p \in \PSet$,
    if $q_\p$ is \emph{sending}, then
    $
  \forall (q_\p, \action, q'_\p) \in \delta_\p \qst \exists
  \acts \in \ASetC
  \qst
  \csconf{q}{w}  \kTRANSS{\acts}\kTRANSS{\action}  \land \p \notin
  \acts.
  $
    \end{definition}
\begin{definition}[$k$-Multiparty compatibility]\label{def:compa}
    $S$ is $k$-\emph{multiparty compatible} ($k$\MC) if it is
    $k$-safe and $k$-exhaustive.
    \end{definition}
Definition~\ref{def:compa} is a natural extension of the definitions
of \emph{synchronous} multiparty compatibility given
in~\cite[Definition 4.2]{DY13} and~\cite[Definition 4]{BLY15}.
The common key requirements are that \emph{every send} action must be
matched by a receive action (i.e., send actions are universally
quantified), while \emph{at least one receive} action must find a
matching send action (i.e., receive actions are existentially
quantified).
Here, the universal check on send actions is done via the eventual
reception property (\ER) and the $k$-exhaustivity condition; while the
existential check on receive actions is dealt with by the progress
property (\PG).

Whenever systems are $\OBI{k}$ and $\infIBI$, then $k$-exhaustivity
implies that $k$-bounded executions are sufficient to make a sound
decision wrt.\ safety.
This is not necessarily the case for systems outside of this class, see
 Examples~\ref{ex:non-dir-not-safe} and~\ref{ex:not-obi}.

\begin{example}\label{ex:non-dir-not-safe}
  The system $(M_\p,M_\q,M_\rr)$ in Figure~\ref{fig:ex-nondirected} is
  $\OBI{k}$ for any $k$, but not \infIBI\ (it is $\IBI{1}$ but not
  $\IBI{k}$ for any $k \geq 2$).
    When executing with a bound strictly greater than $1$, there is a
  configuration where $M_\q$ is in its initial state and \emph{both}
  its \emph{receive} transitions are enabled.
      The system is $1$-safe and $1$-exhaustive (hence $1$\MC) but it is
  \emph{not} $2$-exhaustive nor $2$-safe.
    By constraining the automata to execute with a channel bound of $1$,
  the left branch of $M_\p$ is prevented to execute together with the
  right branch of $M_\q$.
    Thus, the fact that the $\msg{y}$ messages are not received in this
  case remains invisible in $1$-bounded executions.
    This example
    can be easily extended so that it is $n$-exhaustive (resp.\ safe)
  but not $n{+1}$-exhaustive (resp.\ safe) by sending/receiving
  $n{+}1$ $\msg{a_i}$ messages.
\end{example}

\begin{figure}[t]
  \centering
  \begin{tabular}{c@{\qquad}c@{\quad}c@{\quad}c}
        $M_\p$:
    \begin{tikzpicture}[mycfsm, node distance=0.5cm and 0.7cm]
      \node[state, initial, initial where=above] (s0) {};
      \node[state, right=of s0] (s1) {};
      \node[state, below=of s0] at  ($(s0)!0.5!(s1)$) (s2) {};
            \path
      (s0) edge node [above] {$\PSEND{pq}{a}$} (s1)
      (s1) edge node [right] {$\PSEND{pq}{a}$} (s2)
      (s2) edge node [left] {$\PRECEIVE{qp}{b}$} (s0)
      ;
    \end{tikzpicture}
    &
      $M_\q$:
      \begin{tikzpicture}[mycfsm, node distance=0.5cm and 0.7cm]
        \node[state, initial, initial where=above] (s0) {};
        \node[state, below=of s0] (s1) {};
                \path
        (s0) edge [bend left] node [] {$\PSEND{qp}{b}$} (s1)
        (s1) edge [bend left] node [] {$\PRECEIVE{pq}{a}$} (s0)
        ;
      \end{tikzpicture}
    &
      $N_\q$:
      \begin{tikzpicture}[mycfsm, node distance=0.5cm and 0.7cm]
        \node[state, initial, initial where=above] (s0) {};
        \node[state, right=of s0] (s1) {};
        \node[state, below=of s0] at  ($(s0)!0.5!(s1)$) (s2) {};
                \path
        (s0) edge node [above] {$\PSEND{qp}{b}$} (s1)
        (s1) edge node [right] {$\PRECEIVE{pq}{a}$} (s2)
        (s2) edge node [left] {$\PRECEIVE{pq}{a}$} (s0)
        ;
      \end{tikzpicture}
    &
      $N'_\q$: \
      \begin{tikzpicture}[mycfsm, node distance=0.5cm and 0.7cm]
        \node[state] (s0) {};
        \node[state, right=of s0] (s1) {};
        \node[state, below=of s0] at  ($(s0)!0.5!(s1)$) (s2) {};
        \node[state, left=of s0, initial, initial where=above] (s3) {};
                \path
        (s3) edge node [above] {$\PSEND{qp}{b}$} (s0)
        (s0) edge node [above] {$\PSEND{qp}{b}$} (s1)
        (s1) edge node [right] {$\PRECEIVE{pq}{a}$} (s2)
        (s2) edge node [left] {$\PRECEIVE{pq}{a}$} (s0)
        ;
      \end{tikzpicture}
  \end{tabular}
  \caption{$(M_\p, M_\q)$ is non-exhaustive, $(M_\p, N_\q)$ is
    $1$-exhaustive, $(M_\p, N'_\q)$ is $2$-exhaustive.} 
  \label{fig:ex-unbounded}
\end{figure}
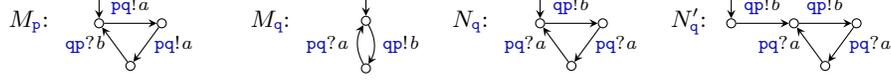

\begin{example}\label{ex:fig-safe-notkmc}
  The system in Figure~\ref{fig:running-example} is \emph{directed}
  and $1$\MC.
    The system $(M_\p, M_\q)$ in Figure~\ref{fig:ex-unbounded}
  is safe but \emph{not} $k$\MC\ for any finite $k \in
  \naturals_{>0}$.
    Indeed, for any execution of this system, at least one of the queues
  grows arbitrarily large.
    The system $(M_\p, N_\q)$ is $1$\MC\, while the system
  $(M_\p, N'_\q)$ is \emph{not} $1$\MC\ but it is $2$\MC.
\end{example}
 
\begin{example}\label{ex:not-obi}
  The system in Figure~\ref{fig:ex-obi-main} (without the dotted
  transition) is $1$\MC, but not $2$-safe; it is not $\OBI{1}$ but it is
  $\OBI{2}$.
    In $1$-bounded executions, $M_\rr$ can execute
  $\PSEND{rs}{b} \concat \PSEND{rp}{z}$, but it cannot fire
  $\PSEND{rs}{b} \concat \PSEND{rs}{a}$ (queue $\ptp{rs}$ is
  full), which
    violates the $\OBI{1}$ property.
              The system with the dotted transition is not $\OBI{1}$, but it is
  $\OBI{2}$ and $k$\MC\ for any $k \geq 1$.
    Both systems are receive directed, hence $\infIBI$.
\end{example}

\begin{restatable}{theorem}{thmsoundness}\label{thm:soundness}
  If $S$ is $\OBI{k}$, $\infIBI$, and $k$\MC, then it is safe.
\end{restatable}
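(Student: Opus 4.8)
The plan is to reduce \emph{safety} (the unbounded version of $k$-safety, Definition~\ref{def:k-safety}) to the $k$-safety already guaranteed by $k$\MC. Safety demands the eventual-reception (\ER) and progress (\PG) properties for \emph{every} configuration of $\RS(S)$ and through \emph{unbounded} executions, whereas $k$-safety only provides them for configurations of $\RS_k(S)$ and through $k$-bounded executions. The whole argument thus amounts to transporting \ER\ and \PG\ from the $k$-bounded setting to the unbounded one. I would fix an arbitrary $s = \csconf q w \in \RS(S)$ and relate it to a $k$-bounded reachable configuration that exposes the same information \ER\ and \PG\ inspect (the receiving local states, and the head message of any non-empty queue), so that the bounded guarantees can be pulled back to $s$.

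The crux is a \emph{representation lemma}: under $\OBI k$, $\infIBI$, and $k$-exhaustivity, every reachable control state is realised by a configuration in $\RS_k(S)$, with enough of the pending-message structure preserved to evaluate \ER\ and \PG. I would establish it by induction on a reaching execution, maintaining the invariant that whenever a channel is about to overflow the bound, the surplus can first be consumed by receive actions, keeping every prefix $k$-bounded. This is precisely where the three hypotheses are used. $k$-exhaustivity (Definition~\ref{def:exhaustive}) ensures that every send enabled at a $k$-reachable configuration can be fired within the bound, so bounded exploration never discards a sending branch; $\OBI k$ ensures that capping channels at $k$ disables no outgoing transition of a sending state, so internal choices survive the rescheduling (without it, $1$\MC\ need not be safe, cf.\ Example~\ref{ex:not-obi}); and $\infIBI$ ensures that at most one receive is enabled at any receiving state, so external choices are resolved deterministically and the reorganised run makes the \emph{same} receiving decisions as the original (without it, $1$\MC\ again fails to imply safety, cf.\ Example~\ref{ex:non-dir-not-safe}). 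The consumption of surplus messages is itself licensed by the (\ER) clause of $k$-safety, which drains any head-of-queue message of a $k$-reachable configuration within the bound.

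Granting the representation lemma, the transfer of the two properties is routine. For (\PG): if $q_\p$ is receiving at $s$, the lemma supplies a $k$-reachable $\hat s$ sharing this local state, so $k$-safety yields a $k$-bounded execution after which $\p$ performs a receive, and $\infIBI$ (the receiving choice of $\p$ being forced) lets me replay that receive as an unbounded witness from $s$. For (\ER): if $a$ heads $w_{\p\q}$, the lemma supplies a $k$-reachable configuration still carrying $a$ at the head of $\p\q$; the (\ER) clause of $k$-safety then consumes it within the bound, and this execution lifts to an unbounded one from $s$ that fires $\PRECEIVE{pq}{a}$.

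The main obstacle is the representation lemma, and within it the overfull-queue case. An arbitrary $s \in \RS(S)$ may carry queues much longer than $k$, so it is not itself $k$-bounded-reachable, and one cannot merely permute a reaching execution: a permutation leaves the final configuration -- hence all queue lengths -- unchanged, so reaching a $k$-bounded configuration genuinely requires scheduling \emph{additional} receive actions, along a different path to the same control state. The delicate step is therefore to interleave enough receives to keep every prefix within the bound while simultaneously certifying that the receiving states and, crucially, the head messages on which \ER\ depend are still present in the witness configuration. Reconciling ``receive more to stay bounded'' with ``preserve exactly the actions the properties refer to'' is where $k$-exhaustivity, $\OBI k$ and $\infIBI$ must be combined with care, and I expect this to be the technically heaviest part of the proof.
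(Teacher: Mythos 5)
There is a genuine gap, and you have in fact located it yourself: the ``representation lemma'' you rest everything on is not attainable in the form your transfer argument needs, and the subsequent ``replay'' step is unjustified. Concretely, take the system $(M_\p,N_\q)$ of Figure~\ref{fig:ex-unbounded}, which is $1$\MC\ (Example~\ref{ex:fig-safe-notkmc}). The reachable configuration in which $\p$ has completed one loop (back at its initial state, having sent $\msg{a}$ twice and consumed $\msg{b}$) while $\q$ has only fired $\PSEND{qp}{b}$ necessarily carries two pending $\msg{a}$'s on $\ptp{pq}$: that control state together with that head message is realised by \emph{no} configuration of $\RS_1(S)$, so no amount of interleaving extra receives produces the representative you want. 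And even where a $k$-reachable configuration with the right head message or receiving state does exist, it is reached by an execution in which the \emph{other} participants have performed different actions, so they sit in different local states with different queue contents; the $k$-bounded \ER/\PG\ witness issued from that representative is therefore not an execution from the original configuration $s$, and $\infIBI$ (which only bounds the number of enabled receives at a state) gives you no licence to transplant it. This is precisely the reconciliation you defer to ``the technically heaviest part'', and it is where the plan breaks rather than merely gets hard.

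The paper resolves it by inverting the direction of travel. Rather than seeking a bounded configuration that \emph{represents} $s$, it extends a run \emph{forward from $s$ itself}: the key device is a descent lemma (Lemmas~\ref{lem:exist-path-to-k} and~\ref{lem:exist-path-eqpeer}) showing that from any $(k{+}1)$-reachable configuration there is a $(k{+}1)$-bounded execution leading to some $t\in\RS_k(S)$. The whole theorem is then an induction on the bound: Lemma~\ref{lem:reduced-k-mc-kplus-mc} shows $k$\MC\ implies $(k{+}1)$\MC\ (with Lemma~\ref{lem:k-exhau-kplus-exhau} re-establishing $\OBI{(k{+}1)}$ and $(k{+}1)$-exhaustivity), and unbounded safety follows since any violation would be witnessed in $\RS_n(S)$ for some finite $n$. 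In each inductive step the \ER\ or \PG\ witness for a configuration $s\in\RS_{k+1}(S)$ is either found along the descent to $t$ or obtained by appending the $k$-bounded witness that $k$-safety guarantees at $t$ --- in both cases a genuine execution issued from $s$, so no replay is needed. The descent lemma itself is where $\OBI{k}$, $\infIBI$ and $k$-exhaustivity are consumed, via the $k$-closed-set machinery of Lemmas~\ref{lem:k-mc-kclosed}, \ref{lem:new-kclosed-set} and~\ref{lem:closed-set-paths}; that is roughly the role you assigned the three hypotheses, but deployed on forward extensions of the run rather than on reschedulings of the run that reaches $s$.
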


\begin{remark}
  It is undecidable whether there exists a bound $k$ for which an
  arbitrary system is $k$\MC. This is a consequence of the Turing
  completeness of communicating (session)
  automata~\cite{cfsm83,FinkelM97,LY2017}.
\end{remark}

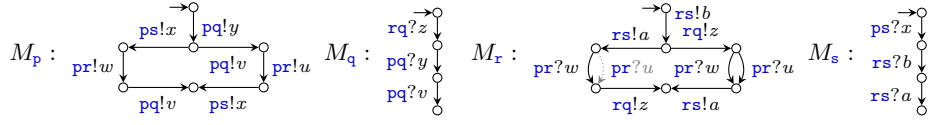
\begin{figure}[t]
\centering
\begin{tabular}{c@{\,}c@{\,}c@{\,}c} 
    $M_\ptp{p}:$     \begin{tikzpicture}[mycfsm, node distance = 0.4cm and 0.8cm]
      \node[state, initial, initial where=left] (s0) {};
      \node[state, below=of s0] (s1) {};
      \node[state, right=of s1] (s2) {};
      \node[state, below=of s2] (s6) {};
      \node[state, left=of s6] (s4) {};
      \node[state, left=of s1] (s3) {};
      \node[state, below=of s3] (s5) {};
            \path
      (s0) edge node [right] {$\PSEND{pq}{y}$} (s1)
      (s1) edge node [below] {$\PSEND{pq}{v}$} (s2)
      (s1) edge node [above] {$\PSEND{ps}{x}$} (s3)
      (s2) edge node [right] {$\PSEND{pr}{u}$} (s6)
      (s6) edge node [below] {$\PSEND{ps}{x}$} (s4)
      (s5) edge node [below] {$\PSEND{pq}{v}$} (s4)
      (s3) edge node [left] {$\PSEND{pr}{w}$} (s5)
      ;
    \end{tikzpicture}
    &
      $M_\ptp{q}:$       \begin{tikzpicture}[mycfsm, node distance = 0.3cm and 0.4cm]
        \node[state, initial, initial where=left] (s0) {};
        \node[state, below=of s0] (s1) {};
        \node[state, below=of s1] (s2) {};
        \node[state, below=of s2] (s3) {};
                \path
        (s0) edge node [left] {$\PRECEIVE{rq}{z}$} (s1)
        (s1) edge node [left] {$\PRECEIVE{pq}{y}$} (s2)
        (s2) edge node [left] {$\PRECEIVE{pq}{v}$} (s3)
        ;
      \end{tikzpicture}
    &
      \quad
      $M_\ptp{r}:$       \begin{tikzpicture}[mycfsm, node distance = 0.4cm and 0.8cm]
        \node[state, initial, initial where=left] (s0) {};
        \node[state, below=of s0] (s1) {};
        \node[state, right=of s1] (s2) {};
        \node[state, below=of s2] (s6) {};
        \node[state, left=of s6] (s4) {};
        \node[state, left=of s1] (s3) {};
        \node[state, below=of s3] (s5) {};
                \path
        (s0) edge node [right, near start,yshift=3pt] {$\PSEND{rs}{b}$} (s1)
                (s1) edge node [above] {$\PSEND{rq}{z}$} (s2)
                (s2) edge [bend left] node [right] {$\PRECEIVE{pr}{u}$} (s6)
        (s2) edge [bend right] node [left] {$\PRECEIVE{pr}{w}$} (s6)
                (s6) edge node [below] {$\PSEND{rs}{a}$} (s4)
                (s1) edge node [above] {$\PSEND{rs}{a}$} (s3)
        (s3) edge [black!50, densely dotted, bend left] node [right] {$\PRECEIVE{pr}{u}$} (s5)
        (s3) edge [bend right] node [left] {$\PRECEIVE{pr}{w}$} (s5)
        
        (s5) edge node [below] {$\PSEND{rq}{z}$} (s4)
        ;
      \end{tikzpicture}
  &
    $M_\ptp{s}:$     \begin{tikzpicture}[mycfsm, node distance = 0.3cm and 0.3cm]
      \node[state, initial, initial where=left] (s0) {};
      \node[state, below=of s0] (s1) {};
      \node[state, below=of s1] (s2) {};
      \node[state, below=of s2] (s3) {};
            \path
      (s0) edge node [left] {$\PRECEIVE{ps}{x}$} (s1)
      (s1) edge node [left] {$\PRECEIVE{rs}{b}$} (s2)
      (s2) edge node [left] {$\PRECEIVE{rs}{a}$} (s3)
      ;
    \end{tikzpicture}
    \end{tabular}
\caption{Example of a system which is not $\OBI{1}$.}
\label{fig:ex-obi-main}
\end{figure}

Although the $\infIBI$ property is generally undecidable, it is
possible to identify sound approximations, as we show below.
We adapt the dependency relation from~\cite{LTY15} and say that action
$\action'$ depends on $\action$ from $s = \csconf{q}{w}$, written
$\bindep{s}{\action}{\action'}$, iff
$\subj{\action} = \subj{\action'} \lor 
(\chan{\action} = \chan{\action'} \land \word_{\chan{\action}} =
\emptyw)$.
Action $\action'$ depends on $\action$ in $\acts$ from $s$, written
$\dependsin{s}{\action}{\acts}{\action'}$, if the following holds:
\[
\dependsin{s}{\action}{\acts}{\action'}
\iff 
\begin{cases}
  (
  \bindep{s}{\action}{\action''}
  \land
  \dependsin{s}{\action''}{\actsb}{\action'}
  )
  \lor
  \dependsin{s}{\action}{\actsb}{\action'}
  & \text{if } \acts = \action'' \concat \actsb
  \\
  \bindep{s}{\action}{\action'} & \text{otherwise}
\end{cases}
\]

\begin{definition}\label{def:non-csa-mc-dep}
  $S$ is $k$-\emph{chained input bound independent}
  ($\CIBI{k}$) if $\forall s = \csconf{q}{w} \in \RS_k(S)$ and
  $\forall \p \in \PSet$, if $s \kTRANSS{\PRECEIVE{qp}{a}} s'$, then
  $ \forall (q_\p, \PRECEIVE{sp}{b}, q'_\p) \in \delta_\p \qst \s \neq
  \q \implies \!  
  \neg ( s \kTRANSS{\PRECEIVE{sp}{b}})
  \land
  (
  \forall \acts \in \ASetC \qst 
  s' \kTRANSS{\acts} \kTRANSS{\PSEND{sp}{b}}
  \implies \dependsin{s}{\PRECEIVE{qp}{a}}{\acts}{\PSEND{sp}{b}})
    $.
\end{definition}

\begin{definition}\label{def:non-csa-mc}
  $S$ is $k$-\emph{strong input bound independent}
  ($\DIBI{k}$) if $\forall s = \csconf{q}{w} \in \RS_k(S)$ and
  $\forall \p \in \PSet$, if $s \kTRANSS{\PRECEIVE{qp}{a}} s'$, then
  $ \forall (q_\p, \PRECEIVE{sp}{b}, q'_\p) \in \delta_\p \qst \s \neq
  \q \implies \!  \neg ( s \kTRANSS{\PRECEIVE{sp}{b}} \, \lor \, s'
  \kTRANSR \kTRANSS{\PSEND{sp}{b}})$.
\end{definition}

Definition~\ref{def:non-csa-mc-dep} requires that whenever $\p$ can
fire a receive action, at most one of its receive actions is enabled
at $s$, and no other receive transition from $q_\p$ will be enabled
until $\p$ has made a move. This is due to the existence of a
dependency chain between the reception of a message
(${\PRECEIVE{qp}{a}}$) and the matching send of another possible
reception (${\PSEND{sp}{b}}$).
Property $\DIBI{k}$ (Definition~\ref{def:non-csa-mc}) is a
stronger version of $\CIBI{k}$,
which can be checked more efficiently.
\begin{restatable}{lemma}{lemkrikexhimpinfripBOTH}\label{lem:kri-kexh-imp-infrip-both}
  If $S$ is $\OBI{k}$, $\CIBI{k}$ (resp.\ $\DIBI{k}$) and
  $k$-exhaustive, then it is $\infIBI$.
\end{restatable}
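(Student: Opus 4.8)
The plan is to argue by contraposition, turning an unbounded violation of $\infIBI$ into a violation of the dependency clause of $\CIBI{k}$. First I would observe that it suffices to treat the $\CIBI{k}$ hypothesis: $\DIBI{k}$ forbids the matching send $\PSEND{sp}{b}$ from ever being $k$-reachable after the reception of $a$, which makes the premise $s'\kTRANSS{\acts}\kTRANSS{\PSEND{sp}{b}}$ of the implication in $\CIBI{k}$ always false, so $\DIBI{k}$ entails $\CIBI{k}$. Now assume $S$ is $\OBI{k}$, $\CIBI{k}$, and $k$-exhaustive but not $\infIBI$. Then there is a reachable configuration $s=\csconf q w\in\RS(S)$ and a participant $\p$ at a receiving state $q_\p$ from which two distinct receptions $\PRECEIVE{qp}{a}$ and $\PRECEIVE{sp}{b}$ are simultaneously enabled. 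Since channels are \FIFO\ and $M_\p$ is deterministic, two receptions reading the same channel can never both be enabled, so necessarily $\s\neq\q$; thus $a$ sits at the head of $\q\p$, $b$ sits at the head of $\s\p$, and along the execution $s_0\TRANSR s$ the participant $\p$ has consumed neither of them.

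The technical heart is to realise this situation within the bound $k$. Using $\OBI{k}$ and $k$-exhaustivity I would extract from $s_0\TRANSR s$ a $k$-bounded execution reaching a configuration $\hat s\in\RS_k(S)$ in which $\p$ is still at $q_\p$ with $a$ at the head of $\q\p$, so that $\hat s\kTRANSS{\PRECEIVE{qp}{a}}\hat s'$. The crucial point is that, in the unbounded run, the send $\PSEND{sp}{b}$ that deposited $b$ fired while $\p$ had not received $a$; hence the causal history of $\PSEND{sp}{b}$ (the moves of $\s$ together with the receptions that enable them) is independent of $\PRECEIVE{qp}{a}$. Replaying this history from $\hat s'$ stays within the bound, because $\OBI{k}$ lets sending states fire regardless of full channels and $k$-exhaustivity guarantees that every $k$-reachable send action can be fired inside the bound; it therefore yields a $k$-bounded sequence $\acts$ with $\hat s'\kTRANSS{\acts}\kTRANSS{\PSEND{sp}{b}}$ for which no dependency chain $\dependsin{\hat s}{\PRECEIVE{qp}{a}}{\acts}{\PSEND{sp}{b}}$ holds. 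This contradicts clause~(ii) of $\CIBI{k}$ at $\hat s$ (and, for the stronger hypothesis, it also contradicts the reachability clause of $\DIBI{k}$, since $\PSEND{sp}{b}$ is now seen to be $k$-reachable from $\hat s'$). Hence no such $s$ exists and $S$ is $\infIBI$.

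I expect the main obstacle to be this rescheduling step: proving that the causal independence of $\PRECEIVE{qp}{a}$ and $\PSEND{sp}{b}$ observed in the unbounded run is faithfully reflected by a $k$-bounded run from $\hat s'$. Concretely, one must commute independent actions of the witnessing execution so as to both expose $q_\p$ with $a$ at its head and subsequently drive $\s$ to its $\PSEND{sp}{b}$ state, all while never letting a channel exceed $k$. The definitions of $\OBI{k}$ and $k$-exhaustivity are exactly what legitimises these commutations (sends are never blocked by the bound, and reachable sends remain fireable within it), so the argument reduces to a permutation lemma on $k$-bounded executions; formalising that the dependency relation $\prec$ correctly tracks the absence of a causal link between the chosen receive and send is where the care is needed.
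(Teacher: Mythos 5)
Your high-level plan is close in spirit to the paper's: the reduction of the $\DIBI{k}$ case to the $\CIBI{k}$ case is exactly Lemma~\ref{lem:sibi-imp-cibi}, and the final contradiction with the dependency clause of $\CIBI{k}$ (via preservation of dependency chains under reordering, Lemma~\ref{lem:dep-chain-equi}) is also how the paper concludes. You have also correctly located the crux in the rescheduling step. The problem is that your treatment of that step is circular, and you have not seen how to break the circularity.

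Concretely: you propose to take an arbitrary reachable configuration $s \in \RS(S)$ witnessing a violation of $\infIBI$ and ``extract a $k$-bounded execution'' reaching an analogous configuration $\hat s \in \RS_k(S)$, justified by $\OBI{k}$ and $k$-exhaustivity alone. No such permutation lemma holds under those hypotheses. The paper's rescheduling result (Lemma~\ref{lem:exist-path-eqpeer}) only descends one level, from $\RS_{k+1}(S)$ to $\RS_k(S)$, and crucially requires $\IBI{k+1}$ as a hypothesis --- i.e.\ input bound independence at the \emph{next} bound, which is part of what you are trying to prove. The paper even exhibits a counterexample (Figure~\ref{fig:ex-nondirected}, discussed in the remark after Lemma~\ref{lem:exist-path-eqpeer}): a system that is $\OBI{1}$, $\IBI{1}$ and $1$-exhaustive for which the rescheduling conclusion fails, precisely because $\IBI{2}$ fails. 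The reason your ``sends are never blocked by the bound'' intuition is insufficient is that without $\IBI{}$ at the larger bound, a participant may take a \emph{different receive branch} in the less-bounded run, so the commutation argument does not reproduce the same local behaviour and the $k$-bounded replay need not reach a configuration exposing the same pair of enabled receives. The paper escapes this by a bootstrapping induction on the bound: from ($\OBI{k}$, $\CIBI{k}$, $k$-exhaustive) it first derives $\IBI{k+1}$ (Lemma~\ref{lem:kri-kexh-imp-kba-depends}, itself an induction on execution length that establishes the rescheduling property and the input-independence property \emph{simultaneously}), then $(k{+}1)$-exhaustivity and $\OBI{k{+}1}$ (Lemma~\ref{lem:k-exhau-kplus-exhau}) and $\CIBI{k{+}1}$ (Lemma~\ref{lem:kri-kexh-imp-cibi-reduced}), and iterates to reach every finite bound, hence $\infIBI$. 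Without this induction on $k$ --- or some other device that supplies input bound independence at the intermediate bounds before invoking the permutation lemma --- your one-shot contraposition does not go through.
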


The decidability of $\OBI{k}$, $\IBI{k}$, $\DIBI{k}$, $\CIBI{k}$,
and $k$\MC\ is straightforward since both $\RS_k(S)$ (which
has an exponential number of states wrt. $k$) and $\kTRANSS{}$ are
finite, given a finite $k$.
Theorem~\ref{thm:decidability-all} states the space complexity of the
procedures, except for $\CIBI{k}$ for which a complexity class is yet
to be determined.
We show that the properties are \PSPACE{} by reducing to an instance
of the reachability problem over a transition system built following
the construction of Bollig et al.\cite[Theorem 6.3]{bollig2010}.
The rest of the proof follows from similar arguments in Genest et
al.~\cite[Proposition 5.5]{GenestKM07} and Bouajjani et
al.~\cite[Theorem 3]{Bouajjani2018}.
 
\begin{restatable}{theorem}{thmdecidabilityall}\label{thm:decidability-all}
    The problems of checking the $\OBI{k}$, $\IBI{k}$, $\DIBI{k}$,
  $k$-safety, and $k$-exhaustivity properties are all decidable and
  \PSPACE{}-complete (with $k \in \naturals_{>0}$ given in unary).
    The problem of checking the $\CIBI{k}$ property is decidable.
\end{restatable}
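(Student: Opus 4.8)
The plan is to prove \PSPACE{} membership and \PSPACE{}-hardness for the first five properties separately, and to dispatch $\CIBI{k}$ by an independent decidability argument. For membership, the key observation is that, since $k$ is given in unary, every $k$-bounded configuration $\csconf q w$ can be stored in space polynomial in the size of $S$ and in $k$: the control state $\vec q$ needs $\sum_{\p \in \PSet}\log|Q_\p|$ bits, and each queue $w_{\p\q}$ is a word of length at most $k$ over $\ASigma$, needing at most $k\log|\ASigma|$ bits. The one-step $k$-bounded relation $\kTRANSS{}$ is computable in polynomial time from such a stored configuration, so $\kTS S$ is an (exponentially large) graph whose vertices have polynomial size and whose edges are polynomial-time testable. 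Plain $k$-bounded reachability is therefore decidable in nondeterministic polynomial space, hence in \PSPACE{} by Savitch's theorem, and non-reachability follows by closure under complement. I would phrase these queries uniformly over $\kTS S$ following the construction of Bollig et al.~\cite[Theorem 6.3]{bollig2010}.

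I would then express the negation of each property as a fixed Boolean combination of reachability and non-reachability queries over $\kTS S$, evaluated at a guessed polynomial-size configuration, and observe that such combinations stay in \PSPACE{} because candidate configurations can be enumerated one at a time with space reuse while each inner query runs as a \PSPACE{} subroutine. Concretely: $S$ fails (\ER) iff some reachable $\csconf q w$ has a queue with head $\msg a$ from which no $k$-bounded execution enables $\PRECEIVE{pq}{a}$ (reach then non-reach), and (\PG) is the analogue for a receiving local state with no enabled reception; $\OBI{k}$ and $\IBI{k}$ each fail iff a configuration violating a purely local enabling condition on the outgoing transitions of a sending (resp.\ receiving) state is reachable (plain reachability); $\DIBI{k}$ fails iff some reachable $s \kTRANSS{\PRECEIVE{qp}{a}} s'$ admits a transition $(q_\p,\PRECEIVE{sp}{b},q'_\p)$ with $\s\neq\q$ such that $\PRECEIVE{sp}{b}$ is enabled at $s$ or $\PSEND{sp}{b}$ is reachable from $s'$ (reach then reach); and $k$-exhaustivity fails iff some reachable $s$ has a sending state $q_\p$ with a transition $(q_\p,\action,q'_\p)$ such that, in the sub-system of $\kTS S$ obtained by deleting every transition whose subject is $\p$, no configuration reachable from $s$ enables $\action$ (reach then constrained non-reach, the deletion enforcing $\p\notin\acts$). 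The main obstacle in this part is precisely this reach-then-non-reach shape of (\ER), (\PG) and $k$-exhaustivity; it is handled by the collapse of nondeterministic polynomial space to \PSPACE{} and its closure under complement.

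Hardness I would obtain by reduction from $k$-bounded reachability in systems of communicating automata, which is \PSPACE{}-hard by the arguments of Genest et al.~\cite[Proposition 5.5]{GenestKM07} and Bouajjani et al.~\cite[Theorem 3]{Bouajjani2018}: a polynomial-space computation is encoded so that a designated configuration becomes $k$-reachable exactly when the computation accepts, and a small gadget turns reaching it into a violation (or, by complementation, a non-violation) of the targeted property, while respecting the \CSA\ restrictions and, where required, directedness so that $\OBI{k}$/$\infIBI$ are not accidentally broken. The delicate point here is engineering one reduction flexible enough to cover all five properties rather than reproving hardness five times.

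Finally, for the residual claim I would argue decidability of $\CIBI{k}$ directly. Once $s = \csconf q w$ is fixed, the dependency relation $\prec$ used in Definition~\ref{def:non-csa-mc-dep} is a static binary relation on $\ASet$, since its channel test $\word_{\chan{\action}}=\emptyw$ reads only the fixed queues of $s$; hence $\dependsin{s}{\PRECEIVE{qp}{a}}{\acts}{\PSEND{sp}{b}}$ asserts exactly that some $\prec$-chain from $\PRECEIVE{qp}{a}$ to $\PSEND{sp}{b}$ occurs as a subsequence of $\acts$. Such a chain can be tracked by annotating each state of $\kTS S$ with a frontier in $\powerset{\ASet}$ recording which actions are already reachable along the chain; the universal condition ``every $k$-bounded execution from $s'$ that fires $\PSEND{sp}{b}$ maintains the chain'' then becomes a non-reachability property of the finite product of $\kTS S$ with this annotation, which is decidable for fixed $k$. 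I would leave the exact complexity open, since the $\powerset{\ASet}$-valued annotation on top of the already exponential $\kTS S$ is what obstructs an obvious polynomial-space bound.
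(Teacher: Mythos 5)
Your proposal is correct and, on the membership side, follows essentially the paper's route: encode each $k$-bounded configuration in space polynomial in $|S|$ and in $k$ (unary), observe that $\kTRANSS{}$ is polynomial-time testable, and phrase each property's negation as reachability/non-reachability queries over $\kTS{S}$ following Bollig et al., closing under complement via Savitch. Your trick of deleting all transitions with subject $\p$ to enforce $\p \notin \acts$ in the $k$-exhaustivity check is exactly the right way to handle that constraint. Where you diverge is on hardness and on $\CIBI{k}$. For hardness the paper does not build one flexible gadget: it gets $k$-exhaustivity hardness \emph{for free} from its own Section~\ref{sec:exist-bounded} results, reducing from the known \PSPACE{}-complete problem of checking existential stable $k$-boundedness of systems with the stable property (Genest et al.), which coincides with $k$-exhaustivity on that class by Theorem~\ref{thm:existentially-dlf-iff-kexh}; for the remaining properties it adapts the Bouajjani et al.\ construction (reduction from intersection-emptiness of finite automata), attaching a small property-violating gadget per property --- essentially your plan, but anchored to a concrete known construction. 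Your single generic reduction from $k$-bounded reachability is plausible but leaves the per-property gadgets (in particular the reach-then-non-reach shape needed to permanently disable a send within the bound for $k$-exhaustivity) as the real work, which you acknowledge but do not discharge. Finally, on $\CIBI{k}$ your argument is actually \emph{more} careful than the paper's: the paper dismisses decidability by finiteness of $\RS_k(S)$, which glosses over the universal quantification over unboundedly long executions $\acts$ in Definition~\ref{def:non-csa-mc-dep}; your product of $\kTS{S}$ with a $\powerset{\ASet}$-valued frontier tracking partial dependency chains is the right way to reduce that quantification to a finite-state non-reachability check, and it also explains why no obvious polynomial-space bound falls out, matching the paper's decision to leave the complexity of $\CIBI{k}$ open.
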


\subsection{Local Bound-Agnosticity}\label{sec:new-completeness}
We introduce local bound-agnosticity and show that it fully
characterises $k$-exhaustive systems.
Local bound-agnosticity guarantees that each communicating automaton
behave in the same manner for any bound greater than or equal to some
$k$.
Therefore such systems may be executed transparently under a bounded
semantics (a communication model available in Go and Rust).
\begin{definition}[Transition system]\label{def:kts}
  The $k$-bounded transition system of $S$ is the labelled transition
  system (LTS) $\kTS{S} = (N, s_0, \Delta)$ such that $N = \RS_k(S)$, $s_0$ is
  the initial configuration of $S$,
  $\Delta \subseteq N {\times} \ASet {\times} N$ is the transition
  relation, and $(s, \action, s') \in \Delta$ if and only if
  $s \kTRANSS{\action} s'$.
    \end{definition}

\begin{definition}[Projection]\label{def:kts-projections}
  Let $\lts$ be an LTS over $\ASet$.
      The \emph{projection} of $\lts$ onto $\p$, written
  $\epsproj{\lts}{p}$, is obtained by replacing each label $\action$
  in $\lts$ by $\proj{\action}{p}$.
      \end{definition}
Recall that the projection of action $\action$, written
$\proj{\action}{p}$, is defined in Section~\ref{sec:cfsm}.
The automaton $\epsproj{\kTS{S}}{p}$ is essentially the \emph{local}
behaviour of participant $\p$ within the transition system $\kTS{S}$.
When each automaton in a system $S$ behaves equivalently for any bound
greater than or equal to some $k$, we say that $S$ is {locally
  bound-agnostic}.
Formally, $S$ is \emph{locally bound-agnostic for $k$} when
$\epsproj{\kTS{S}}{p}$ and $\epsproj{\TS{n}{S}}{p}$ are weakly
bisimilar ($\wbisim$) for each participant $\p$ and any $n \geq k$.
For $\OBI{k}$ and $\infIBI$ systems, local bound-agnosticity is a
\emph{necessary and sufficient} condition for $k$-exhaustivity, as
stated in Theorem~\ref{thm:completeness} and
Corollary~\ref{cor:wbisim-csa}.

\begin{restatable}{theorem}{thmcompleteness}\label{thm:completeness}
  Let $S$ be a system.  
    \begin{itemize}
  \item[(1)\!] If
    $\exists k \in \naturals_{>0} \qst \forall \p \in \PSet \qst
    \epsproj{\kTS{S}}{p} \wbisim \epsproj{\TS{k{+}1}{S}}{p}$,
    then $S$ is $k$-exhaustive.
  \item[(2)\!] If $S$ is $\OBI{k}$, $\infIBI$, and $k$-exhaustive, then
    $\forall \p \in \PSet  \! \qst \! \epsproj{\kTS{S}}{p} \wbisim
    \epsproj{\TS{k{+}1}{S}}{p}$.
  \end{itemize}
      \end{restatable}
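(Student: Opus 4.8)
The plan is to prove the two implications separately. A useful preliminary observation for both is that, since every $k$-bounded execution is also $(k{+}1)$-bounded and the $k$- and $(k{+}1)$-transition relations coincide on $k$-bounded configurations, the projection $\epsproj{\kTS{S}}{p}$ is the sub-LTS of $\epsproj{\TS{k{+}1}{S}}{p}$ induced by the states $\RS_k(S)$: the only transitions present in the larger system but not the smaller are sends that fill a queue to exactly $k{+}1$, together with everything reachable only through such an ``overflow''. The entire content of the theorem is that, under the stated hypotheses, this overflow is invisible to each participant's projection.

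For part~(1), fix $\csconf{q}{w}\in\RS_k(S)$ with $q_\p$ sending and a transition $(q_\p,\action,q'_\p)\in\delta_\p$, say $\action=\PSEND{pq}{a}$. Since $\csconf{q}{w}$ is $k$-bounded, the queue $\ptp{pq}$ holds at most $k$ messages, so $\action$ is enabled from $\csconf{q}{w}$ in $\TS{k{+}1}{S}$; hence in $\epsproj{\TS{k{+}1}{S}}{p}$ this configuration performs a \emph{visible} $\action$-step. I would transfer this step back to $\epsproj{\kTS{S}}{p}$ using $\wbisim$: the matching weak move has the shape $\tau^\ast\,\action\,\tau^\ast$, i.e.\ a $k$-bounded execution whose projection onto $\p$ is exactly $\action$ and whose $\tau$-prefix $\acts$ consists solely of non-$\p$ actions, yielding $\csconf{q}{w}\kTRANSS{\acts}\kTRANSS{\action}$ with $\p\notin\acts$ --- precisely $k$-exhaustivity --- provided the transfer is \emph{anchored} at $\csconf{q}{w}$. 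This anchoring is the delicate point: weak bisimilarity of the projections only controls $\p$'s visible behaviour, so a priori the bisimulation relates $\csconf{q}{w}$ (on the $k$-side) to \emph{some} $(k{+}1)$-configuration sharing $\p$'s trace --- hence the same local state $q_\p$ by determinism --- but not necessarily to $\csconf{q}{w}$ itself, and the enabledness of the \emph{send} depends on the length of $\ptp{pq}$, which is not fixed by $\p$'s trace alone. I would close this gap by exploiting that the hypothesis quantifies over \emph{all} participants: the length of $\ptp{pq}$ is governed by $\p$'s sends and $\q$'s receipts, so the bisimilarity for $\q$, together with an asynchronous commutation (diamond) argument, lets me realign the partner configuration so that $\action$ is genuinely enabled there.

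For part~(2), I would exhibit an explicit weak bisimulation. For each $\p$, take
\[
\arelation \;=\; \{\,(s,t)\in\RS_k(S)\times\RS_{k{+}1}(S) \st s \text{ and } t \text{ are reached by executions with the same projection onto } \p\,\},
\]
so that related configurations share $\p$'s local state. The two $\tau$-cases are immediate, since a non-$\p$ action leaves $\p$'s projection unchanged and can be matched by zero steps on the other side. The visible cases are where the hypotheses enter. If $t$ performs a visible send $\action=\PSEND{pq}{a}$ (so $q_\p$ is sending), then applying $k$-\emph{exhaustivity} to $s$ gives $\acts$ with $\p\notin\acts$ and $s\kTRANSS{\acts}\kTRANSS{\action}$ --- exactly the weak $\action$-move needed on the $k$-side, with the resulting configurations again sharing $\p$'s projection; this is the elegant core, since exhaustivity is tailored to supply precisely the draining prefix $\acts$. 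Receive moves are matched using $\infIBI$, which guarantees that at most one receive is enabled so the matching move is uniquely determined, while $\OBI{k}$ ensures sending states impose no spurious constraint in the reverse direction.

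The step I expect to be the main obstacle is the faithful matching of \emph{receive} moves and of the \emph{reverse} ($k$-side to $(k{+}1)$-side) send moves, both of which require knowing that the messages in transit towards $\p$ agree in $s$ and $t$. Since the content of an incoming channel $\ptp{qp}$ depends on $\q$'s behaviour and not only on $\p$'s trace, the bare ``same-$\p$-projection'' relation is too coarse, and I would strengthen it (or prove a separate confluence lemma) establishing that, under $\OBI{k}$, $\infIBI$, and $k$-exhaustivity, every $(k{+}1)$-bounded execution can be rescheduled into a $k$-bounded one with the \emph{same projection onto each participant}. This rescheduling lemma --- showing that overflowing a queue to $k{+}1$ never lets a participant observe anything new --- is the technical heart of the result, and the same commutation machinery is what ultimately discharges the anchoring gap in part~(1).
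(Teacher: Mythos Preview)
Your plan for Part~(2) is essentially the paper's. The ``rescheduling lemma'' you isolate---every $(k{+}1)$-bounded execution from a $k$-reachable configuration can be extended and then replayed as a $k$-bounded execution with the same projection onto every participant---is exactly Lemma~\ref{lem:exist-path-eqpeer}, and it is indeed the technical core. The paper packages the argument as a short proof by contradiction (assume the projections are not bisimilar, locate a $(k{+}1)$-visible step that the $k$-side fails to match, then invoke Lemma~\ref{lem:exist-path-eqpeer} to manufacture the matching $k$-bounded execution) rather than exhibiting an explicit bisimulation relation, but the substance is the same and your identification of what does the work is correct.

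For Part~(1) your route diverges from the paper's, and in a direction that adds complication. The paper's proof (Lemma~\ref{lem:completeness}) never invokes the bisimilarity of any participant other than~$\p$, nor any diamond/commutation argument. It simply notes that $\kTS{S}$ is a sub-LTS of $\TS{k{+}1}{S}$ (so the violating configuration $s$ lives in both), observes that the blocked send is enabled at $s$ in the larger system, and then asserts---from $\Delta\subseteq\Delta'$, $N\subseteq N'$, and the assumed bisimilarity at $s_0$---that the two $\p$-projections are weakly bisimilar \emph{at $s$ as well}; the visible $\PSEND{pq}{a}$-step on the $(k{+}1)$-side then forces a weak match on the $k$-side, contradicting the failure of exhaustivity. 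Your ``anchoring'' worry is exactly about this propagation step, and you are right that for arbitrary LTSes bisimilarity at the initial state of a sub-LTS does not automatically transfer to every reachable state. The paper's justification here is terse. But the remedy you sketch---pulling in $\q$'s bisimilarity to control the length of $w_{\p\q}$ and realigning via a commutation argument---is not what the paper does and is considerably heavier than the one-line step it replaces; if you pursue this direction you should expect the resulting argument for Part~(1) to be longer than the paper's entire proof of the theorem.
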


\begin{restatable}{corollary}{corbisimcsa}\label{cor:wbisim-csa}
  Let
  $S$ be $\OBI{k}$ and $\infIBI$ s.t.\
          $\forall \p \in \PSet \qst
  \epsproj{\kTS{S}}{p} \wbisim \epsproj{\TS{k{+}1}{S}}{p}$, 
    then $S$ is locally bound-agnostic for $k$.

    \end{restatable}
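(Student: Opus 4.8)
The plan is to upgrade the single bisimilarity assumed in the hypothesis (the case $n = k{+}1$) into the whole family demanded by local bound-agnosticity, namely $\epsproj{\kTS{S}}{p} \wbisim \epsproj{\TS{n}{S}}{p}$ for every $\p \in \PSet$ and every $n \geq k$. First I would feed the hypothesis into Theorem~\ref{thm:completeness}(1): since $\epsproj{\kTS{S}}{p} \wbisim \epsproj{\TS{k{+}1}{S}}{p}$ holds for all $\p$, the system $S$ is $k$-exhaustive. Together with the assumptions $\OBI{k}$ and $\infIBI$, this puts $S$ squarely into the hypothesis of Theorem~\ref{thm:completeness}(2), but only at the bound $k$; the work is to climb from $k$ to an arbitrary $n \geq k$.

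The climb I would organise as an induction on $n$. Suppose, for some $n \geq k$, that $S$ is $\OBI{n}$ and $n$-exhaustive (it is $\infIBI$ at every bound by assumption, $\infIBI$ being the unbounded property). Then Theorem~\ref{thm:completeness}(2), instantiated at the bound $n$, yields $\epsproj{\TS{n}{S}}{p} \wbisim \epsproj{\TS{n{+}1}{S}}{p}$ for every $\p$. Composing these one-step bisimilarities from $k$ up to $n{-}1$ by transitivity of $\wbisim$ then delivers $\epsproj{\kTS{S}}{p} \wbisim \epsproj{\TS{n}{S}}{p}$, which is exactly local bound-agnosticity for $k$. So, modulo the induction invariant, the corollary falls out of the two halves of Theorem~\ref{thm:completeness}.

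The crux — and the step I expect to fight with — is the induction invariant: that $\OBI{k}$, $\infIBI$, and $k$-exhaustivity propagate upward, i.e.\ that $S$ is $\OBI{n}$ and $n$-exhaustive for every $n \geq k$. This is genuinely not free, since exhaustivity is \emph{not} monotone in the bound in general: Example~\ref{ex:non-dir-not-safe} exhibits a system that is $1$-exhaustive yet not $2$-exhaustive. The decisive feature that excludes such pathologies here is $\infIBI$ — note that the counterexample is only $\IBI{1}$ and fails $\IBI{k}$ for $k \geq 2$. Concretely, I would isolate the propagation as its own lemma: given a configuration in $\RS_{n{+}1}(S)$ whose sending state enables $(q_\p, \action, q'_\p)$, exhibit an $(n{+}1)$-bounded witness $\acts$ with $\p \notin \acts$ that fires $\action$. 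The idea is that $\infIBI$ keeps every receiving state deterministic (at most one input enabled) \emph{independently of the bound}, so raising the channel capacity opens no new branch of behaviour; combined with $\OBI{}$ (internal choices stay fully available) this lets one reschedule any $(n{+}1)$-bounded run into a $k$-bounded one with identical projections, and transport the $k$-exhaustivity witness back up. Managing this rescheduling while simultaneously respecting the channel bound and the side-condition $\p \notin \acts$ is the delicate calculation, and also yields $\OBI{n}$, whose propagation is easier since a larger bound only makes sends more readily enabled.

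An alternative packaging of the same engine, which I would keep in reserve, routes through the existential-boundedness results developed later in the paper: $k$-exhaustivity implies existential $k$-boundedness, whence every $n$-bounded execution ($n \geq k$) reschedules to a $k$-bounded one with the same projection onto each participant. That directly forces $\epsproj{\kTS{S}}{p} \wbisim \epsproj{\TS{n}{S}}{p}$ for all $n \geq k$, collapsing the induction into a single rescheduling argument. Either way, using \emph{weak} rather than strong bisimilarity is what absorbs the moves of the other participants, which become $\epsilon$-labels under $\epsproj{\cdot}{p}$, so the matching only has to track actions whose subject is $\p$.
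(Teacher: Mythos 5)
Your proposal is correct and follows essentially the same route as the paper: Theorem~\ref{thm:completeness}(1) to obtain $k$-exhaustivity, then an upward-propagation lemma (the paper's Lemma~\ref{lem:k-exhau-kplus-exhau}, exactly the induction invariant you identify as the crux, with $\infIBI$ supplying the $\IBI{(n{+}1)}$ hypothesis at each step), then Theorem~\ref{thm:completeness}(2) at each bound $n\geq k$ and transitivity of $\wbisim$. Your diagnosis of why exhaustivity propagates here despite failing to be monotone in general is also the one the paper relies on.
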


Theorem~\ref{thm:completeness} (1) is reminiscent of the
(\PSPACE-complete) checking procedure for existentially bounded
systems with the stable property~\cite{GenestKM07} (an
\emph{undecidable} property).
Recall that $k$-exhaustivity is not sufficient to guarantee safety,
see Examples~\ref{ex:non-dir-not-safe} and~\ref{ex:not-obi}.
We give an effective procedure (based on partial order reduction) to
check $k$-exhaustivity and related properties in
\appendixref{app:por}.

 \section{Existentially Bounded and Synchronisable Automata}\label{sec:exist-bounded}

\subsection{Kuske and Muscholl's Existential Boundedness}\label{sub:kuske-exist}
Existentially bounded communicating
automata~\cite{GenestKM06,GenestKM07,kuske14} are a class of
communicating automata whose executions can always be scheduled in
such a way that the number of pending messages is bounded by a given
value.
Traditionally, existentially bounded communicating automata are
defined on communicating automata that feature (local) accepting
states and in terms of \emph{accepting runs}. An accepting run is an
execution (starting from $s_0$) which terminates in a configuration
$\csconf{q}{w}$ where each $q_\p$ is a local accepting state.
In our setting, we simply consider that every local state $q_\p$ is an
accepting state, hence any execution $\acts$ starting from $s_0$ is an
accepting run.
We first study existential boundedness as defined in~\cite{kuske14} as
it matches more closely $k$-exhaustivity, we study 
the ``classical'' definition of existential
boundedness~\cite{GenestKM07} in Section~\ref{sub:classical-exist}.

Following~\cite{kuske14}, we say that an execution $\acts \in \ASetC$
is \emph{valid} if for any prefix $\actsb$ of $\acts$ and any channel
$\ptp{pq} \in \CSet$, we have that
$\ercvproj{\actsb}{pq}$ is a prefix of $\esndproj{\actsb}{pq}$, i.e.,
an execution is valid if it models the {\sc fifo}
semantics of communicating automata.
\begin{definition}[Causal equivalence~\cite{kuske14}]
    Given $\acts, \actsb \in \ASetC$, we define:
  ${\acts}\resche{\actsb}$ \emph{iff}
  $\acts$ and $\actsb$ are \emph{valid} executions
    and
    $\forall \p \in \PSet \qst
  \onpeer{p}{\acts} = \onpeer{p}{\actsb}$.
  We write $\equivclass{\acts}{\resche}$ for the equivalence class of
$\acts$ wrt.\ $\resche$.
\end{definition}

\begin{definition}[Existential boundedness~\cite{kuske14}]\label{def:exist-bounded}
    We say that a valid execution $\acts$ is \emph{$k$-\mbounded{}} if, for every
  prefix $\actsb$ of $\acts$ the difference between the number of
  \emph{matched} events of type $\PSEND{pq}{}$ and those of type
  $\PRECEIVE{pq}{}$ is bounded by $k$, i.e.,
    $  
  \mathit{min}\{
  \lvert \esndproj{\actsb}{pq} \rvert 
  , 
  \lvert \ercvproj{\acts}{pq} \rvert 
  \}
  -
  \lvert \ercvproj{\actsb}{pq} \rvert 
  \leq k
  $.

  \noindent
  Write $\bounded{\ASetC}{k}$ for the set of $k$-\mbounded{} words.
      An execution $\acts$ is \emph{existentially} $k$-bounded if
  $\equivclass{\acts}{\resche} \cap \bounded{\ASetC}{k} \, \neq
  \varnothing$.
      A system $S$ is existentially $k$-bounded, written
  $\exists$-$k$-bounded, if each execution in
  $\{\acts \st \exists s \qst s_0 \smash{\TRANSS{\acts}} s \}$ is
  existentially $k$-bounded.
\end{definition}

\newcounter{ExampleCounter}

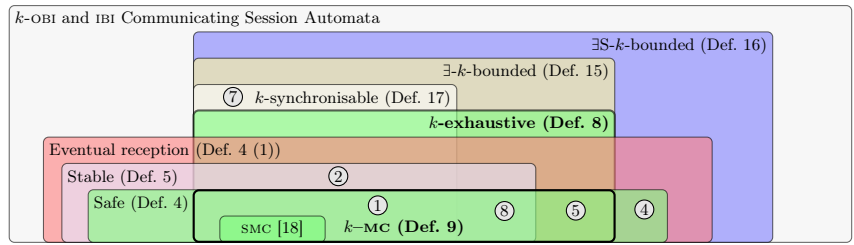
\begin{figure}[t]
  \centering
  \begin{tikzpicture}[scale=0.7, every node/.style={transform shape}]
        \node (cfsm) at (0,0) [draw=black,minimum width=16cm,minimum
    height=4.5cm,venn=gray!10,rounded corners=2pt] {}; \node[below
    right] at (cfsm.north west) {{$\OBI{k}$ and $\infIBI$ Communicating Session Automata}};
        \node (cexist) at (1,0) [draw=black,minimum width=11cm,minimum height=4cm,venn=blue!50,rounded corners=2pt] {};
    \node[below left] at (cexist.north east) {{{$\exists$S-$k$-bounded}} (Def.~\ref{def:classical-exist-bounded})};
        \node (exist) at (-0.5,0) [draw=black,minimum width=8cm,minimum height=3.5cm,venn=yellow!50,rounded corners=2pt] {};
    \node[below left] at (exist.north east) {{{$\exists$-$k$-bounded}}
      (Def.~\ref{def:exist-bounded})};
        \node (synk) at (-2,0) [draw=black,minimum width=5cm,minimum
    height=3cm,venn=white!50,rounded corners=2pt] {};
    \node[below left] at (synk.north east) {{{$k$\synk}}
      (Def.~\ref{def:synk})};
        \node (exh) at (-0.5,0) [draw=black,minimum width=8cm,minimum
    height=2.5cm,venn=green!50,text width=1cm,align=center,rounded
    corners=2pt,thick] {};
            \node[below left] at (exh.north east) {\textbf{$k$-exhaustive
          (Def.~\ref{def:exhaustive})}};
        \node (evtreception) at (-1,0) [draw=black, minimum
    width=12.7cm,minimum height=2cm,venn=red!50,rounded corners=2pt]
    {};
    \node[below right] at (evtreception.north west) {{{Eventual reception}}
      (Def.~\ref{def:k-safety} (1))};
        \node (safe) at (-2.5,0) [draw=black, minimum width=9cm,minimum height=1.5cm,venn=blue!10,rounded corners=2pt] {};
        \node[below right] at (safe.north west) (safetxt)
    {{{Stable}} (Def.~\ref{def:deadlockfree})};
        \node  at (-1,0) [draw=black, minimum width=11cm,minimum height=1cm,venn=green!50,rounded corners=2pt]
    (safetynode) {};
                        \node[below right] at (safetynode.north west) (safetytxt)
    {{{Safe}} (Def.~\ref{def:k-safety})};
        \node  at (-3,0) [draw=black, minimum width=2cm,minimum height=0.5cm,venn=green!50,rounded corners=2pt]
    (smcnode) {};
        \node at (smcnode) (smctxt) {{\color{black}\SMC{\footnotesize \cite{DY13}}}};
        \draw[black, thick,rounded corners=2pt]
    ([yshift=-0.5pt,xshift=-0.5pt]safetynode.north east-|exh.north east)
    --
    ([yshift=-0.5pt,xshift=0.5pt]safetynode.north west-|exh.north west)
    --
    ([yshift=0.5pt,xshift=0.5pt]safetynode.south west-|exh.south west)
    --
    ([yshift=0.5pt,xshift=-0.5pt]safetynode.south east-|exh.south east)
    -- cycle
    ;
        \node[above] at ($(safetynode.south east-|exh.south
    east)!0.5! (safetynode.south west-|exh.south west)$) 
    (kmctitle) {\textbf{$k$-\MC\ (Def.~\ref{def:compa})}};
        \node[exnode] at ([xshift=-2.5cm]$(exh.north)!0.5!(synk.north)$) {\exref{extbl:ksynk}};
    \node[exnode]  at ($(safetynode.north)!0.3!(safetynode.south)$)  (exrunning) {\exref{extbl:kmc}};     \node[exnode]  at ($(synk.east)!0.6!(safe.south east)$) (exnosynk)  {\exref{extbl:nosynk}};
            \node[exnode]  at ([yshift=-0.9cm]$(evtreception.east)!0.7!(exist.east)$) (exunboundedsafe) {\exref{extbl:unboundedsafe}};
    \node[exnode]  at ([yshift=-0.4cm]$(safe.east)!0.5!(exh.east)$) {\exref{extbl:safenotstable}};
        \node[exnode]  at ($(safe.north)!0.5!(safetynode.north)$)
    {\exref{extbl:dlfnotsafe}};
                  \end{tikzpicture} 
    \caption{Relations between $k$-exhaustivity, existential
    $k$-boundedness, and $k$-synchronisability in $\OBI{k}$ and
    $\infIBI$ \CSA\ (the circled numbers refer to
    Table~\ref{tbl:examples}).  }\label{fig:venn}
\end{figure}


\begin{example}
  Consider Figure~\ref{fig:ex-unbounded}.
    $(M_\p, M_\q)$ is \emph{not} existentially $k$-bounded, for any $k$:
  at least one of the queues must grow infinitely for the
  system to progress.
    Systems $(M_\p, N_\q)$ and $(M_\p, N'_\q)$ are existentially bounded
  since any of their executions can be scheduled to an
  $\resche$-equivalent execution which is $2$-\mbounded{}.
\end{example}

The relationship between $k$-exhaustivity and existential boundedness
is stated in Theorem~\ref{thm:existentially-iff-kmc-notreduced} and
illustrated in Figure~\ref{fig:venn} for $\OBI{k}$ and $\infIBI$ \CSA, where
\SMC\ refers to synchronous multiparty compatibility~\cite[Definition
4.2]{DY13}.
The circled numbers in the figure refer to key examples summarised in
Table~\ref{tbl:examples}.
The strict inclusion of $k$-exhaustivity in existential
$k$-boundedness is due to systems that do not have the eventual
reception property, see
Example~\ref{ex:exist-not-exh}.

\begin{example}\label{ex:exist-not-exh}
      The system below is $\exists$-1-bounded but is \emph{not}
  $k$-exhaustive for any $k$.
      \begin{center}
    $\begin{array}{c@{\qquad}c@{\qquad}c}
   M_\ptp{p}:
   \begin{tikzpicture}[mycfsm]
     \node[state, initial, initial where=left] (s0) {};
          \path
     (s0) edge [loop right,looseness=60] node [right] {$\PRECEIVE{sp}{{\color{red}c}}$} (s0)
          ;
   \end{tikzpicture}
   &
     M_\s:
     \begin{tikzpicture}[mycfsm]
       \node[state, initial, initial where=left] (s0) {};
       \node[state, right=of s0] (s1) {};
              \path
       (s0) edge [bend left] node [above] {$\PSEND{sr}{a}$} (s1)
       (s1) edge [bend left] node [below] {$\PSEND{sp}{b}$} (s0)
       ;
     \end{tikzpicture}
   &
     M_\ptp{r}:
     \begin{tikzpicture}[mycfsm]
       \node[state, initial, initial where=left] (s0) {};
              \path
       (s0) edge [loop right,looseness=60] node [right] {$\PRECEIVE{sr}{a}$} (s0)
       ;
     \end{tikzpicture}
 \end{array}
  $

  \end{center}
  \noindent
  For any $k$, the channel $\ptp{sp}$ eventually gets full
  and the send action $\PSEND{sp}{b}$ can no longer be fired;
  hence it does \emph{not} satisfy $k$-exhaustivity.
    Note that each execution can be reordered into a $1$-\mbounded{}
  execution (the $\msg{b}$'s are never matched).
\end{example}

\begin{restatable}{theorem}{thmexistentiallyiffkmcnotreduced}\label{thm:existentially-iff-kmc-notreduced}
  (1) If $S$ is $\OBI{k}$, $\infIBI$, and $k$-exhaustive,
  then it is $\exists$-$k$-bounded.
    (2) If $S$ is $\exists$-$k$-bounded and satisfies eventual reception,
  then it is $k$-exhaustive.
  \end{restatable}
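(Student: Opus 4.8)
The plan is to prove the two implications of Theorem~\ref{thm:existentially-iff-kmc-notreduced} separately, as they require rather different arguments. The goal of direction (1) is to show that every execution of a $k$-exhaustive system (under the $\OBI{k}$/$\infIBI$ hypotheses) can be rescheduled into a causally-equivalent $k$-\mbounded{} execution; direction (2) establishes the converse under the additional eventual reception assumption.

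\textbf{Direction (1).} First I would take an arbitrary execution $\acts$ with $s_0 \TRANSS{\acts} s$. The key structural fact I want to exploit is that $k$-exhaustivity guarantees that whenever a configuration in $\RS_k(S)$ has an enabled send action, that send can be fired within a $k$-bounded continuation. The plan is to build, by induction on the length of $\acts$, a reordering $\actsb$ with $\acts \resche \actsb$ that stays $k$-\mbounded{}. The natural scheduling strategy is a greedy one: prioritise receive actions (which only decrease queue occupancy) so that matched messages are consumed eagerly, thereby keeping the number of pending-but-receivable messages below $k$. The role of $\OBI{k}$ and $\infIBI$ is crucial here: $\OBI{k}$ ensures that deferring a send does not disable any sibling send (internal choices are bound-independent), while $\infIBI$ guarantees receiving states are deterministic so that reordering receives does not change the local behaviour of any participant. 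I expect the invariant to maintain is that the rescheduled prefix is both valid (FIFO-respecting) and keeps, for each channel, the count of matched-but-unreceived messages at most $k$; $k$-exhaustivity is exactly what lets me fire a pending send ``on time'' before the bound is violated. The peer-projection condition $\onpeer{p}{\acts} = \onpeer{p}{\actsb}$ required by $\resche$ follows because the reordering only permutes independent actions (actions of different subjects or on empty channels), which is the standard trace-theoretic commutation.

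\textbf{Direction (2).} For the converse, I would assume $S$ is $\exists$-$k$-bounded and satisfies eventual reception, and aim to verify the $k$-exhaustivity condition directly from Definition~\ref{def:exhaustive}. Take any $\csconf{q}{w} \in \RS_k(S)$ reached by a $k$-bounded execution, with $q_\p$ sending and $(q_\p,\action,q'_\p) \in \delta_\p$. Since $S$ is $\exists$-$k$-bounded, the execution witnessing reachability, extended by the fireable $\action$, lies in an $\resche$-class containing a $k$-\mbounded{} representative. The plan is to use this representative to exhibit a $k$-bounded sequence $\acts$ with $\p \notin \acts$ followed by $\action$ from $\csconf{q}{w}$. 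Eventual reception is needed to ensure that the messages which must be drained to make room (so that firing $\action$ respects the bound) can indeed be received, i.e., that the required $\ER$-enabling receptions actually exist in $\RS_k(S)$.

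\textbf{Main obstacle.} I expect the technical heart of the proof to be direction (1): constructing the rescheduling and proving that the greedy ``receive-eagerly'' schedule simultaneously (a) remains a valid FIFO execution, (b) never exceeds the bound $k$ on matched pending messages, and (c) preserves every per-participant projection so that $\resche$ holds. The delicate point is arguing that $k$-exhaustivity lets me always interleave the necessary sends \emph{within} the bound without being forced to exceed it --- this is where $\OBI{k}$ must be invoked to guarantee that postponing a send action leaves the remaining choices available, and where a careful commutation argument (actions commute when their subjects differ or their shared channel is empty, matching the dependency relation $\bindep{s}{\action}{\action'}$) is required to justify each local swap. Keeping the bookkeeping of matched versus unmatched events consistent with Definition~\ref{def:exist-bounded} throughout the induction will be the most error-prone part.
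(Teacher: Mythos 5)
There is a genuine gap in your direction (1), and it is located exactly where you predicted the difficulty would be. Your plan is to reorder $\acts$ \emph{in place} by a greedy receive-eager schedule, invoking $k$-exhaustivity to "fire pending sends on time." But $k$-exhaustivity (Definition~\ref{def:exhaustive}) quantifies only over configurations in $\RS_k(S)$, whereas the configurations visited along an arbitrary $\acts$ may lie far outside $\RS_k(S)$, so the hypothesis says nothing directly about them; you never explain how to bridge this. Worse, the greedy schedule by itself cannot work: for the system $(M_\p, M_\q)$ of Figure~\ref{fig:ex-unbounded} (safe, all messages eventually received, every receive deterministic), \emph{every} causally equivalent reordering forces a queue to grow without bound, because $\q$'s own projection only permits it to consume one $\msg{a}$ per $\msg{b}$ it emits while $\p$ emits two $\msg{a}$'s per $\msg{b}$ it consumes. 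Receive-eagerness is blocked by the participants' fixed local orders, not by scheduling choices. The entire content of direction (1) is that $k$-exhaustivity excludes such systems, and your proposal does not show where that exclusion bites. The paper's route is structurally different: it first proves the bootstrapping result that $\OBI{k}$, $\infIBI$ and $k$-exhaustivity imply $n$-exhaustivity for all $n\geq k$ (Lemma~\ref{lem:k-exhau-kplus-exhau}), then the key Lemma~\ref{lem:exist-path-eqpeer-general}: any reachable configuration $s$ can be \emph{extended} by some $\acts'$ to rejoin a configuration $t\in\RS_k(S)$ that is also reached by a $k$-bounded execution $\actsb$ from $s_0$ with $\acts\concat\acts'\resche\actsb$. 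Since $\actsb$ is $k$-bounded it is $k$-\mbounded{} (Lemma~\ref{lem:kb-imp-kmb}), and one then \emph{truncates} $\actsb$ by peeling off the actions of $\acts'$ from the end, each removal preserving $k$-\mbounded{}ness (Lemma~\ref{lem:bounded-exec-last}, which is where the "matched events only" subtlety of Definition~\ref{def:exist-bounded} is handled). Extend-then-truncate, not in-place reordering, is what makes the argument close.

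Your direction (2) is closer to the paper's proof (Lemma~\ref{lem:lisbon-exist-safe-imp-kexh}) in spirit, but it skips the technically delicate step: a $k$-\mbounded{} representative of the class of $\acts_0\concat\acts_1\concat\PRECEIVE{pq}{a}\concat\acts_2$ is an execution \emph{from $s_0$}, not from the intermediate configuration $s=\csconf{q}{w}$, and $k$-\mbounded{} does not mean $k$-bounded (unmatched sends are free). To obtain a $k$-bounded continuation from $s$ one must (i) choose the extension $\acts_2$ via eventual reception so that every send in the relevant prefix is matched, and (ii) surgically delete the actions of $\acts_0$ (each send together with its matching receive) from the $k$-\mbounded{} witness while preserving validity and the bound, as done in the first half of the paper's proof (establishing "$k$-eventual reception"), before concluding $k$-exhaustivity by contradiction on a full channel. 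Without that surgery your claim that the representative "exhibits a $k$-bounded sequence from $\csconf{q}{w}$" does not follow.
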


\subsection{Existentially Stable Bounded Communicating
  Automata}\label{sub:classical-exist}
The ``classical'' definition of existentially bounded communicating
automata as found in~\cite{GenestKM07} differs slightly from
Definition~\ref{def:exist-bounded}, as it relies on a different notion
of accepting runs, see~\cite[page 4]{GenestKM07}.
Assuming that all local states are accepting, we adapt their
definition as follows: a \emph{stable accepting run} is
an execution $\acts$ starting from $s_0$ which terminates in a
\emph{stable} configuration.
\begin{definition}[Existential stable boundedness~\cite{GenestKM07}]
\label{def:classical-exist-bounded}
  A system $S$ is \emph{existentially stable $k$-bounded}, written
  $\exists$S-$k$-bounded, if for each execution $\acts$ in
  $\{\acts \st \exists \stablecsconf{q} \in \RS(S) \qst s_0
  \TRANSS{\acts} \stablecsconf{q} \}$ there is $\actsb$ such that
  $s_0 \kTRANSS{\actsb}$ with $\acts \resche \actsb$.
\end{definition}
A system is existentially stable $k$-bounded if each of its
executions leading to a \emph{stable} configuration can be re-ordered
into a $k$-bounded execution (from $s_0$).

\begin{restatable}{theorem}{thmkuskeimpclassical}\label{thm:kuske-imp-classical}
    (1) If $S$ is existentially $k$-bounded, then it is existentially \emph{stable}
  $k$-bounded.
    (2) If $S$ is existentially \emph{stable} $k$-bounded and has the stable
  property, then it is existentially $k$-bounded.
\end{restatable}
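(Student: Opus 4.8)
The plan is to prove the two implications separately, exploiting the single observation that a $k$-bounded execution and a $k$-match-bounded one differ only in whether \emph{unmatched} pending messages are counted, and that this distinction collapses at a stable configuration, where every sent message has already been received. I will also use the standard diamond property of causal equivalence: if $s_0 \TRANSS{\acts} s$ and $\actsb \resche \acts$, then $\actsb$ is fireable and $s_0 \TRANSS{\actsb} s$ (swapping adjacent independent actions preserves both executability and the target configuration).

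For part~(1), let $\acts$ be any execution with $s_0 \TRANSS{\acts} \stablecsconf{q}$. Since $S$ is $\exists$-$k$-bounded, $\acts$ is existentially $k$-bounded, so there is $\actsb \in \equivclass{\acts}{\resche} \cap \bounded{\ASetC}{k}$. Because $\actsb \resche \acts$ have identical per-participant projections, they induce the same content $\esndproj{\cdot}{pq}$ and $\ercvproj{\cdot}{pq}$ on every channel and reach the same configuration; in particular $\actsb$ reaches $\stablecsconf{q}$, so $\lvert \esndproj{\actsb}{pq}\rvert = \lvert\ercvproj{\actsb}{pq}\rvert$ for each $\ptp{pq}$. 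For such an all-matched execution the $\min$ in the match-bound definition always selects $\lvert\esndproj{\cdot}{pq}\rvert$, so the matched-pending count at every prefix coincides with the actual queue content; hence $\actsb$ being $k$-match-bounded is exactly $\actsb$ being $k$-bounded. By the diamond property $\actsb$ is a genuine run, i.e.\ $s_0 \kTRANSS{\actsb} \stablecsconf{q}$, which is what $\exists$S-$k$-boundedness requires.

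For part~(2), let $\acts$ be any execution $s_0 \TRANSS{\acts} s$ with $s = \csconf{q}{w}$. By the stable property (\DF) there is $\gamma$ with $s \TRANSS{\gamma} \stablecsconf{q'}$, so $\acts\gamma$ reaches a stable configuration. Applying $\exists$S-$k$-boundedness to $\acts\gamma$ yields $\delta$ with $s_0 \kTRANSS{\delta} \stablecsconf{q'}$ and $\delta \resche \acts\gamma$. The events of $\acts$ form a prefix, hence a downward-closed set $E_\acts$, of the causal order of $\acts\gamma$; let $\actsb'$ be the subsequence of $\delta$ retaining exactly the $E_\acts$-events. Restricting the linearisation $\delta$ to the downward-closed $E_\acts$ keeps the order consistent with the partial order and yields the same per-participant projections as $\acts$, so $\actsb' \resche \acts$. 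To see that $\actsb'$ is $k$-match-bounded, fix a prefix $\beta$ of $\actsb'$ and a channel $\ptp{pq}$, and let $\delta'$ be the shortest prefix of $\delta$ containing all events of $\beta$. The matched sends counted for $\actsb'$ are the first $\lvert\ercvproj{\acts}{pq}\rvert$ sends on $\ptp{pq}$, each matched (by FIFO) by a receive \emph{inside} $E_\acts$; the messages left in $w$ are received only during $\gamma$ and lie beyond this threshold, so the $\min$ discounts them. Consequently a matched send that is pending in $\beta$ has its matching $E_\acts$-receive outside $\delta'$, whence by the global FIFO matching in $\delta$ the same message is still pending in $\delta$ after $\delta'$. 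The matched-pending count of $\actsb'$ at $\beta$ is therefore bounded by the queue content of $\delta$ after $\delta'$, which is $\leq k$; thus $\actsb'$ is $k$-match-bounded and $\acts$ is existentially $k$-bounded.

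The main obstacle is this last estimate in part~(2). Deleting the $\gamma$-events from the $k$-bounded run $\delta$ can \emph{increase} some queues (precisely those that $\gamma$ was draining), so one cannot bound $\actsb'$'s queue content directly against $\delta$. The argument goes through only because the $\min$ in the match-bound definition discounts exactly the messages that $\gamma$ consumes, which lets the matched-pending count of $\actsb'$ be controlled by the queues of $\delta$ itself; making the FIFO correspondence between "pending-and-matched in $\actsb'$'' and "pending in $\delta$'' precise is the delicate part of the proof.
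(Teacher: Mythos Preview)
Your proof is correct. For part~(1) you and the paper do the same thing: both observe that an execution reaching a stable configuration has every send matched, so the $\min$ in the match-bound collapses to the send count and $k$-match-bounded coincides with $k$-bounded (this is the paper's Lemma~\ref{lem:stable-conf-mbounded}).

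For part~(2) you take a genuinely different route. The paper extends $\acts$ to a stable configuration via some $\acts'$, obtains a $k$-bounded $\actsb \resche \acts\concat\acts'$, and then \emph{iteratively} removes the actions of $\acts'$ from $\actsb$, one last action at a time, invoking Lemma~\ref{lem:bounded-exec-last} at each step to show the remainder stays $k$-match-bounded; this is exactly the deconstruction used in Lemma~\ref{lem:kmc-imp-existentially}, to which the paper simply points. You instead delete all $\gamma$-events from $\delta$ in one go and bound the matched-pending count of the resulting $\actsb'$ at a prefix $\beta$ directly against the queue content of $\delta$ after its shortest prefix $\delta'$ covering $\beta$. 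The key fact making this work---which you state but which deserves emphasis---is that since $\beta$ is a prefix of the subsequence of $E_\acts$-events in $\delta$, any $E_\acts$-event outside $\beta$ must lie strictly after $\delta'$ in $\delta$; hence every matching receive you need to be absent from $\delta'$ really is absent, and the corresponding message is still pending in $\delta$ after $\delta'$. Your argument is more combinatorial and self-contained, avoiding the induction on $\lvert\acts'\rvert$; the paper's is more modular, reusing a single-step removal lemma already needed elsewhere.
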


We illustrate the relationship between existentially stable bounded
communicating automata and the other classes in Figure~\ref{fig:venn}.
The example below further illustrates the strictness of the
inclusions, see Table~\ref{tbl:examples} for a summary.

\begin{example}
  Consider the systems in Figure~\ref{fig:ex-unbounded}.
    $(M_\p, M_\q)$ and $(M_\p, N'_\q)$ are (trivially) existentially
  stable $1$-bounded since none of their (non-empty) executions
  terminate in a stable configuration.
    The system $(M_\p, N_\q)$ is existentially stable $2$-bounded since
  each of its executions can be re-ordered into a $2$-bounded one.
        The system in Example~\ref{ex:exist-not-exh} is (trivially)
  $\exists$S-$1$-bounded:
    none of its (non-empty) executions terminate in a stable
  configuration (the $\msg{b}$'s are never received).
\end{example}

\begin{restatable}{theorem}{thmexistentiallydlfiffkexh}\label{thm:existentially-dlf-iff-kexh}
  Let $S$ be an $\exists$(S)-$k$-bounded system with the
  stable property, then it is $k$-exhaustive.
\end{restatable}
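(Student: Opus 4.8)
The plan is to obtain $k$-exhaustivity by composing the two results already proved in this section, with the stable property supplying the missing ingredient. The statement covers two hypotheses at once ($\exists$-$k$-bounded and $\exists$S-$k$-bounded), so I would first collapse them into a single one. If $S$ is only known to be $\exists$S-$k$-bounded, then since $S$ has the stable property, Theorem~\ref{thm:kuske-imp-classical}(2) immediately upgrades this to $\exists$-$k$-boundedness. Hence in both cases I may assume that $S$ is $\exists$-$k$-bounded, and it remains to verify the second hypothesis of Theorem~\ref{thm:existentially-iff-kmc-notreduced}(2), namely eventual reception.

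Next I would show that the stable property implies eventual reception. This is the only part needing an actual argument, and it is short. Let $\csconf q w \in \RS(S)$ with $w_{\p\q} = \msg a \cdot w'$. By the stable property there is a stable $\stablecsconf{q'} \in \RS(S)$ with $\csconf q w \TRANSR \stablecsconf{q'}$; since every channel is empty in a stable configuration, the whole content of $\p\q$ must be drained along this execution, and because channels are \FIFO\ the first receive performed on $\p\q$ consumes its current head $\msg a$. That execution therefore contains a $\PRECEIVE{pq}{a}$ transition reachable from $\csconf q w$, which is exactly what eventual reception demands.

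Finally, with $S$ now known to be $\exists$-$k$-bounded and to satisfy eventual reception, Theorem~\ref{thm:existentially-iff-kmc-notreduced}(2) yields that $S$ is $k$-exhaustive, closing the argument.

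The step I expect to be most delicate is reconciling the \emph{flavour} of eventual reception produced by the stable property with the one consumed by Theorem~\ref{thm:existentially-iff-kmc-notreduced}(2): the stable property is phrased over unbounded executions, whereas eventual reception (Definition~\ref{def:k-safety}(1)) asks that the head be consumable by a \emph{$k$-bounded} execution from a $k$-reachable configuration. Here the $\exists$-$k$-boundedness assumption is precisely what lets me reschedule the unbounded draining execution above into a $\resche$-equivalent $k$-bounded witness, so that the hypotheses of Theorem~\ref{thm:existentially-iff-kmc-notreduced}(2) are met exactly. Everything else is a direct chaining of the cited results.
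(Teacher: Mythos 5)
Your proposal is correct, but it follows a genuinely different route from the paper's own proof. The paper argues by contradiction using a dedicated lemma (Lemma~\ref{lem:dlf-reach-k-stable}) stating that, for an $\exists$S-$k$-bounded system with the stable property, every configuration in $\RS_k(S)$ can reach a \emph{stable} configuration via a \emph{$k$-bounded} execution; from a stable configuration every queue is empty, so the allegedly unfireable send action becomes enabled (or was already fired along the way), yielding the contradiction directly. You instead observe that the stable property implies the (unbounded) eventual reception property --- an implication the paper itself states after Definition~\ref{def:deadlockfree}, and which your FIFO-draining argument establishes correctly --- and then, after collapsing the $\exists$S case to the $\exists$ case via Theorem~\ref{thm:kuske-imp-classical}(2), you invoke Theorem~\ref{thm:existentially-iff-kmc-notreduced}(2) as a black box. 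This is a clean, modular chaining of results already available at this point in the development, and it avoids re-proving anything about reaching stable configurations within the bound; the price is that all the real work is delegated to Lemma~\ref{lem:lisbon-exist-safe-imp-kexh} (the engine behind Theorem~\ref{thm:existentially-iff-kmc-notreduced}(2)), whereas the paper's argument is self-contained modulo the shorter Lemma~\ref{lem:dlf-reach-k-stable}. Your closing worry about the ``flavour'' of eventual reception is in fact a non-issue: the hypothesis consumed by Theorem~\ref{thm:existentially-iff-kmc-notreduced}(2) is the unbounded one, and its proof internally derives the $k$-bounded variant from existential boundedness, so the version you obtain from the stable property is exactly what is required.
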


\newcommand{\sysstructure}[1]{#1}

\begin{table}[t]
  \centering
  \caption{Properties for key examples, where
    direct.\ stands for directed,
    $\infOBI$  for $\OBI{k}$, 
    $\infDIBI$ for $\DIBI{k}$,
    \ER{} for eventual reception property, 
    \DF\ for stable property, 
    exh.\ for $k$-exhaustive,
    $\exists$(S)-b for $\exists$(S)-bounded,
    and
    syn.\ for $n$\synk\ (for some $n \in \naturals_{>0}$).
  }\label{tbl:examples}
  {\scriptsize
    \setlength{\tabcolsep}{2pt}
    \renewcommand{\arraystretch}{0.9}
    \begin{tabular}{cll|c|ccc|ccc|c|cc|c}
      \toprule
      \# & System & Ref.\ 
      & $k$  & direct.\ & $\infOBI$ & $\infDIBI$ 
      & safe & \ER{} &\DF{} & exh. &
                                     $\exists$S-b & $\exists$-b & syn.
      \\
      \midrule 
      \refstepcounter{ExampleCounter}\arabic{ExampleCounter}\label{extbl:kmc}
         &  \sysstructure{$(M_\ptp{c},M_\ptp{s},M_\ptp{l}) $}
                  &  Fig.~\ref {fig:running-example}
      & 1
      & \okmark  & \okmark  & \okmark 
      & \okmark & \okmark & \okmark & \okmark & \okmark  & \okmark  & \okmark
      \\
      \refstepcounter{ExampleCounter}\arabic{ExampleCounter}\label{extbl:dlfnotsafe}
         &  \sysstructure{$(M_\s,M_\q,M_\rr) $ }
                  &  Ex.~\ref {ex:dlf-not-safe}
      & 1
      & \okmark  & \okmark   & \okmark 
              & \komark & \okmark & \okmark & \okmark & \okmark  &
                                                                   \okmark
                                                                & \okmark
      \\ 
      \refstepcounter{ExampleCounter}\arabic{ExampleCounter}\label{extbl:notcsa}
         &  \sysstructure{$(M_\ptp{p},M_\ptp{q},M_\ptp{r}) $ }
                  &  Fig.~\ref{fig:ex-nondirected}
      & $\geq 2$
      & \komark  & \okmark  & \komark
      & \komark & \komark & \komark & \komark & \okmark  & \okmark  & \komark
      \\
      \refstepcounter{ExampleCounter}\arabic{ExampleCounter}\label{extbl:unboundedsafe}
         &  \sysstructure{$(M_\p,M_\q) $ }
                  &  Fig.~\ref{fig:ex-unbounded}
      & any
      & \okmark   & \okmark  & \okmark 
      & \okmark & \okmark & \komark & \komark & \okmark  & \komark  & \komark
            \\
      \refstepcounter{ExampleCounter}\arabic{ExampleCounter}\label{extbl:safenotstable}
         &  \sysstructure{$(M_\p,N'_\q) $ }
                  &  Fig.~\ref{fig:ex-unbounded}
      & 2
      & \okmark   & \okmark   & \okmark 
              & \okmark & \okmark & \komark & \okmark & \okmark  & \okmark  & \komark
      \\
            \refstepcounter{ExampleCounter}\arabic{ExampleCounter}\label{extbl:not-obi}
         &  \sysstructure{$(M_\ptp{p},M_\ptp{q},M_\ptp{r},M_\ptp{s}) $ }
                  &  Fig.~\ref{fig:ex-obi-main}
      & 2
             & \komark   & \okmark   & \okmark 
      & \okmark & \okmark & \komark & \okmark & \okmark  & \okmark  & \komark
      \\
      \refstepcounter{ExampleCounter}\arabic{ExampleCounter}\label{extbl:ksynk}
         &  \sysstructure{$(M_\s,M_\rr,M_\p) $ }
                  &  Ex.~\ref{ex:exist-not-exh}
      & any
      & \okmark  & \okmark   & \okmark 
              & \komark & \komark & \komark & \komark & \okmark  &
                                                                   \okmark  & \okmark
      \\
                                                \refstepcounter{ExampleCounter}\arabic{ExampleCounter}\label{extbl:nosynk}
         &  \sysstructure{$(M_\p,M_\q) $ }
                  &  Ex.~\ref {ex:small-nosynk}
      & 1
      & \okmark  & \okmark   & \okmark 
              & \okmark & \okmark & \okmark & \okmark & \okmark  & \okmark  & \komark
      \\ 
                                                      \bottomrule
    \end{tabular}   }
\end{table}

 \subsection{Synchronisable Communicating Session Automata}\label{sec:cav-sync}
In this section, we study the relationship between
synchronisability~\cite{Bouajjani2018} and $k$-exhaustivity via
existential boundedness.
Informally, communicating automata
are synchronisable if each of their executions can be scheduled in
such a way that it consists of sequences of ``exchange phases'', where
each phase consists of a bounded number of send actions, followed by a
sequence of receive actions.
The original definition of $k$\synk\ systems~\cite[Definition
1]{Bouajjani2018} is based on communicating automata with
\emph{mailbox} semantics, i.e., each automaton has one input queue.
Here, we adapt the definition so that it matches our point-to-point
semantics.
We write $\ASetSend$ for 
$\ASet \cap (\CSet \times \{!\} \times \ASigma)$,
and
$\ASetRcv$ for 
$\ASet \cap (\CSet \times \{?\} \times \ASigma)$.
\begin{definition}[Synchronisability]\label{def:synk}
  A valid execution $\acts = \acts_1 \cdots \acts_n$ is a \emph{$k$-exchange}
  if and only if:
            (1)
        $\forall 1 \leq i \leq n \qst \acts_i \in \ASetSendC \concat
    \ASetRcvC \land \lvert \acts_i \rvert \leq 2k$; and 
            
    \noindent
    (2)
            $\forall \ptp{pq} \in \CSet \qst 
    \forall 1 \leq i \leq n \qst \esndproj{\acts_i}{pq} \neq
    \ercvproj{\acts_i}{pq} \implies
    \forall i < j \leq n \qst \ercvproj{\acts_j}{pq} =\emptyw$.  
          
      We write $\kexchange{\ASetC}{k}$ for the set of executions that are
  $k$-exchanges and say that an execution $\acts$ is \emph{$k$\synk}\ if
  $\equivclass{\acts}{\resche} \cap \kexchange{\ASetC}{k} \, \neq
  \varnothing$.
      A system $S$ is  \emph{$k$\synk}\ if each execution in
  $\{\acts \st \exists s \qst s_0 \smash{\TRANSS{\acts}} s \}$ is
  $k$\synk.
\end{definition}

Condition (1) says that execution $\acts$ should be
a sequence of an arbitrary number of send-receive phases, where
each phase consists of at most $2k$ actions.
Condition (2) says that if a message is not
received in the phase in which it is sent, then it cannot be
received in $\acts$.
Observe that the bound $k$ is on the number of actions (over possibly
different channels) in a phase rather than the number of pending
messages in a given channel.

\begin{example}\label{ex:small-nosynk}
  The system below (left) is $1$\MC\ and $\exists$(S)-1-bounded, but it is
  \emph{not} $k$-synchronisable for any $k$.
    The subsequences of send-receive actions in the $\resche$-equivalent
  executions below are highlighted (right).
    \[
  \begin{array}{c|c}
    \begin{array}{lcr}
      M_\p:
      \begin{tikzpicture}[mycfsm, node distance =0.5cm and 0.5cm]
        \node[state, initial, initial where=left] (s0) {};
        \node[state, right=of s0] (s1) {};
        \node[state, right=of s1] (s2) {};
        \node[state, right=of s2] (s3) {};
        \node[state, right=of s3] (s4) {};
                \path
        (s0) edge node [above] {$\PSEND{pq}{a}$} (s1)
        (s1) edge node [above] {$\PRECEIVE{qp}{c}$} (s2)
        (s2) edge node [above] {$\PSEND{pq}{b}$} (s3)
        (s3) edge node [above] {$\PRECEIVE{qp}{d}$} (s4)    
        ;
      \end{tikzpicture}
      \\
      M_\ptp{q}:
      \begin{tikzpicture}[mycfsm, node distance =0.5cm and 0.5cm]
        \node[state, initial, initial where=left] (s0) {};
        \node[state, right=of s0] (s1) {};
        \node[state, right=of s1] (s2) {};
        \node[state, right=of s2] (s3) {};
        \node[state, right=of s3] (s4) {};
                \path
        (s0) edge node [above] {$\PSEND{qp}{c}$} (s1)
        (s1) edge node [above] {$\PSEND{qp}{d}$} (s2)
        (s2) edge node [above] {$\PRECEIVE{pq}{a}$} (s3)
        (s3) edge node [above] {$\PRECEIVE{pq}{b}$} (s4)    
        ;
      \end{tikzpicture}
    \end{array}
    \quad
 &
   \quad
   {\small
   \begin{array}{cccccc}
     \acts_1 & = & 
                   \underbracket{
                   {\color{blue} \PSEND{pq}{a}}
                   \concat
                   \PSEND{qp}{c}
                   \concat
                   \PRECEIVE{qp}{c}
                   }
                   \concat
                   \underbracket{
                   \PSEND{qp}{d}
                   \concat
                   {\color{red}  \PRECEIVE{pq}{a}}
                   }
                                      \concat
                                      \underbracket{
                   \PSEND{pq}{b}
                   \concat
                   \PRECEIVE{qp}{d}
                   \concat
                   \PRECEIVE{pq}{b}
                   }
     \\
     \acts_2 & = & 
                   \underbracket{
                   \PSEND{pq}{a}
                   \concat
                   \PSEND{qp}{c}
                   \concat
                   {\color{blue} \PSEND{qp}{d}}
                   \concat
                   \PRECEIVE{qp}{c}
                   \concat
                   \PRECEIVE{pq}{a}
                   }
                                      \concat
                                      \underbracket{
                   \PSEND{pq}{b}
                   \concat
                   {\color{red}    \PRECEIVE{qp}{d}}
                   \concat
                   \PRECEIVE{pq}{b} 
                   }
   \end{array}
                   }
  \end{array}
  \]

  Execution $\acts_1$ is $1$-bounded for $s_0$, but it is not a
  $k$-exchange since, e.g., $\msg{a}$ is received outside of the
  phase where it is sent.
    In $\acts_2$, message $\msg{d}$ is received outside of its
  sending phase.
                    In the terminology of~\cite{Bouajjani2018}, this system is not
  $k$-synchronisable because there is a ``\emph{receive-send
    dependency}'' between the exchange of message $\msg{c}$ and
  $\msg{b}$, i.e., $\ptp{p}$ must receive $\msg{c}$ before it sends
  $\msg{b}$.
    Hence, there is no $k$-exchange that is $\resche$-equivalent to
  $\acts_1$ and $\acts_2$.
\end{example}

\begin{restatable}{theorem}{thmksynkrelkmc}\label{thm:ksynk-rel-kmc}
      (1) If $S$ is $k$\synk, then it is  $\exists$-$k$-bounded.
    (2) If $S$ is $k$\synk\ and has the eventual reception property,
  then it is $k$-exhaustive.
    \end{restatable}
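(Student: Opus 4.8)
The plan is to derive both parts from a single combinatorial fact: \emph{every $k$-exchange is a $k$-\mbounded{} execution}, i.e.\ $\kexchange{\ASetC}{k} \subseteq \bounded{\ASetC}{k}$. Granting this, Part~(1) is immediate. If $S$ is $k$\synk, then every execution $\acts$ with $s_0 \TRANSS{\acts} s$ admits a $\resche$-equivalent $k$-exchange $\actsb \in \equivclass{\acts}{\resche} \cap \kexchange{\ASetC}{k}$; by the fact above $\actsb$ is also $k$-\mbounded, so $\equivclass{\acts}{\resche} \cap \bounded{\ASetC}{k} \neq \varnothing$, i.e.\ $\acts$ is existentially $k$-bounded. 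As this holds for every execution, $S$ is $\exists$-$k$-bounded. Part~(2) then requires no further work: composing Part~(1) with Theorem~\ref{thm:existentially-iff-kmc-notreduced}(2), the hypothesis $k$\synk\ yields $\exists$-$k$-boundedness, and adding eventual reception gives $k$-exhaustivity.

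The heart of the proof is therefore the inclusion $\kexchange{\ASetC}{k} \subseteq \bounded{\ASetC}{k}$. Let $\acts = \acts_1 \cdots \acts_n$ be a $k$-exchange, fix a channel $\ptp{pq}$ and a prefix $\actsb$ of $\acts$. I must bound $\min\{ |\esndproj{\actsb}{pq}|, |\ercvproj{\acts}{pq}| \} - |\ercvproj{\actsb}{pq}|$. Since $\acts$ is valid, receptions match emissions in FIFO order, so $|\ercvproj{\acts}{pq}|$ is exactly the number of \emph{matched} messages on $\ptp{pq}$, and the quantity above equals the number of messages that are matched, sent within $\actsb$, yet not received within $\actsb$ (the pending matched messages at $\actsb$). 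The decisive ingredient is condition~(2) of a $k$-exchange: a message on $\ptp{pq}$ sent in phase $\acts_i$ but not received in that same phase is never received in $\acts$, hence is unmatched. It follows that every pending matched message is both sent and received inside the unique phase $\acts_m$ whose send/receive split is straddled by the endpoint of $\actsb$ — messages of earlier phases are already received before $\actsb$, and messages of later phases are sent after $\actsb$.

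Finally I count within that straddled phase $\acts_m = \sigma_m \rho_m$, where $\sigma_m \in \ASetSendC$ and $\rho_m \in \ASetRcvC$. Each pending matched message on $\ptp{pq}$ contributes one emission to $\sigma_m$ and one later reception to $\rho_m$, so their number $c$ satisfies $2c \le |\esndproj{\acts_m}{pq}| + |\ercvproj{\acts_m}{pq}| \le |\acts_m| \le 2k$, whence $c \le k$, as required. I expect the main obstacle to be purely bookkeeping: making precise that the endpoint of $\actsb$ interacts with at most one phase boundary, so that only $\acts_m$ contributes, and verifying that both regimes of the outer $\min$ (namely $|\esndproj{\actsb}{pq}| \le |\ercvproj{\acts}{pq}|$ and its converse) reduce to this same count. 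Once the consequence ``matched $\Rightarrow$ received in the same phase'' is isolated from condition~(2), the residual estimate is just the elementary $\min(a,b) \le (a+b)/2$ bound used above.
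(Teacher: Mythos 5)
Your overall decomposition coincides with the paper's: both reduce the theorem to the single inclusion $\kexchange{\ASetC}{k} \subseteq \bounded{\ASetC}{k}$ (the paper's Lemma~\ref{lem:ksynk-exec-exist}), obtain Part~(1) immediately from the definition of $k$\synk{}, and obtain Part~(2) by composing with Theorem~\ref{thm:existentially-iff-kmc-notreduced}(2) --- which is in fact the citation the paper's own proof should have made (it points at Theorem~\ref{thm:kuske-imp-classical}(2), which concerns stable boundedness and does not yield exhaustivity). Where you genuinely diverge is in the proof of the inclusion itself. The paper argues by cases on the two regimes of the $\min$ and on whether every earlier phase is balanced on $\ptp{pq}$, and in doing so tacitly assumes that the send block of each phase has length at most $k$ --- a bound that the point-to-point adaptation of Definition~\ref{def:synk}, which only caps a whole phase at $2k$ actions, does not literally provide. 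Your argument sidesteps both issues: identifying the quantity to be bounded as the number of \emph{matched} messages sent in the prefix but not yet received there handles both branches of the $\min$ uniformly, and charging each such message two distinct actions of the straddled phase gives $2c \le \lvert \acts_m \rvert \le 2k$, hence $c \le k$, using only the $2k$ cap. The one step you should spell out is your reading of condition~(2) as ``matched $\Rightarrow$ received in its own phase'': the equality $\esndproj{\acts_i}{pq} = \ercvproj{\acts_i}{pq}$ only forces the phase-$i$ receptions to consume the phase-$i$ emissions if the channel $\ptp{pq}$ is empty when phase $i$ begins, so you need the easy induction that each phase starts with $\ptp{pq}$ empty unless the channel has already become permanently unreceived (in which case every later message on it is unmatched and does not enter your count). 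With that invariant in place your counting argument is correct, and somewhat cleaner than the paper's.
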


Figure~\ref{fig:venn} and Table~\ref{tbl:examples} summarise the
results of \S~\ref{sec:exist-bounded} wrt.\ $\OBI{k}$ and $\infIBI$
\CSA.
We note that any finite-state system is $k$-exhaustive (and
$\exists$(S)-$k$-bounded) for sufficiently large $k$, while this does
not hold for synchronisability, see Example~\ref{ex:small-nosynk}.

\begin{table}[t]
  \centering
  \caption{Experimental evaluation.
        $\lvert \PSet \rvert$ is the number of participants,
        $k$ is the bound,
        $\lvert\RTSACRO\rvert$ is the number of transitions in 
    the \emph{reduced} $\kTS{S}$ (see \appendixref{app:por}),
            direct.\ stands for directed,
                                    Time is the time taken to check all the properties shown in this table,
    and
        \GMC\ is \okmark\ if the system is generalised multiparty compatible~\cite{LTY15}.
          }\label{tab:benchmarks} 
  {\scriptsize
    \setlength{\tabcolsep}{5pt}
    \renewcommand{\arraystretch}{0.9}
    \begin{tabular}{l|c|c|c|c|c|c|c|c|c}
            \toprule
      Example  
      & $\lvert \PSet \rvert$ \ 
      & $k$
      & $\lvert\RTSACRO\rvert$ 
      & \textit{direct.}
      & $\OBI{k}$
      & $\CIBI{k}$
      & $k$\MC\
            & Time & \GMC
      \\\midrule
      Client-Server-Logger & 3 
      & 1  
      & 11 & \okmark 
      & \okmark  & \okmark   & \okmark  
      & 0.04s &\komark
      \\
      4 Player game\modifmark~\cite{LTY15} & 4 
      & 1  
      & 20 & \komark 
      & \okmark  & \okmark   & \okmark  
      & 0.05s & \okmark
      \\
      Bargain~\cite{LTY15} & 3 
      & 1 
      & 8 & \okmark
      & \okmark  & \okmark   & \okmark  
      & 0.03s
             & \okmark
      \\
      Filter collaboration~\cite{YellinS97}& 2 
      & 1                             
      &10 & \okmark 
      & \okmark  & \okmark   & \okmark  
      & 0.03s  & \okmark
      \\
      Alternating bit\modifmark~\cite{PEngineering}& 2 
      & 1  
      & 8 & \okmark
      & \okmark  & \okmark   & \okmark  
      & 0.04s & \komark
      \\
      TPMContract v2\modifmark~\cite{HalleB10}& 2 
      & 1  
      & 14 &  \okmark 
      & \okmark  & \okmark   & \okmark  
      & 0.04s  & \okmark
      \\
      Sanitary agency\modifmark~\cite{SalaunBS06}& 4 
      & 1  
      & 34 &  \okmark 
      & \okmark  & \okmark   & \okmark  
      & 0.07s  & \okmark
      \\
      Logistic\modifmark~\cite{BPMNcoreography}& 4 
      & 1  
      & 26 &  \okmark 
      & \okmark  & \okmark   & \okmark  
      & 0.05s  & \okmark
      \\
      Cloud system v4~\cite{GudemannSO12}& 4 
      & 2  
      & 16 &  \komark 
      & \okmark  & \okmark   & \okmark 
      & 0.04s  & \okmark 
                                                                              \\
      Commit protocol~\cite{Bouajjani2018} & 4 
      & 1  
      & 12 & \okmark 
      & \okmark  & \okmark   & \okmark  
      & 0.03s & \okmark
      \\
      Elevator\modifmark~\cite{Bouajjani2018} & 5 
      & 1  
      & 72 & \komark 
      & \okmark  & \komark   & \okmark  
      & 0.14s  & \komark
      \\
      Elevator-dashed\modifmark~\cite{Bouajjani2018} & 5 
      & 1  
      & 80 & \komark 
      & \okmark  & \komark   & \okmark  
      & 0.16s  & \komark
      \\
      Elevator-directed\modifmark~\cite{Bouajjani2018} & 3 
      & 1  
      & 41 & \okmark
      & \okmark  & \okmark   & \okmark  
      & 0.07s  & \okmark
      \\
      Dev system~\cite{PereraLG16}& 4 
      & 1  
      & 20 & \okmark 
      & \okmark  & \okmark   & \okmark  
      & 0.05s  & \komark
      \\
      Fibonacci~\cite{NHYA2018} & 2 
      & 1  
      & 6 & \okmark 
      & \okmark  & \okmark   & \okmark  
      & 0.03s & \okmark
      \\ 
      \textsc{Sap}-Negot.~\cite{NHYA2018,ocean} & 2 
      & 1  
      & 18 & \okmark 
      & \okmark  & \okmark   & \okmark  
      & 0.04s & \okmark
      \\
      \textsc{sh}~\cite{NHYA2018} & 3 
      & 1  
      & 30 & \okmark 
      & \okmark  & \okmark   & \okmark  
      & 0.06s & \okmark
      \\
      Travel agency~\cite{NHYA2018,scribble} & 3 
      & 1  
      & 21 & \okmark 
      & \okmark  & \okmark   & \okmark  
      & 0.05s & \okmark
      \\
      \textsc{http}~\cite{NHYA2018,HuBook17} & 2 
      & 1  
      & 48 & \okmark 
      & \okmark  & \okmark   & \okmark  
      & 0.07s & \okmark
      \\
      \textsc{smtp}~\cite{NHYA2018,HuY16} & 2 
      & 1  
      & 108 & \okmark 
      & \okmark  & \okmark   & \okmark  
      & 0.08s &  \okmark
      \\ 
      gen\_server (buggy)~\cite{TaylorTWD16} & 3
      & 1  
      & 56 & \komark 
      & \komark  & \okmark   & \komark  
      & 0.03s &  \komark 
      \\
      gen\_server (fixed)~\cite{TaylorTWD16} & 3
      & 1  
      & 45 & \komark 
      & \okmark  & \okmark   & \okmark  
      & 0.03s &  \okmark
      \\
      double buffering~\cite{MostrousESOP09} & 3
      & 2  
      & 16 & \okmark 
      & \okmark  & \okmark   & \okmark
      & 0.01s &  \komark
      \\
      \bottomrule
    \end{tabular}
  }
\end{table}

\section{Experimental Evaluation}\label{sec:implementation}
We have implemented our theory in a tool~\cite{kmc}
which takes two inputs: ($i$) a system of communicating automata and
($ii$) a bound $\textsc{max}$.
The tool iteratively checks whether the system validates the premises
of Theorem~\ref{thm:soundness}, until it succeeds or reaches
$k=\textsc{max}$.
We note that the $\OBI{k}$ and $\infIBI$ conditions are required for
our soundness result (Theorem~\ref{thm:soundness}), but are orthogonal
for checking $k$\MC.
Each condition is checked on a \emph{reduced bounded transition
  system}, called $\kRTS{S}$.
Each verification procedure for these conditions is implemented in
Haskell using a simple (depth-first-search based) reachability check
on the paths of $\kRTS{S}$.
We give an (optimal) partial order reduction algorithm to construct
$\kRTS{S}$ in \appendixref{app:por} and show that it preserves our
properties.

We have tested our tool on $20$ examples taken from the literature,
which are reported in Table~\ref{tab:benchmarks}. The table shows that
the tool terminates virtually instantaneously on all examples.
The table suggests that many systems are indeed $k$\MC\ and most can
be easily adapted to validate bound independence.
The last column refers to the \textsc{gmc} condition, a form of
\emph{synchronous} multiparty compatibility (\textsc{smc}) introduced
in~\cite{LTY15}.
The examples marked with \modifmark\ have been slightly modified to
make them \CSA\ that validate $\OBI{k}$ and $\infIBI$.
For instance, we take only one of the possible interleavings between
mixed actions to remove mixed states (taking send action before
receive action to preserve safety), see
\appendixref{app:evaluation-details}.

We have assessed the scalability of our approach with automatically
generated examples, which we report in Figure~\ref{fig:plots}.
Each system considered in these benchmarks consists of $2m$ (directed)
\CSA\ for some $m \geq 1$ such that
$S = (M_{\ptp{p_i}})_{1 \leq \ptp{i} \leq 2m}$, and each automaton
$M_{\ptp{p_i}}$  is of the form (when $i$ is \emph{odd}):
\[ 
  \begin{array}{c}
    M_{\ptp{p_i}}: 
     \begin{tikzpicture}[baseline=(current  bounding  box.center),
       font=\scriptsize, node distance = 0.5cm and 1.4cm,  
       initial distance=0.25cm,>=stealth]
             \node[state, initial, initial where=left] (s0) {};
      \node[state, right=of s0] (s1) {};
      \node[state, right =of s1,xshift=0pt] (s2) {};
      \node[state, right=of s2] (s3) {};
      \node[state, right =of s3] (s4) {};
      \node[state, right=of s4,xshift=0pt] (s5) {};
      \node[state, right=of s5] (s6) {};
            \path[->]
      (s0) edge [bend left=20] node (p1t) [above]  {$\PSEND{p_{i}p_{i+1}}{a_1}$} (s1)
      (s0) edge [bend right=20] node (p1b) [below] {$\PSEND{p_{i}p_{i+1}}{a_n}$} (s1)
            (s2) edge [bend left=20] node (p2t) [above]  {$\PSEND{p_{i}p_{i+1}}{a_1}$} (s3)
      (s2) edge [bend right=20] node (p2b) [below]  {$\PSEND{p_{i}p_{i+1}}{a_n}$} (s3)
           (s3) edge [bend left=20] node (p3t) [above]  {$\PRECEIVE{p_{i+1}p_{i}}{a_1}$} (s4)
      (s3) edge [bend right=20] node (p3b) [below] {$\PRECEIVE{p_{i+1}p_{i}}{a_n}$} (s4)
            (s5) edge [bend left=20] node (p5t) [above]  {$\PRECEIVE{p_{i+1}p_{i}}{a_1}$} (s6)
      (s5) edge [bend right=20] node (p5b) [below] {$\PRECEIVE{p_{i+1}p_{i}}{a_n}$} (s6)      
      ;
            \draw[dotted,shorten <=2pt,shorten >=1pt] (p1t) -- (p1b);
      \draw[dotted,shorten <=2pt,shorten >=1pt] (p2t) -- (p2b);
      \draw[dotted,shorten <=2pt,shorten >=1pt] (p3t) -- (p3b);
      \draw[dotted,shorten <=2pt,shorten >=1pt] (p5t) -- (p5b);
            \draw[dotted,shorten <=2pt,shorten >=1pt] (s1) -- (s2);
      \draw[dotted,shorten <=2pt,shorten >=1pt] (s4) -- (s5);
      \node[below=of s0] (a) {};
      \node[below=of s3] (b) {};
      \draw [decoration={brace,mirror,raise=-0.2cm}, decorate] (a)
      --  (b) node [pos=0.5,anchor=north,yshift=0.2cm] {$k$ times}; 
      \node[below=of s6] (c) {};
      \draw [decoration={brace,mirror,raise=-0.2cm}, decorate] (b)
      --  (c) node [pos=0.5,anchor=north,yshift=0.2cm] {$k$ times}; 
    \end{tikzpicture}
  \end{array}
\]
Each $M_{\ptp{p_i}}$ sends $k$ messages to participant
$\ptp{p_{i{+}1}}$, then receives $k$ messages from
$\ptp{p_{i+1}}$. Each message is taken from an alphabet
$\{\msg{a_1}, \ldots, \msg{a_n}\}$ ($n \geq 1$).
$M_{\ptp{p_i}}$ has the same structure when $i$ is \emph{even}, but
interacts with ${\ptp{p_{i-1}}}$ instead. Observe that any system constructed in this way is $k$\MC\ for any
$k \geq 1$, $n \geq 1$, and $m \geq 1$.
The shape of these systems allows us to assess how our approach fares
in the worst case, i.e., large number of paths in $\kRTS{S}$.
Figure~\ref{fig:plots} gives the time taken for our tool to terminate
($y$ axis) wrt.\ the number of transitions in $\kRTS{S}$ where $k$ is
the least natural number for which the system is $k$\MC.
The plot on the left in Figure~\ref{fig:plots} gives the timings when
$k$ is increasing (every increment from $k{=}2$ to $k{=}100$) with the
other parameters fixed ($n{=}1$ and $m{=}5$).
The middle plot gives the timings when $m$ is increasing (every
increment from $m{=}1$ to $m{=}26$) with $k{=}10$ and $n{=}1$.
The right-hand side plot gives the timings when $n$ is increasing
(every increment from $n{=}1$ to $n{=}10$) with $k{=}2$ and $m{=}1$.
The largest $\kRTS{S}$ on which we have tested our tool has $12222$
states and $22220$ transitions, and the verification took under 17
minutes.\footnote{All the benchmarks in this paper were run on an
  8-core Intel i7-7700 machine with 16GB RAM running a 64-bit Linux.}
Observe that partial order reduction mitigates the increasing size of
the transition system on which $k$\MC\ is checked, e.g.,
these experiments show that parameters $k$ and $m$ have only a linear
effect on the number of transitions (see horizontal distances between
data points). However the number of transitions increases
exponentially with $n$ (since the number of paths in each automaton
increases exponentially with $n$).

\begin{figure*}[t]
  \centering
    \begin{tikzpicture}
    [node distance = 0cm and -0.3cm]
            \node (plota) 
    {\includegraphics[width=0.33\textwidth]{./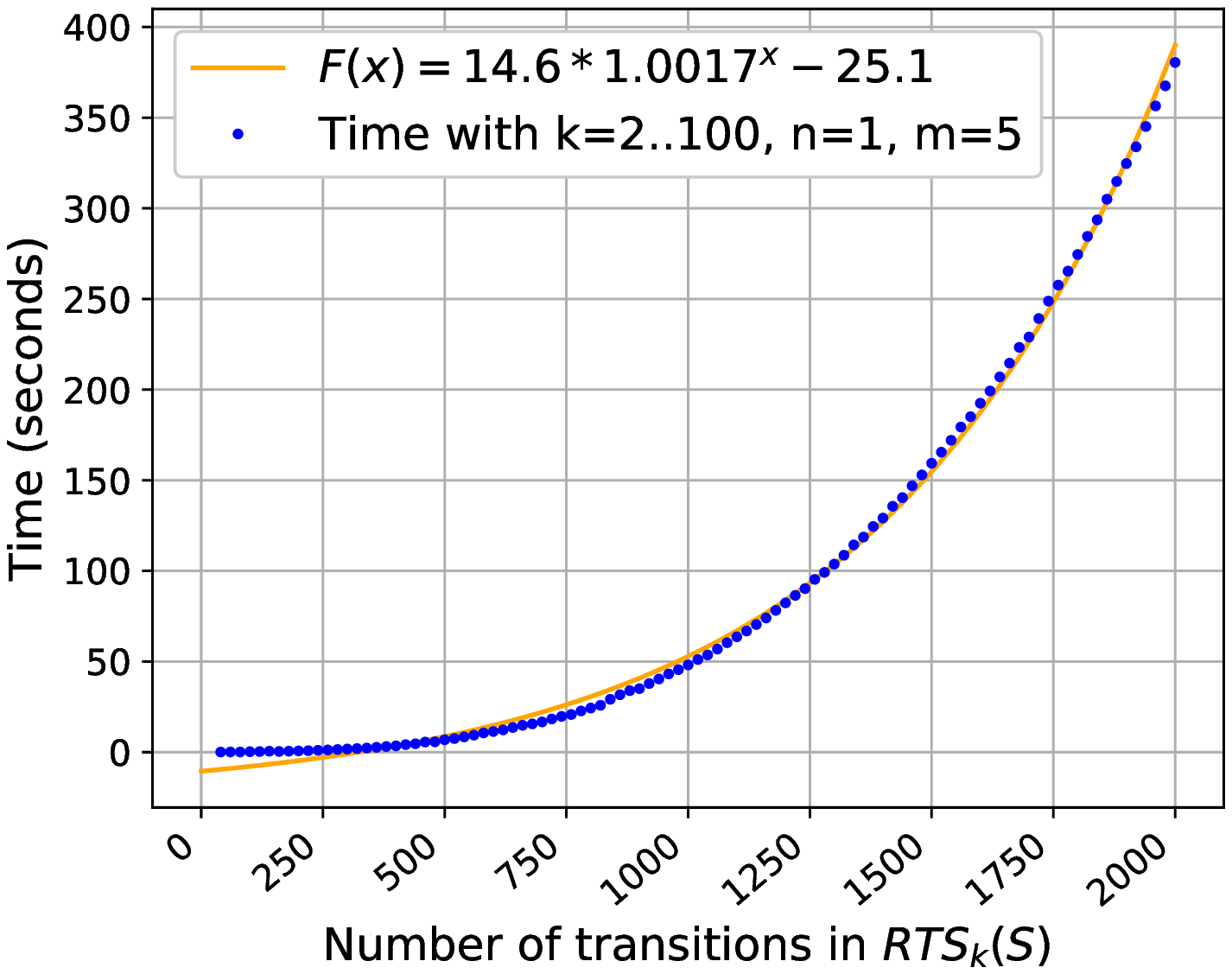}};
        \node[right=of plota] (plotb) {\includegraphics[width=0.33\textwidth]{./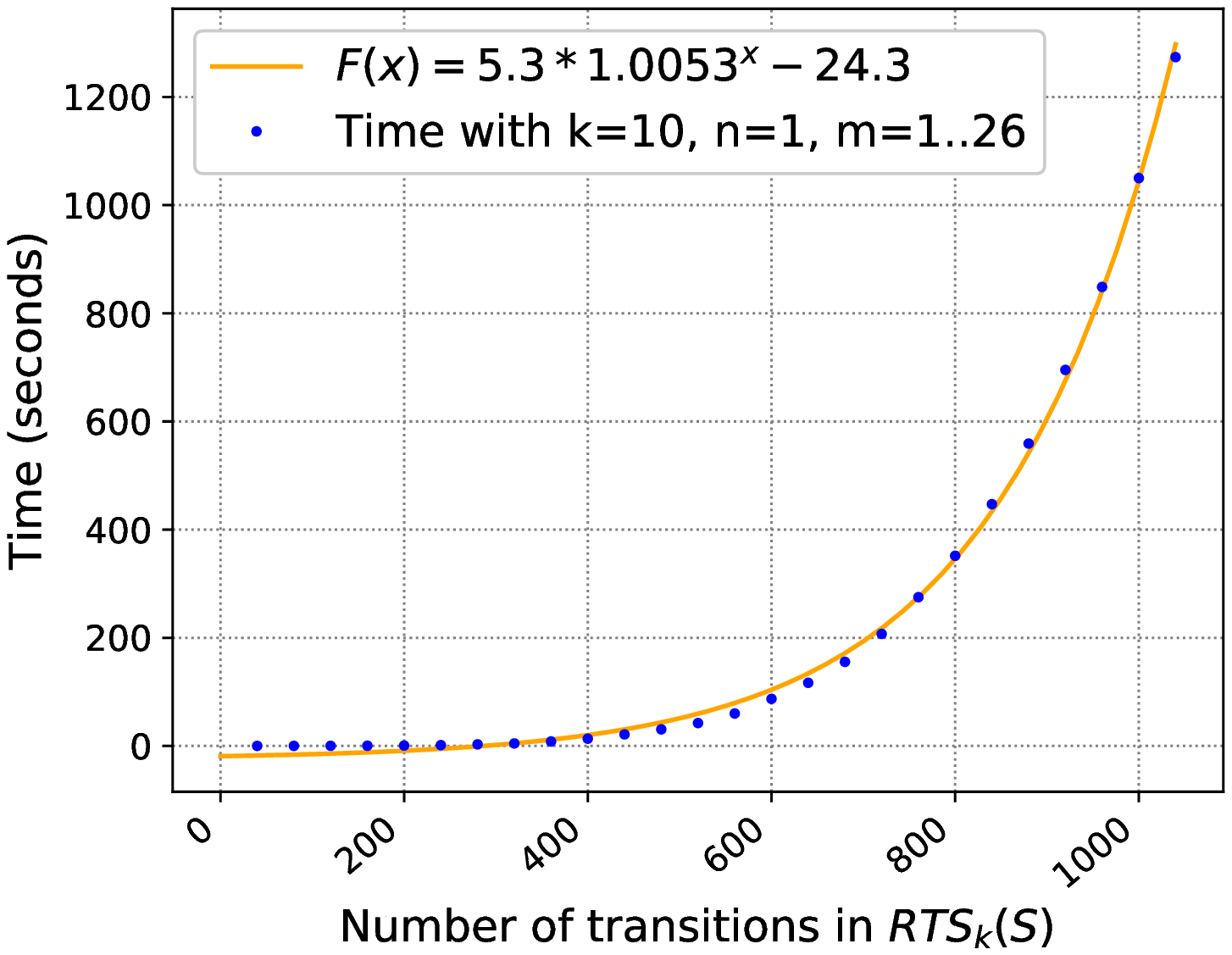}};
        \node[right=of plotb] (plotb) {\includegraphics[width=0.33\textwidth]{./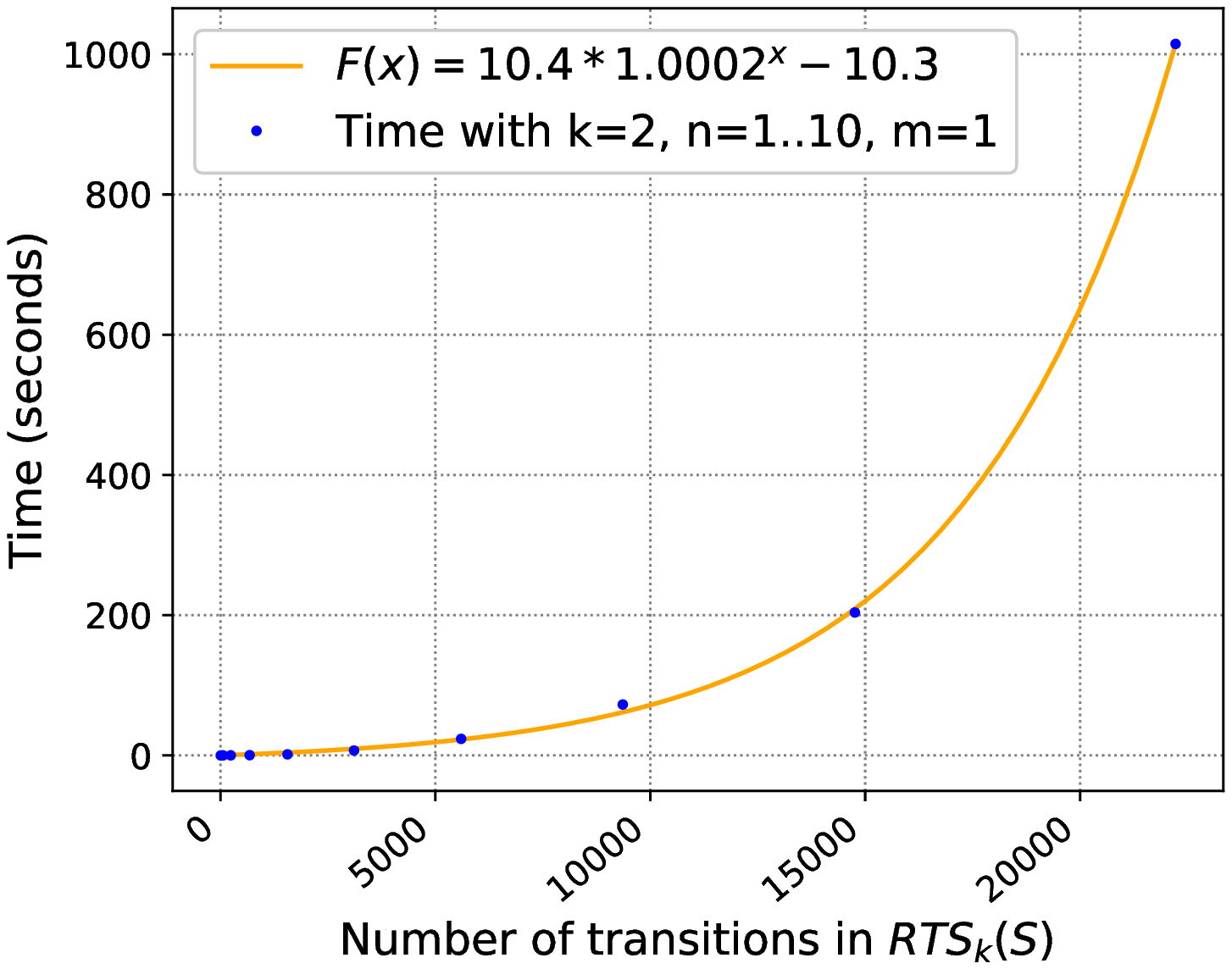}};
      \end{tikzpicture}
    \caption{Benchmarks: increasing $k$ (left), increasing $m$ (middle), and increasing $n$ (right).}
  \label{fig:plots}
\end{figure*}

 \section{Related Work}\label{sec:related}
\paragraph{Theory of communicating automata}
Communicating automata were introduced, and shown to be Turing
powerful, in the 1980s~\cite{cfsm83} and have since then been studied
extensively, namely through their connection with message sequence
charts (MSC)~\cite{Muscholl10}.
Several works achieved decidability results by using bag or lossy
channels~\cite{ClementeHS14, CeceFI96,AbdullaJ93, AbdullaBJ98} or by
restricting the topology of the network~\cite{TorreMP08,PengP92}.

Existentially bounded communicating automata stand out because they
preserve the \FIFO\ semantics of communicating automata, do not
restrict the topology of the network, and include infinite state
systems.
Given a bound $k$ and an arbitrary system of (deterministic)
communicating automata $S$, it is generally \emph{undecidable} whether
$S$ is existentially $k$-bounded. However, the question becomes
decidable (\PSPACE{}-complete) when $S$ has the stable property.
The stable property is
itself generally \emph{undecidable} (it is called deadlock-freedom
in~\cite{GenestKM07,kuske14}).
Hence this class is \emph{not} directly applicable to the verification of
message passing programs since its membership is overall undecidable.
We have shown that $\OBI{k}$, $\infIBI$, and $k$-exhaustive \CSA\
systems are (strictly) included in the class of existentially bounded
systems.
Hence, our work gives a sound \emph{practical} procedure to check
whether \CSA\ are existentially $k$-bounded.
To the best of our knowledge, the only tools dedicated to the
verification of (unbounded) communicating automata are
McScM~\cite{HeussnerGS12} and Chorgram~\cite{LTY17}.
Bouajjani et al.~\cite{Bouajjani2018} study a variation of
communicating automata with \emph{mailboxes} (one input queue per
automaton).
They introduce the class of synchronisable systems and a procedure to
check whether a system is $k$-synchronisable; it relies on executions
consisting of $k$-bounded exchange phases.
Given a system and a bound $k$, it is decidable (\PSPACE{}-complete)
whether its executions are equivalent to $k$-synchronous executions.
Section~\ref{sec:cav-sync} states that any
$k$-synchronisable system which satisfies eventual reception is also
$k$-exhaustive, see Theorem~\ref{thm:ksynk-rel-kmc}.
In contrast to existential boundedness, synchronisability does not
include all finite-state systems.
Our characterisation result, based on local bound-agnosticity
(Theorem~\ref{thm:completeness}), is \emph{unique} to
$k$-exhaustivity. It does not apply to existential boundedness nor
synchronisability, see, e.g., Example~\ref{ex:exist-not-exh}.
The term ``synchronizability'' is used by Basu et
al.~\cite{BasuBO12,BasuB16b} to refer to another verification
procedure for communicating automata with mailboxes.
Finkel and Lozes~\cite{FinkelL17} have shown that this notion of
synchronizability is undecidable.
We note that a system that is safe with a point-to-point semantics,
may not be safe with a mailbox semantics (due to independent send
actions), and vice-versa.
For instance, the system in Figure~\ref{fig:ex-nondirected} is safe
when executed with mailbox semantics.

\paragraph{Multiparty compatibility and programming
  languages}
The first definition of multiparty compatibility appeared
in~\cite[Definition 4.2]{DY13}, inspired by the work in~\cite{GoudaMY84}, to
characterise the relationship between global types and communicating
automata.
This definition was later adapted to the setting of communicating timed
automata in~\cite{BLY15}.
Lange et al.~\cite{LTY15} introduced a generalised version of
multiparty compatibility (\GMC) to support communicating automata that
feature mixed or non-directed states. 
Because our results apply to automata without mixed states, $k$\MC\ is
not a strict extension of \GMC, and \GMC\ is not a strict extension of
$k$\MC\ either, as it requires the existence of \emph{synchronous}
executions.
In future work, we plan to develop an algorithm to synthesise
representative choreographies from $k$\MC\ systems, using the
algorithm in~\cite{LTY15}.

The notion of multiparty compatibility is at the core of recent works
that apply session types techniques to programming languages.
Multiparty compatibility is used in~\cite{NgY16} to detect deadlocks
in Go programs, and in~\cite{HuY16} to study the well-formedness of
Scribble protocols~\cite{scribble} through the compatibility of their
projections.
These protocols are used to generate various endpoint APIs that
implement a Scribble specification~\cite{HuY16,HY2017,NHYA2018}, and
to produce runtime monitoring tools~\cite{NY2017,NY2017b,NBY2017}.
Taylor et al.~\cite{TaylorTWD16} use multiparty compatibility and
choreography synthesis~\cite{LTY15} to automate the analysis of the
\texttt{gen\_server} library of Erlang/OTP.
We can transparently widen the set of safe programs captured by these
tools by using $k$\MC\ instead of synchronous multiparty
compatibility (\SMC).
The $k$\MC\ condition corresponds to a much wider instance of the
\emph{abstract} safety invariant $\asphi$ for session
types defined in~\cite{ScalasY19}.
Indeed $k$\MC\ includes \SMC\ (see \appendixref{app:smc}) and all
finite-state systems (for $k$ sufficiently large).

 \section{Conclusions}\label{sec:conc}
We have studied \CSA\ via a new condition called $k$-exhaustivity.
The $k$-exhaustivity condition is ($i$) the basis for a wider notion
of multiparty compatibility, $k$\MC, which captures asynchronous
interactions and ($ii$) the first practical, empirically validated,
sufficient condition for existential $k$-boundedness.
We have shown that $k$-exhaustive systems are fully characterised by
local bound-agnosticity (each automaton behaves equivalently for any
bound greater than or equal to $k$).
This is a key requirement for asynchronous message passing programming
languages where the possibility of having infinitely many orphan
messages is undesirable, in particular for Go and Rust which provide
\emph{bounded} communication channels.

For future work, we plan to extend our theory beyond \CSA.
We believe that it is possible to support mixed states and states
which do not satisfy $\infIBI$, as long as their outgoing transitions
are independent (i.e., if they commute).
Additionally, to make $k$\MC\ checking more efficient, we will
elaborate heuristics to find optimal bounds and off-load the
verification of $k$\MC\ to an off-the-shelf model checker.

 \paragraph*{\bf Acknowledgements}
We thank Laura Bocchi and Alceste Scalas for their comments, and David
Castro and Nicolas Dilley for testing the artifact.  This work is
partially supported by EPSRC EP/K034413/1, EP/K011715/1, 
EP/L00058X/1, EP/N027833/1, and EP/N028201/1.

\bibliographystyle{abbrv} \bibliography{kmc}

\iflong

\newpage
\appendix

\tableofcontents

\newpage

\section{Partial order reduction for \CSA}\label{sec:por}\label{app:por}
In this section, we give a partial order reduction algorithm that
allow us to mitigate the exponential cost of checking $k$\MC\ (wrt.\
the bound $k$) by exploiting the commutativity of independent actions.

Next, we define function $\spartition{s}$ which partitions the
transitions enabled at $s$, grouping them by subject and arranging
them into a sorted list.
\begin{definition}[Partition]\label{def:partition}
  Let $S$, $s \in \RS_k(S)$, and $\kTS{S} = (N, s_0, \Delta)$.
    The partition of the enabled transitions at $s$ is
  $\spartition{s} \defi L_1 \cdots L_n$ such that
        \begin{enumerate}
  \item \label{en:parti-whole}
    $\{ \action \st s \kTRANSS{\action} s' \} = \bigcup_{1 \leq i \leq n} L_i$
  \item $\forall 1 \leq i \neq j \leq n \qst $
        $L_i \cap L_j = \varnothing$ and
        $\action_i \in L_i, \action_j \in L_j \implies \subj{\action_i} \neq 
    \subj{\action_j}$.
                                  \item \label{en:parti-group}
    $\forall 1 \leq i \leq n \qst \action , \action' \in L_i \implies
    \subj{\action} = \subj{\action'}$
  \item \label{en:sortlist}
    $\forall 1 \leq i < j \leq n \qst \lvert L_i \rvert \leq \lvert L_j \rvert$
  \end{enumerate}
\end{definition}
Definition~\ref{def:partition}
specifies (1) that the family of sets
$\{L_i\}_{1 \leq i \leq n}$
is a partition of the transitions enabled at $s$ and
(2) that
the function groups transitions executed by the same participant
together. 
The last condition guarantees that the list is sorted by increasing
order of cardinality, to decrease the state space generated by
Algorithm~\ref{algo:reduction}.
Definition~\ref{def:partition} is used in
Algorithm~\ref{algo:reduction} which generates the transition relation
$\hat\Delta$ of a reduced transition system (the states are implicit
from $\hat\Delta$).

\begin{definition}[Reduced transition system]\label{def:krts}
  The reduced $k$-bounded transition system of $S$ is a labelled
  transition system $\kRTS{S} = (\hat{N}, s_0, \hat\Delta)$ which is a
  sub-graph of $\kTS{S}$ such that $\hat\Delta$ is obtained from
  Algorithm~\ref{algo:reduction} and $\hat{N}$ is the smallest set
    such that $s_0 \in \hat{N}$ and
    $s \in \hat{N} \implies \exists (s_1, \action, s_2) \in \hat{\Delta}
  \qst s \in \{s_1,s_2\}$.
      We write $s \rkTRANSS{\action} s'$ iff
  $(s, \action , s') \in \hat\Delta$.
\end{definition}
Algorithm~\ref{algo:reduction} is adapted from the
\emph{persistent-set selective search} algorithm from~\cite[Chapter
4]{Godefroid96}, where instead of computing a persistent state for
each explored state, we use a partition of enabled transitions.
Each $L_i$ in $\spartition{s}$ can be seen as a persistent set since
no transition outside of $L_i$ can affect the ability of transitions
in $L_i$ to fire. Storing all enabled transitions in a list that is progressively
consumed guarantees that no transition is forever deferred, hence the
cycle proviso~\cite[Condition {C3ii}]{Peled98} is satisfied.

\newcommand{\myaction}[1]{  \mathrel{\raisebox{#1}{$\action$}}}
\begin{figure}[t]
  \centering
  \begin{minipage}[b]{.53\textwidth}
    \begin{algorithm}[H]
                  \footnotesize 
      \DontPrintSemicolon
      \SetKwComment{tcp}{// }{}
            $\svisited$ \algoAssign $\emptyset$  \tcp*{visited states}    \label{li:init-start}
      $\sacc$ \algoAssign $\emptyset$   \tcp*{transitions} 
      $\sstack$  \algoAssign $[{\pair{s_0}{[]}}]$  \tcp*{todo} 
      \label{li:init-end}
            \While{$\sstack \neq []$}{
        $\pair{s}{E}$ \algoAssign $\pop{\sstack}$ \;
        \If{$s \notin \svisited$}{ 
          $\svisited$ \algoAssign $\svisited \cup \{ s \}$ \;
          \If{$E = []$}{ \label{li:zero-begin}
            $E$ \algoAssign $\spartition{s}$\; \label{li:z-new-partition} 
          }
          
          \ForEach
          {$\action \in \head{E}$}{ \label{li:nzero-begin} $s'$
                        \algoAssign $\successor{s}{\action}$\label{li:succ-action} \;
                                    $\push{\sstack}{\pair{s'}{\tail{E}}}$  \label{li:tail} \;
            $\sacc$ \algoAssign $\sacc \cup \{(s, \action, s'\}$\label{li:nzero-end}
          } 
        }
      }
      \KwRet{$\sacc$}  \label{li:return}
            
      \vspace{0.5em}
      
      \caption{Computing $\kRTS{S}$.}\label{algo:reduction}
    \end{algorithm}
  \end{minipage}
    \vrule{}
    \begin{minipage}[c]{.45\textwidth}
        \begin{tabular}{c}
      $
      \appfun{S}{\lts} \! = \!
      \begin{cases}
        (\snddir{S} \lor \OBI{k}(S,\lts))
        \\ \land \\
                (\rcvdir{S} \lor \DIBI{k}(S,\lts)
        \\
        \qquad \qquad\quad
        \lor \CIBI{k}(S,\lts))
        \\ \land \\
                (k\textit{-exhaustive}(S,\lts))
      \end{cases}
      $
      \\
    \end{tabular}
    
    \vspace{1cm}
    
    \begin{algorithm}[H]
  \footnotesize 
    \DontPrintSemicolon  \For{$1 \leq k \leq \textsc{max}$}{
    $\lts$ \algoAssign $\kRTS{S}$ \;
        \If{$\appfun{S}{\lts}$}{
      \KwRet{$S$ is $k$-safe on $\lts$}
    }
  }
  \KwRet{\textbf{failed}}
  
  \vspace{0.5em}
  
  \caption{$k$\MC\ check.}\label{algo:kmc}
\end{algorithm}

   \end{minipage}
\end{figure}

Algorithm~\ref{algo:reduction} starts by initialising the required
data structures in Lines~\ref{li:init-start}-\ref{li:init-end}, i.e.,
the set of visited states ($\svisited$) and the set of accumulated
transitions ($\sacc$) are initialised to the empty set, while the
$\sstack$ contains only the pair ${\pair{s_0}{[]}}$ consisting of the
initial state of $\kTS{S}$ and the empty list. We overload $[]$ so
that it denotes the empty list and the empty stack.
The algorithm iterates on the content of $\sstack$ until it is empty.
Each element of the stack is a pair containing a state $s$ and a list
of sets of transitions.
For each pair $\pair{s}{E}$, if $E$ is empty, then we compute a new
partition (Line~\ref{li:z-new-partition}).
Then, we iterate over the first set of transitions in $E$ (we assume
$\head{E} = \emptyset$ when $E = []$), so to generate the successors
of $s$ according to $\head{E}$, see
Lines~\ref{li:nzero-begin}-\ref{li:nzero-end}.
In Line~\ref{li:succ-action}, we write $\successor{s}{\action}$
for the (unique) configuration $s'$ such that
$s \smash{\kTRANSS{\action}} \, s'$.
In Line~\ref{li:tail}, the tail of the list $E$ is pushed on
the stack along with the successors $s'$.
Finally, the algorithm returns a new set of transitions
(Line~\ref{li:return}).

We adapt the definitions of $\OBI{k}$ and $\DIBI{k}$ to reduced
transition systems, the definition of reduced $\CIBI{k}$ is similar
(see Definition~\ref{def:non-csa-mc-dep-reduced}).

\begin{definition}[Reduced $\OBI{k}$]\label{def:reduced-obi}
  Posing 
  $\kRTS{S} = (\hat{N}, s_0, \hat\Delta)$.
    System $S$ is \emph{reduced} $\OBI{k}$ if for all
  $s = \csconf{q}{w} \in \hat{N}$ and  $\p \in \PSet$,
    if $s \kTRANSS{\PSEND{pq}{a}}$, then
  $\forall (q_\p, \PSEND{pr}{b}, q'_\p) \in \delta_\p \qst
  s\kTRANSS{\PSEND{pr}{b}}$.
  \end{definition}

\begin{definition}[Reduced $\DIBI{k}$]\label{def:reduced-dibi}
  Posing 
  $\kRTS{S} = (\hat{N}, s_0, \hat\Delta)$.
    System $S$ is \emph{reduced} $\DIBI{k}$ if for all
  $s = \csconf{q}{w} \in \hat{N}$ and $\p \in \PSet$,
    if $s \rkTRANSS{\PRECEIVE{qp}{a}}$, then
  $ \forall (q_\p, \PRECEIVE{sp}{b}, q'_\p) \in \delta_\p \qst \s \neq
  \q \implies \!  \neg ( s \rkTRANSS{\PRECEIVE{sp}{b}} \, \lor \, s
  \rkTRANSR\rkTRANSS{\PSEND{sp}{b}})$.
    \end{definition}

The $\DIBI{k}$ and $\CIBI{k}$ properties (used to approximate
$\infIBI$) can be decided on the reduced transition system
(Theorem~\ref{thm:reduced-dibi-iff-dibi-both}).
The reduced $\OBI{k}$ property is strictly weaker than the $\OBI{k}$
property, see Example~\ref{ex:kobi-not-por}.
However, the \emph{reduced} $\OBI{k}$ property can replace $\OBI{k}$
in Theorem~\ref{thm:soundness} while preserving safety, see
Theorem~\ref{thm:reduced-soundness}.
Figure~\ref{fig:venn-iobi} gives an overview of the relationships
between the different variations of $\OBI{k}$, $\IBI{k}$, and
directedness.
The inclusions between $\infIBI$, $\CIBI{k}$, and $\DIBI{k}$ hold only
for (reduced) $\OBI{k}$ and $k$-exhaustive systems, see
Lemma~\ref{lem:kri-kexh-imp-infrip-both}.

\begin{figure}[t]
  \centering
  \begin{tabular}{c@{\quad}c}
    \begin{minipage}{0.45\textwidth}
      \centering
      \begin{tikzpicture}[scale=0.7, every node/.style={transform shape}]
        \node (cibi) at (0,0) [draw=black, minimum
        width=5.8cm,minimum height=1.6cm,venn=blue!50,rounded corners=2pt]
        {};
                \node (sibi) at (0,0) [draw=black, minimum
        width=5.5cm,minimum height=0.8cm,venn=green!50,rounded corners=2pt]
        {};
                \node (directed) at (0,0) [draw=black, minimum
        width=2cm,minimum height=0.2cm,venn=yellow!50,rounded corners=2pt]
        {Send directed};
                \node[below left,yshift=1pt] at (sibi.north east) {$\OBI{k}$};
                \node[below left,yshift=1pt] at (cibi.north east) {reduced $\OBI{k}$};
                      \end{tikzpicture}
    \end{minipage}
    &
      \begin{minipage}{0.45\textwidth}
        \centering
        \begin{tikzpicture}[scale=0.7, every node/.style={transform shape}]
          \node (ibi) at (0,0) [draw=black, minimum
          width=6.8cm,minimum height=1.9cm,venn=red!50,rounded corners=2pt]
          {};
                    \node (cibi) at (0,0) [draw=black, minimum
          width=5.8cm,minimum height=1.6cm,venn=blue!50,rounded corners=2pt]
          {};
                    \node (sibi) at (0,0) [draw=black, minimum
          width=5.5cm,minimum height=0.8cm,venn=green!50,rounded corners=2pt]
          {};
                    \node (directed) at (0,0) [draw=black, minimum
          width=2cm,minimum height=0.2cm,venn=yellow!50,rounded corners=2pt]
          {Receive directed};
                    \node[below left,yshift=2pt, align=right,text width=2cm] at (sibi.north east) {{(reduced)} $\DIBI{k}$};
                    \node[below left,yshift=2pt] at (cibi.north east) {{(reduced)} $\CIBI{k}$};
                    \node[below left,xshift=1pt] at (ibi.north east) {$\infIBI$};
        \end{tikzpicture}
      \end{minipage}
  \end{tabular}
  \caption{Overview of output and input bounded independence variations.}\label{fig:venn-iobi}
\end{figure}
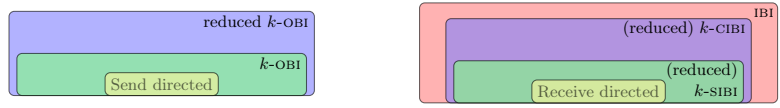
 
\begin{restatable}{theorem}{thmreduceddibiiffdibiBOTH}\label{thm:reduced-dibi-iff-dibi-both}
  Let $S$ be reduced $\OBI{k}$. $S$ is reduced $\CIBI{k}$ (resp.\ $\DIBI{k}$) iff $S$ is
  $\CIBI{k}$ (resp.\ $\DIBI{k}$).
\end{restatable}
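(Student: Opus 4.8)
The plan is to prove the two implications separately and to handle $\DIBI{k}$ and $\CIBI{k}$ uniformly, since both properties constrain exactly the same three actions at a receiving state of $\p$: the receive $\PRECEIVE{qp}{a}$ that is fired, a competing receive $\PRECEIVE{sp}{b}$ from the same state $q_\p$, and the matching send $\PSEND{sp}{b}$ of $\s$. The observation driving the whole argument is that, whenever $\s \neq \q$, these three actions are pairwise \emph{independent} with respect to $\bindep{s}{\cdot}{\cdot}$: the two receives share the subject $\p$ but sit on distinct channels $\ptp{qp} \neq \ptp{sp}$, while the send has subject $\s \neq \p$ and channel $\ptp{sp} \neq \ptp{qp}$; none of them can enable or disable another, so they commute wherever they co-occur. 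The hypothesis that $S$ is reduced $\OBI{k}$ plays one specific role: it guarantees that whenever Algorithm~\ref{algo:reduction} explores a sending state, \emph{all} of that state's outgoing sends are enabled and hence present in $\hat{\Delta}$, so that reachability of $\PSEND{sp}{b}$ in $\kRTS{S}$ is faithful to reachability in $\kTS{S}$.

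For the direction $\DIBI{k} \Rightarrow$ reduced $\DIBI{k}$ (resp.\ for $\CIBI{k}$), I would use that $\kRTS{S}$ is by construction a sub-graph of $\kTS{S}$, so $\hat{N} \subseteq \RS_k(S)$ and $\rkTRANSS{} \subseteq \kTRANSS{}$. A reduced witness $s \rkTRANSS{\PRECEIVE{qp}{a}}$ is thus also a full witness, and $\neg(s \kTRANSS{\PRECEIVE{sp}{b}})$ immediately gives $\neg(s \rkTRANSS{\PRECEIVE{sp}{b}})$. For the reachability clause, a reduced path $s \rkTRANSR \rkTRANSS{\PSEND{sp}{b}}$ is in particular a full path $s \kTRANSR \kTRANSS{\PSEND{sp}{b}}$; commuting the independent step $\PRECEIVE{qp}{a}$ forward then yields $s' \kTRANSR \kTRANSS{\PSEND{sp}{b}}$, contradicting the full property. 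The dependency-chain clause of $\CIBI{k}$ transfers the same way because every reduced path is a full path and the predicate $\dependsin{s}{\cdot}{\cdot}{\cdot}$ only inspects subjects, channels, and channel-emptiness, all preserved along the inclusion.

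The substantive direction is reduced $\Rightarrow$ full, where the work is to show that the reduction loses no witness of a violation. Assuming the reduced property but a full violation at some $s \in \RS_k(S)$, there are $\p, \q, \s$ with $\s \neq \q$, a receive $\PRECEIVE{qp}{a}$ enabled at $s$, and a transition $(q_\p, \PRECEIVE{sp}{b}, q'_\p) \in \delta_\p$ such that either $\PRECEIVE{sp}{b}$ is enabled at $s$ or $\PSEND{sp}{b}$ is reachable from $s'$. The plan is to transport this witness into $\hat{N}$: fixing an execution $s_0 \kTRANSS{\acts} s$ and appealing to the soundness of the persistent-set search of Algorithm~\ref{algo:reduction} (its partitions $\spartition{\cdot}$ group enabled transitions by subject, they are persistent, and the cycle proviso holds), one obtains an $\resche$-equivalent reduced execution reaching a configuration $\hat{s} \in \hat{N}$ that agrees with $s$ on the local state $q_\p$ of $\p$ and on the heads of channels $\ptp{qp}$ and $\ptp{sp}$. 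These data fix which of $\p$'s receives are enabled, and---together with the independence of $\PRECEIVE{qp}{a}$ and reduced $\OBI{k}$---whether $\PSEND{sp}{b}$ is reachable; hence the violation reappears at $\hat{s}$, contradicting the reduced property.

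I expect the transport step to be the main obstacle, and I would isolate it as a preservation lemma for $\kRTS{S}$: every $s \in \RS_k(S)$ has a representative $\hat{s} \in \hat{N}$ agreeing on any fixed participant's local state and on the heads of any fixed finite set of channels, and send-reachability is preserved between the two systems. This rests entirely on the correctness of the persistent-set reduction (Mazurkiewicz-trace equivalence with respect to $\bindep{s}{\cdot}{\cdot}$, the deferred groups being explored from successors) and, for the send clause specifically, on reduced $\OBI{k}$. Once this lemma is in place, both $\DIBI{k}$ and $\CIBI{k}$ follow uniformly, the only extra check for $\CIBI{k}$ being that $\dependsin{s}{\cdot}{\cdot}{\cdot}$ is stable under the commutations used, which again holds because it refers only to subjects, channels, and emptiness preserved by the representative.
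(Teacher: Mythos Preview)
Your overall strategy---split into the two implications and, for the hard direction, appeal to the correctness of the partial-order reduction---matches the paper's. The paper packages that correctness as Lemma~\ref{lem:por-trace-equiv}: for $s_0 \kTRANSS{\acts} s$ there exist $t \in \hat{N}$, $\acts'$, and a reduced execution $\actsb$ with $s_0 \rkTRANSS{\actsb} t$, $s \kTRANSS{\acts'} t$, and $\acts \concat \acts' \eqpeerop \actsb$; for $\CIBI{k}$ it additionally invokes Lemma~\ref{lem:dep-chain-equi} to transport dependency chains across $\resche$-equivalent executions. Two parts of your write-up, however, do not go through as stated.

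First, your ``driving observation'' is false. The two receives $\PRECEIVE{qp}{a}$ and $\PRECEIVE{sp}{b}$ share subject $\p$, so by definition $\bindep{s}{\PRECEIVE{qp}{a}}{\PRECEIVE{sp}{b}}$ \emph{does} hold---they are dependent, not independent; likewise $\PRECEIVE{sp}{b}$ and $\PSEND{sp}{b}$ share channel $\ptp{sp}$. Only $\PRECEIVE{qp}{a}$ and $\PSEND{sp}{b}$ are independent, and even that is not enough to ``commute $\PRECEIVE{qp}{a}$ forward'' through an arbitrary reduced path from $s$, which may contain $\p$-actions; you need a case split on whether $\p$ moves. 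Second, and more seriously, your ``preservation lemma''---a representative $\hat{s} \in \hat{N}$ that agrees with $s$ on $\p$'s local state \emph{and} on the heads of chosen channels---is not the right isolation and need not hold: trace equivalence fixes per-participant projections, not the simultaneous alignment of several channel heads at a single intermediate reduced state. The paper sidesteps this entirely by feeding the \emph{whole witnessing execution} (e.g.\ $\acts \concat \PRECEIVE{qp}{a}$, or $\acts \concat \PRECEIVE{qp}{a} \concat \acts' \concat \PSEND{sp}{b}$) into Lemma~\ref{lem:por-trace-equiv}; the resulting reduced $\actsb$ then contains both witnessing actions, and whether $\PSEND{sp}{b}$ lands before or after $\PRECEIVE{qp}{a}$ in $\actsb$ directly yields one or the other disjunct of a reduced-$\DIBI{k}$ violation at the reduced state where $\PRECEIVE{qp}{a}$ fires---no channel-head matching is needed. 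Finally, note that Lemma~\ref{lem:por-trace-equiv} requires reduced $\BA{k}$, so you should make explicit that reduced $\DIBI{k}$ (resp.\ $\CIBI{k}$) supplies the missing reduced $\IBI{k}$.
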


\begin{restatable}{lemma}{lemobiimpreducedobi}\label{lem:obi-imp-reduced-obi}
  Let $S$ be a system, if $S$ is $\OBI{k}$, then $S$ is
  also {reduced} $\OBI{k}$.
\end{restatable}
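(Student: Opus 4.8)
The plan is to exploit the fact that the only difference between the two properties is the set of configurations over which the universal quantifier ranges, while the transition relations appearing in the premise and the conclusion are identical. Concretely, Definition~\ref{def:kobi} ($\OBI{k}$) quantifies over all $s \in \RS_k(S)$, whereas Definition~\ref{def:reduced-obi} (reduced $\OBI{k}$) quantifies over all $s \in \hat N$, where $\kRTS{S} = (\hat N, s_0, \hat\Delta)$; crucially, in both definitions the premise $s \kTRANSS{\PSEND{pq}{a}}$ and the conclusion $s \kTRANSS{\PSEND{pr}{b}}$ refer to the full $k$-bounded transition relation $\kTRANSS{}$ of $\kTS{S}$, not to the reduced relation $\rkTRANSS{}$. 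Hence reduced $\OBI{k}$ is literally the restriction of the $\OBI{k}$ condition to the configurations appearing in $\hat N$.

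The key observation I would establish first is the inclusion $\hat N \subseteq \RS_k(S)$. This follows directly from Definition~\ref{def:krts}, which states that $\kRTS{S}$ is a sub-graph of $\kTS{S} = (N, s_0, \Delta)$, together with Definition~\ref{def:kts}, which fixes $N = \RS_k(S)$. If one wishes to verify this from the construction rather than from the stated sub-graph property, it suffices to note that Algorithm~\ref{algo:reduction} starts from $s_0$ and only ever adds a successor $s' = \successor{s}{\action}$ of an already-reached state $s$, where $s \kTRANSS{\action} s'$; a trivial induction on the order in which states are pushed onto $\sstack$ then shows that every $s \in \hat N$ is reachable from $s_0$ by a $k$-bounded execution, i.e.\ $s \in \RS_k(S)$.

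With this inclusion in hand the proof closes immediately: fix any $s = \csconf{q}{w} \in \hat N$ and any $\p \in \PSet$ with $s \kTRANSS{\PSEND{pq}{a}}$. Since $s \in \hat N \subseteq \RS_k(S)$, the hypothesis that $S$ is $\OBI{k}$ applies to $s$, yielding $s \kTRANSS{\PSEND{pr}{b}}$ for every $(q_\p, \PSEND{pr}{b}, q'_\p) \in \delta_\p$. This is exactly the conclusion required by Definition~\ref{def:reduced-obi}, so $S$ is reduced $\OBI{k}$. I do not anticipate any genuine obstacle here; the only point that merits a word of justification is the inclusion $\hat N \subseteq \RS_k(S)$, and even that is essentially built into the definition of $\kRTS{S}$ as a sub-graph of $\kTS{S}$.
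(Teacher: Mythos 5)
Your proof is correct and is essentially the same argument as the paper's: both rest on the observations that reduced $\OBI{k}$ imposes the identical per-configuration condition (with the same relation $\kTRANSS{}$ in premise and conclusion) and that $\hat N \subseteq \RS_k(S)$ by construction of $\kRTS{S}$ as a sub-graph of $\kTS{S}$. The paper merely phrases it contrapositively (a violating $s \in \hat N$ would also violate $\OBI{k}$), while you argue directly; the content is the same.
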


\begin{restatable}{theorem}{thmreducedsoundness}\label{thm:reduced-soundness}
  If $S$ is reduced $\OBI{k}$, $\infIBI$, and $k$\MC, then it is safe.
\end{restatable}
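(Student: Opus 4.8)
The plan is to re-run the proof of Theorem~\ref{thm:soundness}, showing that the strictly weaker \emph{reduced} $\OBI{k}$ premise can stand in for $\OBI{k}$ once the other hypotheses ($\infIBI$ and $k$\MC, i.e.\ $k$-safety and $k$-exhaustivity over all of $\RS_k(S)$) are kept; by Lemma~\ref{lem:obi-imp-reduced-obi} this yields a genuine strengthening of Theorem~\ref{thm:soundness}. The first step is to isolate the single role of $\OBI{k}$ in that proof: it guarantees that at every $k$-reachable sending state every outgoing send is fireable within the bound, so that no send-branch a participant may take is lost under the bound $k$. The aim is to recover this branching guarantee from reduced $\OBI{k}$, which imposes the same condition but only at the states $\hat{N}$ of $\kRTS{S}$.

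The key observation is how reduced $\OBI{k}$ interacts with Algorithm~\ref{algo:reduction}. Since $\spartition{s}$ groups the enabled transitions by subject (Definition~\ref{def:partition}), all transitions of a participant $\p$ enabled at a state $s \in \hat{N}$ form a single block $L_i$; when $\p$ sits at a sending state, reduced $\OBI{k}$ makes every outgoing send of $\p$ a member of $L_i$. As the algorithm explores all of $\head{E}$ before deferring the remaining blocks onto the stack (pushing $\tail{E}$), and the cycle proviso forbids any block from being deferred forever, $\kRTS{S}$ already contains every send-branch of every sending participant. I would then appeal to the soundness of the persistent-set reduction realised by the algorithm: each $L_i$ is persistent at the point it is selected (no transition outside $L_i$ can disable a transition of $L_i$, as noted after Algorithm~\ref{algo:reduction}), so every $k$-bounded execution of the full $\kTS{S}$ is represented in $\kRTS{S}$ up to $\resche$, and the predicates (\ER) and (\PG) are invariant along $\resche$-equivalent executions. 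Hence the branching that $\OBI{k}$ provided over all of $\RS_k(S)$ is recovered, modulo $\resche$, from reduced $\OBI{k}$ over $\hat{N}$.

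With this in hand, the simulation argument of Theorem~\ref{thm:soundness}---which additionally uses $\infIBI$ to keep receptions deterministic and $k$-exhaustivity to make every send eventually fireable within the bound---lifts $k$-safety to $\infty$-safety exactly as before. I expect the main obstacle to sit at configurations $s \in \RS_k(S) \setminus \hat{N}$ where $\OBI{k}$ genuinely fails, e.g.\ where a send of $\p$ is blocked by a full queue while another send of $\p$ is enabled (cf.\ Example~\ref{ex:kobi-not-por}). There I would invoke $k$-exhaustivity to produce a $\p$-free $k$-bounded execution after which the blocked send becomes enabled, and the persistent-set/cycle-proviso machinery to identify the ensuing behaviour with one already explored in $\kRTS{S}$. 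Making the bookkeeping of the deferred-block stack, the cycle proviso, and reduced $\OBI{k}$ line up across these reorderings---so that no send-branch is permanently lost even though $\OBI{k}$ itself may fail off $\hat{N}$---is the delicate heart of the argument.
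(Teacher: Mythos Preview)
Your intuition that the partial-order reduction machinery (Lemma~\ref{lem:por-trace-equiv}) is what lets reduced $\OBI{k}$ stand in for full $\OBI{k}$ is correct, and this is exactly how the paper uses it inside the proof of Lemma~\ref{lem:k-mc-kclosed}. But the overall structure of your plan is inverted relative to the paper, and this inversion hides where the real work lies.

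In the paper there is no independent proof of Theorem~\ref{thm:soundness} to ``re-run'': Theorem~\ref{thm:soundness} is obtained \emph{from} Theorem~\ref{thm:reduced-soundness} via Lemma~\ref{lem:obi-imp-reduced-obi}. The entire $k$-closed set apparatus (Lemmas~\ref{lem:k-mc-kclosed}, \ref{lem:new-kclosed-set}, \ref{lem:closed-set-paths}, \ref{lem:exist-path-eqpeer}, \ref{lem:k-exhau-kplus-exhau}) is developed under the \emph{reduced} $\OBI{k}$ hypothesis from the outset. The proof of Theorem~\ref{thm:reduced-soundness} is then a one-liner invoking Lemma~\ref{lem:reduced-k-mc-kplus-mc}: reduced $\OBI{k}$, $\infIBI$, and $k$\MC\ together imply $\OBI{k{+}1}$ and $(k{+}1)$\MC; iterating gives $n$\MC\ for all $n\ge k$ and hence $\infty$-safety. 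The reduced $\OBI{k}$ assumption is discharged precisely once, inside Lemma~\ref{lem:k-mc-kclosed}, by a contradiction argument that routes through $\kRTS{S}$ via Lemma~\ref{lem:por-trace-equiv}~(2).

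Your proposal spends almost all its effort on the POR side (recovering branching over $\hat N$, persistent sets, cycle proviso) and then black-boxes the lift from $k$-safety to $\infty$-safety as ``exactly as before''. That lift is the inductive step Lemma~\ref{lem:reduced-k-mc-kplus-mc}, and it is where most of the technical content sits: one must show not only that $(k{+}1)$-safety and $(k{+}1)$-exhaustivity hold, but also that reduced $\OBI{k}$ propagates to (reduced) $\OBI{k{+}1}$ (Lemma~\ref{lem:k-exhau-kplus-exhau}), so the induction can continue. Your sketch does not address this propagation. Also note that you cannot hope to recover $\OBI{k}$ itself ``modulo $\resche$'' from reduced $\OBI{k}$: Example~\ref{ex:kobi-not-por} exhibits a system that is reduced $\OBI{1}$ but genuinely not $\OBI{1}$, so what is recovered is not the predicate at every $s\in\RS_k(S)$ but only enough structure (the $k$-closed sets of Definition~\ref{def:kclosed}) to push the induction through.
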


\begin{example}\label{ex:kobi-not-por}
  The system below is reduced $\OBI{1}$, but not $\OBI{1}$.
      There is a configuration in $\TS{1}{S}$ from which $M_\p$ can fire
  $\PSEND{pr}{d}$ but not $\PSEND{pq}{b}$.
      Depending on the ordering chosen to sort the list of sets of
  transitions in $\spartition{\_}$, $\PRECEIVE{pq}{a}$ may always be
  executed before $M_\p$ reaches the violated state in $\RTS{1}{S}$,
  hence hiding the violation of $\OBI{k}$ in the reduced transition
  system.
      \[
  \begin{array}{cccc}
    \p:
    \begin{tikzpicture}[mycfsm, node distance = 0.6cm and 0.6cm]
      \node[state, initial, initial where=left] (s0) {};
      \node[state, right=of s0] (s1) {};
      \node[state, right=of s1] (s2) {};
      \node[state, above=of s2] (s3) {};
      \node[state, right=of s2] (s4) {};
            \path
      (s0) edge node [above] {$\PSEND{pq}{a}$} (s1)
      (s1) edge node [below] {$\PRECEIVE{rp}{c}$} (s2)
      (s2) edge node [left] {$\PSEND{pq}{b}$} (s3)
      (s2) edge node [above] {$\PSEND{pr}{d}$} (s4)
      ; 
    \end{tikzpicture}
    \qquad
    &
      \ptp{q}:
      \begin{tikzpicture}[mycfsm, node distance = 0.6cm and 0.6cm]
        \node[state, initial, initial where=left] (s0) {};
        \node[state, right=of s0] (s1) {};
        \node[state, right=of s1] (s2) {};
                \path
        (s0) edge node [above] {$\PRECEIVE{pq}{a}$} (s1)
        (s1) edge node [above] {$\PRECEIVE{pq}{b}$} (s2)
        ;
      \end{tikzpicture} 
      \qquad
    &
      \ptp{s}:
      \begin{tikzpicture}[mycfsm, node distance = 0.6cm and 0.6cm]
        \node[state, initial, initial where=left] (s0) {};
        \node[state, right=of s0] (s1) {};
                \path
        (s0) edge node [above] {$\PSEND{rp}{c}$} (s1)
        ;
      \end{tikzpicture}
   &
      \ptp{r}:
      \begin{tikzpicture}[mycfsm, node distance = 0.6cm and 0.6cm]
        \node[state, initial, initial where=left] (s0) {};
        \node[state, right=of s0] (s1) {};
                \path
        (s0) edge node [above] {$\PRECEIVE{pr}{d}$} (s1)
        ;
      \end{tikzpicture}
  \end{array}
  \]
    This system is $k$-exhaustive for any $k \geq 1$ and (reduced)
  $\OBI{k}$ for any $k \geq 2$.
\end{example}

Below we adapt the definitions of safety
(Definition~\ref{def:k-safety}) and $k$-exhaustivity
(Definition~\ref{def:exhaustive}) to reduced transition systems.
\begin{definition}[Reduced $k$-safety]\label{def:reduced-k-safety}
  Posing $\kRTS{S} = (\hat{N}, s_0, \hat\Delta)$.  System $S$ is
  \emph{reduced} $k$-\emph{safe} if the following conditions hold for
  all $ s = \csconf q w \in \hat{N}$,
  \begin{enumerate}
  \item \label{en:safety-er}
    $\forall\p\q \in \CSet$, if $w_{\p\q} = \msg{a} \cdot w'$,
    then $s \rkTRANSR \rkTRANSS{\PRECEIVE{pq}{a}}$.
  \item \label{en:safety-pro}
    $\forall\p \in \PSet$, if $q_\p$ is a \emph{receiving} state,
    then $s \rkTRANSR \rkTRANSS{\PRECEIVE{qp}{a}}$ for $\q \in
    \PSet$ and $\msg{a} \in \ASigma$.
  \end{enumerate}
  \end{definition}

\begin{definition}[Reduced
  $k$-exhaustivity]\label{def:reduced-exhaustive}
  Posing 
  $\kRTS{S} = (\hat{N}, s_0, \hat\Delta)$.
    System $S$ is \emph{reduced} $k$-\emph{exhaustive} if
  for all $s = \csconf{q}{w} \in \hat{N}$ and $\p \in \PSet$,
    if $q_\p$ is a sending state, then
    $
  \forall (q_\p, \action, q'_\p) \in \delta_\p \qst \exists
  \acts \in \ASetC
  \qst
  s  \rkTRANSS{\acts}\rkTRANSS{\action}  \text{ and } \p \notin
  \acts.
  $
  \end{definition}

Next, we state that checking $k$-safety (resp.\ $k$-exhaustivity) is
equivalent to checking \emph{reduced} $k$-safety (resp.\
$k$-exhaustivity), which implies that checking $k$\MC\ can be done on
$\kRTS{S}$ instead of $\kTS{S}$, the former being generally much
smaller than the latter.
We note that the reduction requires (reduced) $\OBI{k}$ and $\IBI{k}$
to hold as they imply that if a transition $(q_\p, \action , q'_\p)$ is
enabled at $s = \csconf{q}{w}$, then we have that ($i$) \emph{all}
send actions outgoing from local state $q_\p$ are enabled at $s$ (and
they will stay enabled until one is fired) or ($ii$) \emph{exactly
  one} receive action is enabled from $q_\p$ (and it will stay enabled
until it is fired).

\begin{restatable}{theorem}{themsafeiffredsafekexhau}\label{them:safe-iff-red-safe-kexhau}
  Let $S$ be reduced $\OBI{k}$ and reduced $\IBI{k}$.
    (1) $S$ is \emph{reduced} $k$-safe iff $S$ is $k$-safe.
    (2) $S$ is \emph{reduced} $k$-exhaustive iff
  $S$ is $k$-exhaustive.
\end{restatable}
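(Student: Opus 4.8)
The plan is to reduce both equivalences to a single commutation argument combined with the correctness of persistent‑set selective search under the cycle proviso, following \cite{Godefroid96,Peled98}. Two halves are immediate: since $\kRTS{S}$ is a sub‑graph of $\kTS{S}$ we have $\hat N \subseteq N = \RS_k(S)$, $\rkTRANSS{} \subseteq \kTRANSS{}$ and $\rkTRANSR \subseteq \kTRANSR$, so every \emph{reduced} witnessing execution is also a full one, and the universal quantifications of the reduced properties range over a subset of $N$. Thus what remains is, for each property, to transfer it in both directions between the configuration sets $N$ and $\hat N$ and between the two reachability relations.

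First I would prove a commutation lemma: if $\action,\action'$ are both enabled at a $k$-bounded $s$ with $\subj{\action}\neq\subj{\action'}$, then they are independent, i.e. firing one neither disables nor is disabled by the other, and the two orders reach the same configuration. The only candidate for interference is captured by the dependency $\bindep{s}{\action}{\action'}$, namely a shared channel that is empty; but a receive on an empty channel is disabled, and along any continuation avoiding $\subj{\action}$ that channel can be refilled (if $\action$ is a send) or have its head removed (if $\action$ is a receive) only by $\subj{\action}$ itself, so the conflict never materialises. This is exactly where $\OBI{k}$ and $\IBI{k}$ are needed: they guarantee that the set of $\p$-actions enabled at $s$ is a faithful, stable image of $\p$'s local choice — all of $q_\p$'s sends when $q_\p$ is sending (by $\OBI{k}$), and the unique enabled receive when $q_\p$ is receiving (by $\IBI{k}$) — and that this image is not altered by firing actions of other participants. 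Consequently each group $L_i$ of $\spartition{s}$ is a \emph{persistent set}, and because Algorithm~\ref{algo:reduction} consumes the whole list before recomputing a partition (pushing $\tail{E}$ at every step), no enabled transition is deferred forever, i.e. the cycle proviso \cite[C3ii]{Peled98} holds.

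With persistence and the proviso in place, the classical correctness result for persistent‑set selective search, adapted to our setting, yields the bridge I need: every full execution from a reduced state is $\resche$-equivalent to one that respects the group priority and is therefore realised in $\kRTS{S}$, reaching the same configuration; and, dually, for every $s\in N$ there is $\hat s\in\hat N$ with $s\kTRANSR\hat s$ by firing only deferred (hence independent) transitions. Since $\resche$-equivalent executions from a common state reach the \emph{same} configuration and carry the \emph{same} multiset of actions, any witnessing reception or send occurring on a full path also occurs on its reduced re‑ordering. I would then read off the three properties, each of which is an \emph{existential reachability of a labelled event}: for eventual reception at $s\in N$, I follow $s\kTRANSR\hat s$ and split on whether the head message is consumed on the way (then the witness is that prefix) or survives to $\hat s$ (then reduced $k$-safety at $\hat s$ supplies a reduced witness, which is also a full one); progress is identical using the unique enabled receive of a receiving $q_\p$ (so that ``$\p$ moves'' means ``$\p$ receives''); and for $k$-exhaustivity the $\p$-free prefix enabling a send $\action$ is reordered into a $\p$-free reduced prefix, using that every action in it is independent of $\p$. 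The converse inclusions reorder the full witnesses produced by (unreduced) $k$-safety or $k$-exhaustivity at a state $s\in\hat N$ into priority‑respecting reduced witnesses from $s$.

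The main obstacle is making the bridge genuinely \emph{both} sound and complete for these \emph{liveness‑flavoured}, existential properties, rather than merely for deadlocks or maximal traces: I must ensure that a reception or an enabling send reachable in $\kTS{S}$ stays reachable in $\kRTS{S}$ without being postponed indefinitely along a cycle — which is precisely what the cycle proviso secures — and, for exhaustivity, that the reordering can be performed while \emph{preserving} the constraint that the prefix contains no action of $\p$. Getting the quantifier bookkeeping right, in particular that a configuration of $\kRTS{S}$ is expanded only once with the first deferred list it is popped with, is the delicate point that the commutation lemma and the proviso are designed to absorb.
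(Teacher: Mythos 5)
Your proposal is correct and follows essentially the same route as the paper: it establishes that each group of $\spartition{s}$ is a persistent set using reduced $\OBI{k}$ and reduced $\IBI{k}$ exactly where the paper's Lemma~\ref{lem:indep-partition} does, derives the trace-equivalence bridge between $\kTS{S}$ and $\kRTS{S}$ (the paper's Lemmas~\ref{lem:por-trace-equiv} and~\ref{lem:ts-path-rts}), and transfers each property in both directions via preservation of per-participant projections, including the $\p$-free-prefix constraint for exhaustivity. The only cosmetic differences are that you appeal to the generic correctness of persistent-set selective search with the cycle proviso where the paper re-proves the bridge from scratch, and that your claim that every full execution is $\resche$-equivalent to a reduced one should more precisely say it can be \emph{extended} to such an execution — a refinement that does not affect any of the uses you make of it.
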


Algorithm~\ref{algo:kmc} checks whether a system $S$ is $k$\MC\, for
some $k \leq \textsc{max}$, where $\textsc{max}$ is a user-provided
constant.
At each iteration, it constructs the $\kRTS{S}$ of the input system
$S$.
If $k$ is a sufficient bound to make a sound decision (function
$\appfun{S}{\lts}$), then it tests for $k$-safety,
otherwise proceeds to the next iteration with $k{+}1$.
Function $\appfun{S}{\lts}$ checks whether the premises of
Theorem~\ref{thm:reduced-soundness} hold, i.e.,
if $S$ is not send directed, written $\snddir{S}$, then it checks for
$\OBI{k}$;
$S$ is not receive directed, written $\rcvdir{S}$, then it checks for
$\DIBI{S}$ or $\CIBI{k}$; then checks whether $k$-exhaustivity holds
(all conditions are checked on $\RTS{k}{S}$).

The equivalence relation defined below relates executions which only
differ by re-ordering of independent actions, it is used in several
results below.
\begin{definition}[Projected equivalence]
    Let $\acts, \actsb \in \ASetC$, we define:
  $\eqpeer{\acts}{\actsb}$ \emph{if}
    $\forall \p \in \PSet \qst
  \onpeer{p}{\acts} = \onpeer{p}{\actsb}$.
\end{definition}

Finally, we state the optimality of Algorithm~\ref{algo:reduction}: it
never explores two executions which are $\eqpeerop$-equivalent more
than once. 
Our notion of optimality is slightly different from that
of~\citeout{AbdullaAJS14} since Algorithm~\ref{algo:reduction} does not
use sleep sets.

\begin{restatable}{lemma}{lemporoptimal}\label{lem:por-optimal}
  Let $S$ be a system such that
  $\kRTS{S} = (\hat{N}, s_0, \hat\Delta)$,
  for all $\acts$ and $\acts'$ such that $s_0 \rkTRANSS{\acts}$ and
  $s_0 \rkTRANSS{\acts'}$, we have that:
  $\acts \eqpeerop \acts' \implies \acts = \acts'$.
  \end{restatable}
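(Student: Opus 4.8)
The plan is to reduce optimality to one structural invariant of $\kRTS{S}$: \emph{every state has all of its outgoing transitions performed by a single participant}. Concretely, I would first show that for every $s \in \hat{N}$, if $(s,\action,s_1) \in \hat\Delta$ and $(s,\action',s_2) \in \hat\Delta$, then $\subj{\action}=\subj{\action'}$. This is immediate from Algorithm~\ref{algo:reduction}: a transition with source $s$ is appended to $\sacc$ only while $s$ is being processed, and the $\svisited$ guard ensures that each state is processed at most once. Hence all of $s$'s outgoing transitions are drawn from the \emph{single} set $\head{E}$, where $E$ is the list attached to $s$ at its unique visit. Whether $E$ was freshly computed as $\spartition{s}$ or inherited as the tail of an earlier partition, its head (when non-empty) is exactly one block of some partition, so by Definition~\ref{def:partition}(\ref{en:parti-group}) every action in $\head{E}$ has the same subject.

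With this invariant, I would prove a slightly stronger statement that is amenable to induction, namely that the claim holds from an \emph{arbitrary} source: for all $s \in \hat N$ and all $\acts,\acts'$ with $s \rkTRANSS{\acts}$ and $s \rkTRANSS{\acts'}$, $\acts \eqpeerop \acts' \implies \acts = \acts'$; the lemma is then the instance $s = s_0$. The argument proceeds by induction on $|\acts|$, noting that $|\acts| = \sum_{\p \in \PSet} |\proj{\acts}{p}| = |\acts'|$ so the two lengths agree. If $\acts = \emptyw$, all projections of $\acts'$ are empty, forcing $\acts'=\emptyw$. Otherwise write $\acts = \action \concat \acts_1$ and $\acts' = \action' \concat \acts'_1$. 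Both initial transitions leave $s$, so $\subj{\action} = \subj{\action'} =: \p$ by the invariant; since $\action$ (resp.\ $\action'$) is the first action of $\acts$ (resp.\ $\acts'$) and has subject $\p$, it is the first letter of $\proj{\acts}{p}$ (resp.\ $\proj{\acts'}{p}$), so $\proj{\acts}{p} = \proj{\acts'}{p}$ yields $\action = \action'$. Then $\acts_1$ and $\acts'_1$ are paths from the common state $\successor{s}{\action}$, and a direct computation on projections (using $\proj{\action \concat \acts_1}{p} = \action \concat \proj{\acts_1}{p}$ and $\proj{\action}{q} = \emptyw$ for $\q \neq \p$) gives $\acts_1 \eqpeerop \acts'_1$. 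The induction hypothesis then yields $\acts_1 = \acts'_1$, hence $\acts = \acts'$.

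The one delicate step is the structural invariant, and within it the claim that $\head{E}$ lies inside a single partition block even when $E$ is an inherited (rather than freshly computed) list. I would discharge this with an auxiliary invariant on the stack: every pair $\pair{s}{E}$ that is ever pushed has $E$ equal to a (possibly empty) suffix of some $\spartition{t}$, so that $\head{E}$, when non-empty, is a block of $\spartition{t}$ and hence uniform in subject by Definition~\ref{def:partition}(\ref{en:parti-group}). This invariant is preserved because the algorithm only ever pushes $\tail{E}$ together with each successor. Once it is established the remaining reasoning is purely combinatorial; note, in particular, that optimality requires no bound-independence hypotheses, holding for the reduced system of any $S$.
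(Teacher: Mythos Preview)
Your proposal is correct and follows essentially the same approach as the paper: both arguments hinge on the structural invariant that all outgoing $\hat\Delta$-transitions from a given state share a common subject (because they all come from a single block $\head{E}$), and then combine this with the definition of $\eqpeerop$ to force the first divergent actions to coincide. The paper phrases the second step as a contrapositive via the longest common prefix rather than an explicit induction, but this is the same argument; if anything, your version is more careful in justifying the invariant (the stack invariant on suffixes of $\spartition{t}$ and the role of the $\svisited$ guard), which the paper leaves implicit.
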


 \section{Overview of the proofs of Theorems~\ref{thm:soundness}
  and~\ref{thm:reduced-soundness}}

The properties $\OBI{k}$ and $\infIBI$, and $k$-exhaustivity
\emph{together} guarantee that any choice made by an automaton is not
constrained nor influenced by the channel bounds.
The proof that $k$\MC\ guarantees safety for such systems crucially
relies on this.
The independence of choice wrt.\ the channel bounds for these \CSA\
allows us to construct sets of executions that include all possible
individual choices.
We characterise this form of closure with the definition below,
which is crucial for the further developments of this section.
\begin{definition}[$k$-Closed]\label{def:kclosed}
  Given a system $S$,
  $\paset \subseteq \ASetC$, and $s \in \RS_{k}(S)$,
    we say that \emph{$\kclosed{\paset}{s}$}, if the following two
  conditions hold:
    \begin{enumerate}
  \item $\forall \acts \in \paset \qst \exists s' \in \RS_k(S) \qst s
    \bTRANSS{\acts}{k} s'$
      \item
    $\forall \acts_0 \concat \PSEND{pq}{a} \concat \acts_1 \in \paset$
    s.t.\ $s \TRANSS{\acts_0} \csconf q w$ and
    $\forall (q_\p, \action, q'_\p) \in \delta_\p$ there is
    $\acts_0 \concat \acts_2 \concat \action \concat \acts_3 \in
    \paset$
    with $\acts_2 \concat \acts_3 \in \ASetC$ and $\p \notin \acts_2$.
  \end{enumerate}
\end{definition}
In other words, $\kclosed{\paset}{s}$ if (1) all executions in
$\paset$, starting from $s$, lead to a configuration in $\RS_k(S)$ and
(2) whenever an automaton $\p$ fires a send action in an execution in
$\paset$, then all possible choices that $\p$ can make are also
represented in $\paset$.

\begin{example}
  Consider the $1$\MC\ system $(M_\p, M_\q)$ below.
  \begin{center}
    \begin{tabular}{lcr}
      $\p:$
      \begin{tikzpicture}[mycfsm]
        \node[state, initial, initial where=above,font=\tiny] (s0) {0};
        \node[state, right=of s0,font=\tiny] (s1) {1};
        \node[state, right=of s1,font=\tiny] (s2) {2};
        \node[state, right=of s2,font=\tiny] (s3) {3};
                \path
        (s0) edge node [above] {$\PSEND{pq}{a}$} (s1)
        (s1) edge node [below] {$\PSEND{pq}{b}$} (s2)
        (s2) edge [bend right] node [below] {$\PRECEIVE{qp}{c}$} (s3)
        (s2) edge [bend left] node [above] {$\PRECEIVE{qp}{d}$} (s3)
        ;
      \end{tikzpicture}
      \qquad\qquad\qquad
      &
        $\ptp{q}:$
        \begin{tikzpicture}[mycfsm]
          \node[state, initial, initial where=above,font=\tiny] (s0) {0};
          \node[state, right=of s0,font=\tiny] (s1) {1};
          \node[state, right=of s1,font=\tiny] (s2) {2};
          \node[state, right=of s2,font=\tiny] (s3) {3};
                    \path
          (s0) edge [bend right] node [below] {$\PSEND{qp}{c}$} (s1)
          (s0) edge [bend left] node [above] {$\PSEND{qp}{d}$} (s1)
          (s1) edge node [above] {$\PRECEIVE{pq}{a}$} (s2)
          (s2) edge node [below] {$\PRECEIVE{pq}{b}$} (s3)
          ;
        \end{tikzpicture}
    \end{tabular}
  \end{center}
  The sets $\{ \emptyw \}$ and
  $ \{ \PSEND{qp}{c} , \PSEND{qp}{d}, \emptyw \}$ are both $1$-closed for
  $s_0 = (0,0;\emptyw,\emptyw)$.
    Instead, the set $\{ \PSEND{qp}{c} , \emptyw \}$ is not $1$-closed for $s_0$ since
  there is a branching in participant $\q$ that is not represented.
  \end{example}

Lemma~\ref{lem:k-mc-kclosed} follows from the facts that ($i$) $S$ is
(reduced) $\OBI{k}$ and ($ii$) $S$ is $k$-exhaustive, i.e., all send
actions are eventually enabled within the $k$-bounded executions.

\begin{restatable}{lemma}{lemkmckclosed}\label{lem:k-mc-kclosed}
    Let $S$ be \emph{reduced} $\OBI{k}$ and $k$-exhaustive. For all
  $s \in \RS_k(S)$, if $s \TRANSS{\PSEND{pq}{a}}$ and
  $\paset = \{ \acts \st s \kTRANSS{\acts}\kTRANSS{\PSEND{pq}{a}}
  \land \p \notin \acts \}$,
  then $\kclosed{\paset \neq \emptyset}{s}$.
  \end{restatable}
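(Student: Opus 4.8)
The plan is to read the conclusion $\kclosed{\paset \neq \emptyset}{s}$ as the conjunction ``$\paset \neq \emptyset$ and $\paset$ satisfies the two clauses of Definition~\ref{def:kclosed}'', and to establish the three parts separately, with $k$-exhaustivity (Definition~\ref{def:exhaustive}) doing the real work and output bound independence used only to keep participant $\p$ frozen. First I would settle non-emptiness and clause~(1). The transition witnessing $s \TRANSS{\PSEND{pq}{a}}$ is some $(q_\p,\PSEND{pq}{a},q'_\p)\in\delta_\p$; since $S$ consists of \CSA\ (no mixed states), $q_\p$ is a sending state. As $s\in\RS_k(S)$, $k$-exhaustivity applied to this transition produces $\acts$ with $s\kTRANSS{\acts}\kTRANSS{\PSEND{pq}{a}}$ and $\p\notin\acts$, i.e.\ $\acts\in\paset$, so $\paset\neq\emptyset$. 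For clause~(1), any $\acts\in\paset$ is a prefix of the $k$-bounded execution $\acts\concat\PSEND{pq}{a}$ issued from $s$, hence itself $k$-bounded, say $s\bTRANSS{\acts}{k}s'$; appending $\acts$ to a $k$-bounded execution from $s_0$ to $s$ (one exists because $s\in\RS_k(S)$) stays $k$-bounded, so $s'\in\RS_k(S)$.

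The core is clause~(2). Here I would fix $\acts'\in\paset$ together with a factorisation $\acts'=\acts_0\concat\action\concat\acts_1$ in which $\action$ is a send action, and set $\rr\defi\subj{\action}$ (this instantiates the re-bound $\PSEND{pq}{a}$ of Definition~\ref{def:kclosed} by $\action$). Writing $s\kTRANSS{\acts_0}\csconf q w$, the fact that $\acts_0$ is a prefix of the $k$-bounded $\acts'\concat\PSEND{pq}{a}$ gives $\csconf q w\in\RS_k(S)$ and $\csconf q w\kTRANSS{\action}$; moreover $\rr\neq\p$ because $\p\notin\acts'$, and $q_\rr$ is sending. The decisive move is to apply output bound independence (Definition~\ref{def:kobi}) at $\csconf q w$: from $\csconf q w\kTRANSS{\action}$ with $\subj{\action}=\rr$, every alternative $(q_\rr,\action',q''_\rr)\in\delta_\rr$ also satisfies $\csconf q w\kTRANSS{\action'}s''$. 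Since $\p\notin\acts_0$ and $\subj{\action'}=\rr\neq\p$, participant $\p$ has not moved, so at $s''$ it is still in the sending state $q_\p$ with $\PSEND{pq}{a}$ available; $k$-exhaustivity at $s''\in\RS_k(S)$ then yields $\acts_3$ with $s''\kTRANSS{\acts_3}\kTRANSS{\PSEND{pq}{a}}$ and $\p\notin\acts_3$. Concatenation shows $\acts_0\concat\action'\concat\acts_3$ is $k$-bounded from $s$, avoids $\p$, and reaches a configuration firing $\PSEND{pq}{a}$ within the bound, hence lies in $\paset$; this is exactly the witness $\acts_0\concat\acts_2\concat\action'\concat\acts_3$ required by clause~(2), with $\acts_2=\emptyw$ (so $\rr\notin\acts_2$ trivially).

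The hard part will be justifying the output-bound-independence step at the intermediate configuration $\csconf q w$. This configuration is reached by an arbitrary $k$-bounded execution, so it lies in $\RS_k(S)$ but need not belong to the node set $\hat N$ of $\kRTS{S}$, whereas the hypothesis is only \emph{reduced} $\OBI{k}$ (Definition~\ref{def:reduced-obi}), which constrains $\hat N$. With the full $\OBI{k}$ property the step is immediate, as $\OBI{k}$ ranges over all of $\RS_k(S)$; but Example~\ref{ex:kobi-not-por} shows the reduced property is strictly weaker, so under the stated hypothesis I would need an extra argument that $\rr$'s choice is genuinely unconstrained at $\csconf q w$. Two routes look promising: either argue, via the coverage of $\kRTS{S}$ up to projected equivalence $\eqpeerop$ (using that $\eqpeerop$-equivalent executions reach the same configuration), that the configurations at which $\rr$'s branching is actually exercised inside $\paset$-executions are represented in $\hat N$; or replace the immediate ``$\acts_2=\emptyw$'' step by a draining prefix $\acts_2$ that frees the channel $\chan{\action'}$ while avoiding both $\rr$ and $\p$, which is legitimate precisely because $\p$ sits at a sending state and so can never be the participant forced to drain. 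Pinning down this interface between the full set $\paset$ and the reduced hypothesis is the only genuinely delicate point; the remainder is prefix and concatenation bookkeeping on $k$-bounded executions.
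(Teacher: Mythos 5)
Your treatment of non-emptiness and of clause~(1) of Definition~\ref{def:kclosed} is correct and coincides with the paper's. The gap is in clause~(2), and it is the one you yourself flag: your main argument applies $\OBI{k}$ at the intermediate configuration $\csconf q w$, which is an arbitrary element of $\RS_k(S)$, whereas the hypothesis only gives \emph{reduced} $\OBI{k}$, a property of the node set of $\kRTS{S}$ that is strictly weaker (Example~\ref{ex:kobi-not-por}). Neither of your two proposed repairs is carried out, and the second one begs the question: the claim that $\p$, being at a sending state, ``can never be the participant forced to drain'' $\chan{\action'}$ is unjustified --- $\p$ need not be asked to \emph{receive}, but a third participant's receive on $\chan{\action'}$ may causally require a prior \emph{send} by $\p$ (a chain of the shape $\PSEND{pt}{x} \concat \PRECEIVE{pt}{x} \concat \PRECEIVE{st}{y}$). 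Ruling out exactly this situation is where all the work lies.

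The paper's proof of clause~(2) takes a different and more economical route that you miss: since $q_\s$ is a sending state and $\csconf q w \in \RS_k(S)$, \emph{$k$-exhaustivity alone} already guarantees that every alternative send $(q_\s,\PSEND{st}{c},q'_\s)\in\delta_\s$ is fireable after some $k$-bounded $\actsb$ with $\s\notin\actsb$ --- no bound independence is needed for deferred fireability, and Definition~\ref{def:kclosed} explicitly permits a non-empty $\acts_2$. The only remaining obligation is to make the witness avoid $\p$. The paper splits on this: if some such $\actsb$ satisfies $\p\notin\actsb$, then $\p$'s local state is untouched and a second application of $k$-exhaustivity re-enables $\PSEND{pq}{a}$, producing the required member of $\paset$; if instead \emph{every} such $\actsb$ contains $\p$, there is a dependency chain through $\p$, and transporting the execution to the reduced transition system via Lemma~\ref{lem:por-trace-equiv}~(2) yields a configuration of $\kRTS{S}$ at which one of $\s$'s sends is enabled and another is not, contradicting reduced $\OBI{k}$. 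So reduced $\OBI{k}$ is used only to exclude the pathological case, on configurations where it actually applies. To repair your proof you would have to replace the $\OBI{k}$ step by this $k$-exhaustivity argument and then supply the dependency-chain contradiction for the case $\p\in\actsb$; as written, the proposal does not close the lemma.
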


Note that if $\PSEND{pq}{a}$ is the only action enabled at $s$, then
$\paset = \{ \emptyw \}$. In general, we do not have $\emptyw \in
\paset$, as shown in the example below.
\begin{example}
  Consider the $1$\MC\ system $(M_\p, M_\q)$ below.
  \begin{center}
    \begin{tabular}{lcr}
      $\p:$
      \begin{tikzpicture}[mycfsm]
        \node[state, initial, initial where=above,font=\tiny] (s0) {0};
        \node[state, right=of s0,font=\tiny] (s1) {1};
        \node[state, right=of s1,font=\tiny] (s2) {2};
                \path
        (s0) edge node [above] {$\PSEND{pq}{a}$} (s1)
        (s1) edge node [above] {$\PSEND{pq}{b}$} (s2)
        ;
      \end{tikzpicture}
      \qquad\qquad\qquad
      &
        $\ptp{q}:$
        \begin{tikzpicture}[mycfsm]
          \node[state, initial, initial where=above,font=\tiny] (s0) {0};
          \node[state, right=of s0,font=\tiny] (s1) {1};
          \node[state, right=of s1,font=\tiny] (s2) {2};
                    \path
           (s0) edge node [above] {$\PRECEIVE{pq}{a}$} (s1)
           (s1) edge node [above] {$\PRECEIVE{pq}{b}$} (s2)
          ;
        \end{tikzpicture}
    \end{tabular}
  \end{center}
    Pose $s = (1,0;\msg{a}, \emptyw)$, we have that the set 
  $\{ \acts \st s \bTRANSS{\acts}{1}\bTRANSS{\PSEND{pq}{b}}{1} \land
  \p \notin \acts \} = \{ \PRECEIVE{pq}{a} \}$ is $1$-closed for $s$.
    Indeed, for the action $\PSEND{pq}{b}$ to be fired in a $1$-bounded
  execution, message $\msg{a}$ must be consumed first.
    \end{example}

Lemma~\ref{lem:new-kclosed-set} below states that if there is a
$k$-closed set of executions for a configuration $s$, we can
construct another $k$-closed set for any successor of $s$.
\begin{restatable}{lemma}{lemnewkclosedset}\label{lem:new-kclosed-set}
  Let $S$ be a $\IBI{k}$ system, $s, s'\in \RS_{k}(S)$
  and $\paset \subseteq \ASetC$ such that $\kclosed{\paset}{s}$,
    $s \bTRANSS{\action}{k} s'$, and
    $\hat\paset = \hat\paset_1 \cup \hat\paset_2$, where
  \[
  \hat\paset_1 = 
  \left\{
    \acts \st \!
    \acts \in \paset 
    \land 
    \subj{\action}\notin \acts
  \right\}
    \, \text{and} \;
  \hat\paset_2 = 
  \left\{
    \acts_1 \concat \acts_2 \st \!
    \acts_1 \concat \action \concat \acts_2 \in \paset 
    \land
    \subj{\action} \notin \acts_1
  \right\}
  \]
    Then the following holds:
    \begin{enumerate}
  \item \label{it:new-kclosed-set-part-a}
    The set 
$\kclosed{\hat\paset}{s'}$
    
  \item For all $\actsb \in \hat\paset$, there is $\acts \in \paset$
    such that either:
    \begin{itemize}
    \item \label{it:new-kclosed-set-part-b} 
            $\actsb \in \hat\paset_1$,
            $\actsb = \acts$, $\subj{\action} \notin \actsb$, and there
      are $t, t' \in \RS_k(S)$ such that $s \kTRANSS{\actsb} t$,
      $s' \kTRANSS{\actsb} t'$, and $t \kTRANSS{\action} t'$,
      and 
            $\eqpeer{\acts \concat \action}{\action \concat
        \actsb}$;
            or
          \item 
            $\actsb \in \hat\paset_2$,
            there is $t \in \RS_k(S)$ such that $s \kTRANSS{\acts} t$,
      $s' \kTRANSS{\actsb} t$,
      and 
            $\eqpeer{\acts}{\action \concat
        \actsb}$.
    \end{itemize}
  \item \label{it:new-kclosed-set-part-c} 
    $\paset  = \emptyset \iff \hat\paset = \emptyset$.
  \end{enumerate}
  \end{restatable}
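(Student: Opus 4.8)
The plan is to isolate a single commutation principle and derive all three items from it. The principle is a \emph{diamond} for the $k$-bounded semantics: whenever $\subj{\action} = \p$ and $\action'$ is an action with $\subj{\action'} \neq \p$ that is co-enabled with $\action$ at a $k$-bounded configuration, the two actions commute and both intermediate configurations are again $k$-bounded. Its justification is purely \FIFO: a $\p$-action rewrites only $\p$'s local state and the single channel $\chan{\action}$; an action of another participant on a different channel commutes trivially, and on the channel $\chan{\action}$ it must play the opposite role to $\action$, so that appending at the tail commutes with consuming at the head. The one non-commuting case of the dependency relation (same channel, \emph{empty}) cannot arise, since co-enabledness forces that channel non-empty whenever either action is a receive. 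Iterating this over the letters of a word, by induction on its length, gives the \emph{sliding lemma}: if $\p \notin \acts$, $s \bTRANSS{\action}{k} s'$ and $s \kTRANSS{\acts} t$, then there is $t'$ with $s' \kTRANSS{\acts} t'$ and $t \bTRANSS{\action}{k} t'$, all configurations being $k$-bounded and hence in $\RS_k(S)$.

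Item~2 follows directly. For $\actsb \in \hat\paset_1$ I take $\acts = \actsb$ (so $\p \notin \actsb$) and apply the sliding lemma to the witness $s \kTRANSS{\actsb} t$ supplied by the $k$-closedness of $\paset$; the equivalence $\eqpeer{\actsb \concat \action}{\action \concat \actsb}$ holds because $\action$ is visible only in the projection onto $\p$ and $\p \notin \actsb$. For $\actsb = \acts_1 \concat \acts_2 \in \hat\paset_2$ I take the preimage $\acts = \acts_1 \concat \action \concat \acts_2 \in \paset$ and slide $\action$ \emph{leftwards} across the $\p$-free prefix $\acts_1$, so that $s \bTRANSS{\action}{k} s' \kTRANSS{\acts_1} s_1'$ while $s \kTRANSS{\acts_1} s_1 \bTRANSS{\action}{k} s_1'$; running $\acts_2$ from $s_1'$ reaches a common target $t$, giving $s \kTRANSS{\acts} t$, $s' \kTRANSS{\actsb} t$, and $\eqpeer{\acts}{\action \concat \actsb}$ for the same projection reason. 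The first condition of item~1---that every $\actsb \in \hat\paset$ runs from $s'$ to a configuration of $\RS_k(S)$ within the bound---is an immediate corollary.

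For item~3 the forward direction is trivial. Conversely, let $\acts \in \paset$. If $\p \notin \acts$ then $\acts \in \hat\paset_1$. Otherwise the first $\p$-action of $\acts$ fires from $\p$'s local state $q_\p$, which is unchanged before that action and therefore coincides with the source state of $\action$. If $q_\p$ is \emph{receiving}, the $\p$-free prefix can only append to the tail of $\chan{\action}$, so its head is still the one at $s$ and $\action$ remains $k$-enabled; $\IBI{k}$ then forces the first $\p$-action to equal $\action$, placing $\acts$ in the set generating $\hat\paset_2$. If $q_\p$ is \emph{sending}, the send-closure of $\paset$ (condition~2 of $k$-closedness) applied at $q_\p$ with target transition $\action$ produces an execution of $\paset$ whose first $\p$-action is $\action$, again yielding an element of $\hat\paset_2$. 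In all cases $\hat\paset \neq \emptyset$.

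The main obstacle is the second condition of item~1, that $\hat\paset$ inherits the send-branching closure. Given a send-occurrence $\actsb_0 \concat \PSEND{rs}{c} \concat \actsb_1 \in \hat\paset$, I pull it back along item~2 to a send by the same participant from the \emph{same} sending state in the preimage $\acts \in \paset$; the states agree because sliding the $\p$-action $\action$ leaves every other participant's local state untouched, and when the sender is $\p$ itself the occurrence sits after $\action$ and inherits $\p$'s post-$\action$ state. Applying the send-closure of $\paset$ there yields, for each alternative transition of the sender, a witness in $\paset$, which I push into $\hat\paset$ through the map of item~3. The crux is that this push succeeds only when the witness omits $\p$ or carries $\action$ as its first $\p$-action---automatic when the occurrence lies after $\p$'s first action (in particular when the sender is $\p$) and when $q_\p$ is receiving (by $\IBI{k}$, as above), but not when $q_\p$ is sending and an $\rr$-send with $\rr \neq \p$ is pulled back \emph{before} $\p$'s first action, where the witness may open $\p$'s activity with a send other than $\action$. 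I would discharge this residual case by a secondary application of the send-closure of $\paset$ at that first $\p$-send, again with target $\action$, using the diamond to keep $\action$ ahead of the $\p$-free segment preceding the branch; arranging these nested closure applications so that the chosen $\rr$-transition is never destroyed is the one genuinely technical step, handled by an induction on the length of that segment.
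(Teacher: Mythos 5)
Your proposal takes essentially the same route as the paper's proof: your sliding/diamond principle is the paper's Lemma~\ref{lem:indep-path}, which yields item~2 and condition~(1) of $k$-closedness exactly as you describe, and item~3 together with condition~(2) of item~1 proceed by the same case analysis ($\IBI{k}$ when $\subj{\action}$'s local state is receiving, the send-closure clause of Definition~\ref{def:kclosed} when it is sending). The residual step you defer to an induction is handled in the paper by a single further application of Definition~\ref{def:kclosed}(2): the witness that clause supplies preserves the \emph{entire} prefix up to the send occurrence at which one branches, so anchoring the second application at $\p$'s first send keeps the previously selected alternative transition intact whenever that send occurs after it --- no induction is needed there, although the sub-case where $\p$'s first action precedes the selected occurrence is treated only informally in the paper as well.
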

Figure~\ref{fig:kclosed-lems} (left and middle) illustrates the
construction of the executions in $\hat\paset$.
The crucial part of the proof is to show that $\hat\paset$ is indeed
$k$-closed, this is done by case analysis on the structure of an arbitrary 
execution in $\hat\paset$.
The assumption that $S$ is a $\IBI{k}$ system is key here: we can rely
on the fact that if $\action$ is a receive action, then it is the
unique receive action that $\subj{\action}$ can execute from $s$.

Next, Lemma~\ref{lem:closed-set-paths} states that given the existence
of a $k$-closed set of executions, one can find an alternative but
equivalent path to a common configuration.
We show the result below by induction on $n$, using
Lemma~\ref{lem:new-kclosed-set}.
\begin{restatable}{lemma}{lemclosedsetpaths}\label{lem:closed-set-paths}
  Let $S$ be a reduced $\OBI{k}$ and $\IBI{k}$ system, then
    for all $s_1, \ldots , s_n \in \RS_k(S)$, such that $s_1
  \kTRANSS{\action_1} s_2 \cdots s_{n-1} \kTRANSS{\action_{n-1}} s_n$ (with $n > 1$).
    If there is $\emptyset \neq \paset \subseteq \ASetC$ such that
  $\kclosed{\paset}{s_1}$, then 
    there is $\acts_1 \in \paset$ and $\actsb,\acts_n \in \ASetC$ such
  that $s_1 \kTRANSS{\acts_1} t_1 \kTRANSS{\actsb} t_n$ and
  $s_n \kTRANSS{\acts_n} t_n$, for some $t_1, t_n \in \RS_k(S)$ with
    $\lvert \actsb \rvert < n$ and
    $\eqpeer{\acts_1 \concat \actsb}{\action_1 \cdots \action_n \concat \acts_n}$.
  \end{restatable}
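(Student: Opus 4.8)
The plan is to induct on $n$, using Lemma~\ref{lem:new-kclosed-set} to transport the closed set $\paset$ across the first transition $s_1 \kTRANSS{\action_1} s_2$ and then splice the two halves of the path together. Two facts are used throughout. First, $\eqpeerop$ is a congruence for concatenation: since $\onpeer{p}{u \concat v} = \onpeer{p}{u} \concat \onpeer{p}{v}$ for every $\p$, we have $\eqpeer{u}{u'} \land \eqpeer{v}{v'} \implies \eqpeer{u \concat v}{u' \concat v'}$. Second, $\kTRANSS{}$ is deterministic on configurations: because every $M_\p$ is deterministic, a source configuration together with a word of actions determines the reached configuration, so two $k$-bounded runs sharing source and label reach the same target. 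The common first move, in both the base case and the inductive step, is to apply Lemma~\ref{lem:new-kclosed-set} with $s=s_1$, $s'=s_2$, $\action=\action_1$; its set $\hat\paset$ is $k$-closed for $s_2$ by part~(1) and non-empty by part~(3) (as $\paset\neq\emptyset$).

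For $n=2$ (a single transition) I pick any $\hat\acts \in \hat\paset$ and apply part~(2) of Lemma~\ref{lem:new-kclosed-set}. The $\hat\paset_2$-alternative returns $\acts_1 \in \paset$ with $s_1 \kTRANSS{\acts_1} t$, $s_2 \kTRANSS{\hat\acts} t$ and $\eqpeer{\acts_1}{\action_1 \concat \hat\acts}$; taking $t_1 = t_n = t$, $\actsb = \emptyw$, $\acts_n = \hat\acts$ settles the claim, with $\lvert\actsb\rvert = 0 < 2$. The $\hat\paset_1$-alternative returns $\acts_1 = \hat\acts \in \paset$ with $s_1 \kTRANSS{\acts_1} t$, $s_2 \kTRANSS{\acts_1} t'$, $t \kTRANSS{\action_1} t'$ and $\eqpeer{\acts_1 \concat \action_1}{\action_1 \concat \acts_1}$; taking $t_1 = t$, $\actsb = \action_1$, $t_n = t'$, $\acts_n = \acts_1$ settles the claim, with $\lvert\actsb\rvert = 1 < 2$.

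For $n>2$ I apply the induction hypothesis to the shorter path $s_2 \kTRANSS{\action_2} \cdots \kTRANSS{\action_{n-1}} s_n$ (which has $n-1$ configurations) with the closed set $\hat\paset$, yielding $\hat\acts \in \hat\paset$, words $\hat\actsb,\acts_n$ and configurations $\hat t_1, t_n$ with $s_2 \kTRANSS{\hat\acts} \hat t_1 \kTRANSS{\hat\actsb} t_n$, $s_n \kTRANSS{\acts_n} t_n$, $\lvert\hat\actsb\rvert < n-1$ and $\eqpeer{\hat\acts \concat \hat\actsb}{\action_2 \cdots \action_{n-1} \concat \acts_n}$. Pulling $\hat\acts$ back through part~(2) of Lemma~\ref{lem:new-kclosed-set} gives $\acts_1 \in \paset$. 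In the $\hat\paset_2$-alternative, $s_1 \kTRANSS{\acts_1} t$, $s_2 \kTRANSS{\hat\acts} t$ and $\eqpeer{\acts_1}{\action_1 \concat \hat\acts}$; determinism forces $t = \hat t_1$, so with $t_1 = \hat t_1$ and $\actsb = \hat\actsb$ I obtain $s_1 \kTRANSS{\acts_1} t_1 \kTRANSS{\actsb} t_n$, $\lvert\actsb\rvert < n-1 < n$, and congruence gives $\acts_1 \concat \hat\actsb \eqpeerop \action_1 \concat \hat\acts \concat \hat\actsb \eqpeerop \action_1 \cdots \action_{n-1} \concat \acts_n$. In the $\hat\paset_1$-alternative, $\acts_1 = \hat\acts$ with $s_1 \kTRANSS{\acts_1} t \kTRANSS{\action_1} t'$, $s_2 \kTRANSS{\acts_1} t'$ (so $t'=\hat t_1$) and $\eqpeer{\acts_1 \concat \action_1}{\action_1 \concat \acts_1}$; with $t_1 = t$ and $\actsb = \action_1 \concat \hat\actsb$ I get $\lvert\actsb\rvert = 1 + \lvert\hat\actsb\rvert < n$ and $\acts_1 \concat \action_1 \concat \hat\actsb \eqpeerop \action_1 \concat \hat\acts \concat \hat\actsb \eqpeerop \action_1 \cdots \action_{n-1} \concat \acts_n$. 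That every intermediate configuration lies in $\RS_k(S)$ and every run is $k$-bounded is inherited from Lemma~\ref{lem:new-kclosed-set} and the induction hypothesis.

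The main obstacle is not any single step but the bookkeeping: keeping the two alternatives of Lemma~\ref{lem:new-kclosed-set}(2) apart, invoking determinism at exactly the points where two derivations of one configuration must be identified, and ordering the $\eqpeerop$-chains so the telescoping through congruence stays valid while the bound $\lvert\actsb\rvert < n$ is preserved — the $\hat\paset_1$-alternative prepends the single action $\action_1$ to $\actsb$, which is precisely absorbed by the increment of $n$. The hypothesis that $S$ is $\IBI{k}$ is needed only inside Lemma~\ref{lem:new-kclosed-set} (where it ensures that a receiving state exposes a unique enabled receive), so it requires no separate handling here.
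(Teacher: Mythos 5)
Your proof is correct, and it rests on exactly the same ingredients as the paper's: induction on $n$, with Lemma~\ref{lem:new-kclosed-set} doing all the work and the same two-way case split on whether the transported execution lands in $\hat\paset_1$ or $\hat\paset_2$ (the length accounting $\lvert\actsb\rvert < n$ is handled identically, with the $\hat\paset_1$ branch contributing the single extra action $\action_1$). The only structural difference is the direction of the induction: the paper first builds the whole chain of non-empty closed sets $\paset_1,\dots,\paset_n$ and then extends the constructed path by the \emph{last} transition $\action_i$, whereas you peel off the \emph{first} transition, recurse on the tail $s_2 \kTRANSS{\action_2} \cdots \kTRANSS{\action_{n-1}} s_n$ with $\hat\paset$, and pull the witness back through $\action_1$. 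Your organization is slightly leaner (only two closed sets are ever in play at once), at the cost of one explicit appeal to determinism of $\kTRANSS{}$ to identify the configuration $t$ delivered by the pull-back with the $\hat t_1$ delivered by the induction hypothesis — a step the paper's forward construction absorbs implicitly. One cosmetic note: the statement's $\action_1 \cdots \action_n$ is an off-by-one slip for $\action_1 \cdots \action_{n-1}$ (the path has only $n{-}1$ labels); you silently use the corrected form, which is also what the paper's own inductive invariant uses.
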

Figure~\ref{fig:kclosed-lems} (right) illustrates
Lemma~\ref{lem:closed-set-paths}.
A key consequence of Lemma~\ref{lem:k-mc-kclosed} and
Lemma~\ref{lem:closed-set-paths} is that if $s_1 \in \RS_k(S)$, then
we have $s_1 \kTRANSS{\acts_1} t_1 \kTRANSS{\action_1}$, i.e., $t_1
\in \RS_k(S)$;
we use this result to show Lemma~\ref{lem:exist-path-eqpeer}.

\begin{restatable}{lemma}{lemexistpatheqpeer}\label{lem:exist-path-eqpeer}
  Let $S$ be reduced $\OBI{k}$, $\IBI{k{+}1}$, and $k$-exhaustive,
  then for all $s \in RS_k(S)$ and $s' \in \RS_{k{+}1}(S)$ such that
  $s \bTRANSS{\acts}{k+1} s'$, there is $t \in \RS_k(S)$ and
  $\actsb, \, \actsb' \in \ASetC$, such that $s \kTRANSS{\actsb} t$,
  $s' \bTRANSS{\actsb'}{k+1} t$, and
  $\eqpeer{\actsb}{\acts \concat \actsb'}$.
  \end{restatable}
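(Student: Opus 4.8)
The plan is to induct on the length $n=\lvert\acts\rvert$ of the $(k{+}1)$-bounded execution, concentrating the real work at the first step that violates the $k$-bound, where the $k$-bounded analogue (Lemma~\ref{lem:closed-set-paths}) and the closed-set machinery (Lemmas~\ref{lem:k-mc-kclosed} and~\ref{lem:new-kclosed-set}) are brought to bear. Throughout I write $s=r_0\TRANSS{\action_1}r_1\cdots\TRANSS{\action_n}r_n=s'$. I first record that, since every $k$-bounded configuration is $(k{+}1)$-bounded and a $k$-bounded step is a $(k{+}1)$-bounded step, the hypothesis $\IBI{k{+}1}$ entails $\IBI{k}$ on $\RS_k(S)\subseteq\RS_{k{+}1}(S)$; hence Lemma~\ref{lem:closed-set-paths}, which requires reduced $\OBI{k}$ and $\IBI{k}$, is applicable.

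The base case $n=0$, and more generally the case where $\acts$ is already $k$-bounded from $s$, is immediate: then $s'\in\RS_k(S)$ and I take $t=s'$, $\actsb=\acts$, $\actsb'=\emptyw$, so that $\eqpeer{\acts}{\acts\concat\emptyw}$ holds trivially. Otherwise let $\action_i=\PSEND{pq}{a}$ be the first action that breaks the bound, i.e.\ channel $\ptp{pq}$ carries $k$ messages in $r_{i-1}$ and $k{+}1$ in $r_i$; the prefix is $k$-bounded, so $r_0,\dots,r_{i-1}\in\RS_k(S)$ and $s\kTRANSS{\action_1\cdots\action_{i-1}}r_{i-1}$. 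If $i>1$, I apply the induction hypothesis to the strictly shorter $(k{+}1)$-bounded suffix $\action_i\cdots\action_n$ from $r_{i-1}\in\RS_k(S)$ to $s'$, obtaining $t\in\RS_k(S)$ and $\actsb_0,\actsb'$ with $r_{i-1}\kTRANSS{\actsb_0}t$, $s'\bTRANSS{\actsb'}{k+1}t$ and $\eqpeer{\actsb_0}{(\action_i\cdots\action_n)\concat\actsb'}$; setting $\actsb\defi\action_1\cdots\action_{i-1}\concat\actsb_0$ preserves the $k$-bound, and prepending this common prefix to both sides of the projected equivalence gives $\eqpeer{\actsb}{\acts\concat\actsb'}$.

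The crux is the residual case $i=1$, where the first action $\action_1=\PSEND{pq}{a}$ (put $\p=\subj{\action_1}$) already overflows $\ptp{pq}$. As $s\in\RS_k(S)$ and $\action_1$ is enabled at $s$, Lemma~\ref{lem:k-mc-kclosed} (via reduced $\OBI{k}$ and $k$-exhaustivity) furnishes a nonempty $k$-closed set $\paset$ for $s$ whose elements $\acts_1$ satisfy $s\kTRANSS{\acts_1}\kTRANSS{\action_1}$ and $\p\notin\acts_1$. Since no action of $\p$ occurs in $\acts_1$ and no send on $\ptp{pq}$ can occur without subject $\p$, each $\acts_1$ must consume at least one message from $\ptp{pq}$ (which holds $k$ messages at $s$) to make room for $\action_1$. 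Because $\p\notin\acts_1$ and $\ptp{pq}$ stays non-empty along $\acts_1$, the send $\action_1$ commutes with $\acts_1$: from $s\kTRANSS{\acts_1}u\kTRANSS{\action_1}u'$ with $u'\in\RS_k(S)$ I obtain $\eqpeer{\acts_1\concat\action_1}{\action_1\concat\acts_1}$ and $r_1\bTRANSS{\acts_1}{k+1}u'$. Thus the overflowing send is realised along a genuinely $k$-bounded route $s\kTRANSS{\acts_1\concat\action_1}u'$ into $\RS_k(S)$, while the overflow configuration $r_1$ reaches the same $u'$ by the $(k{+}1)$-bounded word $\acts_1$.

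It then remains to reconcile $u'\in\RS_k(S)$ with the continuation $r_1\bTRANSS{\action_2\cdots\action_n}{k+1}s'$. Here I would invoke Lemma~\ref{lem:closed-set-paths} (propagating the closed set with Lemma~\ref{lem:new-kclosed-set} where needed) to select the element of $\paset$ aligned with $\action_2\cdots\action_n$ and to push the two executions leaving $r_1$ onto a common target $t\in\RS_k(S)$, to which the induction hypothesis is then applied on the strictly shorter residual; the $\IBI{k{+}1}$ hypothesis is what guarantees that at each receiving state at most one receive branch is enabled, so the inserted word and the continuation cannot diverge on receive choices and can be merged. Every rearrangement used is a commutation of actions with distinct subjects (or of a send past receptions on a non-empty channel), so $\eqpeerop$ is preserved at each step and ultimately $\eqpeer{\actsb}{\acts\concat\actsb'}$ holds. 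I expect this final reconciliation — merging the closed-set insertion with the tail of the run while keeping the $(k{+}1)$-bound and fixing each participant's projection — to be the main obstacle, since it is exactly where reduced $\OBI{k}$, $\IBI{k{+}1}$, and $k$-exhaustivity are needed together, and where the ``key consequence'' that $s\kTRANSS{\acts_1}t_1\kTRANSS{\action_1}$ lands in $\RS_k(S)$ is applied.
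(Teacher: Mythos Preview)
Your decomposition by the first overflow index $i$ is a clean variant of the paper's case split on the direction of $\action_1$; your $i>1$ case subsumes receives and non-overflowing sends, and the argument there is correct. The gap is in the $i=1$ case, precisely where you flag the main obstacle, and it is twofold. First, the bound: your opening paragraph sets up Lemma~\ref{lem:closed-set-paths} at bound $k$ by deriving $\IBI{k}$ from $\IBI{k{+}1}$, but the continuation $r_1 \bTRANSS{\action_2\cdots\action_n}{k+1} s'$ lies in $\RS_{k{+}1}(S)$ with $r_1\notin\RS_k(S)$, so the $k$-instance of the lemma does not apply to it. The paper instantiates Lemma~\ref{lem:closed-set-paths} at bound $k{+}1$ (this is exactly what the hypothesis $\IBI{k{+}1}$ buys), after promoting $\paset$ from $k$-closed for $s$ to $(k{+}1)$-closed for $s$ via Lemma~\ref{lem:kclosed-pclosed} and then pushing it to $r_1$ via Lemma~\ref{lem:new-kclosed-set} (the set is unchanged since $\p\notin\acts_1$ for every $\acts_1\in\paset$). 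You never invoke Lemma~\ref{lem:kclosed-pclosed} and never commit to the $(k{+}1)$ instantiation.

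Second, the data flow around Lemma~\ref{lem:closed-set-paths} is back to front. You fix an element $\acts_1\in\paset$ up front, commute, and only later say the lemma will ``select the element of $\paset$ aligned with $\action_2\cdots\action_n$''; but the lemma \emph{is} the selector. Fed the $(k{+}1)$-closed set at $r_1$ and the path $r_1\to s'$, it outputs some $\actsb_2\in\paset$, an extension with $s'\bTRANSS{\actsb_{n+1}}{k+1}t_{n+1}$, and a run $t_2\bTRANSS{\hat\acts'}{k+1}t_{n+1}$ with $\lvert\hat\acts'\rvert<n$. The common target $t_{n+1}$ is only guaranteed to lie in $\RS_{k{+}1}(S)$, not in $\RS_k(S)$ as you write. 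What \emph{is} in $\RS_k(S)$ is $t_2$: since $\actsb_2\in\paset$, the definition of $\paset$ gives $s\kTRANSS{\actsb_2}t_1\kTRANSS{\action_1}t_2$ (this is the ``key consequence'' you cite). The induction hypothesis is then applied to the strictly shorter $(k{+}1)$-bounded run from $t_2\in\RS_k(S)$ to $t_{n+1}$, and the final $\eqpeerop$-equation is obtained by chasing the resulting diagram. Your sketch conflates the output of Lemma~\ref{lem:closed-set-paths} with the starting point of the inductive call.
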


Lemma~\ref{lem:exist-path-eqpeer} states that if $S$ is (reduced)
$\OBI{k}$, $\IBI{k{+}1}$, and $k$-exhaustive then there is a path from
any $k{+}1$-reachable configuration to a $k$-reachable configuration.
The proof is by induction on the length of $\acts$ using
Lemma~\ref{lem:k-mc-kclosed} as a starting assumption, then applying
Lemma~\ref{lem:closed-set-paths} repeatedly.

\begin{remark}
  The assumption that $S$ is $\IBI{k{+}1}$ is required, see
  Figure~\ref{fig:ex-nondirected} for an example that is $\OBI{1}$,
  $\IBI{1}$, and $1$-exhaustive but for which the conclusions of
  Lemma~\ref{lem:exist-path-eqpeer} do not hold.
\end{remark}

\begin{figure}[t]
  \centering
  \begin{tabular}{c|c|c}
    \begin{tikzpicture}[baseline=(current bounding box.center)]
      \node (s1) {$s$};
      \node[right=of s1] (s2) {$s'$};
      \node[below=of s1] (s3) {$t$};
      \node at (s2|-s3) (s4) {$t'$};
            \path[->] (s1) edge [above] node {$\action$} (s2);
      \path[->] (s3) edge [above] node {$\action$} (s4);
            \path[->] (s1) edge [left] node {$\actsb = \acts$} (s3);
      \path[->] (s2) edge [right] node {$\actsb = \acts$} (s4);
    \end{tikzpicture} 
    &
      \begin{tikzpicture}[baseline=(current bounding box.center),node distance = 0.4 and 1]
        \node (s1) {$s$};
        \node[right=of s1] (s2) {$s'$};
        \node[below=of s1] (s3) {$.$};
        \node[below=of s3] (s5) {$.$};
        \node[below=of s5] (s7) {$t$};
                        \node[right=of s7,xshift=-0.7cm,yshift=0.2cm] (txtt) {\begin{tabular}{l@{$\, = \,$}l}
                                                                $\actsb$ & $\acts_1 \concat \acts_2$
                                                                \\
                                                                $\acts$ & $\acts_1 \concat \action \concat \acts_2$
                                                              \end{tabular}
                                                            };
                                                                                                                        \path[->] (s1) edge [below] node {$\action$} (s2);
                                                                                                                        \path[->] (s1) edge [left] node {$\acts_1$} (s3);
                                                            \path[->] (s2) edge [right, bend left] node {$\acts_1$} (s5);
                                                            \path[->] (s3) edge [left] node {$\action$} (s5);
                                                            \path[->] (s5) edge [left] node {$\acts_2$} (s7);
                                                          \end{tikzpicture}
    &
            \begin{tikzpicture}[baseline=(current bounding box.center),node distance = 1 and 0.5]
        \node (s1) {$s_1$};
        \node[right= of s1] (s2) {$s_2$};
        \node[right= of s2] (smn) {$s_{n-1}$};
        \node[right= of smn] (sn) {$s_n$};
                \node[below=of s1] (t1) {$t_1$};
        \node at (sn|-t1) (tn) {$t_n$};
                \path[->] (s1) edge [above] node {$\action_1$} (s2);
        \path[dotted] (s2) edge [above] node {} (smn);
        \path[->] (smn) edge [above] node {$\action_{n-1}$} (sn);
                \path[->] (s1) edge [left] node {$\acts_1$} (t1);
        \path[->] (sn) edge [left] node {$\acts_n$} (tn);
        \path[->] (t1) edge [above] node {$\actsb$} (tn);
      \end{tikzpicture}
  \end{tabular}
  \caption{Illustrations for Lemma~\ref{lem:new-kclosed-set} and Lemma~\ref{lem:closed-set-paths}.}
  \label{fig:kclosed-lems}
\end{figure}
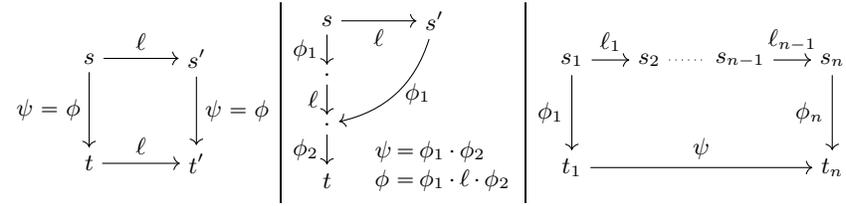

Since the $\infIBI$ property is undecidable in general, we have
introduced the $\CIBI{k}$ and $\DIBI{k}$ properties as sound
approximations of $\infIBI$, for $\OBI{k}$ and $k$-exhaustive systems.
We give a brief overview of the proof of
Lemma~\ref{lem:kri-kexh-imp-krip-reduced} (part of which implies
Lemma~\ref{lem:kri-kexh-imp-infrip-both}). The proof that $\CIBI{k}$
implies $\infIBI$ is similar, see
Lemma~\ref{lem:kri-kexh-imp-cibi-reduced} for the key result.

\begin{restatable}{lemma}{lemkrikexhimpkripreduced}\label{lem:kri-kexh-imp-krip-reduced}
  If $S$ is reduced $\OBI{k}$, $\DIBI{k}$, and $k$-exhaustive, then it is $\DIBI{k{+}1}$.
\end{restatable}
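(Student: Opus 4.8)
The plan is to fix a $(k{+}1)$-reachable configuration $s = \csconf q w$, a participant $\p$ with $s \bTRANSS{\PRECEIVE{qp}{a}}{k+1} s'$, and another receive branch $(q_\p, \PRECEIVE{sp}{b}, q_\p') \in \delta_\p$ with $\s \neq \q$, and to establish the two conjuncts of $\DIBI{k{+}1}$ separately: (i) that no second receive is enabled, $\neg(s \bTRANSS{\PRECEIVE{sp}{b}}{k+1})$; and (ii) that the matching send of the discarded branch never becomes enabled along a $(k{+}1)$-bounded continuation, $\neg(s' \bTRANSR{k+1}\bTRANSS{\PSEND{sp}{b}}{k+1})$. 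In both cases the guiding idea is that such a level-$(k{+}1)$ witness could be transported down to a $k$-reachable configuration, where it would contradict the corresponding clause of the assumed $\DIBI{k}$ (recall $\DIBI{k}$ subsumes $\IBI{k}$). Throughout I would use that $\DIBI{k}$ gives $\IBI{k}$ and that reduced $\OBI{k}$ and $k$-exhaustivity are available, so the $k$-closed machinery of Lemmas~\ref{lem:k-mc-kclosed}, \ref{lem:new-kclosed-set}, and~\ref{lem:closed-set-paths} applies.

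I would prove conjunct (i) first, and crucially \emph{without} Lemma~\ref{lem:exist-path-eqpeer}, since that lemma presupposes $\IBI{k{+}1}$, which is exactly conjunct (i). Suppose both $\PRECEIVE{qp}{a}$ and $\PRECEIVE{sp}{b}$ ($\s \neq \q$) were enabled at the $(k{+}1)$-reachable $s$, so $w_{\q\p}$ and $w_{\s\p}$ have heads $\msg a$ and $\msg b$. Since $\msg b$ was produced by some $\PSEND{sp}{b}$, I would locate the configuration at which $\s$ fired it, build the nonempty $k$-closed set of $k$-bounded schedules of that send via Lemma~\ref{lem:k-mc-kclosed}, and then apply the confluence consequence of Lemma~\ref{lem:closed-set-paths} to reschedule the reaching execution into a $k$-bounded one reaching a configuration $t \in \RS_k(S)$ whose $\p$-state is still $q_\p$. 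Projected-equivalence keeps $\msg a$ at the head of $\q\p$, so at $t$ either both receives are simultaneously enabled --- contradicting the $\IBI{k}$ clause of $\DIBI{k}$ --- or $\PSEND{sp}{b}$ is realised only after $\PRECEIVE{qp}{a}$, contradicting the send clause of $\DIBI{k}$.

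With $\IBI{k{+}1}$ in hand, Lemma~\ref{lem:exist-path-eqpeer} becomes legitimately applicable, and I would use it for conjunct (ii). A witness $s' \bTRANSR{k+1}\bar s \bTRANSS{\PSEND{sp}{b}}{k+1}$ has $\bar s \in \RS_{k{+}1}(S)$; since $\bar s$ is $(k{+}1)$-bounded reachable from $s_0 \in \RS_k(S)$, Lemma~\ref{lem:exist-path-eqpeer} yields a $k$-reachable $t$ with $\bar s \bTRANSS{\actsb'}{k+1} t$ and $\eqpeer{\actsb}{\phi \concat \actsb'}$ for the reaching execution $\phi$ and some $k$-bounded $\actsb$ with $s_0 \kTRANSS{\actsb} t$. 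Because the equivalence preserves each automaton's local trajectory, at $t$ the participant $\s$ is again in a sending state offering $\PSEND{sp}{b}$ while $\p$ has already passed $\PRECEIVE{qp}{a}$; tracing this back to a $k$-reachable image of $s'$ contradicts the send clause of $\DIBI{k}$. The hard part, and the main obstacle, is precisely this bookkeeping under $\eqpeerop$: I must ensure that rescheduling into $k$-bounded executions preserves \emph{which} messages sit at the heads of channels $\q\p$ and $\s\p$ and preserves the dependency chain linking $\PRECEIVE{qp}{a}$ to $\PSEND{sp}{b}$, since $\DIBI{k}$ is sensitive to head-of-queue contents and to that chain; and it is the ordering of the two conjuncts that breaks the apparent circularity with Lemma~\ref{lem:exist-path-eqpeer}.
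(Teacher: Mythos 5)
Your overall architecture matches the paper's: prove conjunct (i) (i.e., $\IBI{k{+}1}$) first, and only then invoke Lemma~\ref{lem:exist-path-eqpeer} for conjunct (ii). Your conjunct (ii) is essentially the paper's argument (reschedule the $(k{+}1)$-bounded witness into a $k$-bounded $\eqpeerop$-equivalent execution containing both $\PRECEIVE{qp}{a}$ and $\PSEND{sp}{b}$, contradicting $\DIBI{k}$), and you correctly identify the circularity danger around Lemma~\ref{lem:exist-path-eqpeer}. The gap is in how you discharge conjunct (i).

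Lemmas~\ref{lem:k-mc-kclosed}, \ref{lem:new-kclosed-set} and~\ref{lem:closed-set-paths} are stated entirely at level $k$: they start from configurations in $\RS_k(S)$ and move along $\kTRANSS{}$ steps. They therefore cannot, as stated, reschedule an execution $s_0 \bTRANSS{\acts}{k+1} s$ whose intermediate configurations leave $\RS_k(S)$, and the configuration at which $\s$ fired $\PSEND{sp}{b}$ need not lie in $\RS_k(S)$, so Lemma~\ref{lem:k-mc-kclosed} need not apply there either. To run the closure/confluence machinery over $(k{+}1)$-bounded executions you must instantiate Lemma~\ref{lem:new-kclosed-set} and Lemma~\ref{lem:closed-set-paths} at level $k{+}1$, and both then require $\IBI{k{+}1}$ --- exactly what you are proving. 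So the circularity you flag is not broken, only relocated into the unstated rescheduling step (and the paper's remark after Lemma~\ref{lem:exist-path-eqpeer} shows rescheduling genuinely fails for systems that are only $\IBI{k}$). The paper resolves this by first passing from $\DIBI{k}$ to $\CIBI{k}$ (Lemma~\ref{lem:sibi-imp-cibi}) and then proving $\IBI{k{+}1}$ via Lemma~\ref{lem:kri-kexh-imp-kba-depends}, whose proof is a \emph{simultaneous} induction on the length of the $(k{+}1)$-bounded execution: at each step it establishes both a $k$-bounded $\eqpeerop$-equivalent rescheduling \emph{and} the single-enabled-receive property for every prefix, so that the property already secured for shorter prefixes licenses the commutations needed for the next one. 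Some such prefix-by-prefix bootstrapping is what your conjunct-(i) sketch is missing. A minor further point: in conjunct (ii) you worry about preserving a dependency chain, but $\DIBI{k}$ (unlike $\CIBI{k}$) has no chain condition; it suffices that the rescheduled $k$-bounded execution contains $\PRECEIVE{qp}{a}$ fired from $q_\p$ together with an occurrence of $\PSEND{sp}{b}$, which per-participant preservation under $\eqpeerop$ already gives you.
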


To show Lemma~\ref{lem:kri-kexh-imp-krip-reduced}, we show that for
any system that is reduced $\OBI{k}$, $\DIBI{k}$, and $k$-exhaustive,
the $\IBI{k{+}1}$ property holds, i.e.,
Lemma~\ref{lem:kri-kexh-imp-kba}.
The proof of Lemma~\ref{lem:kri-kexh-imp-kba} is by induction on the
length of an execution from $s_0$.
Then we show the final result by contradiction, using
Lemma~\ref{lem:exist-path-eqpeer} to find an execution that leads to a
$k$-reachable configuration.

 \section{Experimental evaluation: modified examples}\label{app:evaluation-details}
The examples marked with \modifmark\ have been slightly modified to
make them \CSA\ that validate $\OBI{k}$ and $\infIBI$.
To remove mixed states, we take only one of the possible interleavings
between mixed actions (we take the send action before receive action
to preserve safety).
The 4 Player game from~\cite{LTY15} has been modified so that
interleavings of mixed actions are removed (it is the only example of
Table~\ref{tab:benchmarks} that is $\CIBI{k}$ but not $\DIBI{k}$).
The Logistic example from~\cite[Figure 11.4]{BPMNcoreography} has been
modified so that the Supplier interacts sequentially (instead
of concurrently) with the Shipper then the Consignee.
We have added two dummy automata to the Elevator example
from~\cite{Bouajjani2018} which send (resp.\ receive) messages to
(resp.\ from) the Door so that a mixed state can be removed.
The Elevator-dashed example is a variant of the Elevator which is not
synchronisable.
These examples are not $\IBI{k}$ (for any $k$) because the Elevator
automaton can reach a state where it can consume messages sent by
different participants (messages \texttt{doorClosed} and
\texttt{openDoor}).
This situation cannot occur with a mailbox semantics, as
in~\cite{Bouajjani2018}, since each automaton has only one input
queue.
The Elevator-directed example is another variation where all the
automata are directed.

 \section{Extended related work}\label{sec:extended-related}
\paragraph{Theory of communicating automata}
Communicating automata were introduced in the 1980s~\citeout{cfsm83} and
have since then been studied extensively, namely through their
connection with message sequence charts (MSC)~\citeout{Muscholl10}. We
focus on closely related works.
Several works achieved decidability results by restricting the model.
For instance, some of these works substitute reliable and ordered
channels with bag or lossy channels~\citeout{ClementeHS14,
  CeceFI96,AbdullaJ93, AbdullaBJ98}.
La Torre et al.~\citeout{TorreMP08} restrict the topology of the network
so that each automaton can only consume messages from one queue (but
can send messages to all other queues).
Peng and Purushothaman~\citeout{PengP92} show that reachability, deadlock
detection, and un-boundedness detection are decidable for the class of
systems where each pair of automata can only exchange one type of
message and the topology of the network is a simple cycle.
DeYoung and Pfenning~\citeout{DeYoungP16} investigate a relationship
between proofs in a fragment of linear logic and communicating
automata that interact via a pipeline topology.

Out of these several variations, existentially bounded communicating
automata stand out because they preserve the \FIFO\ semantics of
communicating automata, do not restrict the topology of the network,
and include systems with an infinite state-space.
Existential bounds for MSCs first appeared in~\citeout{LohreyM04} and
were later applied to the study of communicating automata through MSCs
and monadic second order logic in~\citeout{GenestKM06,GenestKM07}.
Given a bound $k$ and an arbitrary system of (deterministic)
communicating automata $S$, it is generally \emph{undecidable} whether
$S$ is existentially $k$-bounded. However, the question becomes
decidable when $S$ has the stable property (a property called
deadlock-freedom in~\citeout{GenestKM07,kuske14}), the problem is
\PSPACE{}-complete.
The stable property is generally a desirable characteristic, but it is
generally \emph{undecidable}. Hence the bounded class is \emph{not}
directly applicable to verifying properties of message passing
programs since its membership is undecidable overall.
We have shown that ($i$) $\OBI{k}$, $\infIBI$, and $k$-exhaustive \CSA\
systems are (strictly) included in the class of existentially bounded
systems, ($ii$)
systems that are existentially bounded (in the sense
of~\citeout{kuske14}) and have the eventual reception property are
$k$-exhaustive; and ($iii$) systems that are existentially stable
bounded~\citeout{GenestKM07} and have the stable property are
$k$-exhaustive.
Hence, our work gives a sound \emph{practical} procedure to check
whether \CSA\ are existentially bounded.
Inspired by the work in~\citeout{GenestKM07}, Darondeau et
al.~\citeout{DarondeauGTY10} give decidability results for
``data-branching'' task systems, which are communicating
automata with internal transitions whose only branching states are
those where an \emph{internal} choice takes place.
The relationship between communicating automata and monadic second
order logic was further studied in~\citeout{Bollig14,BolligL06}.
To the best of our knowledge, the only tools dedicated to the
verification of (unbounded) communicating automata are
McScM~\citeout{HeussnerGS12} and Chorgram~\citeout{LTY17}.
Bouajjani et al.~\citeout{Bouajjani2018} study a variation of
communicating automata with \emph{mailboxes} (one input queue per
automaton).
They introduce the class of synchronisable systems and a procedure to
check whether a system is $k$-synchronisable; it relies on executions
consisting of $k$-bounded exchange phases.
Given a system and a bound $k$, it is decidable (\PSPACE{}-complete)
whether its executions are equivalent to $k$-synchronous executions.
In Section~\ref{sec:cav-sync}, we have shown that any
$k$-synchronisable system which satisfies eventual reception is also
$k$-exhaustive, see Theorem~\ref{thm:ksynk-rel-kmc}.
Our characterisation result, based on local bound-agnosticity
(Theorem~\ref{thm:completeness}), is \emph{unique} to
$k$-exhaustivity. It does not apply to existentially boundedness nor
synchronisability, see, e.g., Example~\ref{ex:exist-not-exh}.
The term ``synchronizability'' has been used by Basu et
al.~\citeout{BasuBO12,BasuB16b} to refer to another procedure for
checking properties of communicating automata with mailboxes.
Their notion of synchronizability requires that, for a given system,
its synchronous executions are equivalent to its asynchronous
executions when considering send actions only.
Finkel and Lozes~\citeout{FinkelL17} have later shown that this notion of
synchronizability is in fact undecidable.

In future work, we would like to study whether our results can be
adapted to automata which communicate via mailboxes.
We note that a system that is safe with a point-to-point semantics,
may not be safe with a mailbox semantics, and vice-versa.
For instance, the system in Figure~\ref{fig:ex-nondirected} is safe
when executed with mailbox semantics.
However, the system below is safe in the point-to-point semantics, but
\emph{unsafe} with mailbox semantics due to the fact that $\rr$ may
receive $\msg{b}$ before $\msg{a}$.
To the best of our knowledge, precise relationships and translations
between mailbox and point-to-point semantics have yet to be studied.
\begin{center}
    \begin{tabular}{c@{\qquad}c@{\qquad}c}
    $\ptp{p}:$
    \begin{tikzpicture}[mycfsm, node distance = 0.6cm and 1cm]
      \node[state, initial, initial where=above] (s0) {};
      \node[state, right=of s0] (s1) {};
            \path
      (s0) edge node [] {$\PSEND{pr}{a}$} (s1)
      ;
    \end{tikzpicture}
    &
      $\ptp{r}:$
      \begin{tikzpicture}[mycfsm, node distance = 0.6cm and 1cm]
      \node[state, initial, initial where=above] (s0) {};
      \node[state, right=of s0] (s1) {};
      \node[state, right=of s1] (s2) {};
            
            \path
      (s0) edge node [] {$\PRECEIVE{pr}{a}$} (s1)
      (s1) edge node [] {$\PRECEIVE{qr}{b}$} (s2)
      ;
    \end{tikzpicture}
    &
      $\ptp{r}:$
    \begin{tikzpicture}[mycfsm, node distance = 0.6cm and 1cm]
      \node[state, initial, initial where=above] (s0) {};
      \node[state, right=of s0] (s1) {};
            \path
      (s0) edge node [] {$\PSEND{qr}{b}$} (s1)
      ;
    \end{tikzpicture}
  \end{tabular}
\end{center}

\paragraph{Multiparty compatibility } The first definition of multiparty compatibility appeared
in~\citeout[Definition 4.2]{DY13}, inspired by the work in~\citeout{GoudaMY84}, to
characterise the relationship between global types and communicating
automata.
This definition was later adapted to the setting of communicating timed
automata in~\citeout{BLY15}.
Lange et al.~\citeout{LTY15} introduced a generalised version of
multiparty compatibility (\GMC) to support communicating automata that
feature mixed or non-directed states. 
Because our results apply to automata without mixed states, $k$\MC\ is
not a strict extension of \GMC, and \GMC\ is not a strict extension of
$k$\MC\ either, as it requires the existence of \emph{synchronous}
executions.
We discuss how our results may be extended to support communicating
automata with mixed states in Section~\ref{sec:conc}.
In future work, we will develop an algorithm to synthesise
representative choreographies from $k$\MC\ systems, using the
algorithm in~\citeout{LTY15}.

\paragraph{Communicating automata and programming languages}
The notion of multiparty compatibility is at the core of recent works
that apply session types techniques to mainstream programming
languages.
Ng and Yoshida~\citeout{NgY16} use the multiparty
compatibility defined in~\citeout{LTY15} to detect deadlocks in Go
programs. 
Hu and Yoshida~\citeout{HuY16} study the well-formedness of Scribble
protocols~\citeout{scribble} through the multiparty compatibility of
their projections.
These protocols are used to generate various endpoint APIs
implementing a Scribble specification~\citeout{HuY16,HY2017,NHYA2018} and
to produce runtime monitoring tools~\citeout{NY2017,NY2017b,NBY2017}.
Taylor et al.~\citeout{TaylorTWD16} use multiparty compatibility and
choreography synthesis~\citeout{LTY15} to automate the analysis of the
\texttt{gen\_server} library of Erlang/OTP.
We believe that we can transparently widen the set of safe programs
captured by these tools by using $k$\MC\ instead of synchronous
multiparty compatibility.

Desai et al.~\citeout{Desai14} propose a communicating automata-based
approach to verify safety properties of programs written in
P~\citeout{DesaiGJQRZ13}.
Their approach is based on exploring a subset of the (possibly
infinite) set of reachable configurations by prioritising certain
transitions in order to minimise the size of the queues.
Although the approach may not always terminate, they show that it is
sound and complete wrt.\ reachability of error configurations.
For instance the system in Figure~\ref{fig:desai}, adapted
from~\citeout[Section 9]{Desai14}, shows a system for which their
approach does \emph{not} terminate.
Note that this system is not existentially bounded and therefore it is
not $k$\MC\ for any $k$. It is however trivially 
existentially stable bounded since no stable configuration is reachable
except for the initial one.
An interesting area of future work is to investigate similar
priority-based executions of \CSA\ systems in order to check the
$k$\MC\ property more efficiently.

D'Osualdo et al.~\citeout{DOsualdoKO13} verify safety properties of
Erlang programs by inferring a model which abstracts away from message
ordering in mailboxes.
Their model is based on vector addition systems, for which the
reachability problem is decidable.
It would be interesting to adapt their approach to infer (mailbox)
communicating automata from Erlang programs.
Several approaches rely on \emph{sequentialization} of concurrent
programs~\citeout{QadeerW04,BouajjaniE14,BouajjaniEP11,InversoT0TP14,EmmiLQ12,Bakst17},
sometimes using bounded executions.
For instance, Bouajjani and Emmi~\citeout{BouajjaniE14} verify programs
that (asynchronously) send tasks to each other by considering
executions bounded by the number of times a sequence of tasks visits
the same process.
Bakst et al.~\citeout{Bakst17} address the verification of an
actor-oriented language (modelled on Erlang and Cloud Haskell) using
canonical sequentializations, which over-approximate a program.
They show that properties such as deadlock-freedom can be checked
efficiently.
Their approach requires the program to validate several structural
properties, one of which, \emph{symmetric non-determinism}, is
reminiscent of receive directedness as it requires
every receive action to only receive messages from a single process
(or a set of processes running the same code).
It would be interesting to relate symmetric non-determinism and
directedness more precisely, and consider systems of \CSA\ which
consist of several instances of some automaton.
In a similar line of work, von Gleissenthall et
al.~\citeout{GleissenthallKB19} use a form of sequentialization to the
verification of distributed systems implemented atop a Go library.
One of their restrictions is reminiscent of send directedness: they
allow participants to communicate to several others, however, they
require this communication to happen sequentially, \emph{``one
  interlocutor at a time''}.

\begin{figure}[t]
  \centering
  \begin{tabular}{ccc}
    $\p:$
    \begin{tikzpicture}[mycfsm, node distance = 0.6cm and 1cm]
      \node[state, initial, initial where=left] (s0) {};
      \node[state, right=of s0] (s1) {};
      \node[state, right=of s1] (s2) {};
            \path
      (s0) edge [loop] node [above] {$\PSEND{pr}{a}$} (s0)
      (s0) edge node [above] {$\PSEND{pr}{b}$} (s1)
      (s1) edge node [above] {$\PSEND{pq}{b}$} (s2)
      ;
    \end{tikzpicture}
    \qquad
    &
      $\ptp{q}:$
      \begin{tikzpicture}[mycfsm, node distance = 0.6cm and 1cm]
        \node[state, initial, initial where=above] (s0) {};
        \node[state, right=of s0] (s1) {};
        \node[state, right=of s1] (s2) {};
                \path
        (s0) edge node [above] {$\PRECEIVE{pq}{b}$} (s1)
        (s1) edge node [above] {$\PSEND{qr}{c}$} (s2)
        ;
      \end{tikzpicture} 
      \qquad
    &
      $\ptp{r}:$
      \begin{tikzpicture}[mycfsm, node distance = 0.6cm and 1cm]
        \node[state, initial, initial where=above] (s0) {};
        \node[state, right=of s0] (s1) {};
                \path
        (s0) edge node [above] {$\PRECEIVE{qr}{c}$} (s1)
        (s1) edge [loop] node [above] {$\PRECEIVE{pr}{a}$} (s1)  
        ;
      \end{tikzpicture}
  \end{tabular}
  \caption{Example of a non $\exists$-bounded system.}\label{fig:desai}
\end{figure}

\section{Additional examples}
\subsection{Examples for Section~\ref{sec:intro} (non-\SMC\ examples)}

The following example, implementing a simple rock-paper-scissors game,
is not \SMC\ but it is $1$\MC.
Messages $\msg{r}$ stands for ``rock'', $\msg{p}$ for ``paper'', and
$\msg{s}$ for ``scissors''.
At the end of the play, each participant sends their result
to the $\ptp{s}$erver. 
Note that this pattern cannot be specified in a synchronous way
without giving a clear advantage to one of the players.

\begin{center}

\begin{tabular}{ccc}
  $ \ptp{p}:$
  \begin{tikzpicture}[mycfsm, node distance = 0.9cm and 0.9cm]
            \node[state, initial, initial where=left] (s0) {};
    \node[state, right=of s0] (sent) {};
    \node[state, right=of sent] (received) {};
    \node[state, right=of received] (final) {};
        \path 
    (s0) edge [bend left=90] node [above] {$\PSEND{pq}{\grock}$} (sent)
    (s0) edge node [above] {$\PSEND{pq}{\gpaper}$} (sent)
    (s0) edge [bend right=90] node [above]  {$\PSEND{pq}{\gscissors}$} (sent)
        (sent) edge [bend left=90] node [above] {$\PRECEIVE{qp}{\grock}$} (received)
    (sent) edge node [above] {$\PRECEIVE{qp}{\gpaper}$} (received)
    (sent) edge [bend right=90] node [above]  {$\PRECEIVE{qp}{\gscissors}$} (received)
        (received) edge node [above] {$\PSEND{ps}{res}$} (final)
        ;
  \end{tikzpicture}
  &
    \qquad
    $\ptp{q}:$
    \begin{tikzpicture}[mycfsm, node distance = 0.9cm and 0.9cm]
                  \node[state, initial, initial where=left] (s0) {};
      \node[state, right=of s0] (sent) {};
      \node[state, right=of sent] (received) {};
      \node[state, right=of received] (final) {};
            \path 
      (s0) edge [bend left=90] node [above] {$\PSEND{qp}{\grock}$} (sent)
      (s0) edge node [above] {$\PSEND{qp}{\gpaper}$} (sent)
      (s0) edge [bend right=90] node [above]  {$\PSEND{qp}{\gscissors}$} (sent)
            (sent) edge [bend left=90] node [above] {$\PRECEIVE{pq}{\grock}$} (received)
      (sent) edge node [above] {$\PRECEIVE{pq}{\gpaper}$} (received)
      (sent) edge [bend right=90] node [above]  {$\PRECEIVE{pq}{\gscissors}$} (received)
            (received) edge node [above] {$\PSEND{qs}{res}$} (final)
            ;
    \end{tikzpicture}
      &
    \quad
    $\ptp{s}:$
    \begin{tikzpicture}[mycfsm, node distance = 0.4cm and 0.9cm]
                  \node[state, initial, initial where=left] (s0) {};
      \node[state, below=of s0] (s1) {};
      \node[state, below=of s1] (s2) {};
            \path 
      (s0) edge node [right] {$\PRECEIVE{ps}{res}$} (s1)
      (s1) edge node [right] {$\PRECEIVE{qs}{res}$} (s2)
            ;
    \end{tikzpicture}
\end{tabular}

 \end{center}

A more involved version of this game is given below. It is not \SMC,
but it is $1$\MC.
Note that $\ptp{s}$ is not directed, however the system is $\DIBI{1}$
since $\ptp{s}$ can only receive one of the expected messages.
\begin{center}
\begin{tabular}{ccc}
  $ \ptp{p}:$
  \begin{tikzpicture}[mycfsm, node distance = 0.6cm and 0.6cm]
            \node[state, initial, initial where=above] (s0) {\fontnode{0}};
        \node[state, left=of s0] (rock) {\fontnode{r}};
    \node[state, below=of s0] (paper) {\fontnode{p}};
    \node[state, right=of s0] (scissors) {\fontnode{s}}; 
        \node[state, left=of rock] (tie1) {\fontnode{t}};
    \node[state, above=of s0] (win1) {\fontnode{w}};
    \node[state, below=of rock] (lose1) {\fontnode{l}};
        \node[state, below=of paper] (win2) {\fontnode{w}};
        \node[state, below=of scissors] (tie2) {\fontnode{t}};
        \node[state, right=of scissors] (lose2) {\fontnode{l}};
    
    \node[state, above=of win1] (e1) {\fontnode{e}};
    \node[state, below=of win2] (e2) {\fontnode{e}};
        \node[state, above=of tie1] (t1) {\fontnode{e}};
    \node[state, right=of tie2] (t2) {\fontnode{e}};
              \path
    (win1) edge node [left] {$\PSEND{ps}{w}$} (e1)
    (win2) edge node [left] {$\PSEND{ps}{w}$} (e2)
          (tie1) edge node [left] {$\PSEND{ps}{t}$} (t1)
    (tie2) edge node [below] {$\PSEND{ps}{t}$} (t2)
          (s0) edge node [above] {$\PSEND{pq}{\grock}$} (rock)
    (s0) edge node [left] {$\PSEND{pq}{\gpaper}$} (paper)
    (s0) edge node [above]  {$\PSEND{pq}{\gscissors}$} (scissors)
        (rock) edge node [above] {$\PRECEIVE{qp}{\grock}$} (tie1)
    (rock) edge node [left] {$\PRECEIVE{qp}{\gpaper}$} (lose1)
    (rock) edge [bend left=20] node [above left]  {$\PRECEIVE{qp}{\gscissors}$} (win1)
        (paper) edge node [right] {$\PRECEIVE{qp}{\grock}$} (win2)
    (paper) edge node [above] {$\PRECEIVE{qp}{\gpaper}$} (tie2)
    (paper) edge node [below]  {$\PRECEIVE{qp}{\gscissors}$} (lose1)
        (scissors) edge node [above] {$\PRECEIVE{qp}{\grock}$} (lose2)
    (scissors) edge [bend right =20] node [above right] {$\PRECEIVE{qp}{\gpaper}$} (win1)
    (scissors) edge node [right]  {$\PRECEIVE{qp}{\gscissors}$} (tie2)
        ;
  \end{tikzpicture}
  &
    \quad
    $\ptp{q}:$
    \begin{tikzpicture}[mycfsm, node distance = 0.6cm and 0.6cm]
                  \node[state, initial, initial where=above] (s0) {\fontnode{0}};
            \node[state, left=of s0] (rock) {\fontnode{r}};
      \node[state, below=of s0] (paper) {\fontnode{p}};
      \node[state, right=of s0] (scissors) {\fontnode{s}}; 
            \node[state, left=of rock] (tie1) {\fontnode{t}};
      \node[state, above=of s0] (win1) {\fontnode{w}};
      \node[state, below=of rock] (lose1) {\fontnode{l}};
            \node[state, below=of paper] (win2) {\fontnode{w}};
            \node[state, below=of scissors] (tie2) {\fontnode{t}};
            \node[state, right=of scissors] (lose2) {\fontnode{l}};
          \node[state, above=of win1] (e1) {\fontnode{e}};
      \node[state, below=of win2] (e2) {\fontnode{e}};
 
            \path
      (win1) edge node [left] {$\PSEND{qs}{w}$} (e1)
      (win2) edge node [left] {$\PSEND{qs}{w}$} (e2)
                  (s0) edge node [above] {$\PSEND{qp}{\grock}$} (rock)
      (s0) edge node [left] {$\PSEND{qp}{\gpaper}$} (paper)
      (s0) edge node [above]  {$\PSEND{qp}{\gscissors}$} (scissors)
            (rock) edge node [above] {$\PRECEIVE{pq}{\grock}$} (tie1)
      (rock) edge node [left] {$\PRECEIVE{pq}{\gpaper}$} (lose1)
      (rock) edge [bend left=20] node [above left]  {$\PRECEIVE{pq}{\gscissors}$} (win1)
            (paper) edge node [right] {$\PRECEIVE{pq}{\grock}$} (win2)
      (paper) edge node [above] {$\PRECEIVE{pq}{\gpaper}$} (tie2)
      (paper) edge node [below]  {$\PRECEIVE{pq}{\gscissors}$} (lose1)
            (scissors) edge node [above] {$\PRECEIVE{pq}{\grock}$} (lose2)
      (scissors) edge [bend right =20] node [above right] {$\PRECEIVE{pq}{\gpaper}$} (win1)
      (scissors) edge node [right]  {$\PRECEIVE{pq}{\gscissors}$} (tie2)
            ;
    \end{tikzpicture}
  &
    \quad
    $\ptp{s}:$
    \begin{tikzpicture}[mycfsm, node distance = 0.6cm and 0.6cm]
      \node[state, initial, initial where=left] (s0) {\fontnode{0}};
            \node[state, above=of s0] (s1) {\fontnode{p}};
      \node[state, below=of s0] (s2) {\fontnode{q}};
      \node[state, right=of s0] (s3) {\fontnode{t}};
            \path 
      (s0) edge node [left] {$\PRECEIVE{ps}{w}$} (s1)
      (s0) edge node [left] {$\PRECEIVE{qs}{w}$} (s2)
      (s0) edge node [above] {$\PRECEIVE{ps}{t}$} (s3)
            ;
    \end{tikzpicture}
\end{tabular}

 \end{center}

\subsection{Example for Section~\ref{sec:kmc}: $\DIBI{k}$
  vs. $\CIBI{k}$ conditions}

We illustrate the difference between the $\DIBI{k}$ and $\CIBI{k}$
properties with the system below. It is adapted from the running
example of~\cite{LTY15} where we have removed mixed states (choosing
one interleaving for each outgoing transition). We refer to it as the
4 Player game in Table~\ref{tab:benchmarks}.
\[
\begin{array}{cc}
  M_\ptp{a}:
  \begin{tikzpicture}[mycfsm, node distance = 0.7cm and 1cm]
    \node[state, initial, initial where=above] (s0) {};
    \node[state, right=of s0] (s1) {};
    \node[state, below=of s1] (s2) {};
    \node[state, below=of s2] (s4) {};
    \node[state, right=of s1] (s3) {};
        \path
    (s0) edge [bend left] node [above] {$\PSEND{ab}{bwin}$} (s1)
    (s0) edge [bend right] node [below] {$\PSEND{ac}{cwin}$} (s1)
    (s1) edge node {$\PRECEIVE{ba}{sig}$} (s2)
    (s1) edge node {$\PRECEIVE{ca}{msg}$} (s3)
    (s2) edge node [left, near start] {$\PRECEIVE{ca}{msg}$} (s4)
    (s3) edge [bend left] node {$\PRECEIVE{ba}{sig}$} (s4)
    (s4) edge [bend left=50] node {$\PSEND{ad}{free}$} (s0)
    ;
  \end{tikzpicture}
  &
    M_\ptp{b}:
    \begin{tikzpicture}[mycfsm, node distance = 0.5cm and 1cm]
      \node[state, initial, initial where=above] (s0) {};
      \node[state, below left=of s0,xshift=-10pt] (s1) {};
      \node[state, below=of s1] at (s0|-s1) (s2) {};
      \path
      (s0) edge [bend right] node [left] {$\PRECEIVE{ab}{bwin}$} (s1)
      (s1) edge [bend right] node [left] {$\PSEND{bc}{close}$} (s2)
      (s0) edge node [left] {$\PRECEIVE{cb}{blose}$} (s2)
      (s2) edge [bend right=70] node [right] {$\PSEND{ba}{sig}$} (s0)
      ;
    \end{tikzpicture}
  \\
  M_\ptp{c}:
  \begin{tikzpicture}[mycfsm, node distance = 0.6cm and 1cm]
    \node[state, initial, initial where=above] (s0) {};
    \node[state, below left=of s0] (s2) {};
    \node[state, below right=of s2] (s4) {};
    \node[state, below right=of s0] (s5) {};
    \path
    (s0) edge [bend right] node [left] {$\PSEND{cd}{busy}$} (s2)
    (s2) edge [bend right] node [left] {$\PRECEIVE{ac}{cwin}$} (s4)
    (s2) edge node [above] {$\PRECEIVE{bc}{close}$} (s5)
    (s4) edge [bend right] node [right] {$\PSEND{cb}{blose}$} (s5)
    (s5) edge [bend right] node [right] {$\PSEND{ca}{msg}$} (s0)
    ;
  \end{tikzpicture}
  &
    M_\ptp{d}:
    \begin{tikzpicture}[mycfsm, node distance = 1cm and 1cm]
      \node[state, initial, initial where=above] (s0) {};
      \node[state, right=of s0] (s1) {};
      \path
      (s0) edge [bend left] node [above] {$\PRECEIVE{cd}{busy}$} (s1)
      (s1) edge [bend left] node [below] {$\PRECEIVE{ad}{free}$} (s0)
      ;
    \end{tikzpicture}
\end{array}
\]
This system is $\IBI{k}$ for all $k$ (and thus $\infIBI$): it is never
the case that $M_\ptp{b}$ (resp.\ $M_\ptp{c}$) can choose between
consuming $\msg{bwin}$ or $\msg{blose}$ (resp.\ $\msg{cwin}$ or
$\msg{close}$).
It is not $\DIBI{k}$ (for any $k$) because of the cyclic nature of the
protocol (both choices are available at each iteration).
However, this system is $\CIBI{k}$ because, $M_\ptp{a}$ need to
receive acknowledgements from both $M_\ptp{b}$ and $M_\ptp{c}$ before
starting a new iteration of the game; hence there is a dependency
between, e.g., $\PRECEIVE{ab}{bwin}$ and $\PSEND{cb}{blose}$.

\subsection{Example for Section~\ref{sec:new-completeness}: Local
  bound-agnosticity}
We illustrate the reason for using projections which preserve
$\emptyw$-transitions, i.e., $\epsproj{\kTS{S}}{p}$, to characterise
$k$-exhaustive systems, instead of projections which determinise the
automata, cf.~\cite{LTY15}.
Consider the system $S$ below.
\[
\begin{array}{ccc}
  \ptp{s}:
  \begin{tikzpicture}[mycfsm, node distance = 0.5cm and 0.3cm]
    \node[state, initial, initial where=above] (s0) {};
    \node[state, below left=of s0] (s1) {};
    \node[state, below=of s1] (s2) {};
    \node[state, below right=of s0] (s3) {};
        \path
    (s0) edge [bend right] node [left] {$\PSEND{sr}{x}$} (s1)
    (s0) edge [bend left] node [right] {$\PSEND{sr}{y}$} (s3)
    (s1) edge node [right] {$\PRECEIVE{ps}{a}$} (s2)
    ;
  \end{tikzpicture}
  \qquad
  &
    \ptp{p}:
    \begin{tikzpicture}[mycfsm, node distance = 0.5cm and 0.3cm]
      \node[state, initial, initial where=above] (s0) {};
      \node[state, below=of s0] (s1) {};
      \node[state, below=of s1] (s2) {};
            \path
      (s0) edge node [right] {$\PSEND{ps}{a}$} (s1)
      (s1) edge node [right] {$\PSEND{ps}{a}$} (s2)
      ;
    \end{tikzpicture}
    \qquad
  &
    \ptp{r}:
    \begin{tikzpicture}[mycfsm, node distance = 0.5cm and 0.3cm]
      \node[state, initial, initial where=above] (s0) {};
      \node[state, below left=of s0] (s1) {};
      \node[state, below right=of s0] (s2) {};
            \path
      (s0) edge [bend right] node [left] {$\PRECEIVE{sr}{x}$} (s1)
      (s0) edge [bend left] node [right] {$\PRECEIVE{sr}{y}$} (s2)
      ;
    \end{tikzpicture}
\end{array}
\]

The traditional projections ($\proj{\TS{k}{S}}{p}$) and projections
($\epsproj{\TS{k}{S}}{p}$) for $k \in \{1,2\}$ are given below
(up to (weak) bisimulation).
\[
\begin{array}{cc}
  \proj{\TS{1}{S}}{p} =  \proj{\TS{2}{S}}{p} =
  \begin{tikzpicture}[mycfsm, node distance = 0.5cm and 0.7cm]
    \node[state, initial, initial where=above] (s0) {};
    \node[state, below=of s0] (s1) {};
    \node[state, below=of s1] (s2) {};
        \path
    (s0) edge node [right] {$\PSEND{ps}{a}$} (s1)
    (s1) edge node [right] {$\PSEND{ps}{a}$} (s2)
    ;
  \end{tikzpicture}
  &
    \epsproj{\TS{1}{S}}{p} = 
    \begin{tikzpicture}[mycfsm, node distance = 0.5cm and 0.3cm]
      \node[state, initial, initial where=above] (s0) {};
      \node[state, below=of s0] (s1) {};
      \node[state, below left=of s1] (s2) {};
      \node[state, below right=of s1] (s3) {};
      \node[state, below=of s3] (s4) {};
            \path
      (s0) edge node [right] {$\PSEND{ps}{a}$} (s1)
      (s1) edge [bend right] node [above] {$\emptyw$} (s2)
      (s1) edge [bend left] node [above] {$\emptyw$} (s3)
      (s3) edge node [right] {$\PSEND{ps}{a}$} (s4)
      ;
    \end{tikzpicture}
  \\
  &
    \epsproj{\TS{2}{S}}{p} = 
    \begin{tikzpicture}[mycfsm, node distance = 0.5cm and 0.3cm]
      \node[state, initial, initial where=above] (s0) {};
      \node[state, below=of s0] (s1) {};
      \node[state, below left=of s1] (s2) {};
      \node[state, below=of s2] (s5) {};
      \node[state, below right=of s1] (s3) {};
      \node[state, below=of s3] (s4) {};
            \path
      (s0) edge node [right] {$\PSEND{ps}{a}$} (s1)
      (s1) edge [bend right] node [above] {$\emptyw$} (s2)
      (s1) edge [bend left] node [above] {$\emptyw$} (s3)
      (s3) edge node [right] {$\PSEND{ps}{a}$} (s4)
      (s2) edge node [left] {$\PSEND{ps}{a}$} (s5)
      ;
    \end{tikzpicture}
\end{array}
\]

Observe that we have $\proj{\TS{1}{S}}{p} \bisim \proj{\TS{2}{S}}{p}$,
but \emph{not}
\[
\epsproj{\TS{1}{S}}{p} \wbisim \epsproj{\TS{2}{S}}{p}
\]
Indeed, the system above is \emph{not} $1$\MC, but is $2$\MC.

\subsection{Examples for Section~\ref{sec:exist-bounded}:
  $\exists$-bounded vs.\ synchronisable systems}

\begin{example}\label{ex:not-cexist}
  $(M_\p, M_\q$) below is \emph{safe}, but not
  $\exists$(S)-$k$-bounded, nor $k$-exhaustive, for any $k$.
  \[
    \begin{array}{ccc}
      \p:
      \begin{tikzpicture}[mycfsm]
        \node[state, initial, initial where=left] (s0) {};
        \node[state, right=of s0] (s1) {};
        \node[state, right=of s1] (s2) {};
                \path
        (s0) edge [loop] node [above] {$\PSEND{pq}{a}$} (s0)
        (s0) edge node [below] {$\PSEND{pq}{b}$} (s1)
        (s1) edge [loop] node [above] {$\PRECEIVE{qp}{c}$} (s1)
        (s1) edge node [below] {$\PRECEIVE{qp}{d}$} (s2)
        ;
      \end{tikzpicture}
      \qquad    \qquad   
      &
        \ptp{q}:
        \begin{tikzpicture}[mycfsm]
          \node[state, initial, initial where=left] (s0) {};
          \node[state, right=of s0] (s1) {};
          \node[state, right=of s1] (s2) {};
                    \path
          (s0) edge [loop] node [above] {$\PSEND{qp}{c}$} (s0)
          (s0) edge node [below] {$\PSEND{qp}{d}$} (s1)
          (s1) edge [loop] node [above] {$\PRECEIVE{pq}{a}$} (s1)
          (s1) edge node [below] {$\PRECEIVE{pq}{b}$} (s2)
          ;
        \end{tikzpicture}
    \end{array}
  \] 
  For instance, execution $\acts$ below is
  $\mathit{max}\{m,n\}$-bounded. Hence, for any finite $k$, we can
  generate an execution that is not existentially (stable) $k$-bounded.
      \[\small
  \acts = 
  \underbracket{
    \PSEND{pq}{a} \cdots \PSEND{pq}{a}}_{n \text{ times}}
  \concat
  \PSEND{pq}{b}
  \concat 
  \underbracket{
    \PSEND{qp}{c} \cdots \PSEND{qp}{c}}_{m \text{ times}}
  \concat
  \PSEND{qp}{d}
  \concat
  \underbracket{
    \PRECEIVE{pq}{a} \cdots \PRECEIVE{pq}{a}}_{n \text{ times}}
  \concat
  \PRECEIVE{pq}{b}
  \concat 
  \underbracket{
    \PRECEIVE{qp}{c} \cdots \PRECEIVE{qp}{c}}_{m \text{ times}}
  \concat \PRECEIVE{qp}{d}
  \]
  Note that $\acts$ leads to a stable configuration (all
  sent messages are received).
\end{example}

\begin{example}
  The (non-$\infIBI$) system in Figure~\ref{fig:ex-nondirected} is
  \emph{not} $k$-synchronisable for any $k$, 
    due to executions consisting of the left branch of $M_\p$ and the
  right branch of $M_\q$ which are not synchronisable.
\end{example}

\begin{example}
  The system $(M_\p,M_\q)$ in Figure~\ref{fig:ex-unbounded} is
  \emph{not} $k$-synchronisable for any $k$.
      The system $(M_\p,N'_\q)$ is not $k$-synchronisable for any $k$
  since the second emission of message $\msg{b}$ cannot be received in the
  exchange from which it is sent.
        Instead, the system $(M_\p,N_\q)$ in Figure~\ref{fig:ex-unbounded}
  is $3$\synk\ since each of its executions can be rescheduled so to
  consists of the following $3$-exchange:
    
  \noindent
  {$
    \PSEND{pq}{a}
    \concat
    \PSEND{pq}{a}
    \concat
    \PSEND{qp}{b}
    \concat
    \PRECEIVE{pq}{a}
    \concat
    \PRECEIVE{pq}{a}
    \concat
    \PRECEIVE{qp}{b}
    $}.
\end{example}

\subsection{Example for Sections~\ref{sec:related}: mailbox communicating automata}

Consider the system $(M_\p, M_\rr, M_\q)$ below, with a mailbox
semantics, i.e., participant $\rr$ has one input queue to which both
participants $\p$ and $\q$ can send messages.
\[
\begin{array}{c}
  \ptp{p}:
  \begin{tikzpicture}[mycfsm, node distance = 0.5cm and 0.3cm]
    \node[state, initial, initial where=above] (s0) {};
    \node[state, below=of s0] (s1) {};
    \node[state, below=of s1] (s2) {};
    \path
    (s0) edge node [left] {$\PSEND{r}{a}$} (s1)
    (s1) edge node [left] {$\PSEND{r}{a}$} (s2)
    ;
  \end{tikzpicture}
    \qquad  \qquad
  \rr:
  \begin{tikzpicture}[mycfsm, node distance = 0.5cm and 0.3cm]
    \node[state, initial, initial where=above] (s0) {};
    \node[state, below=of s0] (s1) {};
    \node[state, below=of s1] (s2) {};
    \path
    (s0) edge node [left] {$\PRECEIVE{}{a}$} (s1)
    (s1) edge node [left] {$\PRECEIVE{}{a}$} (s2)
    ;
  \end{tikzpicture}
  \qquad  \qquad
  \ptp{q}:
  \begin{tikzpicture}[mycfsm, node distance = 0.5cm and 0.3cm]
    \node[state, initial, initial where=above] (s0) {};
    \node[state, below=of s0] (s1) {};
    \node[state, below=of s1] (s2) {};
    \path
    (s0) edge node [left] {$\PSEND{r}{a}$} (s1)
    (s1) edge node [left] {$\PSEND{r}{a}$} (s2)
    ;
  \end{tikzpicture}
\end{array}
\]
If this system executes with bound $k \leq 3$, one participant (either
$\p$ or $\q$) will be prevented to send at least one message.
This namely implies that the send action of participant may become
disabled after being enabled.
This is problematic for the current partial order reduction algorithm
and for the notion of $k$-closed sets used to prove our main results.

\subsection{Example for Section~\ref{sec:por}: (reduced) $\OBI{k}$}

The example below is \emph{reduced} $\OBI{k}$ for $k \geq 2$, but not
$\OBI{k}$ for any $k \geq 1$.
    $\TS{1}{S}$ includes a state where the queue $\ptp{pq}$ contains one
  message $\msg{a}$ and $M_\p$ is back and its initial state. At this point,
  $\PSEND{pr}{b}$ is fireable, but $\PSEND{pq}{a}$ is not.
    In $\RTS{2}{2}$, there is only one state from which $\p$ fires its send
  actions, both of which are enabled, hence the system is $\OBI{2}$.
    \[
  \begin{array}{ccc}
    \p:
    \begin{tikzpicture}[mycfsm, node distance = 0.6cm and 1cm]
      \node[state, initial, initial where=left] (s0) {};
      \node[state, right=of s0] (s1) {};
            \path
      (s0) edge [loop] node [above] {$\PSEND{pq}{a}$} (s0)
      (s0) edge node [above] {$\PSEND{pr}{b}$} (s1)
      ;
    \end{tikzpicture}
    \qquad
    &
      \ptp{q}:
      \begin{tikzpicture}[mycfsm, node distance = 0.6cm and 1cm]
        \node[state, initial, initial where=left] (s0) {};
                \path
        (s0) edge [loop] node [above] {$\PRECEIVE{pq}{a}$} (s0)
        ;
      \end{tikzpicture} 
      \qquad
    &
      \ptp{r}:
      \begin{tikzpicture}[mycfsm, node distance = 0.6cm and 1cm]
        \node[state, initial, initial where=left] (s0) {};
        \node[state, right=of s0] (s1) {};
                \path
        (s0) edge node [above] {$\PRECEIVE{pr}{b}$} (s1)
        ;
      \end{tikzpicture}
  \end{array}
  \]

\subsection{Example for Section~\ref{sec:por}: ordered list}
We illustrate the motivation to sort the list generated by
$\spartition{\_}$, see Definition~\ref{def:partition}, with the system
below.
\[
\p:
\begin{tikzpicture}[mycfsm, node distance = 0.5cm and 0.3cm]
  \node[state, initial, initial where=above] (s0) {0};
  \node[state, below=of s0] (s1) {1};
    \path
  (s0) edge node [right] {$\PSEND{pq}{a}$} (s1)
  ;
\end{tikzpicture}
\qquad
\q :
\begin{tikzpicture}[mycfsm, node distance = 0.5cm and 0.3cm]
  \node[state, initial, initial where=above] (s0) {0};
\end{tikzpicture}
\qquad
\s :
\begin{tikzpicture}[mycfsm, node distance = 0.5cm and 0.3cm]
  \node[state, initial, initial where=above] (s0) {0};
  \node[state, below=of s0] (s1) {1};
    \path
  (s0) edge [bend right] node [left] {$\PSEND{sr}{x}$} (s1)
  (s0) edge [bend left] node [right] {$\PSEND{sr}{y}$} (s1)
  ;
\end{tikzpicture}
\qquad
\rr :
\begin{tikzpicture}[mycfsm, node distance = 0.5cm and 0.3cm]
  \node[state, initial, initial where=above] (s0) {0};
\end{tikzpicture}
\]
If we were to build the $\kRTS{S}$ of this system without sorting the
list returned by $\spartition{s_0}$.
We may obtain
$\spartition{s_0} = \{\PSEND{sr}{x}, \PSEND{sr}{y} \} \concat \{
\PSEND{pq}{a} \}$, which produces $4$ transitions (and $5$ states).
Instead, if the list is sorted by ascending cardinality, we have
$\spartition{s_0} =\{ \PSEND{pq}{a} \} \concat \{\PSEND{sr}{x},
\PSEND{sr}{y} \}$,
which gives us an $\kRTS{S}$ with $3$ transitions (and $4$ states).

\begin{remark}
  Note that even though sorting sets of transitions by cardinality
  gives better performance in general, it does guarantee to find the
  smallest $\RTS{k}{S}$.
\end{remark}

 \section{Proofs for Section~\ref{sec:cfsm} (preliminaries)}

\propdirectediobi* 
\begin{proof}
  Immediate since each directed (\CSA) automaton has access to at most one channel
  from each state.
\end{proof}

\begin{restatable}{lemma}{lemvalidword}\label{lem:valid-word}
  Let $S$ be a system and $\acts \in \ASetC$. If
  $s_0 \kTRANSS{\acts}$, then $\acts$ is a \emph{valid} execution.
\end{restatable}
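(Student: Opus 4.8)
The plan is to prove a stronger \emph{queue invariant} by induction on the length of the execution, from which validity is immediate, and which in fact does not use the bound $k$ at all: boundedness only guarantees that $\acts$ is a genuine execution from $s_0$, whereas validity is a purely \FIFO\ bookkeeping property. The invariant I would establish is that for every $\actsb \in \ASetC$ and every configuration $\csconf{q}{w}$ with $s_0 \TRANSS{\actsb} \csconf{q}{w}$, and for every channel $\ptp{pq} \in \CSet$,
\[
\esndproj{\actsb}{pq} = \ercvproj{\actsb}{pq} \concat w_{\p\q}.
\]
In words, the sequence of messages \emph{sent} on $\ptp{pq}$ along $\actsb$ equals the sequence \emph{received} on $\ptp{pq}$ followed by the current queue content $w_{\p\q}$. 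This instantly gives that $\ercvproj{\actsb}{pq}$ is a prefix of $\esndproj{\actsb}{pq}$, which is exactly the validity condition for the prefix $\actsb$.

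First I would settle the base case $\actsb = \emptyw$: here $\csconf{q}{w} = s_0$, so $w_{\p\q} = \emptyw$ for every channel and both projections are empty, whence the identity reads $\emptyw = \emptyw \concat \emptyw$. For the inductive step, suppose $s_0 \TRANSS{\actsb} \csconf{q}{w} \TRANSS{\action} \csconf{q'}{w'}$ with the invariant holding at $\csconf{q}{w}$, and proceed by case analysis on $\action$ following Definition~\ref{def:rs}. If $\action = \PSEND{sr}{a}$, then $w'_{\s\rr} = w_{\s\rr} \concat a$ and all other queues are unchanged; since a send contributes nothing to any receive projection and appends $\msg{a}$ only to $\esndproj{\cdot}{sr}$, the identity for channel $\ptp{sr}$ propagates as $\esndproj{\actsb \concat \action}{sr} = \esndproj{\actsb}{sr} \concat \msg{a} = \ercvproj{\actsb}{sr} \concat w_{\s\rr} \concat \msg{a} = \ercvproj{\actsb \concat \action}{sr} \concat w'_{\s\rr}$, while for the remaining channels nothing changes. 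If $\action = \PRECEIVE{sr}{a}$, then by the reception rule $w_{\s\rr} = \msg{a} \concat w'_{\s\rr}$, and the receive projection on $\ptp{sr}$ gains $\msg{a}$ while its send projection is unchanged, so $\esndproj{\actsb \concat \action}{sr} = \esndproj{\actsb}{sr} = \ercvproj{\actsb}{sr} \concat w_{\s\rr} = (\ercvproj{\actsb}{sr} \concat \msg{a}) \concat w'_{\s\rr} = \ercvproj{\actsb \concat \action}{sr} \concat w'_{\s\rr}$; again the other channels are untouched. This closes the induction.

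Finally I would observe that $s_0 \kTRANSS{\acts}$ exhibits $\acts$ as an execution realised through intermediate configurations, so that every prefix $\actsb$ of $\acts$ satisfies $s_0 \TRANSS{\actsb} \csconf{q}{w}$ for some reachable $\csconf{q}{w}$. Applying the invariant to each such prefix and each channel yields that $\ercvproj{\actsb}{pq}$ is a prefix of $\esndproj{\actsb}{pq}$, i.e.\ that $\acts$ is \emph{valid}. There is no genuine obstacle here: the argument is a routine invariant induction, and the only points requiring care are the bookkeeping of the send/receive projections under the two transition rules and the (immediate) remark that prefixes of an execution are themselves executions.
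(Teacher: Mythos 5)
Your proof is correct, and it follows the same overall skeleton as the paper's (induction on the length of the execution), but the inductive step is organised differently in a way worth noting. The paper keeps the induction hypothesis at the level of the statement itself --- ``every prefix satisfies the prefix condition'' --- and then, in the only non-trivial case ($\action = \PRECEIVE{pq}{a}$), argues by contradiction: it writes $\esndproj{\acts}{pq} = \ercvproj{\acts}{pq} \concat \word$ and observes that failure of validity would force $\word = \emptyw$ or $\word$ to start with some $\msg{b} \neq \msg{a}$, contradicting the fact that the reception can actually be fired. That argument silently relies on the fact that the residual word $\word$ \emph{is} the current queue content $w_{\p\q}$, which the paper never states. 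You instead strengthen the induction hypothesis to exactly that fact, the invariant $\esndproj{\actsb}{pq} = \ercvproj{\actsb}{pq} \concat w_{\p\q}$, after which both the send and the receive cases become one-line equational computations and validity drops out as an immediate corollary. Your version is the cleaner of the two: it makes explicit the bookkeeping the paper leaves implicit, eliminates the contradiction argument, and correctly isolates that the bound $k$ plays no role beyond guaranteeing that $\acts$ is an execution at all. No gaps.
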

\begin{proof}
  By induction on the length of $\acts$.  The result follows trivially
  for $\acts = \emptyw$.  Assume it holds for $\acts$ and let us show
  that is also holds for $\acts \concat \action$.
    Assume $\chan{\action} = \ptp{pq}$.
    By induction hypothesis, for each prefix $\actsb$ of
  $\acts$, we have that $\ercvproj{\actsb}{sr}$ is a prefix of
  $\esndproj{\actsb}{sr}$ for any channel $\ptp{sr} \in \CSet$.
    Hence, for each prefix $\actsb$ of $\acts \concat \action$ we have
  that $\ercvproj{\actsb}{sr}$ is a prefix of $\esndproj{\actsb}{sr}$
  for any channel $\ptp{sr} \neq \ptp{pq} \in \CSet$.
    If $\action = \PSEND{pq}{a}$, the result still holds since 
  $\esndproj{\actsb}{sr}$ is longer or equal.
    The interesting case is when $\action = \PRECEIVE{pq}{a}$.
      Pose $\esndproj{\acts}{pq} = \ercvproj{\acts}{pq} \concat \word$
  (there is such $\word$ by induction hypothesis).
    Assume by contradiction that $\acts \concat \PRECEIVE{pq}{a}$ is
  not a valid word.
    Then, there is no $\word' \in \ASigmaC$ such that
  $\esndproj{\acts}{pq} = \esndproj{\acts \concat
    \PRECEIVE{pq}{a}}{pq} = \ercvproj{\acts \concat
    \PRECEIVE{pq}{a}}{pq} \concat \word'$.
  which implies that either $\word = \msg{b} \concat \word''$ or
  $\word = \emptyw$ ($\msg{b} \neq \msg{a}$).
    This contradicts the fact that
  $s_0 \kTRANSS{\acts} s \kTRANSS{\PRECEIVE{pq}{a}}$ since the channel
  $\ptp{pq}$ in $s$ is either empty or starts with $\msg{b}$.
\end{proof}

\begin{lemma}\label{lem:eqpeer-imp-same-state}
  Let $S$ be a system.
    If $s_0 \TRANSS{\actsb_0} s$, $s \TRANSS{\acts} t$, and
  $s \TRANSS{\acts'} t'$ such that $\eqpeer{\acts}{\acts'}$, then
  (1) $t = t'$ and (2) $\acts_0 \concat \acts \resche \acts_0 \concat \acts'$.
\end{lemma}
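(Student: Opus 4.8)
The plan is to show that the configuration reached by an execution from $s$ is a \emph{function} of the per-participant projections: determinism of each $M_\p$ pins down the control state, and the \FIFO{} discipline pins down the queue contents. I read the $\actsb_0$ in the hypothesis as the execution $\acts_0$ witnessing $s_0 \TRANSS{} s$ that appears in the conclusion. The only general facts used are that projection distributes over concatenation, $\onpeer{p}{\acts \concat \actsb} = \onpeer{p}{\acts} \concat \onpeer{p}{\actsb}$ (and likewise for $\esndproj{\cdot}{pq}$ and $\ercvproj{\cdot}{pq}$), and that $\esndproj{\acts}{pq}$ and $\ercvproj{\acts}{pq}$ are extracted from $\onpeer{p}{\acts}$ and $\onpeer{q}{\acts}$ respectively, since a send on $\ptp{pq}$ has subject $\p$ and a receive from $\ptp{pq}$ has subject $\q$.

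For part (1) I would prove $t = t'$ componentwise. \emph{Control state.} Every transition of $M_\p$ has subject $\p$, and firing an action changes only the local state of its subject; hence $\p$'s local state evolves exactly along the subsequence $\onpeer{p}{\acts}$ of its own actions, each an actual transition of $\delta_\p$. Since $M_\p$ is deterministic, the state of $\p$ reached from its state in $s$ is a function of that state and the sequence $\onpeer{p}{\acts}$; as $\onpeer{p}{\acts} = \onpeer{p}{\acts'}$, the $\p$-components of $t$ and $t'$ coincide, for every $\p \in \PSet$. \emph{Queues.} Fix $\ptp{pq}$. By the remark above, $\esndproj{\acts}{pq} = \esndproj{\acts'}{pq}$ and $\ercvproj{\acts}{pq} = \ercvproj{\acts'}{pq}$. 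Because the transitions of $\TRANSS{}$ enforce \FIFO, the content of $\ptp{pq}$ in $t$ is obtained from its content in $s$ concatenated with $\esndproj{\acts}{pq}$ by deleting the leading $\lvert \ercvproj{\acts}{pq} \rvert$ symbols --- a value that does not depend on how sends and receives on $\ptp{pq}$ are interleaved. This data agrees for $\acts$ and $\acts'$, so the $\ptp{pq}$-queues of $t$ and $t'$ agree; ranging over all channels gives $t = t'$.

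For part (2) I must check the two defining conditions of $\resche$ for the words $\acts_0 \concat \acts$ and $\acts_0 \concat \acts'$. Equality of per-participant projections is immediate from distributivity and the hypothesis $\eqpeer{\acts}{\acts'}$: $\onpeer{p}{\acts_0 \concat \acts} = \onpeer{p}{\acts_0} \concat \onpeer{p}{\acts} = \onpeer{p}{\acts_0} \concat \onpeer{p}{\acts'} = \onpeer{p}{\acts_0 \concat \acts'}$ for all $\p$. Validity follows from Lemma~\ref{lem:valid-word}: both are genuine executions from $s_0$ (namely $s_0 \TRANSS{\acts_0} s \TRANSS{\acts} t$ and $s_0 \TRANSS{\acts_0} s \TRANSS{\acts'} t'$), and any finite execution from $s_0$ is $k$-bounded for some finite $k$, so the lemma applies. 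Hence $\acts_0 \concat \acts \resche \acts_0 \concat \acts'$.

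The main obstacle is the queue step of part (1): one must justify that the residual content of a channel depends only on its prior content, the appended send-word, and the \emph{number} of receives, independently of the interleaving of the two. This is exactly where \FIFO{} validity of the execution is needed --- once it is in hand, every remaining step is routine bookkeeping, and part (2) is essentially a restatement of the definitions of $\resche$ and $\eqpeerop$.
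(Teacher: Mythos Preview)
Your proof is correct and follows the same approach as the paper's, which simply says that part~(1) ``follows from the fact that the automata are deterministic hence, they all terminate in the same state, and the queues are consumed uniformly in both executions'' and that part~(2) ``follows from the fact that both executions are valid, by Lemma~\ref{lem:valid-word}.'' You have unpacked exactly these two hints: determinism of each $M_\p$ plus equality of per-participant projections for the control states, the \FIFO{} residual-content argument for the queues, and validity via Lemma~\ref{lem:valid-word} (instantiated at a large enough bound) together with distributivity of projection for $\resche$. Your reading of $\actsb_0$ and $\acts_0$ as the same execution is also the intended one.
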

\begin{proof}
  Item (1) follows from the fact that the automata are deterministic
  hence, they all terminate in the same state, and the queues are
  consumed uniformly in both executions.
    Item (2) follows from the fact that both executions are valid, by
  Lemma~\ref{lem:valid-word}.
\end{proof}

\section{Proofs for Section~\ref{sec:kmc} ($k$\MC)} 

\thmdecidabilityall*

\begin{proof}
  We first observe that decidability follows straightforwardly since
  for any finite $k$, both $\RS_k(S)$ and $\kTRANSS{}$ are finite.
    We follow the proof of~\cite[Theorem 6.3]{bollig2010}.
    Let $n$ be the maximum of
  $\{ \lvert Q_\p \rvert \st \p \in \PSet \}$, then there are at most
  $n \lvert \PSet \rvert$ local states in $S$.

  \proofsub{$k$-exhaustivity}
    We check whether $S$ is \emph{not} $k$-exhaustive, i.e., for each
  sending state $q_\p$ and send action from $q_\p$, we check whether
  there is a reachable configuration from which this send action cannot
  be fired.
    Hence, we need to search $\RS_k(S)$, which has an exponential number
  of states (wrt.\ $k$).
    Following~\cite[Theorem 6.3]{bollig2010}, each configuration
  $s \in \RS_k(S)$ may be encoded in space
  \[
  \lvert \PSet \rvert \log n
    + \lvert \CSet \rvert k \, \log {\lvert \ASigma \rvert}
  \]
    We also need one bit to remember whether we are looking for
  $q_\p$ or whether we are looking for the matching action.
    We need to store at most
  $\lvert \PSet \rvert n \lvert \CSet \rvert \lvert \ASigma \rvert^k$
  configurations, hence the problem can be decided in polynomial
  space when $k$ is given in unary.

                                          Next, we show that the problem is \PSPACE{}-hard.
    From~\cite[Proposition 5.5]{GenestKM07}, we know that checking
  existentially stable $k$-boundedness for a system with the stable
  property is \PSPACE{}-complete.
    By Theorem~\ref{thm:kuske-imp-classical}, this problem can be reduced
  to checking whether the system is $k$-exhaustive, which implies that
  checking $k$-exhaustivity must be \PSPACE{}-hard.

  \proofsub{$\OBI{k}$}
    For each sending state $q_\p$, we check whether there is a reachable
  configuration from which not all send actions can be fired, and thus
  we reason similarly to the $k$-exhaustivity case.
    Next, we show that checking $\OBI{k}$ is \PSPACE{}-hard. For this we
  adapt the construction from~\cite[Theorem 3]{Bouajjani2018} which
  reduces the problem of checking if the product of a set of finite
  state automata has an empty language to checking
  $1$-synchronisability.
    We use the same construction as theirs (which is $\OBI{1}$) but
  instead of adding states and transitions to ensure that the system
  breaks $1$-synchronisability when each automata is in a final state,
  we add states and transitions that violate $\OBI{1}$ (using a
  construction like the one in Example~\ref{ex:kobi-not-por}).

  \proofsub{$\IBI{k}$}
    For each non-directed receiving state $q_\p$, we check whether there
  is a reachable configuration from which more than one receive action
  can be fired, and thus we reason similarly as for $k$-exhaustivity.
    Showing that $\IBI{k}$ is \PSPACE{}-hard is similar to the $\OBI{k}$
  case.

  \proofsub{$\DIBI{k}$}
    There are two components of this property, one is equivalent to
  $\IBI{k}$, the other requires to guarantee that no matching send
  action is fired from an already enabled receive state. 
    Hence, for each non-directed receiving state $q_\p$, we check
  whether there is a reachable configuration from which one receive
  action of $\p$ is enabled, followed by a send action that matches
  another receive.
    We can proceed as in the case for $k$-exhaustivity with additional
  space to remember whether we are looking for the receiving state or
  for a matching send action.
    Showing that $\DIBI{k}$ \PSPACE{}-hard is similar to the $\OBI{k}$
  case.

  \proofsub{${k}$-safety}
    For eventual reception, we proceed as in $\DIBI{k}$ for each
  receiving state and element of the alphabet (check if such a
  configuration is reachable, then we search for a matching receive).
    For progress, we proceed as in $\DIBI{k}$ for each receiving state $q_\p$
  (check if such a configuration is reachable, then we search for a
  move by $\p$).
    Showing that checking ${k}$-safety \PSPACE{}-hard is similar to the
  $\OBI{k}$ case.
                                                \end{proof}

\begin{restatable}{lemma}{lemkclosedpclosed}\label{lem:kclosed-pclosed}
  Let $S$ s.t.\ $s \in \RS_{k}(S)$ and
  $\paset \subseteq \ASetC$ such that $\kclosed{\paset}{s}$, then
    $\pclosed{\paset}{s}$.
\end{restatable}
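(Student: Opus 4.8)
The plan is to establish the target predicate $\pclosed{\paset}{s}$ directly from Definition~\ref{def:kclosed}, read at the two bounds $k$ and $k{+}1$, by noting that the two instantiations differ in \emph{only} their first clause: the second (branching) clause is phrased in terms of the \emph{unbounded} reachability relation $\TRANSS{}$ and the fixed transition sets $\delta_\p$, so it contains no occurrence of the bound at all. My strategy is therefore to transport the branching clause verbatim and to upgrade only the first clause from bound $k$ to bound $k{+}1$.

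First I would record two elementary monotonicity facts. Any execution that is $k$-bounded from a configuration is \emph{a fortiori} $(k{+}1)$-bounded from it, since every configuration it traverses already has all queues of length at most $k \leq k{+}1$. Specialising to $s = s_0$, any configuration reachable by a $k$-bounded execution is reachable by the very same execution viewed as $(k{+}1)$-bounded, whence $\RS_k(S) \subseteq \RS_{k{+}1}(S)$. In particular $s \in \RS_k(S) \subseteq \RS_{k{+}1}(S)$, so that $(k{+}1)$-closedness is well-posed at $s$ (the side condition ``$s \in \RS_{k{+}1}(S)$'' in Definition~\ref{def:kclosed} instantiated at $k{+}1$ is met).

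Next I would check the first clause of $\pclosed{\paset}{s}$. Fix $\acts \in \paset$. By the first clause of $\kclosed{\paset}{s}$ there is $s' \in \RS_k(S)$ with $s \kTRANSS{\acts} s'$. By the two monotonicity facts, the same witness gives $s \bTRANSS{\acts}{k{+}1} s'$ and $s' \in \RS_k(S) \subseteq \RS_{k{+}1}(S)$, which is exactly the first clause at bound $k{+}1$. Finally, the second clause of $\pclosed{\paset}{s}$ is \emph{literally identical} to that of $\kclosed{\paset}{s}$: it quantifies over decompositions $\acts_0 \concat \PSEND{pq}{a} \concat \acts_1 \in \paset$, locates the local state $q_\p$ via $s \TRANSS{\acts_0} \csconf q w$, and asserts the existence of witnessing executions inside $\paset$ for every $(q_\p, \action, q'_\p) \in \delta_\p$; none of this references the bound, so it carries over unchanged. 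I do not expect any genuine obstacle here: the result is pure monotonicity in the bound. The single point requiring care is not to conflate the bounded arrow $\kTRANSS{}$ of clause~1 with the unbounded arrow $\TRANSS{}$ of clause~2, and to observe explicitly that clause~2 is bound-agnostic by construction, which is precisely what makes the upgrade from $k$ to $k{+}1$ free of content beyond the queue-length inclusion.
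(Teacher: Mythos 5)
Your proof is correct and takes essentially the same route as the paper, whose entire argument is the one-line observation that ${\kTRANSS{}} \subseteq {\bTRANSS{}{k+1}}$ applied to Definition~\ref{def:kclosed}. You merely make the monotonicity details explicit ($\RS_k(S) \subseteq \RS_{k{+}1}(S)$, and the fact that clause~2 uses the unbounded relation $\TRANSS{}$ and hence is bound-agnostic), which is exactly what the paper's proof leaves implicit.
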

\begin{proof}
  The result follows from Definition~\ref{def:kclosed}, since
  $\kTRANSS{} \subseteq \bTRANSS{}{k+1}$.
\end{proof}

\lemkmckclosed*
\begin{proof}
  The non-emptiness of $\paset$ follows easily from the assumption
  that $S$ is $k$-exhaustive (Definition~\ref{def:exhaustive}).
    We have to show the following two conditions hold:
  
  \proofsub{1}
  $\forall \acts \in \paset \qst \exists s' \in \RS_k(S) \qst s
  \kTRANSS{\acts} s'$, which follows trivially from the definition of $\paset$.
  
  \proofsub{2}
  For all $\acts_0 \concat \PSEND{sr}{b} \concat \acts_1
  \in \paset$ such that $s \TRANSS{\acts_0} \csconf q w$ and
  for all $(q_\s, \action, q'_\s) \in \delta_\s$ there is
  $\acts_0 \concat  \action \concat \acts_2 \in
  \paset$.
    For this part, take
  $\acts_0 \concat \PSEND{sr}{b} \concat \acts_1 \in \paset$ such that
  $s \TRANSS{\acts_0} s' = \csconf q w$ (with $\s \neq \p$ by
  definition of $\paset$).
    By definition of $\paset$, we have $s' = \csconf q w \in \RS_k(S)$.
                  
  Since $S$ is $k$-exhaustive, for each
  $(q_\s, \PSEND{st}{c}, q'_\s) \in \delta_\s$ there is $\actsb$ s.t.\
  we obtain the following situation
    (where each arrow indicated a $k$-bounded execution):
  \[
  \begin{tikzpicture}[baseline=(current bounding box.center)]
    \node (s1) {$s'$};
    \node[right=of s1] (s2) {};
    \node[below=of s1] (s3) {$t$};
    \node at (s2|-s3) (s4) {$t'$};
        \path[->] (s1) edge [above] node {$\PSEND{sr}{b}$} (s2);
    \path[->] (s3) edge [above] node {$\PSEND{st}{c}$} (s4);
        \path[->] (s1) edge [left] node {$\actsb$} (s3);
      \end{tikzpicture}
    \qquad
  \text{with } \s \notin \actsb
  \]
  There are two cases: 
  \begin{itemize}
  \item If $\p \notin \actsb$, we have that the local state of $\p$ in
    configurations $s$, $s'$ and $t$ is the same.
        Hence, by $k$-exhaustivity:
    $t' \kTRANSS{\actsb'} \kTRANSS{\PSEND{pq}{a}}$ with
    $\p \notin \actsb$.
        Therefore,
    $\acts_0 \concat \actsb \concat \PSEND{st}{c} \concat \actsb' \in
    \paset$ as required.
      \item If there is no $\acts$ such that $\p \notin \actsb$, then
    there must be a dependency chain in $\actsb$ that prevents
    $\PSEND{st}{c}$ to be fired without $\p$ making a move. 
        Since $s \notin \actsb$, we must have some $\PRECEIVE{st}{d}$ in
    $\actsb$ such that $\PRECEIVE{st}{d}$ depends on an action by
    $\p$.
        The smallest such chain is of the form:
    $\PSEND{pt}{x} \concat \PRECEIVE{pt}{x} \concat
    \PRECEIVE{st}{y}$.
        Without loss of generality, pose
    $\actsb = \PSEND{pt}{x} \concat \PRECEIVE{pt}{x} \concat
    \PRECEIVE{st}{y}$ (we reason similarly with a longer chain).

    Take $\acts_3$ s.t.\ $s_0 \kTRANSS{\acts_3} s$, since $S$ is
    reduced $\OBI{k}$ and $k$-exhaustive, there are $t''$ and
    $\actsb_0$ such that $s_0 \rkTRANSS{\actsb_0} t''$, and $\acts_4$
    s.t.\ $t' \kTRANSS{\acts_4} t''$, with
        \[
    \actsb_0 \eqpeerop \acts_3 \concat \acts_0 \concat \PSEND{pt}{x}
    \concat \PRECEIVE{pt}{x} \concat \PRECEIVE{st}{y} \concat
    \PSEND{st}{c} \concat \acts_4
    \]
        by Lemma~\ref{lem:por-trace-equiv} (2).
        Hence, due to the dependency chain within $\actsb$, we must have:
    \[ 
    \actsb_0 = 
    \actsb_1 \concat \PSEND{pt}{x} \concat
    \actsb_2 \concat \PRECEIVE{pt}{x}  \concat
    \actsb_3 \concat \PRECEIVE{st}{y} \concat
    \actsb_4 \concat \PSEND{st}{c}  \concat
    \actsb_5 
    \]
    with $\s \notin \actsb_2 \concat \actsb_3 \concat \actsb_4 $.
    There are three cases:
    \begin{itemize}
    \item Either $\PSEND{sr}{b}$ is $k$-enabled immediately after
      $\actsb_1$, in which case we have a contradiction with the fact
      that $S$ is reduced $\OBI{k}$, 
          \item  $\PSEND{sr}{b}$ is $k$-enabled strictly after $\actsb_1$
      and strictly before $\PSEND{st}{c}$, then we have a
      contradiction with the fact that $S$ is reduced $\OBI{k}$, or
          \item $\PSEND{sr}{b}$ is not $k$-enabled along $\actsb_0$, which
      is also a contradiction with the fact that $S$ is reduced
      $\OBI{k}$.
            \qedhere
    \end{itemize}
  \end{itemize}
                                \end{proof}

Given $\acts = \action_1 \cdots \action_n \in \ASetC$, we write
$\subj{\acts}$ for the set
$\bigcup_{1\leq i \leq n} \{\subj{\action_i}\}$.

\begin{restatable}{lemma}{lemindeppath}\label{lem:indep-path}
  If $s \kTRANSS{\acts} t$ and $s \kTRANSS{\actsb}t'$ and
  $\subj{\acts} \cap \subj{\actsb} = \emptyset$, then there is $s'$ such that
  $t \kTRANSS{\actsb}s'$ and $t' \kTRANSS{\acts} s'$.
\end{restatable}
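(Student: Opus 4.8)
The statement is a standard \emph{independence diamond}: since $\acts$ and $\actsb$ act on disjoint sets of subjects, the two $k$-bounded executions out of $s$ commute to a common successor. The plan is to reduce this block-versus-block commutation to a single-step diamond and then lift it by induction, tracking $k$-boundedness throughout.

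First I would unpack the hypothesis $\subj{\acts} \cap \subj{\actsb} = \emptyset$ at the level of channels. For any channel $\ptp{pq}$, a send $\PSEND{pq}{a}$ has subject $\p$ and a receive $\PRECEIVE{pq}{a}$ has subject $\q$, with $\p \neq \q$. Hence no channel can be \emph{sent to} by both $\acts$ and $\actsb$ (they would share the subject $\p$), nor \emph{received from} by both (they would share $\q$); the only way a single channel is touched by both is that one of the two executions sends on it while the other receives on it. This structural fact is what makes the actions genuinely independent.

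Second, I would establish the base diamond for two single actions $\action, \action'$ with $\subj{\action} \neq \subj{\action'}$, both $k$-bounded-enabled at $s$: there is $s'$ with $s \kTRANSS{\action} t \kTRANSS{\action'} s'$ and $s \kTRANSS{\action'} t' \kTRANSS{\action} s'$. The local-state updates touch the two distinct subjects and so are independent. For the queues there are two cases: the actions use different channels (trivial), or the same channel, in which case by the first step one is a send appending $a$ at the tail and the other a receive consuming from the head; the resulting content is the same word in either order, and the message read by the receive was already at the head in $s$, so the receive stays enabled. For $k$-boundedness I would argue that the maximal length reached on each channel along either commuted path never exceeds a length already witnessed as $\leq k$ on the corresponding original path (a send only increases the length witnessed at $t$, a receive only decreases the length witnessed at $t'$), so every intermediate configuration stays $k$-bounded.

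Third, I would lift the base diamond to the full statement by induction: first that a single action $\action$ commutes past a whole block $\actsb$ (induction on $\lvert \actsb \rvert$, pushing $\action$ through one step at a time and reusing $\subj{\action} \notin \subj{\actsb}$), and then induction on $\lvert \acts \rvert$, peeling off the leading action of $\acts$ and applying the single-action-versus-block result. This yields $t \kTRANSS{\actsb} s'$ and $t' \kTRANSS{\acts} s'$ landing in a common $s'$. The coincidence of endpoints can also be seen directly: disjointness gives $\onpeer{p}{\acts \concat \actsb} = \onpeer{p}{\actsb \concat \acts}$ for every $\p$ (one projection is always $\emptyw$), hence $\eqpeer{\acts \concat \actsb}{\actsb \concat \acts}$, both words are valid by Lemma~\ref{lem:valid-word}, and Lemma~\ref{lem:eqpeer-imp-same-state} forces the same target. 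The main obstacle I anticipate is not the commutation itself but the $k$-boundedness bookkeeping together with the FIFO argument for a shared channel: I must verify that after firing $\acts$ first, the receives of $\actsb$ still find exactly the messages they consumed from $s$ at the head of the queue (the sends of $\acts$ only lengthen the tail), and symmetrically that replaying $\acts$ from $t'$ never pushes any queue above $k$.
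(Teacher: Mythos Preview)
Your proposal is correct and spells out in detail what the paper dismisses in one line (``Straightforward: the executions are independent from one another''). In particular, your channel analysis---that disjoint subjects force any shared channel to be written by one execution and read by the other---together with the $k$-boundedness bookkeeping is exactly the content the paper elides; the paper's proof gives no further argument, so your approach is the same, only fully unpacked.
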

\begin{proof}
  Straightforward: the executions are independent from one another.
\end{proof}

\lemnewkclosedset*
\begin{proof}
      Let us pose $\subj{\action} = \p$.

  \proofsub{1}
      We first observe that $\hat\paset$ validates condition (1) of
  Definition~\ref{def:kclosed}, i.e.,
  $\forall \acts \in \hat\paset \qst \exists s'' \in \RS_k(S) \qst s'
  \bTRANSS{\acts}{k} s''$, by definition of $\hat\paset$.
    We then show that $\hat\paset$ validates the second condition of
  $k$-closure.
      There are two cases depending on whether the execution is in
  $\hat\paset_1$ or $\hat\paset_2$.
    \begin{enumerate}
  \item Take
    $\acts = \acts_0 \concat \PSEND{sr}{a} \concat \acts_1 \in
    \hat\paset_1$,
    then by definition of $\hat\paset_1$, we have $\p \neq \s$ and
    $\acts \in \paset$.
        Hence, posing $s \TRANSS{\acts_0} \csconf q w$, we have that for all
    $(q_\s, \action', q'_\s) \in \delta_\s$ there is
    $\acts_0 \concat \acts_1 \concat \action' \concat \acts_2 \in \paset$,
    with $\subj{\action'} \notin \acts_1$,
    since
    $\paset$ is $k$-closed by assumption.
    \begin{enumerate}
    \item If $\p \notin \acts_2$, then
      $\acts_0 \concat \acts_1 \concat \action' \concat \acts_2 \in
      \hat\paset_1$, as required.
          \item If $\p \in \acts_2$, then there are two cases depending on
      whether $\action$ is a send or a receive action.
      \begin{itemize}
      \item If $\action = \PRECEIVE{qp}{a}$, then we must have
        $\acts_2 = \acts_3 \concat \PRECEIVE{qp}{a} \concat \acts_4$
        with $\p \notin \acts_3$, since $S$ is $\IBI{k}$ (only one
        receive action can be enabled at $\p$).
                Thus
        $\acts_0 \concat \acts_1 \concat \action' \concat \acts_3
        \concat \acts_4 \in \hat\paset_2$, as required.
                      \item If $\action = \PSEND{pq}{a}$, then we must have $\acts_2 =
        \acts_3 \concat \PSEND{pt}{b} \concat \acts_4$ with $\p
        \notin \acts_3$. 
                Since $\paset$ is $k$-closed, we also have
        $\acts_0 \concat \acts_1 \concat \action' \concat \acts_3
        \concat \acts_4 \concat \PSEND{pq}{a} \concat \acts_5 \in \paset$,
      
        for some $\acts_4, \acts_5$ s.t.\ $\p \notin \acts_4$.
                Thus,
        $\acts_0 \concat \acts_1 \concat \action' \concat \acts_3
        \concat \acts_4 \concat \acts_5 \in \hat\paset_2$,
        as required.
              \end{itemize}
    \end{enumerate}

    \item Take $\acts = \acts_0 \concat \PSEND{sr}{a} \concat \acts_1 \in
    \hat\paset_2$. There are two cases:
    \begin{enumerate}
    \item If $\acts_0 = \acts_2 \concat \acts_3$ and
      $\acts_2 \concat \action \concat \acts_3 \concat \PSEND{sr}{a}
      \concat \acts_1 \in \paset$,
      then posing
      $s \TRANSS{\acts_2 \concat \action \concat \acts_3} \csconf q
      w$,
      we have that for all $(q_\s, \action', q'_\s) \in \delta_\s$
      there is
      $\acts_2 \concat \action \concat \acts_3\concat \acts_5 \concat
      \action' \concat \acts_4 \in \paset$
      (for some $\acts_4$ and $\acts_5$ s.t.\ $\s \notin \acts_5$)
      since $\paset$ is $k$-closed by assumption.
            Thus,
      $\acts_2 \concat \acts_3\concat \acts_5 \concat \action' \concat
      \acts_4 \in \hat\paset_2$, as required.

    \item If $\acts_1 = \acts_2 \concat \acts_3$ and
      $\acts_0 \concat \PSEND{sr}{a} \concat \acts_2 \concat \action
      \concat \acts_3 \in \paset$, then
            $\p \notin \acts_0 \concat \PSEND{sr}{a} \concat \acts_2$ (hence
      $\p \neq \s$) and,
            posing $s \TRANSS{\acts_0} \csconf q w$, we have that for all
      $(q_\s, \action', q'_\s) \in \delta_\s$ there is
      $\acts_0 \concat \acts_8 \concat \action' \concat \acts_4 \in
      \paset$
      (for some $\acts_4$ and $\acts_8$ s.t.\ $\s \notin \acts_8$)
      since $\paset$ is $k$-closed by assumption.
      \begin{itemize}
      \item if $\p \notin \acts_4$, then
        $\acts_0 \concat \acts_8 \concat \action' \concat \acts_4 \in
        \hat\paset_1$, as required.
      \item if $\p \in \acts_4$, there are two cases depending on
        whether $\action$ is a receive or send action.
        \begin{itemize}
        \item if $\action$ is a receive action, then we must have
          $\acts_4 = \acts_5 \concat \action \concat \acts_6$ with
          $\p \notin \acts_5$, thus
          $\acts_0 \concat \acts_8 \concat \action' \concat \acts_5
          \concat \acts_6 \in \hat\paset_2$,
          as required, since $S$ is $\IBI{k}$ (only one receive action
          can be enabled at $\p$)
                            \item if $\action$ is a send action, pose $\action =
          \PSEND{pq}{c}$, then we must have
             $\acts_4 = \acts_5 \concat \PSEND{pt}{b} \concat \acts_6$ with
             $\p \notin \acts_5$.
                          Since $\paset$ is $k$-closed, we must also have
             $\acts_0 \concat \action' \concat \acts_5 \concat \acts_9
             \concat \PSEND{pq}{c} \concat \acts_7 \in \paset$
             (for some $\acts_7$ and $\acts_9$ s.t.\
             $\p \notin \acts_9$).
                          Thus,
             $\acts_0 \concat \action' \concat \acts_5 \concat \acts_9
             \concat \acts_7 \in \hat\paset_2$, as required.
                                  \end{itemize}
      \end{itemize}
    \end{enumerate}
  \end{enumerate}

  \proofsub{2}
  Take $\actsb \in \hat\paset$, by definition of $\hat\paset$, there
  are two cases:
  \begin{enumerate}
  \item If $\actsb \in \hat\paset_1$, then $\actsb =
    \acts \in \paset$ and since $\subj{\action} \notin \actsb$,
        $s \kTRANSS{\actsb} t \kTRANSS{\action} t'$ and
        $s' \kTRANSS{\actsb} t'$ by Lemma~\ref{lem:indep-path}. In picture, we have
  \[
    \begin{tikzpicture}[node distance=0.7cm and 0.7cm]
      \node (s1) {$s$};
      \node[right=of s1] (s2) {$s'$};
      \node[below=of s1] (s3) {$t$};
      \node at (s2|-s3) (s4) {$t'$};
            \path[->] (s1) edge [above] node {$\action$} (s2);
      \path[->] (s3) edge [above] node {$\action$} (s4);
            \path[->] (s1) edge [left] node {$\actsb = \acts$} (s3);
      \path[->] (s2) edge [right] node {$\actsb = \acts$} (s4);
    \end{tikzpicture}
      \]
  Finally, we have $\eqpeer{\acts \concat \action}{\action \concat
    \actsb}$ since $\subj{\action} \notin \actsb$.
  \item If $\actsb \in \hat\paset_2$, then there is
  $\acts = \acts_0 \concat \action \concat \acts_1 \in \paset$ s.t.\
  $\actsb = \acts_0 \concat \acts_1$ and $\subj{\action} \notin
  \acts_0$.
    Thus, by Lemma~\ref{lem:indep-path} we have
  $s \kTRANSS{\acts_0 \concat \action \concat \acts_1 } t$ and
  $s \kTRANSS{\action} s' \kTRANSS{\acts_0 \concat \acts_1} t$, i.e.,
    \begin{equation}
        \begin{tikzpicture}[node distance=0.7cm and 0.7cm]
      \node (s1) {$s$};
      \node[right=of s1] (s2) {$s'$};
      \node[below=of s1] (s3) {$.$};
      \node[below=of s3] (s5) {$.$};
      \node[below=of s5] (s7) {$t$};
            \path[->] (s1) edge [above] node {$\action$} (s2);
            \path[->] (s1) edge [left] node {$\acts_0$} (s3);
      \path[->] (s2) edge [bend left, right] node {$\acts_0$} (s5);
      \path[->] (s3) edge [left] node {$\action$} (s5);
      \path[->] (s5) edge [left] node {$\acts_1$} (s7);
    \end{tikzpicture}
  \end{equation}
  Finally, we have
  $\eqpeer{\acts_0 \concat \action \concat \acts_1}{\action \concat \acts_0 \concat \acts_1}$
  since $\subj{\action} \notin \acts_0$.
  \end{enumerate}

  \proofsub{3}
  The ($\Rightarrow$) direction is trivial from the definition of
  $\hat\paset$.
    Let us show that $\hat\paset = \emptyset \implies \paset=\emptyset$
  by contradiction.
    Assume  $\hat\paset = \emptyset$ and  $\paset \neq \emptyset$.
    This implies that for all $\acts \in \paset \qst \p \in \acts$.
    Pose $\acts = \acts_0 \concat \hat\action \concat \acts_1$, with
  $\p \notin \acts_0$, $\action \neq \hat\action$.
  \begin{itemize}
  \item If $\action$ is a receive action, then $\hat\action$ is also a
    receive action ($\p \notin \acts_0$), thus
    $\action \neq \hat\action$ contradicts the assumptions that
    $s\kTRANSS{\action}$ and $\p \notin \acts_0$.
      \item If $\action$ is a send action, then $\hat\action$ is also a
    send action ($\p \notin \acts_0$), thus it is a contradiction with
    the fact that $\kclosed{\paset}{s}$.
    \qedhere
  \end{itemize}
\end{proof}

\lemclosedsetpaths*
\begin{proof}
  By replicated application of Lemma~\ref{lem:new-kclosed-set}
  (parts~\ref{it:new-kclosed-set-part-a}
  and~\ref{it:new-kclosed-set-part-c}), for all $1 \leq i
  \leq n$, there is $\emptyset \neq \paset_i \subseteq \ASetC$ such that
  $\kclosed{\paset_i}{s_i}$.
      In addition, by Lemma~\ref{lem:new-kclosed-set}
  (part~\ref{it:new-kclosed-set-part-b}), for all $1 \leq i < n$,
  and for all $\acts_{i+1} \in \paset_{i+1}$, there is $\acts_i \in
  \paset_i$ such that either
  \begin{itemize}
  \item $s_{i+1} \kTRANSS{\acts_{i+1}} t_{i+1} $, and
    $s_i \kTRANSS{\acts_i} t_i$, with $t_i = t_{i+1}$, or
  \item $s_{i+1} \kTRANSS{\acts_{i+1}} t_{i+1} $, $s_i \kTRANSS{\acts_i} t_i$, and $t_i \kTRANSS{\action_i} t_{i+1}$.
  \end{itemize}

  The rest of the proof is by induction on $n$.
  
  \proofsub{Base case} 
    If $n=2$, then the result follows directly by instantiating
  Lemma~\ref{lem:new-kclosed-set} with $s_1 = s$, $s_n = s'$, and
  $\action_1 =\action$, in particular, we have
    $\actsb = \action_1$ or $\actsb = \emptyw$ (hence
  $\lvert \actsb \rvert < n$).

  \proofsub{Inductive case}
    Assume the result holds for $n=i$ (i.e.,
  $\eqpeer{\acts_1 \concat \actsb}{\action_1 \cdots \action_{i-1} \cdot
    \acts_i}$) and let us show that it holds for $n=i{+}1$.
    We have the following situation:
      \[
    \begin{tikzpicture}[node distance=0.8cm and 1cm]
      \node (s1) {$s_i$};
      \node[gray,right=of s1] (s2) {$s_{i+1}$};
      \node[below=of s1] (s3) {$t_i$};
      \node[gray] at (s2|-s3) (s4) {$t_{i+1}$};
            \path[gray,->] (s1) edge [above] node {$\action_i$} (s2);
      \path[gray,->] (s3) edge [above] node {$\actsb'$} (s4);
            \path[->] (s1) edge [right] node  {$\acts_i$} (s3);
      \path[gray,->] (s2) edge [right] node {$\acts_{i+1}$} (s4);
            \node[ left=of s1] (sm1) {};
      \node[] at (sm1|-s3)  (tm1) {};
                              \node[left=of sm1] (sm2) {$s_1$};
      \node[] at (sm2|-tm1)  (tm2) {$t_1$};
            \path[dashed,->] (sm2) edge [above] node {$\action_1 \cdots \action_{i-1}$} (s1);
      \path[dashed,->] (tm2)  edge [above] node {$\actsb$} (s3);
      \path[->] (sm2) edge [right] node {$\acts_1$} (tm2);
    \end{tikzpicture}
      \]

  By Lemma~\ref{lem:new-kclosed-set}, we have
  either
  \begin{enumerate}
  \item \label{it:empty-actprime}
    $t_i = t_{i+1}$, $\actsb' = \emptyw$, and
    $\eqpeer{\acts_i \concat \emptyw}{\action_i \concat \acts_{i+1}}$.
      \item \label{it:action-actprime}
    $\actsb' = \action_i$, $\acts_i = \acts_{i+1}$ 
    and $\eqpeer{\acts_i \concat \action_i}{\action_i \concat \acts_{i}}$.
  \end{enumerate}

  We have to show that
   \[
   \eqpeer{\acts_1 \concat \actsb \concat \actsb'}{\action_1 \cdots \action_{i-1} \concat \action_i \concat
     \acts_{i+1}}
  \]

  \begin{itemize}
  \item Assume case~\eqref{it:empty-actprime} holds.
    \[
    \begin{array}{ccl@{\quad}r}
      {\acts_1 \concat \actsb} & \eqpeerop & {\action_1 \cdots \action_{i-1} \cdot
                                              \acts_i} & \text{by induction hypothesis}                                          
      \\
                                        & \eqpeerop & {\action_1 \cdots \action_{i-1} \cdot
                                              \acts' \concat \action_i \concat \acts'' } 
                                                       & \text{posing } \acts_i =  \acts' \concat \action_i \concat \acts''
                                                          \text{ with } \subj{\action_i} \notin \acts'
      \\
                                        & \eqpeerop & {\action_1 \cdots \action_{i-1} \cdot
                                              \action_i \concat \acts' \concat \acts'' }
                                                       & \text{since } \subj{\action_i} \notin \acts'
      \\
                                        & \eqpeerop & {\action_1 \cdots \action_{i-1} \cdot
                                              \action_i \concat \acts_{i+1}}
                                                       & \text{by Lemma~\ref{lem:new-kclosed-set} }
    \end{array}
    \]
        Finally, since $\actsb' = \emptyw$ in this case, we have
    $\acts_1 \concat \actsb \concat \actsb' = \acts_1 \concat \actsb$,
    hence
    \[
    \eqpeer{\acts_1 \concat \actsb}{\action_1 \cdots \action_{i-1} \cdot
      \action_i \concat \acts_{i+1}}
    \]
    as required.
      \item  Assume case~\eqref{it:action-actprime} holds.
    \[
    \begin{array}{ccl@{\quad}r}
      {\acts_1 \concat \actsb} & \eqpeerop & {\action_1 \cdots \action_{i-1} \cdot
                                              \acts_i} & \text{by induction hypothesis}     
      \\
      {\acts_1 \concat \actsb \concat \action_i } & \eqpeerop & {\action_1 \cdots \action_{i-1} \cdot
                                                               \acts_i \concat \action_i}
                                                       & \text{by Lemma~\ref{lem:eqpeer-imp-same-state}}
      \\
                                        & \eqpeerop & {\action_1 \cdots \action_{i-1} \concat
                                            \action_i \concat \acts_i}
                                                       & \text{by  case~\eqref{it:action-actprime}}
      \\
                                        &  \eqpeerop & {\action_1 \cdots \action_{i-1} \concat
                                            \action_i \concat \acts_{i+1}}
                                                       & \text{by  case~\eqref{it:action-actprime}}                            
      \\
      {\acts_1 \concat \actsb \concat \actsb' } & \eqpeerop &    {\action_1 \cdots \action_{i-1} \concat
                                                             \action_i \concat \acts_{i+1}}
                                                       & \actsb' = \action_i
    \end{array}  
    \]  
  \end{itemize}
    In both cases, we have $\lvert \actsb \concat \actsb' \rvert \leq i$
  since $\lvert \actsb \rvert < i$ by induction hypothesis and
  $\acts = \emptyw$ (resp.\ $\actsb' = \action_i$) by
  case~\eqref{it:empty-actprime} (resp.\
  case~\eqref{it:action-actprime}).
\end{proof}

\lemexistpatheqpeer*
\begin{proof}
            We show the result by induction on the length of $\acts$.

  \proofsub{Base case} 
  If $\acts = \emptyw$, then the result holds trivially with 
  $s = s' = t = t' \in \RS_k(S)$.

  \proofsub{Inductive case}
    Assume that for all  $s \in RS_k(S)$ and $s' \in \RS_{k{+}1}(S)$ 
  such that $s \bTRANSS{\acts}{k+1} s'$, with
  $\lvert \acts \rvert < n$, 
    there is $t \in \RS_k(S)$ and $\actsb, \, \actsb' \in \ASetC$, such
  that $s \kTRANSS{\actsb} t$, $s' \bTRANSS{\actsb'}{k+1} t$,
  and $\eqpeer{\actsb}{\acts \concat \actsb'}$.

  Take $s \in RS_k(S)$ and $s' \in \RS_{k{+}1}(S)$ such that
  $s \bTRANSS{\acts}{k+1} s'$, with
    $ \acts = \action_1
  \cdots \action_n$ (i.e., $\lvert \acts \rvert= n)$, assuming that 
  \[
  s = s_1
  \bTRANSS{\action_1}{k+1} s_2 \bTRANSS{\action_2}{k+1} \cdots
  \bTRANSS{\action_n}{k+1} s_{n+1} = s'
  \]
          There are two cases depending on the direction of $\action_1$.
  \begin{enumerate}
  \item If $\action_1 = \PRECEIVE{pq}{a}$, then $s_2 \in \RS_k(S)$ since
    $s_1 \in \RS_k(S)$.
        Thus, by induction hypothesis, there is $t \in \RS_k(S)$ and
    $\actsb, \, \actsb' \in \ASetC$, such that $s_2 \kTRANSS{\actsb} t$
    and $s' \bTRANSS{\actsb'}{k+1} t$ and
        $\eqpeer{\actsb}{\action_2 \cdots \action_n \concat \actsb'}$.
        Hence,
    $\eqpeer{\action_1 \concat \actsb}{\action_1 \concat \action_2
      \cdots \action_n \concat \actsb'}$,
    as required since $s_1 \kTRANSS{\action_1} s_2$.
      \item If $\action_1 = \PSEND{pq}{a}$, then by
    Lemma~\ref{lem:k-mc-kclosed}, the set
    $\paset_1 = \{ \actsb \st s
    \kTRANSS{\actsb}\kTRANSS{\PSEND{pq}{a}} \land \p \notin \acts \}$
    is non-empty and $\kclosed{\paset_1}{s}$.

    Therefore, by Lemma~\ref{lem:kclosed-pclosed}, $\pclosed{\paset_1}{s}$
    and by Lemma~\ref{lem:new-kclosed-set}, the set
    \[
    \paset_2 = \left\{
      \acts \st
      \acts \in \paset_1 
      \land 
      \subj{\action_1}\notin \acts
    \right\} 
    \cup
    \left\{
      \acts_1 \concat \acts_2 \st
      \acts_1 \concat \action_1 \concat \acts_2 \in \paset_1 
      \land
      \subj{\action_1} \notin \acts_1
    \right\}
    \]
    is $k{+1}$-closed for $s_2$
    and
    $\paset_1 = \paset_2 = \left\{
      \acts \st
      \acts \in \paset_1 
      \land 
      \subj{\action_1}\notin \acts
    \right\}$ by definition of $\paset_1$.

    Hence, since $S$ is $\BA{k{+}1}$ by assumption, we can apply
    Lemma~\ref{lem:closed-set-paths} and obtain that
        there is $\actsb_2 \in \paset_2$ and $\hat\acts', \, \actsb_{n+1} \in \ASetC$ such
    that 
    \[
    s_2 \bTRANSS{\actsb_2}{k+1} t_2 \bTRANSS{\hat\acts'}{k+1} t_{n+1} 
        \quad \text{and} \quad
        s' = s_{n+1} \kTRANSS{\actsb_{n+1}} t_{n+1}
    \]
    for some $t_2, t_{n+1} \in \RS_{k+1}(S)$ with
        $\lvert \hat\acts \rvert < n$ and
        \[
    \eqpeer{\actsb_2 \concat \hat\acts'}{\action_2 \cdots \action_{n} \concat \actsb_{n+1}}
    \]
        
    We have $t_2 \bTRANSS{\hat\acts'}{k+1} t_{n+1}$, with
    $\lvert {\hat\acts'} \rvert < n$, 
    with $t_2 \in \RS_k(S)$,
    thus by induction hypothesis,
    there is $t \in \RS_k(S)$ such that
    $t_{n+1} \bTRANSS{\hat\actsb'}{k+1} t$,
    $s_2 \bTRANSS{\hat\actsb}{k} t$ and
    $\eqpeer{\hat\actsb}{\hat\acts' \concat \hat\actsb'}$, as pictured
    below (where {\color{red} red} parts are in $\kTRANSS{}$ and the
    rest in $\bTRANSS{}{k+1}$).
            \[
      \begin{tikzpicture}[baseline=(current bounding box.center)]
        \node (s1) {$s_2$};
        \node[right=of s1,xshift=1cm] (s2) {$s_{n+1} = s'$};
        \node[below=of s1] (s3) {$t_2$};
        \node at (s2|-s3) (s4) {$t_{n+1}$};
        \node [below=of s4] (ft) {$t$};
                \path[->] (s1) edge [above] node {$\action_2 \cdots \action_n$} (s2);
        \path[->] (s3) edge [above] node {$\hat\acts'$} (s4);
                \path[->] (s1) edge [right] node {$\actsb_2$} (s3);
        \path[->] (s2) edge [right] node {$\actsb_{n+1}$} (s4);
                \node[left=of s1] (sm1) {$s = s_{1}$};
        \node at (sm1|-s3)  (tm1) {$t_{1}$};
        \path[->] (sm1) edge [above] node {$\action_1$} (s1);
        \path[red,->] (sm1) edge [left] node {$\actsb_1 = \actsb_2$} (tm1);
        \path[red,->] (tm1) edge [above] node {$\action_1$} (s3);
                \path[red,->,dashed] (s3) edge [below] node {$\hat\actsb$} (ft);
        \path[->,dashed] (s4) edge [right] node {$\hat\actsb'$} (ft);
      \end{tikzpicture}
          \]

    We have to show that
    \[
    \eqpeer
    {\actsb_1 \concat \action_1 \concat \hat\actsb}
    {\action_1 \cdots \action_n \concat \actsb_{n+1} \concat \hat\actsb'}
    \]

    By Lemma~\ref{lem:closed-set-paths},
    \[
    \eqpeer{\actsb_1 \concat \hat\acts'}{\action_2 \cdots \action_n \concat \actsb_{n+1}}
    \]
           Prefixing each execution with $\action_1$, we have:
        \[
    \eqpeer{\action_1 \concat \actsb_1 \concat \hat\acts'}
    {\action_1 \concat \action_2 \cdots \action_n \concat \actsb_{n+1}}
    \]
    and since $\subj{\action_1} \notin \actsb_1$, we have:
    \[
    \eqpeer{\actsb_1 \concat \action_1 \concat \hat\acts'}
    {\action_1 \concat \action_2 \cdots \action_n \concat \actsb_{n+1}}
    \]
        Adding $\hat\actsb'$ on each side of the equation, we obtain:
    \[
    \eqpeer{\actsb_1 \concat \action_1 \concat \hat\acts' \concat \hat\actsb'}
    {\action_1 \concat \action_2 \cdots \action_n \concat \actsb_{n+1} \concat \hat\actsb'}
    \]
        By induction hypothesis, we have
            $
    \eqpeer{\hat\actsb}{\hat\acts' \concat \hat\actsb'}
    $.
        Hence, we obtain
    \[
    \eqpeer{\actsb_1 \concat \action_1 \concat \hat\actsb}
    {\action_1  \cdots \action_n \concat \actsb_{n+1} \concat \hat\actsb'}
    \]
    as required.
        \qedhere
  \end{enumerate}
\end{proof}

\begin{restatable}{lemma}{lemexistpathtok}\label{lem:exist-path-to-k}
  If $S$ is reduced $\OBI{k}$, $\IBI{k{+}1}$, and $k$-exhaustive, then
  for all $s \in \RS_{k+1}(S)$, there is $t \in \RS_k(S)$ such that
  $s \bTRANSR{k+1} t$.
  \end{restatable}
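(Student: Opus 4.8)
The plan is to derive this statement as an immediate corollary of Lemma~\ref{lem:exist-path-eqpeer}, by instantiating its source configuration with the initial configuration $s_0$. The content of Lemma~\ref{lem:exist-path-to-k} is precisely that every $(k{+}1)$-reachable configuration can be driven, via a $(k{+}1)$-bounded execution, back into the $k$-reachability set; Lemma~\ref{lem:exist-path-eqpeer} already furnishes exactly such a path, provided we supply a $k$-reachable configuration from which $s$ is $(k{+}1)$-reachable. The natural and only candidate for that source is $s_0$ itself.

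Concretely, first I would observe that $s_0 \in \RS_k(S)$: the initial configuration is reached from $s_0$ by the empty execution $\emptyw$, which is trivially $k$-bounded (all queues are empty), so $s_0 \kTRANSS{\emptyw} s_0$. Next, take an arbitrary $s \in \RS_{k+1}(S)$. By the definition of the $(k{+}1)$-reachability set, there is an execution $\acts \in \ASetC$ with $s_0 \bTRANSS{\acts}{k+1} s$. Now I would apply Lemma~\ref{lem:exist-path-eqpeer} with the source $s_0 \in \RS_k(S)$, the target $s \in \RS_{k+1}(S)$, and this witnessing execution $\acts$. The lemma yields some $t \in \RS_k(S)$ together with $\actsb, \actsb' \in \ASetC$ such that $s \bTRANSS{\actsb'}{k+1} t$ (and additionally $s_0 \kTRANSS{\actsb} t$ with $\eqpeer{\actsb}{\acts \concat \actsb'}$, which is not needed here). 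Since $s \bTRANSS{\actsb'}{k+1} t$ implies $s \bTRANSR{k+1} t$, this $t$ is the required configuration, completing the proof.

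Given this reduction, there is no genuine obstacle remaining: all the real work — maintaining $k$-closed sets, rerouting executions via Lemma~\ref{lem:closed-set-paths}, and the induction on execution length — is already discharged inside Lemma~\ref{lem:exist-path-eqpeer}, and the hypotheses reduced $\OBI{k}$, $\IBI{k{+}1}$, and $k$-exhaustivity are exactly those required by that lemma. The only point that merits explicit mention is the triviality $s_0 \in \RS_k(S)$, which is what lets us feed $s_0$ as the ``$k$-reachable source'' in the application; without that observation the reduction would not typecheck against the statement of Lemma~\ref{lem:exist-path-eqpeer}.
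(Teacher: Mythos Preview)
Your proposal is correct and is exactly the paper's approach: the paper's proof reads, in its entirety, ``Direct consequence of Lemma~\ref{lem:exist-path-eqpeer}.'' Your instantiation with $s_0$ as the $k$-reachable source is precisely the intended unpacking of that one-liner.
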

\begin{proof}
  Direct consequence of Lemma~\ref{lem:exist-path-eqpeer}.
\end{proof}

\begin{restatable}{lemma}{lemkexhaukplusexhau}\label{lem:k-exhau-kplus-exhau}
  If $S$ is reduced $\OBI{k}$, $\IBI{(k{+}1)}$, and $k$-exhaustive,
  then it is reduced $\OBI{(k{+}1)}$ and $(k{+}1)$-exhaustive.
\end{restatable}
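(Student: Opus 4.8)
The plan is to read the three hypotheses as exactly the premises of Lemma~\ref{lem:exist-path-eqpeer} and its corollary Lemma~\ref{lem:exist-path-to-k}, which supply the essential tool: from any configuration in $\RS_{k{+}1}(S)$ one can reach a configuration in $\RS_{k}(S)$ by a $(k{+}1)$-bounded execution that is projection-equivalent ($\eqpeerop$) to a genuinely $k$-bounded one. I would establish the two conclusions separately, in each case lifting the corresponding $k$-level property back up through such a descent.

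For $(k{+}1)$-exhaustivity, fix $s = \csconf{q}{w} \in \RS_{k{+}1}(S)$ with $q_\p$ sending and a send transition $\ell = (q_\p, \PSEND{pr}{b}, q'_\p) \in \delta_\p$. If $\ell$ is already enabled at $s$ (queue $\ptp{pr}$ is not full, i.e.\ $\lvert \word_{\ptp{pr}} \rvert \leq k$), then $\acts = \emptyw$ witnesses the claim. Otherwise, writing $s_0 \bTRANSS{\acts_0}{k+1} s$ and applying Lemma~\ref{lem:exist-path-eqpeer} to $s_0 \in \RS_{k}(S)$ and $s$, I obtain $t \in \RS_{k}(S)$ together with $\actsb, \actsb'$ such that $s \bTRANSS{\actsb'}{k+1} t$, $s_0 \kTRANSS{\actsb} t$, and $\eqpeer{\actsb}{\acts_0 \concat \actsb'}$. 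The projection equivalence forces the local state of $\p$ at $t$ to be the one reached from $q_\p$ along $\onpeer{p}{\actsb'}$. When $\p$ does not move along $\actsb'$ (i.e.\ $\onpeer{p}{\actsb'} = \emptyw$), $\p$ is still at $q_\p$ in $t \in \RS_{k}(S)$, so $k$-exhaustivity (Definition~\ref{def:exhaustive}) yields $\chi$ with $\p \notin \chi$ and $t \kTRANSS{\chi}\kTRANSS{\ell}$; since $k$-bounded executions are $(k{+}1)$-bounded, the composite $\acts = \actsb' \concat \chi$ satisfies $\p \notin \acts$ and $s \bTRANSS{\acts}{k+1}\bTRANSS{\ell}{k+1}$, as required.

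For reduced $\OBI{k{+}1}$ (Definition~\ref{def:reduced-obi}), fix $s$ in the reduced system $\RTS{k{+}1}{S}$ with $q_\p$ sending and one send $(k{+}1)$-enabled; I must show that every $\PSEND{pr}{b}$ from $q_\p$ has its queue $\ptp{pr}$ below the bound at $s$ itself. Here I would again descend via Lemma~\ref{lem:exist-path-eqpeer} to some $t \in \RS_{k}(S)$ and exploit that every message sitting in $\ptp{pr}$ was emitted by $\p$: combining the $k$-bounded witness $\actsb$ with reduced $\OBI{k}$ at the $k$-reachable configuration where $\p$ first reaches $q_\p$ should contradict $\ptp{pr}$ being full, since at bound $k$ a sending state with one send enabled has all its sends enabled. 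Assembling the two parts then gives both conclusions.

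The hard part will be the case, in both arguments, where $\p$ does move along the descending execution $\actsb'$ (so $\onpeer{p}{\actsb'} \neq \emptyw$): then $\p$ leaves $q_\p$ before $t$, blocking a direct appeal to $k$-exhaustivity or reduced $\OBI{k}$ at $t$. I expect to resolve this using the $\eqpeerop$-equivalence delivered by Lemma~\ref{lem:exist-path-eqpeer} together with $\IBI{k{+}1}$ to commute $\p$'s own send-moves past the (non-$\p$) receive actions that drain $\ptp{pr}$, thereby rescheduling $\actsb'$ so that $\p$ stays at $q_\p$ until $\ell$ becomes enabled. Showing that this reordering stays $(k{+}1)$-bounded and that the process is well-founded (rather than looping through successive descents) is the technical crux; the $\IBI{k{+}1}$ hypothesis is exactly what guarantees the receive actions involved are uniquely determined and hence freely commutable with $\p$'s sends.
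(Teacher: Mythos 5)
Your skeleton is the paper's: descend from a $(k{+}1)$-reachable configuration to a $k$-reachable one via Lemma~\ref{lem:exist-path-eqpeer}/Lemma~\ref{lem:exist-path-to-k}, then lift the $k$-level property back up. The case where $\p$ does not move along the descent is handled exactly as the paper does. But the case you correctly flag as the crux --- $\p$ moving along the descending execution before a $k$-reachable configuration is hit --- is left genuinely open in your proposal, and the mechanism you point at to close it is the wrong one. You propose to use $\IBI{k{+}1}$ to commute $\p$'s sends past the receive actions that drain the full channel; but $\IBI{}$ only constrains $\p$'s \emph{receive} behaviour, and a send $\PSEND{pr}{b}$ does not commute with $\rr$'s receives on $\ptp{pr}$ in a bounded semantics precisely because the send is blocked until the queue is drained. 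The well-foundedness worry you raise is a symptom of attacking the case from the wrong side.

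The actual resolution is much shorter and rests on output bound independence, not input bound independence. Since \CSA{} have no mixed states, $q_\p$ being a sending state means the \emph{first} $\p$-action along the descent $\acts$ is a send fired from $q_\p$ itself: write $\acts = \acts_1 \concat \PSEND{pt}{b} \concat \acts_2$ with $\p \notin \acts_1$. If $\PSEND{pt}{b}$ is the target send, $\acts_1$ is already the witness. Otherwise, reduced $\OBI{(k{+}1)}$ --- which is why the paper proves that half of the lemma \emph{first} --- guarantees that at the configuration reached after $\acts_1$ every send from $q_\p$ is $(k{+}1)$-enabled, in particular the target one, so $s \bTRANSS{\acts_1}{k+1}\bTRANSS{\PSEND{pq}{a}}{k+1}$ with $\p \notin \acts_1$ and you are done. (The $\OBI{(k{+}1)}$ half is itself proved by contradiction: a configuration with one send enabled and another blocked is pushed down to a $k$-reachable one where $k$-exhaustivity forces both sends to become enabled along some $k$-bounded execution, yet one must become enabled strictly before the other, contradicting reduced $\OBI{k}$.) Your sketch of that half is too vague to check, and your exhaustivity argument cannot be completed without first securing the bound independence at level $k{+}1$; the dependency between the two conclusions is the structural point your proposal misses.
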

\begin{proof}

  \proofsub{$\OBI{(k{+}1)}$} By contradiction, assume $S$ is reduced
  $\OBI{k}$ but not reduced {$\OBI{(k{+}1)}$}.
    Then, there must be
  $s = \csconf q w \in \RTS{k+1}{S} \setminus \kRTS{S}$ such that there
  is $\p \in \PSet$, 
  $s \bTRANSS{\PSEND{pr}{b}}{k{+}1}$,
  $(q_\p, \PSEND{pr}{b}, q'_\p) \in \delta_\p$, and 
  $\neg(s\kTRANSS{\PSEND{pt}{c}})$. 
    By Lemma~\ref{lem:exist-path-to-k} and
  Lemma~\ref{lem:por-trace-equiv} (2), there is $t' \in \RTS{k}{S}$ such
  that $s \rbTRANSS{\acts}{k+1} t'$.
    There are two cases:
  \begin{enumerate}
  \item If $\PRECEIVE{pt}{x} \notin \actsb_1$, then we have
    $t' \bTRANSS{\PSEND{pr}{b}}{k+1}$, and
    $\neg(t' \bTRANSS{\PSEND{pt}{c}}{k+1})$, hence
    $\neg(t' \bTRANSS{\PSEND{pt}{c}}{k})$. 
        \begin{itemize}
    \item If $t' \bTRANSS{\PSEND{pr}{b}}{k}$ we have a contradiction
      with the fact that $S$ is reduced $\OBI{k}$.
          \item If $\neg(t' \bTRANSS{\PSEND{pr}{b}}{k})$ then both queues are
      full at $t'$.
            Since $S$ is $k$-exhaustive, both actions are enabled along a
      $k$-bounded execution from $t'$. 
            However, one action must be enabled before the other, in any
      execution, contradicting the fact that $S$ is reduced $\OBI{k}$.
    \end{itemize}
  \item If $\PRECEIVE{pt}{x} \in \actsb_1$,
    $t' \bTRANSS{\PSEND{pr}{b}}{k+1}$, and
    $t' \bTRANSS{\PSEND{pt}{c}}{k+1}$. Then the queue $\ptp{pt}$ must
    still be holding $k$ messages at $t'$.
        Hence, $\neg(t' \bTRANSS{\PSEND{pt}{c}}{k})$ and we reason as
    above to reach a contradiction with the fact that $S$ is reduced
    $\OBI{k}$.
  \end{enumerate}
  
  \proofsub{$(k{+}1)$-exhaustive} By
  contradiction, assume $S$ is $k$-exhaustive, but not
  $(k{+}1)$-exhaustive.  Then, there must be
  $s = \csconf q w \in \RS_{k+1} \setminus \RS_k(S)$ such that there
  is $\p \in \PSet$,
    with $q_\p$ a sending state and the following does \emph{not} hold:
  \begin{equation}\label{eq:kpone-snd}
    \forall (q_\p, \PSEND{pq}{a}, q'_\p) \in \delta_\p \qst \exists
    \acts \in \ASetC
    \qst
    s  \bTRANSS{\acts}{k+1}\bTRANSS{\PSEND{pq}{a}}{k+1}  \text{ and } \p \notin
    \acts
  \end{equation}
      By Lemma~\ref{lem:exist-path-to-k}, there is $s' \in \RS_k(S)$
  such that $s \bTRANSS{\acts}{k+1} s'$.

  \begin{enumerate}
  \item   If $\p \notin \acts$, then
    $s' \kTRANSS{\acts'}\kTRANSS{\PSEND{pq}{a}}$ (by $k$\MC\ and
    $s' \in \RS_k(S)$), i.e., a contradiction.
  \item If $\p \in \acts$. There are two cases:
    \begin{enumerate}
    \item $\acts = \acts_1 \concat \PSEND{pq}{a} \concat \acts_2$ with
      $\p \notin \acts_1$, hence $\PSEND{pq}{a}$ can be fired from $s$,
      a contradiction with the assumption that~\eqref{eq:kpone-snd}
      above does not hold.
    \item $\acts = \acts_1 \concat \PSEND{pt}{b} \concat \acts_2$ with
      $\p \notin \acts_1$ and $\msg{a} \neq \msg{b}$.
                  This implies that 
      \[
      s \bTRANSS{\acts_1}{k+1} \bTRANSS{\PSEND{pq}{a}}{k+1}
      \text{ since $S$ is $\infBA$}
      \]
      which contradicts the assumption
      that~\eqref{eq:kpone-snd} does not hold.
                  \qedhere
    \end{enumerate}
      \end{enumerate}
    \end{proof}

\begin{restatable}{lemma}{lemexistpatheqpeergeneral}\label{lem:exist-path-eqpeer-general}
    If $S$ is (reduced) $\OBI{k}$, $\infIBI$, and $k$-exhaustive, then for all
  $s \in \RS(S)$ such that $s_0 \TRANSS{\acts} s$, there is
  $t \in \RS_k(S)$ and $\actsb, \, \acts' \in \ASetC$, such that
  $s_0 \kTRANSS{\actsb} t$, $s \TRANSS{\acts'} t$, and
  ${\actsb} \eqpeerop {\acts \concat \acts'}$.
  \end{restatable}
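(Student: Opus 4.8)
The plan is to reduce this unbounded statement to its bounded counterpart, Lemma~\ref{lem:exist-path-eqpeer}, by first showing that the hypotheses persist at every bound $m \geq k$ and then peeling the bound down one level at a time. First I would observe that, since $S$ is $\infIBI$, it is $\IBI{m}$ for every finite $m$: the set of receive actions enabled at a configuration does not depend on the channel bound (a receive never enlarges a queue), so the unbounded uniqueness condition restricts to each $\RS_m(S) \subseteq \RS(S)$. Combined with the assumption that $S$ is reduced $\OBI{k}$ and $k$-exhaustive, a straightforward induction on $m$ using Lemma~\ref{lem:k-exhau-kplus-exhau} then shows that $S$ is reduced $\OBI{m}$ and $m$-exhaustive for every $m \geq k$ (each step re-establishes the premises of the next). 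Consequently, the conclusion of Lemma~\ref{lem:exist-path-eqpeer} is available at \emph{every} level $m \geq k$: whenever $t \in \RS_m(S)$, $t' \in \RS_{m+1}(S)$, and $t \bTRANSS{\acts}{m+1} t'$, there are $u \in \RS_m(S)$ and $\actsb, \actsb'$ with $t \bTRANSS{\actsb}{m} u$, $t' \bTRANSS{\actsb'}{m+1} u$, and $\eqpeer{\actsb}{\acts \concat \actsb'}$.

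Next I would prove the statement by induction on the least bound $n$ such that $s_0 \bTRANSS{\acts}{n} s$; such an $n$ exists because $\acts$ is finite, so the queue occupancies along it are bounded, and hence $s \in \RS_n(S)$. If $n \leq k$, then $\acts$ is $k$-bounded, so $s \in \RS_k(S)$, and we take $t \defi s$, $\actsb \defi \acts$, and $\acts' \defi \emptyw$. If $n > k$, then $n-1 \geq k$, and since $s_0 \in \RS_{n-1}(S)$, $s \in \RS_n(S)$, and $s_0 \bTRANSS{\acts}{n} s$, the level-$(n{-}1)$ instance of Lemma~\ref{lem:exist-path-eqpeer} yields $t_1 \in \RS_{n-1}(S)$ with $s_0 \bTRANSS{\actsb_1}{n-1} t_1$, $s \bTRANSS{\acts'_1}{n} t_1$, and $\eqpeer{\actsb_1}{\acts \concat \acts'_1}$. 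Applying the induction hypothesis to $t_1$ along the $(n{-}1)$-bounded execution $\actsb_1$ (whose least bound is strictly below $n$) gives $t \in \RS_k(S)$, a $k$-bounded $\actsb$ with $s_0 \kTRANSS{\actsb} t$, and an $\acts'_2$ with $t_1 \TRANSS{\acts'_2} t$ and $\eqpeer{\actsb}{\actsb_1 \concat \acts'_2}$. Setting $\acts' \defi \acts'_1 \concat \acts'_2$, we obtain $s \TRANSS{\acts'} t$ by composing $s \bTRANSS{\acts'_1}{n} t_1 \TRANSS{\acts'_2} t$, and the required equivalence follows because $\eqpeerop$ is preserved under concatenation: $\actsb \eqpeerop \actsb_1 \concat \acts'_2 \eqpeerop \acts \concat \acts'_1 \concat \acts'_2 = \acts \concat \acts'$ (projecting onto each $\p$ and appending the common suffix $\acts'_2$ on both sides).

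The main obstacle I anticipate is the bookkeeping in the first paragraph: I must verify that Lemma~\ref{lem:k-exhau-kplus-exhau} can genuinely be iterated, i.e.\ that each step produces both reduced $\OBI{m}$ and $m$-exhaustivity (so the next step's premises hold), while $\IBI{m+1}$ is supplied free of charge by $\infIBI$. Everything else is routine once the ``for all $m \geq k$'' strengthening and the right-congruence of $\eqpeerop$ are in place; the only remaining care is to keep the boundedness annotations ($\bTRANSS{}{m}$ versus the unbounded $\TRANSS{}$) consistent across the induction, in particular ensuring that the execution handed to the inductive hypothesis has least bound strictly smaller than $n$.
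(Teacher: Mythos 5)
Your proposal is correct and follows essentially the same route as the paper's proof: use Lemma~\ref{lem:k-exhau-kplus-exhau} to propagate reduced $\OBI{m}$ and $m$-exhaustivity to every $m \geq k$ (with $\IBI{m{+}1}$ supplied for free by $\infIBI$), then iterate Lemma~\ref{lem:exist-path-eqpeer} with $s_0$ as the $k$-reachable endpoint, chaining the $\eqpeerop$-equations by congruence. The paper compresses this into ``repeated applications of Lemma~\ref{lem:exist-path-eqpeer} (with $s = s_0$)''; your explicit induction on the least bound of $\acts$ is just the careful spelling-out of that iteration.
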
 
\begin{proof}
  We first note that in this case $\resche$ and $\eqpeerop$ coincide
  since we only consider executions starting from $s_0$, see
  Lemma~\ref{lem:valid-word}; thus we show that
  ${\actsb} \resche {\acts \concat \acts'}$.

      From Lemma~~\ref{lem:k-exhau-kplus-exhau}, we know that $S$ is
  $n$-exhaustive (for any $n \geq k$).
    Hence, we obtain the result by repeated applications of
  Lemma~\ref{lem:exist-path-eqpeer} (with $s = s_0$) using the fact
  that $\resche$ is a congruence.
\end{proof}

\begin{restatable}{lemma}{lemkmckplusmc}\label{lem:k-mc-kplus-mc}
  If $S$ is $\OBI{k}$, $\infIBI$, and $k$\MC, then it is
  $\OBI{k{+}1}$ and $(k{+}1)$\MC.
\end{restatable}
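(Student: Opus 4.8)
The plan is to lift each of the three ingredients of $(k{+}1)$-multiparty compatibility — namely $\OBI{k{+}1}$, $(k{+}1)$-exhaustivity, and $(k{+}1)$-safety — from their counterparts at bound $k$, using as the main engine the fact that every $(k{+}1)$-reachable configuration can be driven back into $\RS_k(S)$ along a $(k{+}1)$-bounded, $\eqpeerop$-preserving path (Lemma~\ref{lem:exist-path-to-k} and Lemma~\ref{lem:exist-path-eqpeer}). First I would record the easy reductions: $\infIBI$ specialises to $\IBI{k{+}1}$ (bounded runs being a special case of unbounded ones), $\OBI{k}$ entails reduced $\OBI{k}$ by Lemma~\ref{lem:obi-imp-reduced-obi}, and $k$\MC\ unfolds into $k$-safety and $k$-exhaustivity. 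Since all the hypotheses of Lemma~\ref{lem:exist-path-to-k} (reduced $\OBI{k}$, $\IBI{k{+}1}$, $k$-exhaustivity) are thereby in place, the path-down machinery is available throughout.

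For $(k{+}1)$-exhaustivity I would simply invoke the exhaustivity half of Lemma~\ref{lem:k-exhau-kplus-exhau}, whose conclusion is already the unreduced Definition~\ref{def:exhaustive} at bound $k{+}1$. For $\OBI{k{+}1}$ the situation is more delicate, because reduced $\OBI{k{+}1}$ does \emph{not} in general imply full $\OBI{k{+}1}$ (Example~\ref{ex:kobi-not-por}), so it is not enough to read off the reduced conclusion of Lemma~\ref{lem:k-exhau-kplus-exhau}. Instead I would re-run that lemma's contradiction argument at the full level: assuming a configuration $s \in \RS_{k{+}1}(S)$ at which some $\PSEND{pr}{b}$ is $(k{+}1)$-enabled while a sibling send $\PSEND{pt}{c}$ from the same sending state is not, I would use Lemma~\ref{lem:exist-path-eqpeer} to transport this violation to a configuration $t \in \RS_k(S)$ with the same $\p$-projection, and then derive a contradiction from the genuine (non-reduced) $\OBI{k}$ hypothesis together with $k$-exhaustivity. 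This is the step I expect to be the hard part: one must track exactly which channel is saturated, argue that the path to $\RS_k(S)$ must fire a reception on the full channel, and rule out the sub-case where the two sends become enabled in strict succession — each of these contradicting full $\OBI{k}$.

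Finally, for $(k{+}1)$-safety I would transfer the two clauses of Definition~\ref{def:k-safety} configuration-by-configuration. Given $s = \csconf{q}{w} \in \RS_{k{+}1}(S)$, Lemma~\ref{lem:exist-path-to-k} yields a $(k{+}1)$-bounded run $s \bTRANSR{k{+}1} t$ with $t \in \RS_k(S)$. For eventual reception (\ER), if the head message $\msg{a}$ of a non-empty channel is consumed somewhere along this run we are done; otherwise $\msg{a}$ survives as the head at $t$, where $k$-safety provides a $k$-bounded reception that concatenates with the run to a $(k{+}1)$-bounded one. For progress (\PG), if $\p$ moves along the run then, since $q_\p$ is receiving and $S$ is $\infIBI$, that move is the required reception; otherwise $q_\p$ persists to $t$ and $k$-safety supplies the reception there. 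Assembling $\OBI{k{+}1}$, $(k{+}1)$-exhaustivity, and $(k{+}1)$-safety yields $(k{+}1)$\MC, completing the argument.
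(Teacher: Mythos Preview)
Your proposal is correct and follows essentially the same route as the paper. The paper's own proof is a two-line reduction: it passes from $\OBI{k}$ to reduced $\OBI{k}$ via Lemma~\ref{lem:obi-imp-reduced-obi} and then invokes Lemma~\ref{lem:reduced-k-mc-kplus-mc}, whose proof in turn cites Lemma~\ref{lem:k-exhau-kplus-exhau} for $\OBI{k{+}1}$ and $(k{+}1)$-exhaustivity, and establishes $(k{+}1)$-safety by the same path-down argument via Lemma~\ref{lem:exist-path-to-k} that you sketch for (\ER) and (\PG). In other words, you have unpacked Lemma~\ref{lem:reduced-k-mc-kplus-mc} inline rather than citing it.

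One point worth noting: your care about full versus reduced $\OBI{k{+}1}$ is well placed. The paper's Lemma~\ref{lem:k-exhau-kplus-exhau} is stated and proved for \emph{reduced} $\OBI{k{+}1}$ (the contradiction argument works inside $\RTS{k+1}{S}$), yet Lemma~\ref{lem:reduced-k-mc-kplus-mc} asserts full $\OBI{k{+}1}$ by citing it. Your plan to re-run the contradiction at the full level, using Lemma~\ref{lem:exist-path-eqpeer} to reach a configuration in $\RS_k(S)$ and then invoking the genuine (non-reduced) $\OBI{k}$ hypothesis, is exactly what is needed to make that step rigorous, and it goes through: with full $\OBI{k}$ available the violation transported to $\RS_k(S)$ lands in the scope of the hypothesis regardless of whether it lies in $\kRTS{S}$.
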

\begin{proof}
  By Lemma~\ref{lem:reduced-k-mc-kplus-mc} and
  Lemma~\ref{lem:obi-imp-reduced-obi}.
\end{proof}

\begin{restatable}{lemma}{lemreducedkmckplusmc}\label{lem:reduced-k-mc-kplus-mc}
  If $S$ is reduced $\OBI{k}$, $\infIBI$, and $k$\MC\, then it is
  $\OBI{k{+}1}$ and $(k{+}1)$\MC.
\end{restatable}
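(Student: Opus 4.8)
The plan is to establish the three conclusions---$\OBI{k{+}1}$, $(k{+}1)$-exhaustivity, and $(k{+}1)$-safety---separately, bootstrapping each from its level-$k$ counterpart by means of the reachability-collapse lemma (Lemma~\ref{lem:exist-path-to-k}), which lets me replace any $(k{+}1)$-reachable configuration by a $k$-reachable one along a $(k{+}1)$-bounded path. First I would record the trivial observation that, since $S$ is $\infIBI$, it is in particular $\IBI{k{+}1}$, so the hypotheses of the auxiliary lemmas are all available. Feeding the data (reduced $\OBI{k}$, $\IBI{k{+}1}$, $k$-exhaustive) into Lemma~\ref{lem:k-exhau-kplus-exhau} immediately yields that $S$ is reduced $\OBI{k{+}1}$ and $(k{+}1)$-exhaustive; the latter already discharges one half of $(k{+}1)$\MC.

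For $(k{+}1)$-safety I would lift $k$-safety configuration by configuration. Given $s = \csconf{q}{w} \in \RS_{k+1}(S)$, Lemma~\ref{lem:exist-path-to-k} supplies $t \in \RS_k(S)$ with $s \bTRANSR{k+1} t$, and $t$ is $k$-bounded. For eventual reception (Definition~\ref{def:k-safety}(\ER)), if $w_{\ptp{pq}} = \msg{a} \cdot w'$ then either the head $\msg{a}$ is consumed somewhere along the witnessed path (and we are done), or no receive on $\ptp{pq}$ is fired, in which case \FIFO\ keeps $\msg{a}$ at the head of $\ptp{pq}$ at $t$; then $k$-safety at $t$ gives $t \kTRANSR \kTRANSS{\PRECEIVE{pq}{a}}$, and since $k$-bounded executions are $(k{+}1)$-bounded we prepend $s \bTRANSR{k+1} t$. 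Progress (\PG) is symmetric: if $q_\p$ is receiving at $s$, then either $\p$ moves along $s \bTRANSR{k+1} t$ (its first move is necessarily the receive enabled at a receiving state) or it does not, in which case $\p$ is still at $q_\p$ in $t$ and $k$-safety at $t$ supplies the receive. Hence $S$ is $(k{+}1)$-safe, and with $(k{+}1)$-exhaustivity it is $(k{+}1)$\MC\ by Definition~\ref{def:compa}. Note this half needs no output-boundedness beyond what Lemma~\ref{lem:exist-path-to-k} already consumes.

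The remaining, and hardest, step is to strengthen reduced $\OBI{k{+}1}$ to \emph{full} $\OBI{k{+}1}$. This is not automatic at a fixed level: Example~\ref{ex:kobi-not-por} is reduced $\OBI{1}$, $\IBI{1}$, and $1$-exhaustive yet not $\OBI{1}$, so the level shift $k \to k{+}1$ together with $k$-exhaustivity must be used essentially. I would argue by contradiction, mirroring the $\OBI{(k{+}1)}$ case of Lemma~\ref{lem:k-exhau-kplus-exhau} but ranging over all of $\RS_{k+1}(S)$ rather than the reduced transition system. Suppose some $s \in \RS_{k+1}(S)$ with $q_\p$ sending has $s \bTRANSS{\PSEND{pr}{b}}{k+1}$ but not $s \bTRANSS{\PSEND{pt}{c}}{k+1}$ for two send transitions of $\p$; then channel $\ptp{pt}$ is full (holds $k{+}1$ messages) while $\ptp{pr}$ is not. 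Applying Lemma~\ref{lem:exist-path-to-k} gives $t' \in \RS_k(S)$ with $s \bTRANSR{k+1} t'$, and draining $\ptp{pt}$ below the bound forces at least one $\PRECEIVE{pt}{x}$ along the way. A case split on whether $\p$ moves before $\ptp{pt}$ is first drained produces, in each branch, a $k$-reachable configuration at which $\p$ is still at $q_\p$ and from which a $\p$-free $k$-bounded execution (furnished by $k$-exhaustivity) enables $\PSEND{pr}{b}$ and $\PSEND{pt}{c}$ one strictly before the other---contradicting reduced $\OBI{k}$.

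I expect this last step to be the main obstacle. Two points require care: one must track the \FIFO\ ordering and precisely when $\p$ is permitted to move so that the configuration exhibiting the asymmetric enabling is genuinely reachable within the bound; and one must route the contradiction through reduced $\OBI{k}$ and $k$-exhaustivity---the only output-independence data we possess---rather than through the full $\OBI{k}$ property, which is \emph{not} assumed. Once full $\OBI{k{+}1}$ is in hand, the stated conclusion $\OBI{k{+}1}$ and $(k{+}1)$\MC\ follows, and via Lemma~\ref{lem:obi-imp-reduced-obi} the reduced form needed to iterate the induction comes for free.
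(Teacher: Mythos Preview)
Your treatment of $(k{+}1)$-exhaustivity (via Lemma~\ref{lem:k-exhau-kplus-exhau}) and of $(k{+}1)$-safety (Lemma~\ref{lem:exist-path-to-k}, then $k$-safety at the $k$-reachable target, with the same case split on whether the relevant receive occurs along the connecting path) is exactly the paper's argument. The only divergence is on $\OBI{k{+}1}$: the paper's proof simply records ``Note that $S$ is $\OBI{(k{+}1)}$ and $k{+}1$-exhaustive by Lemma~\ref{lem:k-exhau-kplus-exhau}'' and does not distinguish reduced from full, whereas you observe that Lemma~\ref{lem:k-exhau-kplus-exhau} literally concludes \emph{reduced} $\OBI{(k{+}1)}$ and you sketch an independent upgrade to the full property. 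In that sense you are being more scrupulous than the paper here; for the downstream use (iterating towards Theorem~\ref{thm:reduced-soundness}) only the reduced form is required, so the paper's elision is harmless for that purpose, but the lemma as stated does claim the full form.

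One remark on your sketch for full $\OBI{k{+}1}$: to contradict \emph{reduced} $\OBI{k}$ you must place the offending configuration in $\kRTS{S}$, not merely in $\RS_k(S)$. The analogous step in the proof of Lemma~\ref{lem:k-exhau-kplus-exhau} achieves this by combining Lemma~\ref{lem:exist-path-to-k} with Lemma~\ref{lem:por-trace-equiv}~(2); your case analysis should invoke the latter as well (or, equivalently, phrase the contradiction against $k$-exhaustivity rather than reduced $\OBI{k}$, as in the second half of that proof).
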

\begin{proof}
  Assume by contradiction, that $S$ is $k$\MC, but not $(k{+}1)$-safe.
  Then, there must be $s  = \csconf q w  \in \RS_{k+1} \setminus
  \RS_k(S)$ such that at least one of the following conditions does
  \emph{not} hold.

  \begin{enumerate}
  \item\label{ite:kpone-rcv} For all $\p\q \in \CSet$, if $w_{\p\q} = \msg{a} \concat w'$, then $s \bTRANSR{k+1}
    \bTRANSS{\PRECEIVE{pq}{a}}{k+1}$.
  \item\label{ite:kpone-progr} For all $\p \in \PSet$, if $q_\p$ is a \emph{receiving} state,
    then $s \bTRANSR{k+1} \bTRANSS{\PRECEIVE{qp}{a}}{k+1}$ for some $\q \in
    \PSet$ and $\msg{a} \in \ASigma$.
  \end{enumerate}

  Note that $S$ is $\OBI{(k{+}1)}$ and $k{+}1$-exhaustive by
  Lemma~\ref{lem:k-exhau-kplus-exhau}.
 
  By Lemma~\ref{lem:exist-path-to-k}, there is $s' \in \RS_k(S)$
  such that $s \bTRANSS{\acts}{k+1} s'$.

  \proofsub{1}
  Assume that Item~\ref{ite:kpone-rcv} above does not hold, i.e.,
  we have $w_{\p\q} = \msg{a} \concat w'$ for some $\p\q \in \CSet$, but
  each path $\acts$ from $s$ does not contain $\PRECEIVE{pq}{a}$.
    Observe that for the first occurrence of $\PRECEIVE{pq}{b}$ in
  $\acts$, we must have $\msg{a} = \msg{b}$ (since
  $w_{\p\q} = \msg{a} \concat w'$), but we cannot have
  $\PRECEIVE{pq}{a} \in \acts$ by contradiction hypothesis.
    This implies that we have $w'_{\p\q} = \msg{a} \concat w' \concat
  w''$ in $s'$, and since $S$ is $k$\MC\ and $s' \in \RS_k(S)$, we must
  have $s' \kTRANSR\kTRANSS{\PRECEIVE{pq}{a}} $.
    Thus, we have $s \bTRANSS{\acts}{k+1} s'
  \kTRANSR\kTRANSS{\PRECEIVE{pq}{a}}$, a contradiction.
    
  \proofsub{2} Assume that Item~\ref{ite:kpone-progr} above does not
  hold, i.e., there is $\p \in \PSet$ such that $q_\p$ is a receiving
  state but for each path $\acts$ from $s$, $\acts$ does not allow
  $\p$ to fire a (receive) action.
    Hence, by contradiction hypothesis we have $\PRECEIVE{qp}{a} \notin
  \acts$ for any $\msg{a}$ and $\q$.
    Hence $\p$ is still in state $q_\q$ in configuration $s'$.  
    Since $S$ is $k$\MC\ and $s' \in \RS_k(S)$, we must
  have $s' \kTRANSR\kTRANSS{\PRECEIVE{qp}{a}} $.
    Thus, we have $s \bTRANSS{\acts}{k+1} s'
  \kTRANSR\kTRANSS{\PRECEIVE{qp}{a}}$, a contradiction.
    \end{proof}

\thmsoundness* 
\begin{proof}
  By Theorem~\ref{thm:reduced-soundness} and
  Lemma~\ref{lem:obi-imp-reduced-obi}.
\end{proof}

\thmreducedsoundness*
\begin{proof}
  Direct consequence of Lemma~\ref{lem:reduced-k-mc-kplus-mc}.
\end{proof}

\begin{restatable}{lemma}{lemsafeksafe}\label{thm:safe-kexh-imp-ksafe}
    Let $S$ be (reduced) $\OBI{k}$ and $\infIBI$.
    If $S$ is safe and $k$-exhaustive, then it is $k$\MC.
\end{restatable}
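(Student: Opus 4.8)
The plan is to observe that $k$-exhaustivity is already one of the hypotheses, so it suffices to establish $k$-safety; I would then verify the two clauses (\ER) and (\PG) of Definition~\ref{def:k-safety} at an arbitrary $s = \csconf{q}{w} \in \RS_k(S)$ using $k$-bounded executions. The driving observation is that ($\infty$-)safety already supplies an \emph{unbounded} witness for each clause at $s$ (since $\RS_k(S) \subseteq \RS(S)$), and the bound-independence hypotheses let me reschedule that witness into a $k$-bounded one. Since the statement allows plain $\OBI{k}$, I would first pass to reduced $\OBI{k}$ via Lemma~\ref{lem:obi-imp-reduced-obi}.

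Next I would record the consequences of the hypotheses. As $S$ is $\infIBI$, it is $\IBI{n}$ for every $n$ (because $\RS_n(S) \subseteq \RS(S)$), and by iterating Lemma~\ref{lem:k-exhau-kplus-exhau} it is reduced $\OBI{n}$ and $n$-exhaustive for all $n \geq k$. These are exactly the hypotheses of Lemma~\ref{lem:exist-path-eqpeer}, which I would apply level-by-level, descending from the maximal bound of a given unbounded execution down to $k$ (composing the resulting peer-equivalences by transitivity of $\eqpeerop$, as in the proof of Lemma~\ref{lem:exist-path-eqpeer-general}). This yields a \emph{from-$s$} variant of Lemma~\ref{lem:exist-path-eqpeer-general}: for every $s \in \RS_k(S)$ and every execution $s \TRANSS{\acts} s''$ there exist $t \in \RS_k(S)$ and $\actsb, \acts'$ with $s \kTRANSS{\actsb} t$, $s'' \TRANSS{\acts'} t$, and $\eqpeer{\actsb}{\acts \concat \acts'}$. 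The only change from the stated argument is to start the iteration at $s$ rather than $s_0$, which is legitimate because Lemma~\ref{lem:exist-path-eqpeer} already applies to an arbitrary $k$-reachable source.

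For clause (\ER), suppose $w_{\p\q} = \msg{a} \cdot w'$. Safety gives an execution $s \TRANSS{\acts} s''$ whose last action is $\PRECEIVE{pq}{a}$, so $\acts$ contains a receive on channel $\p\q$. Feeding $\acts$ into the from-$s$ variant yields a $k$-bounded $\actsb$ with $s \kTRANSS{\actsb} t$ and $\eqpeer{\actsb}{\acts \concat \acts'}$; since $\eqpeerop$ preserves each participant's projection, $\onpeer{q}{\actsb}$ contains a $\p\q$-receive. Writing $\actsb = \actsb_0 \concat \PRECEIVE{pq}{b} \concat \actsb_1$ for its \emph{first} such receive, no receive on $\p\q$ occurs in $\actsb_0$, so by the \FIFO\ discipline the head of $\p\q$ is unchanged and $\msg{b} = \msg{a}$; as every prefix of $\actsb$ is $k$-bounded, this gives $s \kTRANSR \kTRANSS{\PRECEIVE{pq}{a}}$, which is (\ER). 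For clause (\PG), suppose $q_\p$ is receiving. Safety gives an execution $s \TRANSS{\acts} s''$ whose last action is a receive by $\p$, so $\acts$ contains a $\p$-action. Applying the from-$s$ variant again produces a $k$-bounded $\actsb$ with $\onpeer{p}{\actsb}$ nonempty; its first $\p$-action $\action$ is fired while $\p$ is still in the receiving state $q_\p$, hence is a receive (an outgoing transition of $q_\p$), and being a prefix of a $k$-bounded run it witnesses $s \kTRANSR \kTRANSS{\action}$ with $\action = \PRECEIVE{qp}{a}$ for some $\q, \msg{a}$, which is (\PG).

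The main obstacle is the from-$s$ rescheduling step: establishing that the unbounded safety witness can be turned into a peer-equivalent $k$-bounded run from the \emph{same} $k$-reachable configuration $s$. Everything downstream—extracting the head-consuming receive for (\ER) and the first enabled receive for (\PG)—is then short bookkeeping on projections and prefixes. I expect the care to lie in checking that the level-by-level descent of Lemma~\ref{lem:exist-path-eqpeer} composes and that its hypotheses ($\IBI{n}$, reduced $\OBI{n}$, $n$-exhaustivity) hold at every intermediate bound $n \geq k$, which is precisely what the preparatory step secures.
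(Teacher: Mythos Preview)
Your proposal is correct and follows essentially the same approach as the paper's proof: both reduce to establishing $k$-safety, obtain an unbounded witness from ($\infty$-)safety, and reschedule it to a $k$-bounded execution via Lemma~\ref{lem:exist-path-eqpeer}. You are simply more explicit than the paper about the level-by-level iteration of Lemma~\ref{lem:exist-path-eqpeer} (securing reduced $\OBI{n}$ and $n$-exhaustivity at each $n\geq k$ through Lemma~\ref{lem:k-exhau-kplus-exhau}) and about the \FIFO/projection bookkeeping used to extract the required receive action from the resulting $k$-bounded run, whereas the paper's proof invokes Lemma~\ref{lem:exist-path-eqpeer} once in a contradiction argument and leaves these details implicit.
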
 
\begin{proof}
  We show that $S$ is $k$-safe.
    By contradiction, assume there is $S$ safe, $k$-exhaustive, and
  \emph{not} $k$-safe.
    Since $S$ is not $k$-safe, then there is
  $s = \csconf{q}{w} \in \RS_k(S)$ such that at least one of the two
  cases below hold.
  \begin{enumerate}
  \item $w_{\ptp{pq}} = \msg{a} \concat \word$ and there is no
    execution $\acts$ such that $s \kTRANSS{\acts} \kTRANSS{\PRECEIVE{pq}{a}}$.
        By safety, there is $\actsb$ and $n > k$ such that
    $s \bTRANSS{\actsb}{n} s' \bTRANSS{\PRECEIVE{pq}{a}}{n} s''$.
        By Lemma~\ref{lem:exist-path-eqpeer}, we can extend
    $\actsb \concat \PRECEIVE{pq}{a}$ such that there is an equivalent
    $k$-bounded execution, which contradicts this case.
      \item $q_\p$ is a receiving state and there is no execution $\acts$
    such that $s \kTRANSS{\acts} \kTRANSS{\PRECEIVE{qp}{a}}$; then we
    reason similarly as above using
    Lemma~\ref{lem:exist-path-eqpeer}.
        \qedhere
  \end{enumerate}
\end{proof}

\begin{lemma}\label{lem:kri-imp-kba}
  If $S$ is $\DIBI{k}$, then it is $\IBI{k}$.
\end{lemma}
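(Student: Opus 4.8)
The plan is to exploit the fact that \CSA{} have no mixed states, so that $\IBI{k}$ becomes a statement purely about receive transitions, which is exactly what $\DIBI{k}$ controls. First I would fix $s = \csconf{q}{w} \in \RS_k(S)$ and $\p \in \PSet$ with $s \kTRANSS{\PRECEIVE{qp}{a}} s'$, and observe that the local state $q_\p$ must be a \emph{receiving} state: it has an outgoing receive transition $(q_\p, \PRECEIVE{qp}{a}, q'_\p) \in \delta_\p$, and by the no-mixed-state assumption on \CSA{} all of its outgoing transitions are then receive actions. Consequently, any $\action \in \ASet$ with $\subj{\action} = \p$ that is enabled at $s$ is necessarily of the form $\PRECEIVE{sp}{b}$ for some transition $(q_\p, \PRECEIVE{sp}{b}, q''_\p) \in \delta_\p$.

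Next I would perform a case analysis on the sender $\s$. If $\s \neq \q$, then applying the $\DIBI{k}$ hypothesis to the transition $(q_\p, \PRECEIVE{sp}{b}, q''_\p)$ yields $\neg (s \kTRANSS{\PRECEIVE{sp}{b}})$, which contradicts the assumption that $\action = \PRECEIVE{sp}{b}$ is enabled at $s$; hence this case cannot arise. If $\s = \q$, then $\action = \PRECEIVE{qp}{b}$, and its being enabled at $s$ requires the head of channel $\ptp{qp}$ to be $\msg{b}$. But $s \kTRANSS{\PRECEIVE{qp}{a}}$ already forces the head of $\ptp{qp}$ to be $\msg{a}$, so $\msg{a} = \msg{b}$ and therefore $\action = \PRECEIVE{qp}{a}$. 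This is precisely the conclusion demanded by $\IBI{k}$, so the implication holds.

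I expect the argument to be essentially immediate, with no substantial obstacle beyond making the two observations explicit. The only point requiring a little care is the first one --- that $q_\p$ is a receiving state --- since $\DIBI{k}$ quantifies only over the \emph{receive} transitions of $\p$, so one must rule out the possibility that $\p$ has an enabled send action at $s$. This is exactly where the restriction to \CSA{} (deterministic automata without mixed states) is used.
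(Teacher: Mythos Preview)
Your argument is correct and is precisely the elaboration of what the paper records as ``Straightforward'': the no-mixed-state assumption forces any $\p$-action enabled at $s$ to be a receive, the $\DIBI{k}$ clause $\neg(s \kTRANSS{\PRECEIVE{sp}{b}})$ rules out the case $\s\neq\q$, and the \FIFO\ semantics on channel $\ptp{qp}$ forces $\msg{a}=\msg{b}$ when $\s=\q$.
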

\begin{proof}
  Straightforward.
\end{proof}

\begin{lemma}\label{lem:kri-imp-kba-depends}
  If $S$ is $\CIBI{k}$, then it is $\IBI{k}$.
\end{lemma}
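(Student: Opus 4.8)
The plan is to exploit the fact that $S$ is a \CSA\ (no mixed states), so that the only delicate case is that of two competing receive transitions from a common receiving state. I would fix a configuration $s = \csconf{q}{w} \in \RS_k(S)$ and a participant $\p$ with $s \kTRANSS{\PRECEIVE{qp}{a}}$, and show that every action $\action$ with $\subj{\action} = \p$ that is $k$-enabled at $s$ must coincide with $\PRECEIVE{qp}{a}$. Since $\p$ can fire a receive action from $s$, there is a transition $(q_\p, \PRECEIVE{qp}{a}, q'_\p) \in \delta_\p$, so the local state $q_\p$ is a receiving state; as $S$ has no mixed states, every outgoing transition of $q_\p$ is a receive, and any candidate $\action$ with $\subj{\action} = \p$ and $s \kTRANSS{\action}$ must therefore be of the form $\PRECEIVE{sp}{b}$ with $(q_\p, \PRECEIVE{sp}{b}, q'_\p) \in \delta_\p$.

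Next I would split on whether the sender coincides with $\q$. If $\s \neq \q$, I invoke the $\CIBI{k}$ hypothesis directly: its first conjunct gives $\neg(s \kTRANSS{\PRECEIVE{sp}{b}})$, contradicting the assumption that $\PRECEIVE{sp}{b}$ is $k$-enabled at $s$; hence this case cannot occur. If $\s = \q$, then both $\PRECEIVE{qp}{a}$ and $\PRECEIVE{qp}{b}$ consume from the same channel $\ptp{qp}$, and the \FIFO\ semantics of Definition~\ref{def:rs} forces the head of the queue $w_{\ptp{qp}}$ to equal the received message. Since $\PRECEIVE{qp}{a}$ is enabled at $s$, that head is $\msg{a}$, so $\PRECEIVE{qp}{b}$ can be enabled only if $b = a$, giving $\action = \PRECEIVE{qp}{a}$ as required. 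Combining the two cases establishes $\IBI{k}$.

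This argument is essentially the same as the (omitted) proof of Lemma~\ref{lem:kri-imp-kba} for $\DIBI{k}$, since only the $\neg(s \kTRANSS{\PRECEIVE{sp}{b}})$ clause of $\CIBI{k}$ — which is shared with $\DIBI{k}$ — is used, while the dependency-chain conjunct plays no role here. I do not expect a genuine obstacle; the only point requiring care is the same-sender subcase $\s = \q$, which is not covered by the $\CIBI{k}$ condition (it constrains only $\s \neq \q$) and must instead be discharged using the determinism of the automata together with the queue semantics.
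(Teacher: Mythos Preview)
Your argument is correct and is exactly the intended unpacking of the paper's one-word proof (``Straightforward''): the first conjunct of $\CIBI{k}$ rules out a competing receive from a different sender, while FIFO plus determinism handles the same-sender case. There is nothing to add.
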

\begin{proof}
  Straightforward.
\end{proof}

\begin{lemma}\label{lem:dep-chain}
  If
    $\dependsin{s}{\action}{\acts}{\action'}$, then there is
  a subsequence $\actsb$ of $\acts$ such that
  \begin{itemize}
  \item  $\bindep{s}{\action}{\action'}$ and $\actsb = \emptyw$, or
  \item $\actsb = \action_1 \cdots \action_n$ ($n \geq 1$),
    $\bindep{s}{\action}{\action_1}$,
    $\forall 1 \leq i < n \qst \bindep{s}{\action_i}{\action_{i+1}}$,
    $\bindep{s}{\action_n}{\action'}$.
          \end{itemize}
  \end{lemma}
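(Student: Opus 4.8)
The plan is to proceed by induction on the length of $\acts$, directly unfolding the recursive definition of $\dependsin{s}{\action}{\acts}{\action'}$ and extracting at each step the witnessing subsequence.

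First, for the base case $\acts = \emptyw$, the definition of $\prec_{\acts}$ reduces to its ``otherwise'' branch, so $\dependsin{s}{\action}{\emptyw}{\action'}$ holds exactly when $\bindep{s}{\action}{\action'}$. I would then take $\actsb = \emptyw$, which lands in the first bullet of the statement.

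For the inductive step I write $\acts = \action'' \concat \acts_0$ and assume the claim for the strictly shorter sequence $\acts_0$. By definition, $\dependsin{s}{\action}{\acts}{\action'}$ splits into two cases. In the second case, $\dependsin{s}{\action}{\acts_0}{\action'}$ holds; the induction hypothesis yields a subsequence $\actsb$ of $\acts_0$ with the required chain property, and since $\acts_0$ is a subsequence of $\acts$, so is $\actsb$, closing this case immediately. In the first case we have both $\bindep{s}{\action}{\action''}$ and $\dependsin{s}{\action''}{\acts_0}{\action'}$. Applying the induction hypothesis to the latter produces a subsequence $\gamma$ of $\acts_0$ such that either $\gamma = \emptyw$ with $\bindep{s}{\action''}{\action'}$, or $\gamma = \action_2 \cdots \action_n$ (with $n \geq 2$) forms a chain $\bindep{s}{\action''}{\action_2}$, $\bindep{s}{\action_i}{\action_{i+1}}$, $\bindep{s}{\action_n}{\action'}$. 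In either situation I would prepend $\action''$ to $\gamma$, setting $\actsb = \action'' \concat \gamma$ and $\action_1 = \action''$; the first link $\bindep{s}{\action}{\action_1}$ is exactly the hypothesis of this case, the internal links and final link are supplied by the induction hypothesis, and $\actsb$ is a subsequence of $\acts$ because $\action''$ is the head of $\acts$ and $\gamma$ is a subsequence of its tail.

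The argument is essentially a mechanical unfolding, so I expect no deep obstacle; the only point requiring care is the bookkeeping when reindexing the chain after prepending $\action''$. In particular, the $\gamma = \emptyw$ subcase collapses to a one-element chain $\actsb = \action''$ with $n = 1$ (so $\bindep{s}{\action}{\action_1}$ and $\bindep{s}{\action_1}{\action'}$ coincide with the two available facts), and in each branch one must verify that the constructed $\actsb$ is genuinely a subsequence of $\acts$.
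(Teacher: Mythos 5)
Your proof is correct and follows essentially the same route as the paper's: induction on the length of $\acts$, with the base case given by the ``otherwise'' branch of the definition and the inductive step splitting on the two disjuncts, prepending $\action''$ to the chain obtained from the induction hypothesis in the first case. Your explicit handling of the $\gamma = \emptyw$ subcase is a small bookkeeping refinement the paper leaves implicit, but the argument is the same.
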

\begin{proof}
  By induction on the length of $\acts$.

  \proofsub{Base case}
    If $\dependsin{s}{\action}{\emptyw}{\action'}$, then we must have
  $\bindep{s}{\action}{\action'}$ by definition.

  \proofsub{Inductive case}
    Assume the result holds for $\acts$ and let us show it holds for
  $\action'' \concat \acts$.
    There are two cases:
  \begin{itemize}
  \item If $\dependsin{s}{\action}{\acts}{\action'}$ and we have the
    result by induction hypothesis, since any subsequence of
    $\acts$ is a subsequence of $\action'' \concat \acts$. 
  \item If 
    $\bindep{s}{\action}{\action''}$ and
    $\dependsin{s}{\action''}{\acts}{\action'}$.
        Then by induction hypothesis there is a subsequence
    $\action_1 \cdots \action_n$ of $\acts$ such that
    $
    \action'' \prec \action_1 \prec \cdots \action_n \prec \action'
    $
    hence we have the result with the subsequence
    $\action'' \concat \action_1 \cdots \action_n$.
        \qedhere
  \end{itemize}
\end{proof}

\begin{lemma}\label{lem:dep-chain-equi}
  Let $S$ be a system, $s \in \RS(S)$, and
  $\acts = \acts_1 \concat \action \concat \acts_2 \concat \action'
  \concat \acts_3$
  such that $s_0 \TRANSS{\acts}$ and
  $\dependsin{s}{\action}{\acts_2}{\action'}$, with
  $s_0 \TRANSS{\acts_1} s$.
    Then for all valid $\actsb$ such that $\actsb \resche \acts$, there
  are $\actsb_1$, $\actsb_2$, $\actsb_3$, and $t \in \RS(S)$ such that

  \begin{enumerate}
  \item  $\actsb = \actsb_1 \concat \action \concat \actsb_2 \concat \action'
    \concat \actsb_3$,
      \item $\proj{\actsb_1}{\subj{\action}} = \proj{\acts_1}{\subj{\action}}$
      \item $\proj{\actsb_1 \concat \action \concat \actsb_2
    }{\subj{\action'}} = \proj{\acts_1 \concat \action \concat \acts_2
    }{\subj{\action'}}$, and
      \item $\dependsin{t}{\action}{\actsb_2}{\action'}$, with
  $s_0 \TRANSS{\actsb_1} t$.
  \end{enumerate}
  \end{lemma}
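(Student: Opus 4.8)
The plan is to reduce the statement to a property of a single dependency chain and then transport that chain across $\actsb$ by induction. First I would invoke Lemma~\ref{lem:dep-chain} to replace the hypothesis $\dependsin{s}{\action}{\acts_2}{\action'}$ by an explicit chain $\action = \action_0, \action_1, \dots, \action_n, \action_{n+1} = \action'$, where $\action_1 \cdots \action_n$ is a subsequence of $\acts_2$ and $\bindep{s}{\action_i}{\action_{i+1}}$ holds for every $i$. By definition of $\resche$, both $\acts$ and $\actsb$ are valid executions and $\proj{\acts}{p} = \proj{\actsb}{p}$ for every participant $\p$. The four conclusions then separate into two tasks: conclusions (1)--(3) \emph{locate} the occurrences of $\action$ and $\action'$ inside $\actsb$ and are pure bookkeeping on per-participant projections; conclusion (4) \emph{re-derives} the whole chain at the state $t$ reached by the relocated prefix $\actsb_1$.

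For the bookkeeping I would argue by counting. If $\action$ is the $r$-th action with subject $\subj{\action}$ in $\acts$, then since $\proj{\acts}{\subj{\action}} = \proj{\actsb}{\subj{\action}}$ it is also the $r$-th such action of $\actsb$; choosing $\actsb_1$ as the prefix of $\actsb$ ending just before that occurrence gives $\proj{\actsb_1}{\subj{\action}} = \proj{\acts_1}{\subj{\action}}$, as both equal the first $r{-}1$ actions of $\subj{\action}$, which is conclusion~(2). The same counting applied to $\subj{\action'}$ and the prefix $\acts_1 \concat \action \concat \acts_2$ fixes the occurrence of $\action'$ and yields conclusion~(3); the chain guarantees this occurrence lies strictly after $\action$, so the decomposition $\actsb = \actsb_1 \concat \action \concat \actsb_2 \concat \action' \concat \actsb_3$ of conclusion~(1) is well defined. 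Note that conclusions~(2) and~(3) already pin this split uniquely, leaving no freedom.

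That $\action$ really precedes $\action'$ in $\actsb$, together with conclusion~(4), I would establish simultaneously by induction on the chain length $n$. In the base case ($n=0$, i.e.\ $\bindep{s}{\action}{\action'}$ directly) there are two sub-cases. If $\subj{\action} = \subj{\action'}$, the order is inherited from the shared projection and $\bindep{t}{\action}{\action'}$ holds at every $t$. If instead $\chan{\action} = \chan{\action'}$ with that channel empty at $s$, I would use that $\action$ fires immediately after $\acts_1$: a reception on an empty channel is impossible, so $\action$ is a send and, the channel being empty at $s$, it is the \emph{first} send on $\chan{\action}$ after $s$; validity (\FIFO) of $\actsb$ then forces its message, hence $\action$, before the later receive $\action'$. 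For the inductive step I would peel off the first link $\bindep{s}{\action}{\action_1}$, place $\action_1$ after $\action$ by the base-case reasoning, apply the induction hypothesis to the shorter chain $\action_1, \dots, \action_{n+1}$, and compose the two orderings transitively, realigning the intermediate states with Lemma~\ref{lem:eqpeer-imp-same-state}.

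The hard part will be conclusion~(4): re-establishing $\dependsin{t}{\action}{\actsb_2}{\action'}$ at the \emph{relocated} prefix state $t$ rather than at $s$. Same-subject links are state-independent and transfer for free, but a channel-based link $\bindep{s}{\action_i}{\action_{i+1}}$ needs its channel $\chan{\action_i}$ to be empty at $t$, whereas the reordering can move receiver actions into or out of $\actsb_1$ (and of $\actsb_1 \concat \action \concat \actsb_2$), altering a channel's contents relative to $s$. Restoring emptiness at $t$ therefore requires a counting invariant: validity bounds the number of receptions on each channel by the number of emissions in every prefix, while the projection equalities of conclusions~(2) and~(3) fix the emission and reception counts of $\chan{\action}$ and $\chan{\action'}$ at the two cut points. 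I expect the crux to be showing that these constraints, together with the fact that the chain is realised \emph{in order} inside the valid execution $\acts$, pin the contents of every channel touched by a chain link to be empty at $t$ exactly as at $s$. This is the step that genuinely exploits that $S$ consists of deterministic \CSA\ and that $\actsb$ is a valid (\FIFO) reordering, and it is where the argument needs the most care.
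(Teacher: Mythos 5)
Your skeleton matches the paper's: both start from Lemma~\ref{lem:dep-chain}, both locate $\action$ and $\action'$ inside $\actsb$ via the equality of per-participant projections, and both reduce everything to showing that each link of the chain keeps its relative order under any $\resche$-equivalent reordering. The bookkeeping for conclusions (1)--(3) is fine. The gap is in the order-preservation and chain-transfer step, and it is a real one: the single idea your argument is missing is to take a \emph{minimal} (shortest) dependency chain. Your base case works only for the first link, because there $\action$ fires exactly at $s$, so an empty channel at $s$ forces $\action$ to be a send and lets \FIFO{} counting place it before $\action'$. For an \emph{intermediate} link $\bindep{s}{\action_i}{\action_{i+1}}$ with $\chan{\action_i}=\chan{\action_{i+1}}$, the emptiness in the definition of $\prec$ is still evaluated at $s$, not at the configuration where $\action_i$ fires; if other sends on that channel occur between $s$ and the firing of $\action_i$ and are still pending, then $\action_{i+1}$ may be the receive of one of \emph{those} messages, and a valid reordering can legitimately move $\action_{i+1}$ before $\action_i$. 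Consequently your induction (``peel off the first link and recurse on $\action_1,\dots,\action_{n+1}$'') is not sound: the head $\action_1$ of the shorter chain does not fire at the state where its channel condition is checked, so the base-case reasoning does not apply to it. The paper closes exactly this hole by minimality: if the channel of a link is non-empty when its send fires, the earlier pending send on that same (empty-at-$s$) channel yields a strictly shorter dependency chain from $\action$ to $\action'$, contradicting minimality; hence in a minimal chain every channel-based link is a genuinely matching send/receive pair whose order is forced by validity, and every other link shares a subject.

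The same missing idea is what you would need for conclusion (4), which you explicitly leave open (``I expect the crux to be\ldots''). The counting invariant you propose, built from conclusions (2)--(3), constrains only $\chan{\action}$ and $\chan{\action'}$ at the two cut points; it says nothing about the channels of the intermediate links, so it cannot by itself re-establish $\word_{\chan{\action_i}} = \emptyw$ at $t$. In the paper, conclusion (4) falls out because the minimal chain survives, in order, as a subsequence of $\actsb_2$, with its channel links still being tight matching pairs. As written, your proposal correctly identifies where the difficulty lies but does not supply the argument that resolves it.
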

\begin{proof}
        By Lemma~\ref{lem:dep-chain}, there is a subsequence $\action_1 \cdots \action_n$ of $\acts_2$
  such that
  \[
  \bindep{s}{\action= \action_0}{\action_1}
  \text{ and }
  \forall 1 \leq i < n \qst \bindep{s}{\action_i}{\action_{i+1}}
  \text{ and }
  \bindep{s}{\action_n}{\action' = \action_{n+1}}
  \]
    Take the shortest such subsequence (smallest $n$), we show that the
  relative order between each pair of actions must be preserved.
    By definition, for each $\bindep{s}{\action_j}{\action_{j+1}}$
  ($0 \leq j \leq n+1 $) to hold there are two cases:
  \begin{itemize}
  \item If $\subj{\action_j} = \subj{\action_{j+1}}$, then it is not
    possible to swap $\action_j$ and $\action_{j+1}$ while preserving
    $\resche$-equivalence.
      \item If $\subj{\action_j} \neq \subj{\action_{j+1}}$, then
    $\chan{\action_j} = \chan{\action_{j+1}}$, and there are two
    cases depending on whether the queue $\chan{\action_j}$ is empty
    when $\action_j$ is fired.
            \begin{itemize}
    \item If the queue is empty, then we cannot swap $\action_j$ and
      $\action_{j+1}$ without invalidating the execution since they
      are matching send and receive actions.
          \item If the queue is not empty, since
      $w_{\chan{\action_{j}}} = \emptyw$ (at $s$) there must be
      another \emph{send} action $\action_l$ with $l < j$ such that
      $\chan{\action_l} = \chan{\action_{j+1}}$.
            Therefore, we have $\bindep{s}{\action_l}{\action_{j+1}}$,
      and thus
      $\action_1 \cdots \action_l \cdots \action_{j+1} \cdots \action_n$
      is a (striclty) shorter subsequence of
            $\acts_2$ which is dependency chain, a contradiction.
    \end{itemize}   
  \end{itemize}
    Since each pair of actions cannot be swapped without invalidating
  the sequence or break $\resche$-equivalence, we must conclude that
  any $\actsb$ has the required form and that the
  $\dependsin{t}{\action}{\actsb_2}{\action'}$ property holds since
  $\actsb_2$ must contain the subsequence
  $\action_1 \cdots \action_n$.
    \end{proof}

\begin{lemma}\label{lem:kri-kexh-imp-kba}
  If $S$ is reduced $\OBI{k}$, $\DIBI{k}$ and $k$-exhaustive, then it is $\IBI{k{+}1}$.
\end{lemma}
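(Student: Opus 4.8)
The plan is to prove the local formulation of $\IBI{k{+}1}$ directly: for every configuration $s=\csconf q w$ reachable by a $(k{+}1)$-bounded execution from $s_0$ and every $\p$, at most one receive action with subject $\p$ is $(k{+}1)$-enabled at $s$. First I would observe that two receives with the \emph{same} sender, $\PRECEIVE{qp}{a}$ and $\PRECEIVE{qp}{a'}$, can never be simultaneously enabled, since each requires a specific head of the single queue $\ptp{qp}$; so the only dangerous situation is a pair $\PRECEIVE{qp}{a}$ and $\PRECEIVE{sp}{b}$ with $\q\neq\s$. I would also record that, since $S$ is $\DIBI{k}$, it is $\IBI{k}$ by Lemma~\ref{lem:kri-imp-kba}, so every $k$-closed-set result that assumes only reduced $\OBI{k}$, $\IBI{k}$ and $k$-exhaustivity is available here — crucially \emph{without} the circular use of Lemma~\ref{lem:exist-path-eqpeer}, which already presupposes $\IBI{k{+}1}$.

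I would then argue by induction on the length $n$ of a $(k{+}1)$-bounded execution $\acts$ with $s_0\bTRANSS{\acts}{k{+}1}s$. The base case $n=0$ is immediate: at $s_0$ all queues are empty, so no receive is enabled. For the step, write $s_0\bTRANSS{\acts'}{k{+}1}s''\bTRANSS{\action_0}{k{+}1}s$ with $|\acts'|=n{-}1$; by the induction hypothesis $s''$ satisfies the local condition. Assuming for contradiction that $\p$ has two enabled receives $\PRECEIVE{qp}{a}$ and $\PRECEIVE{sp}{b}$ ($\q\neq\s$) at $s$, the violation must have been \emph{created} by $\action_0$, and I would case-split on how: (i) $\action_0=\PRECEIVE{tp}{c}$ is $\p$'s (by IH unique) receive, so firing it moves $\p$ into a receiving state where both $\PRECEIVE{qp}{a}$ and $\PRECEIVE{sp}{b}$ are now available; (ii) $\action_0=\PSEND{pt}{c}$ is a send by $\p$, so $\p$ was in a sending state at $s''$ (where the local condition is vacuous) and the two incoming queues already carried $a$ and $b$; (iii) $\subj{\action_0}\neq\p$, in which case $\action_0$ must be exactly $\PSEND{sp}{b}$ (resp.\ $\PSEND{qp}{a}$) filling a previously empty queue, so that at $s''$ only one of the two receives was enabled.

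In every case the contradiction should come from $\DIBI{k}$, whose strong clause forbids, at any $k$-reachable configuration, both that the two receives be simultaneously enabled and that, after $\p$ consumes $a$ from $\ptp{qp}$, the competing message $b$ is ever put on $\ptp{sp}$. Since $\PRECEIVE{qp}{a}$ and $\PSEND{sp}{b}$ have different subjects and different channels they commute, so in each case I can exhibit an execution in which $\p$ first consumes $a$ while $b$ is nonetheless sent (or is already present), i.e.\ a witness to the forbidden pattern. The hard part will be that $s$ and $s''$ are only guaranteed $(k{+}1)$-reachable, whereas $\DIBI{k}$ constrains $k$-reachable configurations. To close this gap I would reschedule the relevant prefix of $\acts$ into a $k$-bounded execution using only the tools available at this stage — Lemma~\ref{lem:k-mc-kclosed} to obtain, for each pending send, a $k$-closed set of completions, and Lemmas~\ref{lem:new-kclosed-set} and~\ref{lem:closed-set-paths} to transport it along the path — thereby landing on a genuinely $k$-reachable configuration that still witnesses the violation. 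To guarantee the witness survives the reordering I would invoke Lemmas~\ref{lem:dep-chain} and~\ref{lem:dep-chain-equi}: the dependency between $\PRECEIVE{qp}{a}$ and its matching $\PSEND{sp}{b}$ fixes their relative order in every $\resche$-equivalent execution, so the $k$-bounded reordering cannot accidentally destroy the forbidden configuration. Assembling these pieces contradicts $\DIBI{k}$ and completes the induction.
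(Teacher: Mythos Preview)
The paper proves this lemma in two lines: since $\DIBI{k}\Rightarrow\CIBI{k}$ (Lemma~\ref{lem:sibi-imp-cibi}), the result follows from Lemma~\ref{lem:kri-kexh-imp-kba-depends}, which derives $\IBI{k{+}1}$ from reduced $\OBI{k}$, $\CIBI{k}$ and $k$-exhaustivity. Your direct argument is, in outline, a re-derivation of that underlying lemma specialised to $\DIBI{k}$ (induction on the length of a $(k{+}1)$-execution, case-split on how the violation arises, reschedule to a $k$-bounded witness), so the two routes are close; the paper simply packages the real work once under the weaker hypothesis $\CIBI{k}$ and obtains the $\DIBI{k}$ version as a corollary.

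Two places in your sketch do not go through as written. First, the appeal to Lemmas~\ref{lem:dep-chain} and~\ref{lem:dep-chain-equi} is misplaced: you speak of ``the dependency between $\PRECEIVE{qp}{a}$ and its matching $\PSEND{sp}{b}$'', but these actions have different subjects and different channels, so there is no $\prec$-dependency between them and those lemmas yield nothing. Dependency chains are precisely what the $\CIBI{k}$ definition supplies and what the paper exploits in Lemma~\ref{lem:kri-kexh-imp-kba-depends}; under $\DIBI{k}$ the contradiction is more direct --- once you land on a $k$-reachable configuration where $\p$ is in $q_\p$ with $a$ heading $\ptp{qp}$, firing $\PRECEIVE{qp}{a}$ and observing that $\PSEND{sp}{b}$ still occurs somewhere in the rescheduled execution already violates the second clause of $\DIBI{k}$ outright, with no order-preservation argument needed. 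Second, the rescheduling step inherits the very circularity you flagged in Lemma~\ref{lem:exist-path-eqpeer}: transporting a $k$-closed set along a $(k{+}1)$-bounded path via Lemmas~\ref{lem:kclosed-pclosed}, \ref{lem:new-kclosed-set} and~\ref{lem:closed-set-paths} requires the $\IBI{k{+}1}$ condition at the intermediate $(k{+}1)$-configurations, which is what you are proving. Your induction hypothesis, as stated, gives the local condition only at the predecessor $s''$ on the chosen path, not at the off-path configurations the closed-set argument visits. The paper's proof of Lemma~\ref{lem:kri-kexh-imp-kba-depends} resolves this by strengthening the invariant to carry, in addition, a $(k{+}1)$-path from the current configuration back into $\RS_k(S)$ along which the local $\IBI{k{+}1}$ condition is already known to hold; you would need the same strengthening here.
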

\begin{proof}
  From Lemma~\ref{lem:sibi-imp-cibi}
  and~\ref{lem:kri-kexh-imp-kba-depends}.
\end{proof}

\begin{lemma}\label{lem:sibi-imp-cibi}
  If $S$ is $\DIBI{k}$, then it is $\CIBI{k}$.
\end{lemma}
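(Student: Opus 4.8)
The plan is to unfold the two definitions and observe that $\CIBI{k}$ follows from $\DIBI{k}$ almost immediately, with the chained-dependency clause of $\CIBI{k}$ holding \emph{vacuously}. First I would fix $s = \csconf{q}{w} \in \RS_k(S)$ and $\p \in \PSet$ with $s \kTRANSS{\PRECEIVE{qp}{a}} s'$, together with an arbitrary receive transition $(q_\p, \PRECEIVE{sp}{b}, q'_\p) \in \delta_\p$ such that $\s \neq \q$. Applying the $\DIBI{k}$ hypothesis at $s$ for this data yields $\neg ( s \kTRANSS{\PRECEIVE{sp}{b}} \, \lor \, s' \kTRANSR \kTRANSS{\PSEND{sp}{b}})$, which by De Morgan is equivalent to the conjunction of $\neg ( s \kTRANSS{\PRECEIVE{sp}{b}})$ and $\neg ( s' \kTRANSR \kTRANSS{\PSEND{sp}{b}})$.

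The first conjunct is literally the first requirement of $\CIBI{k}$, so nothing further is needed there. For the second requirement I would note that the clause $s' \kTRANSR \kTRANSS{\PSEND{sp}{b}}$ is, by definition of $\kTRANSR$ as the reflexive-transitive closure of $k$-bounded transitions, exactly the existential statement that there is some $\acts \in \ASetC$ with $s' \kTRANSS{\acts} \kTRANSS{\PSEND{sp}{b}}$. Hence the second conjunct $\neg ( s' \kTRANSR \kTRANSS{\PSEND{sp}{b}})$ asserts that \emph{no} such $\acts$ exists, i.e., the antecedent $s' \kTRANSS{\acts} \kTRANSS{\PSEND{sp}{b}}$ of the universally-quantified implication in $\CIBI{k}$ is unsatisfiable. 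The implication $s' \kTRANSS{\acts} \kTRANSS{\PSEND{sp}{b}} \implies \dependsin{s}{\PRECEIVE{qp}{a}}{\acts}{\PSEND{sp}{b}}$ therefore holds vacuously for every $\acts \in \ASetC$, which establishes the $\CIBI{k}$ condition.

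There is no genuine obstacle here, since $\DIBI{k}$ is by design a strengthening of $\CIBI{k}$; the only point worth flagging is the correct reading of the clause $s' \kTRANSR \kTRANSS{\PSEND{sp}{b}}$ as an existential quantification over $k$-bounded executions, so that the negated reachability clause supplied by $\DIBI{k}$ collapses the dependency-chain implication of $\CIBI{k}$ to a vacuous truth. Intuitively, $\DIBI{k}$ forbids the matching send $\PSEND{sp}{b}$ from \emph{ever} becoming enabled along a $k$-bounded run from $s'$, which is strictly stronger than $\CIBI{k}$'s requirement that, \emph{whenever} it does become enabled, the enabling run carries a dependency from $\PRECEIVE{qp}{a}$.
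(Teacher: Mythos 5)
Your proposal is correct and matches the paper's argument in substance: the paper proves the same statement by contradiction, observing that a violating execution $s' \kTRANSS{\acts} \kTRANSS{\PSEND{sp}{b}}$ cannot exist because $\DIBI{k}$ already negates the reachability clause $s' \kTRANSR \kTRANSS{\PSEND{sp}{b}}$, which is exactly your vacuity observation phrased contrapositively. Your direct unfolding is if anything slightly tidier, since it also explicitly discharges the first conjunct $\neg(s \kTRANSS{\PRECEIVE{sp}{b}})$, which the paper's proof leaves implicit.
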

\begin{proof}
  By contradiction, take $s = \csconf{q}{w} \in \RS_k(S)$ such that
  the condition for $\CIBI{k}$ do not hold while the condition for
  $\DIBI{k}$ does.
    Then, we must have $s \kTRANSS{\PRECEIVE{qp}{a}} s'$ and
  $s' \kTRANSS{\acts} \kTRANSS{\PSEND{sp}{b}}$ such that
  $\neg(\dependsin{s}{\PRECEIVE{qp}{a}}{\acts}{\PSEND{sp}{b}})$. 
    However, the existence of an execution
  $s' \kTRANSS{\acts} \kTRANSS{\PSEND{sp}{b}}$ contradicts
  Definition~\ref{def:non-csa-mc}.
\end{proof} 

\begin{lemma}\label{lem:kri-kexh-imp-kba-depends}
  If $S$ is reduced $\OBI{k}$, $\CIBI{k}$ and $k$-exhaustive, then it is $\IBI{k{+}1}$.
\end{lemma}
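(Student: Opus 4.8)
The plan is to prove $\IBI{k{+}1}$ by contradiction, using strong induction on the length of the $(k{+}1)$-bounded execution that reaches an offending configuration. First I would record two easy structural facts. Since $\CIBI{k}$ implies $\IBI{k}$ (Lemma~\ref{lem:kri-imp-kba-depends}), the bound-$k$ version already holds, so only the lift from $k$ to $k{+}1$ is at stake. Moreover, because the automata are \CSA{} (deterministic, no mixed states), a receiving local state $q_\p$ carries only receive transitions, and at a fixed configuration at most one message sits at the head of each channel; hence determinism forces any two \emph{distinct} enabled receives of $\p$ to read from \emph{different} channels. Thus $\IBI{k{+}1}$ can fail at $s = \csconf{q}{w}$ only through two transitions $(q_\p, \PRECEIVE{qp}{a}, q'_\p), (q_\p, \PRECEIVE{sp}{b}, q''_\p) \in \delta_\p$ with $\q \neq \s$ that are both $(k{+}1)$-enabled at $s$.

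Now suppose for contradiction that $\IBI{k{+}1}$ fails, and pick a violating configuration $s$ reachable from $s_0$ by a $(k{+}1)$-bounded execution of \emph{minimal} length $n$; by minimality $\IBI{k{+}1}$ holds at every configuration reachable by a strictly shorter $(k{+}1)$-bounded execution. The message $\msg{b}$ at the head of $\ptp{sp}$ in $s$ was deposited by some $\PSEND{sp}{b}$ fired earlier along the execution, and crucially $\p$ is still in the receiving state $q_\p$, i.e.\ $\p$ has \emph{not} yet consumed $\msg{a}$. I would then descend to the bound-$k$ world: invoking Lemma~\ref{lem:exist-path-eqpeer} (whose own proof is by induction on execution length and therefore only consumes $\IBI{k{+}1}$ at strictly shorter prefixes, so the minimality hypothesis supplies exactly what is needed and the apparent circularity is broken), I obtain a $k$-reachable configuration together with a projection-equivalent ($\eqpeerop$), hence causally equivalent ($\resche$), reordering of the execution that places the relevant occurrences of $\PRECEIVE{qp}{a}$ and $\PSEND{sp}{b}$ into a genuinely $k$-bounded run while keeping $\p$ in state $q_\p$.

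On that $k$-bounded run the hypothesis $\CIBI{k}$ (Definition~\ref{def:non-csa-mc-dep}) applies: at the point where $\p$ may receive $\msg{a}$, it forces a dependency chain $\dependsin{\cdot}{\PRECEIVE{qp}{a}}{\cdot}{\PSEND{sp}{b}}$, i.e.\ the send of $\msg{b}$ can only occur causally \emph{after} $\p$ has received $\msg{a}$. By Lemma~\ref{lem:dep-chain-equi} this dependency is an invariant of $\resche$-reordering, so it must also hold of the original $(k{+}1)$-bounded execution reaching $s$. But that execution exhibits $\msg{b}$ already delivered to $\ptp{sp}$ \emph{while} $\p$ is still in $q_\p$ and has not received $\msg{a}$ — that is, $\PSEND{sp}{b}$ occurred without the required causal predecessor — a contradiction. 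This contradiction rules out the offending configuration and establishes $\IBI{k{+}1}$.

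The main obstacle I expect is twofold and concentrated in the middle step: first, taming the bound mismatch, since $\CIBI{k}$ only speaks about $k$-reachable configurations and $k$-bounded executions whereas the failure lives at bound $k{+}1$; and second, making the interleaving of this induction with the induction inside Lemma~\ref{lem:exist-path-eqpeer} rigorous, so that no use of $\IBI{k{+}1}$ is ever circular. Getting the reordering to keep $\p$ in the same local state (so that ``$\p$ has not received $\msg{a}$'' transfers faithfully) and then transporting the $\CIBI{k}$ dependency chain back along the reordering via Lemma~\ref{lem:dep-chain-equi} is where the real care is needed; the remaining bookkeeping (determinism, single head-of-channel, and feeding reduced $\OBI{k}$ together with $k$-exhaustivity into Lemma~\ref{lem:exist-path-eqpeer}) is routine.
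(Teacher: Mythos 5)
Your overall strategy has the right ingredients and ends in the same contradiction as the paper's proof: reduce the putative $(k{+}1)$-bounded violation to a causally equivalent $k$-bounded run where $\CIBI{k}$ applies, extract the dependency chain $\dependsin{\cdot}{\PRECEIVE{qp}{a}}{\cdot}{\PSEND{sp}{b}}$ there, and transport it back with Lemma~\ref{lem:dep-chain-equi} to contradict the fact that $\msg{b}$ is already in the channel while $\p$ still sits in $q_\p$ without having consumed $\msg{a}$.

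The gap is in how you discharge the circularity around Lemma~\ref{lem:exist-path-eqpeer}. You claim a minimal-length violating execution suffices because that lemma ``only consumes $\IBI{k{+}1}$ at strictly shorter prefixes.'' That is not what its proof consumes. The machinery behind it (Lemma~\ref{lem:k-mc-kclosed}, Lemma~\ref{lem:new-kclosed-set}, Lemma~\ref{lem:closed-set-paths}) needs input bound independence at the configurations traversed by the executions in the $(k{+}1)$-closed sets and along the constructed return path from the violating configuration back to some $t' \in \RS_k(S)$ --- and those configurations are reached by executions that \emph{extend} the one reaching your minimal counterexample $s$, hence are strictly longer, so minimality tells you nothing about them. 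This is precisely why the paper does not invoke Lemma~\ref{lem:exist-path-eqpeer} as a black box here: its proof of this lemma runs a single induction on the length of the execution $s \bTRANSS{\acts}{k+1} s'$ from a $k$-reachable $s$, whose invariant simultaneously (i) constructs the return path $s' \bTRANSS{\acts'}{k+1} t'$ with $t' \in \RS_k(S)$ and (ii) asserts that every configuration along every prefix of $\acts'$ already validates the $\IBI{k{+}1}$ condition; the inductive step then has exactly the instances of the property it needs to redo the Lemma~\ref{lem:new-kclosed-set}-style commutation at the next configuration, and the base case uses $k$-exhaustivity and $\CIBI{k}$ directly. To repair your argument you would have to either re-prove Lemma~\ref{lem:exist-path-eqpeer} under a correspondingly weakened hypothesis or fold the $\IBI{k{+}1}$ claim into the induction invariant as the paper does; as written, the minimal-counterexample device does not close the loop.
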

\begin{proof}
  Take $s \in \RS_{k}(S)$ and $s' \in \RS_{k+1}(S)$
  such that $s \bTRANSS{\acts}{k+1} s'$.
    We show by induction on the length of $\acts$ that
  $s' \bTRANSS{\acts'}{k+1} t'$ for some $t' \in \RS_k(S)$, and there
  is $\actsb$ such that $s \kTRANSS{\actsb} t'$ with
  $\eqpeer{\actsb}{\acts \concat \acts'}$, and for all prefix
  $\acts'_0$ of $\acts'$, if
  $s' \bTRANSS{\acts'_0}{k+1} s'' = \csconf{q}{w}$, $s''$ validates the
  following condition, for all $\p \in \PSet$:
                  \[ s'' \bTRANSS{\PRECEIVE{qp}{a}}{k+1} t \implies
  \forall \action \in \ASet \qst s \bTRANSS{\action}{k+1} \, \land \,
  \subj{\action} = \p \implies \action = \PRECEIVE{qp}{a}
  \]

  \proofsub{Base case}
    Assume $\acts = \action$.
    If $\action = \PRECEIVE{pq}{a}$, then $s' \in \RS_{k}(S)$, and we
  have result since $S$ is $\CIBI{k}$ (via Lemma~\ref{lem:kri-imp-kba-depends}),
  with $s' = t'$. 
    If $\action = \PSEND{pq}{a}$, then since $S$ is $k$-exhaustive, we
  have $s \kTRANSS{\actsb} t \kTRANSS{\PSEND{pq}{a}} t'$, with
  $\p \notin \actsb$.
    Hence, we have $s' \bTRANSS{\actsb}{k+1} t'$.
    We show that for all prefix $\actsb_0$ of $\actsb$, if
  $s' \bTRANSS{\actsb_0}{k+1} t''$, then $t''$ validates the
  $\IBI{k{+}1}$ condition.  
    We have the following situation:
  \[
  \begin{tikzpicture}[baseline=(current bounding box.center)]
        \node (s0) {$s$};
    \node[right=of s0, xshift=1cm] (s) {$s'$};

        \node[below=of s0] (sp) {$s''$};
    \node[below=of sp] (spp) {$t$};
    \node at (s|-sp) (t) {$t''$};
    \node at (t|-spp)  (tp) {$t'$};
        \path[->] (s0) edge [above] node {$\action = \PSEND{pq}{a}$} (s);
    \path[->] (s) edge [right] node {$\actsb_0$} (t);
    \path[->] (s0) edge [left] node {$\actsb_0$} (sp);
    \path[->] (sp) edge [above] node {$\action$} (t);
    \path[->] (spp) edge [above] node {$\action$} (tp);
    \path[->] (t) edge [right] node {$\actsb_1$} (tp);
    \path[->] (sp) edge [left] node {$\actsb_1$} (spp);
  \end{tikzpicture}
    \]
                                                                                      
        Assume by contradiction that $t'' \bTRANSS{\PRECEIVE{sr}{b}}{k+1}$ and
  $t'' \bTRANSS{\PRECEIVE{tr}{c}}{k+1}$.
    If these two transitions are also enabled at $s''$, we have a
  contradiction with the fact that $S$ is $\CIBI{k}$.
    Hence, we have that either participant $\rr$ has made a move through
  $\action$, hence $\p = \ptp{r}$, an additional receive action in
  $\rr$ becomes enabled because $\ptp{sr} = \ptp{pq}$, or
  $\ptp{tr} = \ptp{pq}$ (i.e., the queue $\ptp{sr}$ (resp.\
  $\ptp{tr}$) is empty in $s$ and $s''$).
      \begin{itemize}
  \item If $\p = \ptp{r}$, then if we pose $\actsb_0 = \actsb$, we
    have $t' \bTRANSS{\PRECEIVE{sr}{b}}{k+1}$ and
    $t' \bTRANSS{\PRECEIVE{tr}{c}}{k+1}$, which contradicts the fact
    that $S$ is $\CIBI{k}$.
  \item If $\ptp{sr} = \ptp{pq}$ (i.e.,
    $\PRECEIVE{sr}{b} = \PRECEIVE{pq}{a}$), then we have
    $s'' \kTRANSS{\PRECEIVE{tq}{c}} v$ for some $v$.
        Since $S$ is $k$-exhaustive, we also have
    $v \kTRANSS{\actsb_2} \kTRANSS{\PSEND{pq}{a}}$ with
    $\p \notin \actsb_2$.
        By $\CIBI{k}$, we have that for all such $\actsb_2$, we have
    $ \dependsin{s''}{\PRECEIVE{tq}{c}}{\actsb_2}{\PSEND{pq}{a}}$,
    which is a contradiction with Lemma~\ref{lem:dep-chain-equi} since
    the two actions are swapped in $k{+}1$. 
                                                                                              \item The case $\ptp{tr} = \ptp{pq}$ is symmetric to the one above.
  \end{itemize}

  \proofsub{Inductive case} 
    Assume the result holds for $\acts$ and let us show it holds for
  $\acts \concat \action$.
    Assume that we have the following situation, where the dashed edges
  need to be shown to exist.
    \[
  \begin{tikzpicture}[baseline=(current bounding box.center)]
        \node (s) {$s$};    
    \node[right=of s, xshift=1cm] (sp) {$s'$};
    \node[right=of sp, xshift=1cm] (spp) {$s''$};
        \node[below=of sp] (tp) {$t'$};
    \node[below=of spp] (tpp) {$t''$};
            \path[->] (s) edge [above] node {$\acts$} (sp);
    \path[->] (sp) edge [above] node {$\action$} (spp);
    \path[->] (s) edge [bend right, above] node {$\actsb$} (tp);
    \path[->] (sp) edge [left] node {$\acts'$} (tp);
    \path[->,dashed] (spp) edge [left] node {} (tpp);
    \path[->,dashed] (tp) edge [left] node {} (tpp);
  \end{tikzpicture}
    \]
  with $s, t' \in \RS_k(S)$ and $s', s'' \in \RS_{k+1}(S)$.

  By induction hypothesis, all configurations between $s'$ and $t'$ and
  between $s'$ and $s''$ are $\IBI{k{+}1}$ and $\OBI{k{+}1}$ (by
  Lemma~\ref{lem:reduced-k-mc-kplus-mc}), hence, we can use a similar
  reasoning to that of Lemma~\ref{lem:new-kclosed-set} to show that
  either $s'' \bTRANSS{\acts'}{k+1} t''$ (with
  $t' \bTRANSS{\action}{k+1} t''$) or $s'' \bTRANSS{\acts'}{k+1} t' $
  (with $t' = t''$).
    \begin{itemize}
  \item If $s'' \bTRANSS{\acts'}{k+1} t''$ (with
    $t' \bTRANSS{\action}{k+1} t''$), then we proceed as in the base
    case with $s \coloneqq t'$ and $s' \coloneqq t''$.
      \item If $s'' \bTRANSS{\acts'}{k+1} t' $ (with $t' = t''$), then we
    only have to show that all configurations on $\acts'$ validate the
    condition. Since there is an equivalent $k$-bounded execution, any
    violation would contradict the hypothesis that $S$ is $\CIBI{k}$.
        \qedhere
  \end{itemize}
  \end{proof}

\lemkrikexhimpkripreduced*
\begin{proof}
    We note that since $S$ is reduced $\OBI{k}$, $\DIBI{k}$ and
  $k$-exhaustive, we have that $S$ is $\IBI{k{+}1}$ by
  Lemma~\ref{lem:kri-kexh-imp-kba}.
      We show this result by contradiction, using
  Lemma~\ref{lem:kri-kexh-imp-kba} and
  Lemma~\ref{lem:exist-path-eqpeer}.
    Assume, by contradiction, that there is $s \in RS_k(S)$ and
  $s'= \csconf{q}{w} \in \RS_{k{+}1}(S)$ such that
  $s \bTRANSS{\acts}{k+1} s'$ with $\p \in \PSet$ s.t.\
  \begin{enumerate}
  \item \label{en:lem:kri-kexh-imp-krip-rcv}
    $s' \bTRANSS{\PRECEIVE{qp}{a}}{k+1}$, and
    $s' \bTRANSS{\PRECEIVE{sp}{b}}{k+1}$, or
      \item \label{en:lem:kri-kexh-imp-krip-sndlater}
    $s' \bTRANSS{\PRECEIVE{qp}{a}}{k+1}$, and
    $ \exists (q_\p, \PRECEIVE{sp}{b}, q'_\p) \in \delta_\p \qst \s
    \neq \q \land
    s \bTRANSR{k+1} \bTRANSS{\PSEND{sp}{b}}{k+1}$
      \end{enumerate}

  \noindent
  \eqref{en:lem:kri-kexh-imp-krip-rcv}  follows from Lemma~\ref{lem:kri-kexh-imp-kba}.

  \noindent
  \eqref{en:lem:kri-kexh-imp-krip-sndlater}
    Assume there is $s'$ such that $s' \bTRANSS{\PRECEIVE{qp}{a}}{k+1}$, and
  $ \exists (q_\p, \PRECEIVE{sp}{b}, q'_\p) \in \delta_\p \qst \s \neq
  \q \land s \bTRANSS{\acts'}{k+1} \bTRANSS{\PSEND{sp}{b}}{k+1} s''$.
    By Lemma~\ref{lem:exist-path-eqpeer}, there is $t \in \RS_k(S)$ such
  that $s \kTRANSS{\actsb} t$ and $s'' \bTRANSS{\acts'}{k+1} t$ with
  $\eqpeer{\actsb}{\acts \concat \acts \concat \PSEND{sp}{b} \concat
    \acts''}$.
    Hence both $\PRECEIVE{qp}{a}$ and $\PSEND{sp}{b}$ appear in $\actsb$
  which contradicts the fact that $S$ is $\DIBI{k}$.
                \end{proof}

\begin{restatable}{lemma}{lemkrikexhimpinfrip}\label{lem:kri-kexh-imp-infrip}
  If $S$ is $\OBI{k}$, $\DIBI{k}$ and $k$-exhaustive, then it is $\infIBI$.
\end{restatable}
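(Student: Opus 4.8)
The plan is to establish that $S$ is $\IBI{n}$ for every bound $n \geq k$ by an induction that ratchets the three hypotheses upward one bound at a time, and then to lift this whole family of bounded properties to the unbounded property $\infIBI$. The lifting exploits two facts: that every reachable configuration is boundedly reachable, and that \CSA\ have no mixed states.

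First I would invoke Lemma~\ref{lem:obi-imp-reduced-obi} to upgrade the hypothesis $\OBI{k}$ to reduced $\OBI{k}$, so that the preceding lemmas apply. I then prove, by induction on $n \geq k$, the invariant that $S$ is reduced $\OBI{n}$, $\DIBI{n}$, and $n$-exhaustive. The base case $n = k$ is exactly the (reduced) hypotheses. For the inductive step, assume the invariant at $n$. Lemma~\ref{lem:kri-kexh-imp-kba} gives $\IBI{n{+}1}$; feeding reduced $\OBI{n}$, $\IBI{n{+}1}$, and $n$-exhaustivity into Lemma~\ref{lem:k-exhau-kplus-exhau} yields reduced $\OBI{n{+}1}$ and $(n{+}1)$-exhaustivity; and Lemma~\ref{lem:kri-kexh-imp-krip-reduced} yields $\DIBI{n{+}1}$. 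This re-establishes the invariant at $n{+}1$. A by-product of each step is $\IBI{n{+}1}$, and since $\DIBI{k}$ gives $\IBI{k}$ by Lemma~\ref{lem:kri-imp-kba}, we conclude that $S$ is $\IBI{n}$ for all $n \geq k$.

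It remains to derive $\infIBI$ from $\IBI{n}$ for all $n \geq k$. Here I would use $\RS(S) = \bigcup_n \RS_n(S)$: any $s \in \RS(S)$ is reached by an execution whose maximal queue length is some finite $m$, hence $s \in \RS_n(S)$ for $n = \max(m,k)$. Fix such an $s$ and a $\p \in \PSet$ with $s \TRANSS{\PRECEIVE{qp}{a}}$. Because firing a receive never increases any queue length, $s$ being $n$-bounded gives $s \kTRANSS{\PRECEIVE{qp}{a}}$, so $\IBI{n}$ applies at $s$. The enabled transition $\PRECEIVE{qp}{a}$ witnesses that $q_\p$ is a receiving state, and since \CSA\ have no mixed states, every outgoing transition of $q_\p$ is a receive. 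Thus any $\action$ with $s \TRANSS{\action}$ and $\subj{\action} = \p$ is a receive; a receive enabled unboundedly at the $n$-bounded configuration $s$ is also $n$-bounded-enabled, so $s \kTRANSS{\action}$, and $\IBI{n}$ forces $\action = \PRECEIVE{qp}{a}$. This is precisely the $\infIBI$ condition at $s$.

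The substantive content is carried by the assumed auxiliary lemmas, so the genuinely new work is the bookkeeping of the induction invariant, and this is the step I would treat most carefully: one must carry reduced $\OBI{n}$, $\DIBI{n}$, and $n$-exhaustivity \emph{together}, since each drives a different lemma, and verify that Lemmas~\ref{lem:k-exhau-kplus-exhau} and~\ref{lem:kri-kexh-imp-krip-reduced} jointly restore all three at the next bound. The final lifting leans essentially on the no-mixed-states assumption, which rules out a send of $\p$ being enabled at a receiving state and is exactly what makes unbounded and $n$-bounded enabledness of $\p$'s transitions coincide at $s$.
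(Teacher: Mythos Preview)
Your proposal is correct and follows essentially the same approach as the paper, which simply records the result as a ``direct consequence'' of Lemmas~\ref{lem:kri-kexh-imp-krip-reduced}, \ref{lem:k-exhau-kplus-exhau}, and~\ref{lem:obi-imp-reduced-obi}; you have spelled out precisely the induction those three lemmas drive (carrying reduced $\OBI{n}$, $\DIBI{n}$, and $n$-exhaustivity together) and made explicit the final lifting from $\IBI{n}$-for-all-$n$ to $\infIBI$, which the paper leaves implicit. One cosmetic point: in your lifting paragraph you write $\kTRANSS{}$ where you mean the $n$-bounded transition $\bTRANSS{}{n}$, but the argument itself is sound.
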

\begin{proof}
  Direct consequence of Lemma~\ref{lem:kri-kexh-imp-krip-reduced},
  Lemma~\ref{lem:k-exhau-kplus-exhau}, and
  Lemma~\ref{lem:obi-imp-reduced-obi}.
\end{proof}

\begin{restatable}{lemma}{lemkrikexhimpcibipreduced}\label{lem:kri-kexh-imp-cibi-reduced}
  If $S$ is reduced $\OBI{k}$, $\CIBI{k}$, and $k$-exhaustive, then it is $\CIBI{k{+}1}$.
\end{restatable}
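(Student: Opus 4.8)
The plan is to follow the template of Lemma~\ref{lem:kri-kexh-imp-krip-reduced}, reducing a putative violation of $\CIBI{k{+}1}$ to a $(k{+}1)$-bounded execution, rescheduling it into an equivalent $k$-bounded one, and then contradicting $\CIBI{k}$. Recall that $\CIBI{k{+}1}$ (Definition~\ref{def:non-csa-mc-dep} read at bound $k{+}1$) has two conjuncts: first, that whenever a receive $\PRECEIVE{qp}{a}$ is $(k{+}1)$-enabled at some $s' \in \RS_{k{+}1}(S)$, no competing receive $\PRECEIVE{sp}{b}$ (with $\s \neq \q$) is $(k{+}1)$-enabled at $s'$; and second, the dependency condition that every matching send $\PSEND{sp}{b}$ reachable after $\PRECEIVE{qp}{a}$ depends on it. First I would dispatch the first conjunct: since $S$ is reduced $\OBI{k}$, $\CIBI{k}$, and $k$-exhaustive, Lemma~\ref{lem:kri-kexh-imp-kba-depends} gives that $S$ is $\IBI{k{+}1}$, which is exactly the statement that at most one receive action is $(k{+}1)$-enabled at each reachable receiving state. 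This also licenses the use of Lemma~\ref{lem:exist-path-eqpeer}, whose hypotheses are reduced $\OBI{k}$, $\IBI{k{+}1}$, and $k$-exhaustivity.

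For the dependency conjunct I would argue by contradiction. Suppose there are $s' \in \RS_{k{+}1}(S)$ with $s' \bTRANSS{\PRECEIVE{qp}{a}}{k+1} s''$, a transition $(q_\p, \PRECEIVE{sp}{b}, q'_\p) \in \delta_\p$ with $\s \neq \q$, and an execution $\acts$ with $s'' \bTRANSS{\acts}{k+1}\bTRANSS{\PSEND{sp}{b}}{k+1} w$ such that $\neg\,\dependsin{s'}{\PRECEIVE{qp}{a}}{\acts}{\PSEND{sp}{b}}$. Fix $\acts_0$ with $s_0 \bTRANSS{\acts_0}{k+1} s'$ and set $\rho = \acts_0 \concat \PRECEIVE{qp}{a} \concat \acts \concat \PSEND{sp}{b}$, a $(k{+}1)$-bounded execution from $s_0$ to $w$. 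Applying Lemma~\ref{lem:exist-path-eqpeer} with the $k$-reachable initial configuration $s_0$ yields $t \in \RS_k(S)$, a $k$-bounded execution $\actsb$ with $s_0 \kTRANSS{\actsb} t$, and a tail $\actsb'$ with $w \bTRANSS{\actsb'}{k+1} t$, satisfying $\actsb \eqpeerop \rho \concat \actsb'$. By Lemma~\ref{lem:valid-word} both $\actsb$ and $\rho \concat \actsb'$ are valid executions from $s_0$, so the projected equivalence $\eqpeerop$ upgrades to causal equivalence $\actsb \resche \rho \concat \actsb'$.

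The final step, which I expect to be the main obstacle, is to extract the contradiction from the $k$-bounded witness $\actsb$. Since $\resche$ preserves the per-participant projections, both $\PRECEIVE{qp}{a}$ and $\PSEND{sp}{b}$ occur in $\actsb$; decompose $\actsb$ as $s_0 \kTRANSS{\actsb_1} t_1 \kTRANSS{\PRECEIVE{qp}{a}} t_1'$ at the occurrence of $\PRECEIVE{qp}{a}$ matching the one in $\rho$, with $t_1 \in \RS_k(S)$. Because the projection onto $\p$ is preserved, the local state of $\p$ at $t_1$ coincides with that at $s'$, so the same competing transition $(q_\p, \PRECEIVE{sp}{b}, q'_\p)$ is available. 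Since $\actsb$ fires the matching send $\PSEND{sp}{b}$ after $t_1'$ within a $k$-bounded execution, $\CIBI{k}$ forces $\dependsin{t_1}{\PRECEIVE{qp}{a}}{\actsb_2}{\PSEND{sp}{b}}$ on the intervening segment $\actsb_2$. Now I would invoke Lemma~\ref{lem:dep-chain-equi}, which states that such a dependency chain is invariant under valid reschedulings: applied to the $\resche$-equivalent valid executions $\actsb$ and $\rho \concat \actsb'$, it transfers the chain back to $\rho$, and since the relevant occurrence of $\PSEND{sp}{b}$ in $\rho$ precedes $\actsb'$, the whole chain lies inside $\acts$, giving $\dependsin{s'}{\PRECEIVE{qp}{a}}{\acts}{\PSEND{sp}{b}}$ --- contradicting our assumption. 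The delicate points are matching the correct occurrences of $\PRECEIVE{qp}{a}$ and $\PSEND{sp}{b}$ across the reschedule (so that $\CIBI{k}$ is applied at the right $k$-reachable configuration $t_1$) and checking that the dependency chain produced by Lemma~\ref{lem:dep-chain-equi} stays within $\acts$ rather than spilling into the tail $\actsb'$; the order- and projection-preservation clauses of Lemma~\ref{lem:dep-chain-equi} are precisely what make this bookkeeping go through.
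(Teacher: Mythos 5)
Your overall strategy is the paper's: derive $\IBI{k{+}1}$ from Lemma~\ref{lem:kri-kexh-imp-kba-depends} to dispatch the first conjunct, then for the dependency conjunct reschedule a violating $(k{+}1)$-bounded execution into a $k$-bounded one via Lemma~\ref{lem:exist-path-eqpeer}, apply $\CIBI{k}$ there, and transfer the resulting chain back with Lemma~\ref{lem:dep-chain-equi}. However, there is a genuine gap in the final step: you assert that in the $k$-bounded witness $\actsb$ the matching send $\PSEND{sp}{b}$ is fired \emph{after} the occurrence of $\PRECEIVE{qp}{a}$, and the whole contradiction rests on this. That ordering is not guaranteed. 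The equivalence $\eqpeerop$ (or $\resche$) only preserves per-participant projections, and $\PRECEIVE{qp}{a}$ and $\PSEND{sp}{b}$ belong to different participants ($\p$ and $\s \neq \p$), so the reschedule may swap them. Worse, this is exactly the situation your contradiction hypothesis invites: the hypothesis is that there is \emph{no} dependency chain from $\PRECEIVE{qp}{a}$ to $\PSEND{sp}{b}$ in $\acts$, and it is precisely such a chain (via Lemma~\ref{lem:dep-chain-equi}) that would force the order to be preserved under rescheduling. So the case you silently exclude is the one most likely to occur.

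The paper handles this by a case split on the shape of $\actsb$. If $\PSEND{sp}{b}$ precedes $\PRECEIVE{qp}{a}$ in $\actsb$, then at the $k$-reachable configuration where $\p$ sits at $q_\p$ ready to consume $\msg{a}$, the message $\msg{b}$ is already in channel $\ptp{sp}$, so both competing receives are enabled; this contradicts the first conjunct of $\CIBI{k}$ (the clause $\neg(s \kTRANSS{\PRECEIVE{sp}{b}})$), not the dependency clause. Only in the other case, where the receive precedes the send in $\actsb$, does your argument apply: $\CIBI{k}$ yields $\dependsin{\hat{s}}{\PRECEIVE{qp}{a}}{\actsb_2}{\PSEND{sp}{b}}$ on the intervening segment, and Lemma~\ref{lem:dep-chain-equi} transports the chain back to $\acts$, contradicting the hypothesis. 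Adding the missing case (with the first-conjunct contradiction) completes your proof and makes it coincide with the paper's.
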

\begin{proof}
  We first note that since $S$ is reduced $\OBI{k}$, $\CIBI{k}$ and
  $k$-exhaustive, we have that $S$ is $\IBI{k{+}1}$ by
  Lemma~\ref{lem:kri-kexh-imp-kba-depends}.

  We show this result by contradiction, using
  Lemma~\ref{lem:kri-kexh-imp-kba-depends} and
  Lemma~\ref{lem:exist-path-eqpeer}.
    Assume, by contradiction, that there is $s \in RS_k(S)$ and
  $s'= \csconf{q}{w} \in \RS_{k{+}1}(S)$ such that
  $s \bTRANSS{\acts}{k+1} s'$ with $\p \in \PSet$,
  $(q_\p, \PRECEIVE{sp}{b}, q'_\p) \in \delta_\p$ and $\s \neq \q$
  s.t.\
  \begin{enumerate}
  \item \label{en:lem:kri-kexh-imp-krip-rcv-cibi}
    $s' \bTRANSS{\PRECEIVE{qp}{a}}{k+1}$, and
    $s' \bTRANSS{\PRECEIVE{sp}{b}}{k+1}$, or
      \item \label{en:lem:kri-kexh-imp-krip-sndlater-cibi}
    $s' \bTRANSS{\PRECEIVE{qp}{a}}{k+1}s''$,
        $s'' \bTRANSS{\acts'}{k+1} \bTRANSS{\PSEND{sp}{b}}{k+1} t$, and
    $\neg (\dependsin{s'}{\PRECEIVE{qp}{a}}{\acts'}{\PSEND{sp}{b}})$
          \end{enumerate}

  \noindent
  \eqref{en:lem:kri-kexh-imp-krip-rcv-cibi} is a contradiction with
  Lemma~\ref{lem:kri-kexh-imp-kba-depends}.
  
  \noindent 
  \eqref{en:lem:kri-kexh-imp-krip-sndlater}
      By Lemma~\ref{lem:exist-path-eqpeer}, there is $t' \in \RS_k(S)$
  such that $s \kTRANSS{\actsb} t'$ and $t \bTRANSS{\acts''}{k+1} t'$
  with
  $\eqpeer{\actsb}{\acts \concat \PRECEIVE{qp}{a} \concat \acts'
    \concat \PSEND{sp}{b} \concat \acts''}$.
        There are two cases:
  \begin{enumerate}
  \item \label{en:kri-kexh-imp-cibi-reduced-snd-first}
 If
    $\actsb = \actsb_1 \concat \PSEND{sp}{b} \concat \actsb_2 \concat
    \PRECEIVE{qp}{a} \concat \actsb_3$,
    with 
    $\proj{\actsb_1 \concat \PSEND{sp}{b} \concat \actsb_2}{p}
    = \proj{\acts}{p}
    $
    and
    $\proj{\actsb_1}{s} = 
    \proj{\acts \concat \PRECEIVE{qp}{a} \concat \acts'}{s}$,
        then we have a contradiction with the assumption that $S$ is
    $\CIBI{k}$ since $\p$ can receive  $\msg{b}$ and $\msg{a}$
    after having executed
    $\proj{\actsb_1 \concat \PSEND{sp}{b} \concat \actsb_2}{p}$, i.e.,
    both messages are in the queue.
      \item If
    $\actsb = \actsb_1 \concat \PRECEIVE{qp}{a} \concat \actsb_2
    \concat \PSEND{sp}{b} \concat \actsb_3$,
        with 
    $\proj{\actsb_1}{p} = \proj{\acts}{p}$
    and
    $\proj{\actsb_1 \concat \PRECEIVE{qp}{a} \concat \actsb_2}{s}
    =
    \proj{\acts \concat \PRECEIVE{qp}{a} \concat \acts'}{s}$,
        then we must have
    $\dependsin{\hat{s}}{\PRECEIVE{qp}{a}}{\actsb_2}{\PSEND{sp}{b}}$
    (assuming $s_0 \TRANSS{\actsb_1} \hat{s}$) since $S$ is
    $\CIBI{k}$.
        By Lemma~\ref{lem:dep-chain-equi}, we must also have
    $\dependsin{{s'}}{\PRECEIVE{qp}{a}}{\acts'}{\PSEND{sp}{b}}$, a
    contradiction.
            \qedhere 
  \end{enumerate}
\end{proof}

\begin{restatable}{lemma}{lemkrikexhimpinfripcibi}\label{lem:kri-kexh-imp-infrip-cibi}
  If $S$ is $\OBI{k}$, $\CIBI{k}$ and $k$-exhaustive, then it is $\infIBI$.
\end{restatable}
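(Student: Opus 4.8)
The plan is to mirror the proof of the $\DIBI{k}$ analogue (Lemma~\ref{lem:kri-kexh-imp-infrip}), substituting the $\CIBI{k}$-specific ingredients for the $\DIBI{k}$ ones. First I would apply Lemma~\ref{lem:obi-imp-reduced-obi} to pass from $\OBI{k}$ to \emph{reduced} $\OBI{k}$, so that the reduced-transition-system lemmas become available.

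Next I would establish, by induction on $n \geq k$, the invariant that $S$ is simultaneously reduced $\OBI{n}$, $\CIBI{n}$, and $n$-exhaustive. The base case $n = k$ is exactly the hypothesis (together with the reduced $\OBI{k}$ just obtained). For the inductive step, assuming reduced $\OBI{n}$, $\CIBI{n}$, and $n$-exhaustivity, Lemma~\ref{lem:kri-kexh-imp-cibi-reduced} yields $\CIBI{n{+}1}$. Lemma~\ref{lem:kri-imp-kba-depends} then converts $\CIBI{n{+}1}$ into $\IBI{n{+}1}$, which is precisely the premise needed to invoke Lemma~\ref{lem:k-exhau-kplus-exhau} and conclude reduced $\OBI{n{+}1}$ and $(n{+}1)$-exhaustivity. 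This restores the invariant at $n{+}1$ and closes the induction.

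From the invariant, $\CIBI{n}$ holds for every $n \geq k$, hence so does $\IBI{n}$ by Lemma~\ref{lem:kri-imp-kba-depends}. To conclude $\infIBI$, I would observe that any reachable configuration $s$, together with any two receive actions enabled at $s$ in the unbounded semantics, is $n$-bounded for some sufficiently large $n$; were two distinct receptions enabled at such an $s$, this would violate $\IBI{n}$ for that $n$. Thus the whole family $\{\IBI{n}\}_{n \geq k}$ entails the unbounded property $\infIBI$.

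The genuine work is entirely hidden inside Lemma~\ref{lem:kri-kexh-imp-cibi-reduced}, so I expect the present statement to be a short inductive wrapper. The main obstacle there---which I would treat as already discharged when assembling this proof---is showing that no second reception can become enabled once the bound is raised to $k{+}1$: the argument reroutes a hypothetical $(k{+}1)$-bounded witness back to a $k$-reachable configuration via Lemma~\ref{lem:exist-path-eqpeer}, and then uses the dependency-chain preservation of Lemma~\ref{lem:dep-chain-equi} to contradict the dependency condition built into $\CIBI{k}$ (Definition~\ref{def:non-csa-mc-dep}).
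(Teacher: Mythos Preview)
Your proposal is correct and follows essentially the same approach as the paper, which cites exactly Lemma~\ref{lem:k-exhau-kplus-exhau}, Lemma~\ref{lem:kri-kexh-imp-cibi-reduced}, and Lemma~\ref{lem:obi-imp-reduced-obi}. You have simply made the implicit induction explicit and added the bridge via Lemma~\ref{lem:kri-imp-kba-depends} from $\CIBI{n}$ to $\IBI{n}$, which the paper leaves to the reader.
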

\begin{proof}
  By Lemma~\ref{lem:k-exhau-kplus-exhau},
  Lemma~\ref{lem:kri-kexh-imp-cibi-reduced}, and Lemma~\ref{lem:obi-imp-reduced-obi}.
\end{proof}

\lemkrikexhimpinfripBOTH*
\begin{proof}
  By Lemma~\ref{lem:kri-kexh-imp-infrip-cibi} and Lemma~\ref{lem:kri-kexh-imp-infrip}.
\end{proof}

 \subsection{Proofs for Section~\ref{sec:new-completeness} (local-bound agnosticity)}

\begin{restatable}{lemma}{lemkexhimpweakbisimsys}\label{lem:kexh-imp-weakbisim-sys}
  If $S$ is (reduced) $\OBI{k}$, $\infIBI$, and $k$-exhaustive, then
  
  $\forall \p \in \PSet \qst \epsproj{\kTS{S}}{p} \wbisim
  \epsproj{\TS{k+1}{S}}{p}$.
\end{restatable}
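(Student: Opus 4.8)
The plan is to exhibit an explicit weak bisimulation between the two projected transition systems. Before doing anything, I would upgrade the hypotheses to the full strength of the confluence machinery. Since $S$ is $\infIBI$ it is in particular $\IBI{n}$ for every $n$: a $k$-bounded transition is a reachable transition, so the unbounded input-independence constraint specialises to every finite bound. Combined with (reduced) $\OBI{k}$ (using Lemma~\ref{lem:obi-imp-reduced-obi} if $\OBI{k}$ is only given in the unreduced form) and $k$-exhaustivity, Lemma~\ref{lem:k-exhau-kplus-exhau} then yields that $S$ is also reduced $\OBI{k{+}1}$ and $(k{+}1)$-exhaustive. In particular Lemma~\ref{lem:exist-path-eqpeer}, Lemma~\ref{lem:exist-path-to-k}, Lemma~\ref{lem:new-kclosed-set}, and Lemma~\ref{lem:closed-set-paths} all apply.

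Write $A = \epsproj{\kTS{S}}{p}$ and $B = \epsproj{\TS{k{+}1}{S}}{p}$. Because $\RS_k(S) \subseteq \RS_{k{+}1}(S)$ and $\kTRANSS{} \subseteq \bTRANSS{}{k{+}1}$, the system $A$ is literally a sub-LTS of $B$, so $B$ weakly simulates $A$ for free: every $A$-step is already a $B$-step carrying the same projected label. All the genuine content is in the converse direction, and I would capture both directions with a single relation. The naive choice — relating $s$ and $s'$ when they are joined by a $\p$-silent $(k{+}1)$-bounded path — cannot work, because a $(k{+}1)$-bounded configuration may hold one more queued message destined for $\p$ than any $k$-bounded configuration can, and the only consumer of that queue is $\p$, so no $\p$-silent path can descend from such a state into $\RS_k(S)$. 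The related $k$-bounded configuration must instead keep the senders "one send behind" and replay those sends silently to feed $\p$ at the appropriate moments. I would therefore relate configurations that drive to a common configuration along executions $\p$ cannot distinguish:
\[
\arelation = \{\, (s,s') : s \in \RS_k(S),\ s' \in \RS_{k{+}1}(S),\ \exists\, t,\acts,\actsb.\ s \kTRANSS{\acts} t \,\land\, s' \bTRANSS{\actsb}{k{+}1} t \,\land\, \onpeer{p}{\acts} = \onpeer{p}{\actsb} \,\}.
\]
We have $(s_0,s_0) \in \arelation$ by taking $t = s_0$ and $\acts = \actsb = \emptyw$.

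The verification of the transfer conditions is where Lemma~\ref{lem:exist-path-eqpeer} is the workhorse: given any $(k{+}1)$-bounded execution out of a $k$-reachable configuration, it produces a $k$-bounded execution to a common configuration with the same per-participant projection, which is exactly what is needed to turn a move available only in $B$ into a weak move in $A$ landing in a $\arelation$-related state. For a silent ($\p$-invisible) move I would use the $k$-closed-set lemmas (Lemma~\ref{lem:new-kclosed-set}, Lemma~\ref{lem:closed-set-paths}) to commute it past the blocks of non-$\p$ actions in the witnessing interleaving, absorbing the senders' compensating steps into further silent steps; the equality $\onpeer{p}{\acts} = \onpeer{p}{\actsb}$ then forces the visible labels to coincide. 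The two local sub-analyses are routine once this is set up: at a sending state $\OBI{k}$ and $k$-exhaustivity guarantee that every send transition is weakly enabled under either bound (after $\p$-silent draining), and at a receiving state $\infIBI$ guarantees that exactly one receive can be weakly enabled, so the visible branching of $\p$ is identical in $A$ and $B$.

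The hard part will be precisely the realignment of the non-$\p$ interleaving in the backward direction, i.e.\ showing that the compensating sends and receives of the other participants can always be rescheduled — across the one-unit capacity gap between $k$ and $k{+}1$ — so that the common-target configuration and the projected-label agreement are re-established and membership in $\arelation$ is preserved. This is exactly the confluence-up-to-$\p$-silence that Lemma~\ref{lem:exist-path-eqpeer} and the closed-set lemmas are engineered to deliver, so the proof reduces to invoking them with the right instantiations and checking that $\arelation$ is closed under the induced weak moves; essentially all the difficulty sits in that realignment rather than in the surrounding bisimulation bookkeeping.
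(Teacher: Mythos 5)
Your overall strategy is sound and, at its core, it is the same as the paper's: everything reduces to Lemma~\ref{lem:exist-path-eqpeer} (after upgrading the hypotheses via Lemma~\ref{lem:k-exhau-kplus-exhau}, exactly as you do). The paper, however, argues by contradiction rather than exhibiting a relation: it assumes $\epsproj{\kTS{S}}{p} \not\wbisim \epsproj{\TS{k{+}1}{S}}{p}$, extracts a configuration $s \in \RS_k(S)$ and an action $\action$ with $\subj{\action}=\p$ such that $s \bTRANSS{\acts}{k{+}1} \bTRANSS{\action}{k{+}1}$ with $\proj{\acts}{p}=\emptyw$ while no $k$-bounded $\p$-silent execution from $s$ enables $\action$, and then one application of Lemma~\ref{lem:exist-path-eqpeer} produces a $k$-bounded $\actsb$ from $s$ whose $\p$-projection begins with $\action$, a contradiction. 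That route never has to commit to an explicit bisimulation relation, which is precisely where your version runs into trouble.

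The concrete gap is in the relation $\arelation$ itself. Requiring only $\onpeer{p}{\acts}=\onpeer{p}{\actsb}$ for the two joining executions does not pin down the local state of $\p$ at $s$ and $s'$: determinism of $M_\p$ gives same-source-plus-same-word implies same-target, not the converse, so $q_\p$ at $s$ and $q_\p$ at $s'$ may be distinct states that happen to reach $q_\p$ at $t$ under the same action word. If one of those states has an extra outgoing transition (say two sending states $q_1,q_2$ with $q_1 \TRANSS{\PSEND{pq}{c}} q_3$, $q_2 \TRANSS{\PSEND{pq}{c}} q_3$ and $q_1 \TRANSS{\PSEND{pq}{d}} q_4$), then $(s,s')\in\arelation$ while $s$ offers a visible $\PSEND{pq}{d}$ that $s'$ can never match, so $\arelation$ is not a weak bisimulation; nothing in $\OBI{k}$, $\infIBI$ or $k$-exhaustivity rules such pairs out, and your proposal does not argue that they are unreachable. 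Relatedly, your claim that the $A$-to-$B$ direction is ``for free'' holds only for the identity pairs of $\arelation$: for a pair with $s\neq s'$ the forward transfer is exactly as hard as the backward one and is not addressed. To repair the construction you would need to strengthen $\arelation$ (e.g.\ additionally require $\p$ to occupy the same local state in $s$ and $s'$, with the joining executions arranged so that the queues feeding $\p$ at $s'$ are a sender-reproducible extension of those at $s$), at which point verifying closure under moves genuinely requires the confluence machinery you cite --- or simply adopt the paper's contradiction argument, which avoids the issue entirely.
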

\begin{proof}
  Pose $\kTS{S} = (N, s_0, \Delta)$ and $\TS{k{+}1}{S} = (N', s_0,
  \Delta')$.
    Recall that we have $ \Delta \subseteq \Delta'$ and $N \subseteq N'$.  $\neg(\epsproj{\kTS{S}}{p} \wbisim \epsproj{\TS{k+1}{S}}{p})$ for
  some $\p \in \PSet$.
    Then, there are $s \in N \cap N'$ and $\action$ (with $\subj{\action} = \p$) such that
    $     s \bTRANSS{\acts}{k{+}1} s'
    \bTRANSS{\action}{k{+}1} s''$ with $\proj{\acts}{p} = \emptyw$
      and
    \begin{equation}\label{eq:wbisim-hyp}
    \forall \acts' \in \ASet \qst \forall s'' \in \RS_k(S) \qst 
    s \kTRANSS{\acts'} s'' 
    \land \proj{\acts'}{p} = \emptyw
    \implies \neg (s'' \kTRANSS{\action})
  \end{equation}

  By Lemma~\ref{lem:exist-path-eqpeer}, there is 
      there is $t \in \RS_k(S)$ and $\actsb, \, \actsb' \in \ASetC$, such
  that $s \kTRANSS{\actsb} t$, $s'' \bTRANSS{\actsb'}{k+1} t$,
  $\eqpeer{\actsb}{\acts \concat \action \concat \actsb'}$.
    Hence, we have $s \kTRANSS{\actsb}$ with
  $\proj{\actsb}{p} = \action \concat \actsb''$ for some $\actsb''$
  with contradicts~\eqref{eq:wbisim-hyp}.
      \end{proof}

\begin{restatable}{lemma}{lemcompleteness}\label{lem:completeness}
  If $S$ is such that
  $\exists k \in \naturals_{>0} \qst \forall \p
  \in \PSet \qst \epsproj{\kTS{S}}{p} \wbisim
  \epsproj{\TS{k{+}1}{S}}{p}$, then $S$ is $k$-exhaustive.
  \end{restatable}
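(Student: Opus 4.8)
The plan is to prove the contrapositive: assuming $S$ is \emph{not} $k$-exhaustive, I exhibit a participant $\p$ for which $\neg(\epsproj{\kTS{S}}{p} \wbisim \epsproj{\TS{k{+}1}{S}}{p})$. First I would unpack non-exhaustivity (Definition~\ref{def:exhaustive}) to extract a witness: a configuration $s = \csconf{q}{w} \in \RS_k(S)$, a participant $\p$ whose local state $q_\p$ is sending, and a send transition $(q_\p, \action, q'_\p) \in \delta_\p$ with $\action = \PSEND{pq}{a}$, such that there is \emph{no} $k$-bounded $\acts$ with $\p \notin \acts$ and $s \kTRANSS{\acts}\kTRANSS{\action}$. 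Two observations drive the argument. (i) Taking $\acts = \emptyw$ must already fail, so $\lvert w_{\p\q} \rvert = k$ in $s$, and more generally $s$ has \emph{no} weak $\action$-transition in $\epsproj{\kTS{S}}{p}$ (such a transition is exactly a $\p$-free $k$-bounded run followed by $\action$). (ii) Since $\RS_k(S) \subseteq \RS_{k{+}1}(S)$ and $\lvert w_{\p\q} \rvert = k < k{+}1$, the \emph{same} $s$ is a state of $\TS{k{+}1}{S}$ and $s \bTRANSS{\action}{k{+}1} s''$; as $\subj{\action} = \p$ we have $\proj{\action}{p} = \action$, so $s$ \emph{does} carry a visible $\action$-transition in $\epsproj{\TS{k{+}1}{S}}{p}$.

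The core of the argument is then to turn this local discrepancy at the \emph{shared} state $s$ into a failure of weak bisimilarity of the initial states. I would exploit that $\epsproj{\kTS{S}}{p}$ is a sub-LTS of $\epsproj{\TS{k{+}1}{S}}{p}$ (identical initial state $s_0$, $\RS_k(S) \subseteq \RS_{k{+}1}(S)$, and $\kTRANSS{} \subseteq \bTRANSS{}{k{+}1}$), together with the fact that $s$ is reached by one and the same execution $\rho$ in both systems (any $k$-bounded $\rho$ with $s_0 \kTRANSS{\rho} s$ is also $(k{+}1)$-bounded). Playing the weak-bisimulation game with the $(k{+}1)$-side as Spoiler, Spoiler drives the $(k{+}1)$-side along $\rho$ to $s$ and then plays the extra visible move $\action$; a Duplicator with a winning strategy must answer from some $\epsproj{\kTS{S}}{p}$-configuration $u$ reached with $\p$-projection $\proj{\rho}{p}$ (hence with $\p$-local state $q_\p$, by determinism of $M_\p$) via a weak $\action$-transition $u \kTRANSS{\acts_1}\kTRANSS{\action}$ with $\p \notin \acts_1$. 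To refute bisimilarity I must rule this out, i.e.\ force the confrontation to occur at $s$ itself rather than at some other $\proj{\rho}{p}$-reachable configuration.

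The main obstacle is precisely this \emph{localisation}: a single extra transition at a reachable shared state need not break weak bisimilarity when the witnessing relation is non-trivial, since the Duplicator may sit at a different configuration $u \neq s$ that can still fire $\action$. I would resolve it by strengthening the choice of witness so that the obstruction is \emph{uniform} along every $\proj{\rho}{p}$-reachable configuration: concretely, by minimising on the saturated channel $\ptp{pq}$ (whose length is pinned to $k$ at $s$) to guarantee that no $k$-bounded, $\p$-free continuation from any such configuration can lower $\lvert w_{\p\q} \rvert$ below $k$, so that $\action$ is weakly disabled for every candidate $u$; and by a complementary induction on $\lvert \rho \rvert$ in which Lemma~\ref{lem:eqpeer-imp-same-state} (so that $\eqpeerop$-equivalent runs from $s_0$ collapse to the same configuration) propagates the discrepancy back towards $s_0$. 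Discharging this step, the Duplicator cannot answer Spoiler's $\action$-move, contradicting $s_0 \wbisim s_0$ and establishing that bisimilarity entails $k$-exhaustivity.
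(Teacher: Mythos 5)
Your proposal follows the paper's proof of Lemma~\ref{lem:completeness} essentially step for step up to the decisive point: argue by contradiction, extract from non-$k$-exhaustivity a witness $s \in \RS_k(S)$ with $q_\p$ sending and a transition $\PSEND{pq}{a}$ that no $\p$-free $k$-bounded run from $s$ enables, observe that consequently $\lvert w_{\p\q} \rvert = k$, so that $(s,\PSEND{pq}{a},s'')$ is a transition of $\TS{k{+}1}{S}$ but $\PSEND{pq}{a}$ is not weakly enabled at $s$ in $\epsproj{\kTS{S}}{p}$, and play this extra visible transition against the assumed weak bisimilarity. The only non-trivial step is the one you call \emph{localisation}; the paper dispatches it in one line, asserting that root bisimilarity together with $N \subseteq N'$ and $\Delta \subseteq \Delta'$ yields $\epsproj{(N,s,\Delta)}{p} \wbisim \epsproj{(N',s,\Delta')}{p}$, i.e.\ it silently re-roots both systems at the \emph{same} configuration $s$, after which the contradiction is immediate. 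Your diagnosis of why this is delicate (the Duplicator may answer Spoiler's run to $s$ from some $u \neq s$ with the same $\p$-projection history, and such a $u$ might weakly fire $\PSEND{pq}{a}$ without contradicting the failure at $s$) is more explicit than anything in the paper.

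The gap is in your resolution of that step, and it is a real one. First, \emph{minimising on the saturated channel} does not give the uniformity you want: two configurations $s$ and $u$ reached from $s_0$ by runs with the same projection onto $\p$ agree on how many messages $\p$ has pushed onto $\ptp{pq}$, but not on how many $\q$ has already consumed, so $\lvert w_{\p\q} \rvert = k$ at $s$ does not force the channel to be full, or un-drainable by $\p$-free $k$-bounded continuations, at $u$; indeed $u$ may satisfy the exhaustivity clause for $(q_\p,\PSEND{pq}{a},q'_\p)$ even though $s$ does not. Second, Lemma~\ref{lem:eqpeer-imp-same-state} cannot collapse $u$ to $s$: it requires the two runs from $s_0$ to be $\eqpeerop$-equivalent, i.e.\ to agree on the projections onto \emph{every} participant, whereas the bisimulation game for $\p$ only constrains the $\p$-projection of the Duplicator's run. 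So neither of the two tools you invoke closes the step, and the ``complementary induction on $\lvert \rho \rvert$'' is not specified enough to check. In fairness, the paper's own justification of exactly this step is equally thin; a complete argument needs either a proof that the weak bisimulation between these two particular LTSs can be chosen to contain the identity on the shared reachable states, or a restructuring of the witness so that the discrepancy is already visible at $s_0$.
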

\begin{proof}
    Assume by contradiction that there is some $k \in \naturals_{>0}$
  such that
  \begin{equation}\label{eq:wbisim}
    \forall \p \in \PSet \qst \epsproj{\kTS{S}}{p} \wbisim
    \epsproj{\TS{k{+}1}{S}}{p}
  \end{equation}
  and $S$ is \emph{not} $k$-exhaustive.

  Pose $\kTS{S} = (N, s_0, \Delta)$ and
  $\TS{k{+}1}{S} = (N', s_0, \Delta')$.
    Recall that we have $ \Delta \subseteq \Delta'$ and
  $N \subseteq  N'$.
  
  Since $S$ is \emph{not} $k$-exhaustive, there are
  $s = \csconf{q}{w} \in \RS_k(S)$ and $\ptp{pq} \in \CSet$ such that
  $s \TRANSS{\PSEND{pq}{a}}$ and
    \begin{equation}\label{eq:nosend}
        \forall \acts \in \ASetC \qst \forall s' \in \RS_k(S) \qst s
    \kTRANSS{\acts} s' \land \p \notin \acts \implies \neg (s' \kTRANSS{\PSEND{pq}{a}})
  \end{equation}
  Since $s \in \RS_k(S)$ and $\neg (s \kTRANSS{\PSEND{pq}{a}})$, we must
  have $\lvert \word_{\ptp{pq}} \rvert = k$.
    Hence, $s \bTRANSS{\PSEND{pq}{a}}{k+1}$ and therefore
  \begin{equation}\label{eq:deltap}
    (s, {\PSEND{pq}{a}},s'') \in \Delta' \qquad \text{for some } s'' \in N'
  \end{equation}
    
  By~\eqref{eq:wbisim} and the fact that $ \Delta \subseteq \Delta'$
  and $N \subseteq N'$, we must have 
    \[
  \epsproj{(N, s, \Delta)}{p} \wbisim \epsproj{(N', s, \Delta')}{p}
  \]
  which is clearly a contradiction with~\eqref{eq:nosend}
  and~\eqref{eq:deltap}.
\end{proof}

\corbisimcsa*
\begin{proof}
  Take $S$ such that
  $\exists k \qst \forall \p \in \PSet \qst \epsproj{\kTS{S}}{p} \wbisim
  \epsproj{\TS{k+1}{S}}{p}$.
    Then, by Lemma~\ref{lem:completeness}, $S$ is $k$-exhaustive.
    Since $S$ is $\OBI{k}$ and $\IBI{(k{+}1)}$ by assumption, $S$ is
  $n$-exhaustive for any $n \geq k$, by
  Lemma~\ref{lem:k-exhau-kplus-exhau}.
    Hence, by Lemma~\ref{lem:kexh-imp-weakbisim-sys}, we have
  $\forall \p \in \PSet \qst \epsproj{\TS{n}{S}}{p} \wbisim
  \epsproj{\TS{n{+}1}{S}}{p}$ (for any $n \geq k$).
\end{proof}

\thmcompleteness*
\begin{proof}
      Part (1)
    follows from Lemma~\ref{lem:completeness}
                      and
  Part (2)
  follows from Lemma~\ref{lem:kexh-imp-weakbisim-sys}.
                  \end{proof}

 \section{Synchronous multiparty compatibility}\label{app:smc}

We adapt the definition of (synchronous) multiparty compatibility
from~\cite[Definition 4]{BLY15} to our setting (this definition is adapted
from~\cite[Definition 4.2]{DY13}).

We write $\sync{\acts}$ iff $\acts = \emptyw$, or
$\acts = \PSEND{pq}{a} \concat \PRECEIVE{pq}{a} \concat \acts'$ and
$\sync{\acts'}$.
We say that $s$ is \emph{stable} iff $s = \stablecsconf{q}$ and define
$\RS_0(S)$ as follows:
\[
\RS_0(S) \defi 
\{ s \st s_0 \bTRANSS{\acts}{1} s \land \sync{\acts} \} \cup
\{ s \st s_0 \bTRANSS{\acts \concat \PSEND{pq}{a}}{1} s \land \sync{\acts} \}
\]

\begin{definition}[$\TS{0}{S}$]\label{def:ts-zero}
    The synchronous transition system of $S$, written $\TS{0}{S}$, is
  the labelled transition system $(N, s_0, \Delta)$ such that
  $N = \RS_0(S)$, $s_0$ is the initial configuration of $S$,
  $\Delta \subseteq N {\times} \ASet {\times} N$ is the transition
  relation such that
  \begin{itemize}
  \item $(s, \PSEND{pq}{a}, s'') \in \Delta$ iff
    $\exists s' \in N \qst s \bTRANSS{\PSEND{pq}{a}}{1} s'
    \bTRANSS{\PRECEIVE{pq}{a}}{1} s'' $; and
      \item $(s', \PRECEIVE{pq}{a}, s'') \in \Delta$ iff
    $\exists s \in N \qst s \bTRANSS{\PSEND{pq}{a}}{1} s'
    \bTRANSS{\PRECEIVE{pq}{a}}{1} s'' $.
  \end{itemize}
      We write $s_1 \bTRANSS{\acts}{0} s_{n+1}$ if
  $\acts = \action_1 \cdots \action_n$ and
  $\forall 1 \leq i \leq n \qst (s_i, \action_i, s_{i+1}) \in \Delta$.
\end{definition}

\begin{definition}[Synchronous multiparty compatibility~\cite{BLY15}]\label{def:smc}
  $S$ is \emph{synchronous multiparty compatible} (\SMC) if, letting
  $\TS{0}{S} = (N, s_0, \Delta)$, for all $\p \in \PSet$, for all
  $q \in Q_\p$, and for all \emph{stable} $\csconf{q}{w} \in N$, if
  $q = q_\p$, then
  \begin{enumerate}
  \item if $q_\p$ is a sending state, then
    $\forall (q,\action,q') \in \delta_\p \qst \exists \acts \qst   
        \sync{\acts} \land
        s  \bTRANSS{\acts}{0}
    \land \proj{\acts}{p} = \action
    $;
  \item if $q_\p$ is a receiving state, then
    $\exists (q,\action,q') \in \delta_\p \qst \exists \acts \qst   
        \sync{\acts} \land
        s  \bTRANSS{\acts}{0}
    \land \proj{\acts}{p} = \action
    $.
  \end{enumerate}
\end{definition}

\begin{restatable}{lemma}{lemsmcclosed}\label{lem:k-smc-closed}
    Let $S$ be \emph{directed} and \SMC. 
    For all
  \emph{stable} $s \in \RS_0(S)$, if $s \TRANSS{\PSEND{pq}{a}}$ and
  $\paset = \{ \acts \st s \bTRANSS{\acts}{0}\bTRANSS{\PSEND{pq}{a}}{0}
  \land \p \notin \acts \land \sync{\acts}\}$,
  then $\nclosed{1}{\paset \neq \emptyset}{s}$.
  \end{restatable}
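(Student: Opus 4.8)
The plan is to follow the same skeleton as the proof of Lemma~\ref{lem:k-mc-kclosed}, but carried out in the synchronous transition system $\TS{0}{S}$ of Definition~\ref{def:ts-zero}: synchronous multiparty compatibility (Definition~\ref{def:smc}) takes over the role played there by $k$-exhaustivity, while directedness replaces the reduced $\OBI{k}$ and input bound independence assumptions, since by Proposition~\ref{prop:directed-iobi} a directed system is $\OBI{k}$ and $\infIBI$ for every $k$. The guiding observation is that a word $\acts$ with $\sync{\acts}$ is a concatenation of matched send/receive pairs, hence keeps at most one message in flight and is realised by a genuine $1$-bounded run; together with $s \in \RS_0(S) \subseteq \RS_1(S)$ this lets me discharge every reachability obligation of Definition~\ref{def:kclosed} into $\RS_1(S)$.

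First I would settle non-emptiness. As $s$ is stable and $s \TRANSS{\PSEND{pq}{a}}$, the local state $q_\p$ is sending, so Definition~\ref{def:smc}(1) applied at $s$ yields a synchronous $\actsb$ with $s \bTRANSS{\actsb}{0}$ and $\proj{\actsb}{p} = \PSEND{pq}{a}$. Since $\p$'s only action in $\actsb$ is this send, I can write $\actsb = \acts' \concat \PSEND{pq}{a} \concat \acts''$ with $\p \notin \acts'$ and $\sync{\acts'}$, giving $\acts' \in \paset$. Condition (1) of Definition~\ref{def:kclosed} is then immediate: each $\acts \in \paset$ is synchronous, hence $1$-bounded from $s$, so $s \bTRANSS{\acts}{1} s'$ with $s' \in \RS_1(S)$.

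The substance is condition (2). Take $\acts_0 \concat \PSEND{sr}{b} \concat \acts_1 \in \paset$; because the whole word is synchronous, its prefix $\acts_0$ is again a sequence of matched pairs, so $s \bTRANSS{\acts_0}{0} \hat s$ reaches a \emph{stable} configuration $\hat s$, and $\p \notin \acts_0$ leaves $q_\p$ in its sending state. At $\hat s$ participant $\s$ is sending, so for each $(q_\s, \action, q'_\s) \in \delta_\s$ Definition~\ref{def:smc}(1) supplies a synchronous witness from $\hat s$ that realises $\action$ with no earlier $\s$-action. A preliminary point, on which directedness is essential, is that $\obj{\action} \neq \p$: if some choice had $\obj{\action} = \p$, send-directedness would force every send of $\s$ at $q_\s$ — in particular $\PSEND{sr}{b}$ — to target $\p$, whence the matched $\PRECEIVE{sp}{b}$ would lie inside the synchronous word $\acts_0 \concat \PSEND{sr}{b} \concat \acts_1$, contradicting $\p \notin \acts$. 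Thus none of $\s$'s alternatives requires $\p$ as a receiver.

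Finally I would assemble the member of $\paset$ demanded by condition (2). Splicing the witness after $\acts_0$ produces a synchronous run that fires $\action$ after some prefix $\acts_2$ with $\s \notin \acts_2$; I then use that directed systems are $\OBI{k}$ and $\infIBI$ (Proposition~\ref{prop:directed-iobi}) to commute $\p$'s moves — all of which are sends on the single channel $\ptp{pq}$ — to the end, so that $\acts_2$ may be taken with $\p \notin \acts_2$ while $\sync{\cdot}$ is preserved. After $\action$ and its matched receive the run reaches a stable configuration at which $q_\p$ is still sending; a further application of Definition~\ref{def:smc}(1) there yields a $\p$-free synchronous $\acts_3$ ending with $\PSEND{pq}{a}$ enabled, placing $\acts_0 \concat \acts_2 \concat \action \concat \acts_3 \in \paset$. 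The main obstacle I anticipate is exactly this assembly step: showing that the synchronous witnesses furnished by \SMC\ can be purged of $\p$-moves and concatenated while preserving both synchrony and the eventual firability of $\PSEND{pq}{a}$. This is the synchronous, $1$-bounded counterpart of the dependency-chain analysis in the proof of Lemma~\ref{lem:k-mc-kclosed}, and I expect directedness — which rules out races between a participant's distinct channels — together with the frozen, single-channel status of $\p$ to be precisely what tames it.
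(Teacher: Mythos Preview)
Your proposal is correct and follows essentially the same approach as the paper. The paper's own proof is a one-liner deferring to Lemma~\ref{lem:k-mc-kclosed}, noting only that \SMC\ plays the role of $k$-exhaustivity and directedness gives $\OBI{1}$; you identify the same substitutions and in fact supply considerably more detail than the paper, including the observation that prefixes of synchronous words land in stable configurations (so \SMC\ can be re-applied) and the send-directedness argument ruling out $\obj{\action}=\p$.
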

\begin{proof}
  The proof is an instance of the proof of
  Lemma~\ref{lem:k-mc-kclosed}, noting that (1) \SMC\ guarantees the
  existence of a synchronous execution that includes all send actions
  enabled at a given sending state and (2) directedness implies
  $\OBI{1}$. 
        \end{proof}

\begin{lemma}\label{lem:smc-eixstpath-eqpeer}
  Let $S$ be \emph{directed} and \SMC, then for all $s \in \RS_1(S)$
  such that $s_0 \bTRANSS{\acts}{1} s$, there are
  $\acts', \actsb \in \ASetC$ and stable $t \in \RS_0(S)$ such that
  $s \bTRANSS{\acts'}{1} t$, $s_0 \bTRANSS{\actsb}{0} t$,
  $\acts \concat \acts' \eqpeerop \actsb$, and $\sync{\actsb}$.
\end{lemma}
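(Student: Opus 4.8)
The plan is to prove this statement by induction on the length of $\acts$, following the same overall strategy as the proof of Lemma~\ref{lem:exist-path-eqpeer}, but with the synchronous transition system $\TS{0}{S}$ playing the role that the $k$-bounded system plays there. Since $S$ is directed, Proposition~\ref{prop:directed-iobi} gives that $S$ is $\OBI{1}$ and $\IBI{1}$ (hence reduced $\OBI{1}$ by Lemma~\ref{lem:obi-imp-reduced-obi}), so the closed-set machinery of Lemmas~\ref{lem:new-kclosed-set} and~\ref{lem:closed-set-paths} is available at $k=1$, and Lemma~\ref{lem:k-smc-closed} supplies the synchronous analogue of Lemma~\ref{lem:k-mc-kclosed}.

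For the base case $\acts = \emptyw$ we have $s = s_0$, which is stable and lies in $\RS_0(S)$; taking $t = s_0$ and $\acts' = \actsb = \emptyw$ discharges every conjunct. For the inductive step I would write $s_0 = s_1 \bTRANSS{\action_1}{1} s_2 \cdots \bTRANSS{\action_n}{1} s_{n+1} = s$. Because $s_0$ is stable, the first action is necessarily a send, say $\action_1 = \PSEND{pq}{a}$ with subject $\p$. By Lemma~\ref{lem:k-smc-closed} the set $\paset = \{\actsb_0 \st s_0 \bTRANSS{\actsb_0}{0}\bTRANSS{\PSEND{pq}{a}}{0} \land \p \notin \actsb_0 \land \sync{\actsb_0}\}$ is non-empty and $1$-closed for $s_0$; since $\p \notin \actsb_0$ for every $\actsb_0 \in \paset$, Lemma~\ref{lem:new-kclosed-set} shows that $\paset$ remains $1$-closed for $s_2$.

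Next I would apply Lemma~\ref{lem:closed-set-paths} to the suffix $s_2 \bTRANSS{\action_2}{1} \cdots \bTRANSS{\action_n}{1} s$ together with the $1$-closed set $\paset$, obtaining a synchronous $\acts_1 \in \paset$ and executions $\hat\actsb, \hat\acts$ with $s_2 \bTRANSS{\acts_1}{1} t_1 \bTRANSS{\hat\actsb}{1} t_n$, $s \bTRANSS{\hat\acts}{1} t_n$, $\lvert \hat\actsb \rvert < n$, and $\acts_1 \concat \hat\actsb \eqpeerop \action_2 \cdots \action_n \concat \hat\acts$. Prepending $\action_1$ and completing the send $\PSEND{pq}{a}$ (which is synchronously enabled after $\acts_1$, by definition of $\paset$) with its matching receive turns this prefix into a fully synchronous $\bTRANSS{}{0}$-execution reaching a stable, synchronously-reachable configuration, while the strictly shorter path $t_1 \bTRANSS{\hat\actsb}{1} t_n$ feeds the induction hypothesis — strengthened, exactly as in Lemma~\ref{lem:exist-path-eqpeer}, so that its source ranges over arbitrary synchronously-reachable stable configurations rather than only $s_0$. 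Concatenating the synchronous prefix produced at each level yields the required $\actsb$ with $\sync{\actsb}$, the $1$-bounded completion $\acts'$, and the equivalence $\acts \concat \acts' \eqpeerop \actsb$ by transitivity of $\eqpeerop$.

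The main obstacle, and the point where this argument genuinely departs from Lemma~\ref{lem:exist-path-eqpeer}, is that the target must be a \emph{stable} configuration reached by a \emph{fully synchronous} execution, not merely a configuration one bound lower. This forces two extra commitments: the closed sets must consist of synchronous executions (so the reordering from Lemma~\ref{lem:closed-set-paths} can be recorded as a $\bTRANSS{}{0}$-path satisfying $\sync{\cdot}$, which is precisely what the $\sync{\cdot}$ clause in Lemma~\ref{lem:k-smc-closed} buys us), and the induction hypothesis must be generalised to stable source configurations so that the recursive call on $t_1 \bTRANSS{\hat\actsb}{1} t_n$ is legitimate. Verifying that pairing each send with its immediate matching receive preserves both $\eqpeerop$-equivalence and membership in $\RS_0(S)$ — invoking directedness to ensure the matching receive is the unique enabled reception — is the technical heart of the proof.
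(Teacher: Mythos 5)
Your proposal follows exactly the route the paper takes: its proof of this lemma is a one-liner invoking Lemma~\ref{lem:k-smc-closed} and Lemma~\ref{lem:closed-set-paths} and stating that the result follows ``by the same reasoning as in Lemma~\ref{lem:exist-path-eqpeer}'', which is precisely the induction you reconstruct (directedness giving $\OBI{1}$/$\IBI{1}$, the synchronous $1$-closed set, the reordering via closed sets, and the generalised induction hypothesis). Your write-up is in fact more detailed than the paper's, and correctly flags the genuinely new obligations (synchronous closed sets, stable-source induction hypothesis) that the paper leaves implicit.
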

\begin{proof}
  Since $S$ is directed and \SMC, we can use
  Lemma~\ref{lem:k-smc-closed} and Lemma~\ref{lem:closed-set-paths},
  to show that the result holds following the same reasoning as in
  Lemma~\ref{lem:exist-path-eqpeer}.
\end{proof}

\begin{theorem}\label{thm:smc-imp-one-mc}
  If $S$ is \SMC\ then it is $1$\MC. 
\end{theorem}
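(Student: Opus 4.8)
The plan is to recall that \SMC\ is considered over \emph{directed} systems~\cite{DY13,BLY15}, so by Proposition~\ref{prop:directed-iobi} the system $S$ is $\OBI{1}$ (and $\infIBI$); this is the only structural hypothesis I will need beyond \SMC\ itself. By Definition~\ref{def:compa} it then suffices to prove that $S$ is $1$-exhaustive and $1$-safe. The common engine for both parts is Lemma~\ref{lem:smc-eixstpath-eqpeer} (which itself rests on the $1$-closure provided by Lemma~\ref{lem:k-smc-closed}): from any $s \in \RS_1(S)$ one can reach, via a $1$-bounded execution $\acts'$, a \emph{stable} configuration $t \in \RS_0(S)$ with $s \bTRANSS{\acts'}{1} t$. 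The recurring move will be to split $\acts'$ at the first action performed by the participant under scrutiny, writing $\acts' = \acts'_1 \concat \gamma$ where $\acts'_1$ is the maximal prefix containing no action of that participant. I will also use repeatedly that any synchronous execution (one satisfying $\sync{\cdot}$, as in Definition~\ref{def:ts-zero}) keeps every channel within $\{0,1\}$ and is therefore $1$-bounded from a stable configuration.

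For $1$-exhaustivity (Definition~\ref{def:exhaustive}), take $s \in \RS_1(S)$, a sending state $q_\p$, and a transition $(q_\p,\PSEND{pq}{a},q'_\p)\in\delta_\p$. Apply Lemma~\ref{lem:smc-eixstpath-eqpeer} and split $\acts' = \acts'_1 \concat \gamma$ at $\p$'s first move, with $s \bTRANSS{\acts'_1}{1} s_1$ and $\p$ still in $q_\p$ at $s_1$. If $\gamma = \emptyw$, then $s_1 = t$ is stable, so channel $\ptp{pq}$ is empty and $\PSEND{pq}{a}$ is $1$-enabled at $s_1$. If $\gamma$ begins with an action of $\p$, that action is a send out of the sending state $q_\p$ and is $1$-enabled at $s_1$, so by $\OBI{1}$ the transition $\PSEND{pq}{a}$ is $1$-enabled at $s_1$ as well. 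In both cases $s \bTRANSS{\acts'_1}{1}\bTRANSS{\PSEND{pq}{a}}{1}$ with $\p \notin \acts'_1$, which is exactly the $1$-exhaustivity requirement.

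For $1$-safety (Definition~\ref{def:k-safety}) I handle the two clauses using the same stable $t$. For eventual reception (\ER): if $w_{\p\q}=\msg{a}\concat w'$ in $s=\csconf{q}{w}$, then since $t$ is stable the channel $\ptp{pq}$ is eventually drained along $\acts'$, and by the \FIFO\ discipline the first receive on $\ptp{pq}$ occurring in $\acts'$ is $\PRECEIVE{pq}{a}$; truncating $\acts'$ there gives $s \bTRANSR{1}\bTRANSS{\PRECEIVE{pq}{a}}{1}$. For progress (\PG): let $q_\p$ be receiving and split $\acts'$ at $\p$'s first move. If $\p$ moves along $\acts'$, its first action is a receive $\PRECEIVE{qp}{a}$ (the only kind available from a receiving state), yielding the witness immediately. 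If $\p$ never moves along $\acts'$, then $s_1 = t$ is stable with $\p$ still in the receiving state $q_\p$; here I invoke clause~(2) of Definition~\ref{def:smc} at $t \in \RS_0(S)$, which supplies a synchronous execution $\acts$ with $t \bTRANSS{\acts}{0}$ and $\proj{\acts}{p}=\PRECEIVE{qp}{a}$ for some outgoing receive of $\p$; since $\acts$ is synchronous it is $1$-bounded from the stable $t$, so $t \bTRANSR{1}\bTRANSS{\PRECEIVE{qp}{a}}{1}$, and prefixing by $s \bTRANSS{\acts'}{1} t$ gives the required $s \bTRANSR{1}\bTRANSS{\PRECEIVE{qp}{a}}{1}$.

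The main obstacle is precisely this last subcase of progress: ruling out a stable configuration in which $\p$ sits forever in a receiving state. This is the one place where \SMC\ (not merely reachability of a stable configuration) is indispensable, and the care needed is twofold — first, that the configuration reached is genuinely in $\RS_0(S)$ so that Definition~\ref{def:smc} applies, which is guaranteed by the synchronous witness $\actsb$ of Lemma~\ref{lem:smc-eixstpath-eqpeer}; and second, that the synchronous continuation out of $t$ remains $1$-bounded once prefixed by the $1$-bounded path $s \bTRANSS{\acts'}{1} t$, which follows because $t$ is stable and $\sync{\acts}$ forces channels to stay within $\{0,1\}$.
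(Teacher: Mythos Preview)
Your proof is correct and follows essentially the same approach as the paper: both arguments hinge on Lemma~\ref{lem:smc-eixstpath-eqpeer} to reach a stable $t\in\RS_0(S)$ via a $1$-bounded path, then split on whether $\p$ acts along that path, and invoke the \SMC\ clauses at $t$ when it does not. The one noteworthy difference is in the exhaustivity subcase $\p\notin\acts'$: the paper appeals to clause~(1) of Definition~\ref{def:smc} at $t$ to locate $\PSEND{pq}{a}$ inside a synchronous continuation, whereas you simply observe that $t$ is stable so every channel is empty and $\PSEND{pq}{a}$ is immediately $1$-enabled at $t$---a cleaner shortcut that dispenses with \SMC\ for that case (your explicit use of $\OBI{1}$ via directedness in the other subcase also makes precise a step the paper leaves implicit).
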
 
\begin{proof}
  We show that $S$ is $1$-exhaustive, then show that it is $1$-safe.

  \proofsub{exhaustivity}
    We have to show that 
  for all
  $s = \csconf{q}{w} \in \RS_1(S)$ and $\p \in \PSet$,
    if $q_\p$ is a sending state, then
    $
  \forall (q_\p, \action, q'_\p) \in \delta_\p \qst \exists
  \acts \in \ASetC
  \qst
  s  \bTRANSS{\acts}{1}\bTRANSS{\action}{1}  \text{ and } \p \notin
  \acts.
  $

  By contradiction take $s = \csconf{q}{w} \in \RS_1(S)$ 
  \begin{equation}\label{eq:smc-imp-one-mc}
    s \TRANSS{\PSEND{pq}{a}} \text{ and } 
    \neg (s \bTRANSR{1}\bTRANSS{\PSEND{pq}{a}}{1} \text{ and } \p
    \notin \acts)
  \end{equation}
    By Lemma~\ref{lem:smc-eixstpath-eqpeer}, there is stable
  $t \in \RS_0(S)$ such that $s \bTRANSS{\acts}{1} t$.
    If $\p \notin \acts$, then $\PSEND{pq}{a}$ is still enabled in $t$
  and by \SMC\ there is a (synchronous) execution from $t$ that
  includes $\PSEND{pq}{a}$, a contradiction
  with~\eqref{eq:smc-imp-one-mc}.
    If $\p \in \acts$, then $\PSEND{pq}{a}$ can be fired from a state
  along $\acts$, a contradiction with~\eqref{eq:smc-imp-one-mc}.

  \proofsub{safety} We have to show that for all
  $ s = \csconf q w \in \RS_1(S)$:
  \begin{enumerate}
  \item \textbf{Eventual reception:}
    $\forall \p\q \in \CSet$, if $w_{\p\q} = \msg{a} \cdot w'$,
    then $s \kTRANSR \kTRANSS{\PRECEIVE{pq}{a}}$.
        This follows trivially from Lemma~\ref{lem:smc-eixstpath-eqpeer}
    since there is a $1$-bounded execution from $s$ to a stable
    configuration.
  \item \textbf{Progress:} $\forall\p \in \PSet$, if $q_\p$ is a
    \emph{receiving} state, then
    $s \kTRANSR \kTRANSS{\PRECEIVE{qp}{a}}$ for $\q \in \PSet$ and
    $\msg{a} \in \ASigma$.
        By Lemma~\ref{lem:smc-eixstpath-eqpeer}, there is a $1$-bounded
    execution $\acts$ from $s$ to a stable $t \in \RS_0(S)$.
        If the expected receive action occurs in $\acts$, then we have the
    required result.
        If the expected receive action does not occur in $\acts$, then
    \SMC\ guarantees that it will occur in a synchronous execution
    from $t$.
        \qedhere
  \end{enumerate}  
  \end{proof}

 \section{Proofs for Section~\ref{sec:por} (partial order reduction)}\label{app:rts}

Below, we say that a configuration $s \in \RS_k(S)$ is $\OBI{k}$
(resp.\ $\IBI{k}$) if it validates the corresponding condition, e.g., if
$\p$ can fire one send action from $s$, then all its send actions are
enabled.
We say that $S$ (resp.\ $s$) is $\BA{k}$ when it is $\OBI{k}$ and
$\IBI{k}$.

\begin{definition}\label{def:non-csa-mc-dep-reduced}
  We say that $S$ is \emph{reduced} $k$-\emph{chained input bound
    independent} (reduced $\CIBI{k}$)
    if for all $s = \csconf{q}{w} \in \RS_k(S)$ and for all 
  $\p \in \PSet$, if $s \rkTRANSS{\PRECEIVE{qp}{a}} s'$, then
    $ \forall (q_\p, \PRECEIVE{sp}{b}, q'_\p) \in \delta_\p \qst \s \neq
  \q \implies \!  
    \neg ( s \rkTRANSS{\PRECEIVE{sp}{b}})
  \land
    (
  \forall \acts \in \ASetC \qst 
  s' \rkTRANSS{\acts} \rkTRANSS{\PSEND{sp}{b}}
  \implies \dependsin{s}{\PRECEIVE{qp}{a}}{\acts}{\PSEND{sp}{b}})
    $.
  
\end{definition}

\lemobiimpreducedobi*
\begin{proof}
  
  By contradiction. Notice that Definition~\ref{def:reduced-obi}
  requires the same property than Definition~\ref{def:kobi} at the
  configuration level.
    Take $s \in \hat{N}$ s.t. $s$ violates the
  (reduced) $\OBI{k}$ condition, then $s \in \RS_k(S)$, and $s$ also
  violates $\OBI{k}$.
\end{proof}

\begin{lemma}\label{lem:dibi-imp-reduced-dibi}
  Let $S$ be a system, if $S$ is $\DIBI{k}$, then $S$ is
  also \emph{reduced} $\DIBI{k}$.
\end{lemma}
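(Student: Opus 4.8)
The plan is to argue by contradiction, exploiting that $\kRTS{S}$ is by construction a sub-graph of $\kTS{S}$ (Definition~\ref{def:krts}), so that $\hat N \subseteq \RS_k(S)$ and the reduced relation $\rkTRANSS{}$ is contained in the full $k$-bounded relation $\kTRANSS{}$ (hence also $\rkTRANSR \subseteq \kTRANSR$). Suppose $S$ is $\DIBI{k}$ but violates reduced $\DIBI{k}$ (Definition~\ref{def:reduced-dibi}). Then there are $s = \csconf{q}{w} \in \hat N$, a participant $\p$ with $s \rkTRANSS{\PRECEIVE{qp}{a}}$, and a transition $(q_\p, \PRECEIVE{sp}{b}, q'_\p) \in \delta_\p$ with $\s \neq \q$ such that $s \rkTRANSS{\PRECEIVE{sp}{b}}$ or $s \rkTRANSR\rkTRANSS{\PSEND{sp}{b}}$. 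Since $s \in \RS_k(S)$ and $s \rkTRANSS{\PRECEIVE{qp}{a}}$ gives $s \kTRANSS{\PRECEIVE{qp}{a}} s'$ for the unique successor $s'$, the hypothesis $\DIBI{k}$ applies at $s$ and yields both $\neg(s \kTRANSS{\PRECEIVE{sp}{b}})$ and $\neg(s' \kTRANSR \kTRANSS{\PSEND{sp}{b}})$.

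In the first case, $s \rkTRANSS{\PRECEIVE{sp}{b}}$ immediately gives $s \kTRANSS{\PRECEIVE{sp}{b}}$ by $\rkTRANSS{} \subseteq \kTRANSS{}$, contradicting $\neg(s \kTRANSS{\PRECEIVE{sp}{b}})$; this case is pure monotonicity, exactly as in the companion Lemma~\ref{lem:obi-imp-reduced-obi}. The remaining, genuinely harder case is $s \rkTRANSR\rkTRANSS{\PSEND{sp}{b}}$, which by monotonicity yields a full $k$-bounded witness $s \kTRANSS{\acts}\kTRANSS{\PSEND{sp}{b}}$. Here the subtlety is that the reduced definition measures reachability of $\PSEND{sp}{b}$ from $s$, whereas $\DIBI{k}$ forbids it only from $s'$ (i.e.\ after the receive $\PRECEIVE{qp}{a}$ has been fired), so to reach a contradiction I must transport the witness from $s$ to $s'$.

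The main obstacle is therefore to establish $s \kTRANSR\kTRANSS{\PSEND{sp}{b}} \implies s' \kTRANSR\kTRANSS{\PSEND{sp}{b}}$, which I would obtain by a commutation argument. Since $S$ is $\DIBI{k}$ it is $\IBI{k}$ (Lemma~\ref{lem:kri-imp-kba}), and $\PRECEIVE{qp}{a}$ is enabled at $s$ with $\msg a$ at the head of channel $\ptp{qp}$; because only $\p$ consumes from $\ptp{qp}$, the message $\msg a$ remains at the head until $\p$ moves, and $\IBI{k}$ forces $\PRECEIVE{qp}{a}$ to be the unique receive $\p$ can fire while at $q_\p$. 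Writing the witness as $\rho = \acts \concat \PSEND{sp}{b}$ and noting $\subj{\PSEND{sp}{b}} = \s \neq \p$, I would split on whether $\p$ acts in $\rho$: if $\p \notin \rho$, then $\rho$ and $\PRECEIVE{qp}{a}$ have disjoint subjects, and Lemma~\ref{lem:indep-path} lets $\PRECEIVE{qp}{a}$ commute to the front, giving $s' \kTRANSR\kTRANSS{\PSEND{sp}{b}}$; otherwise $\rho = \rho_1 \concat \PRECEIVE{qp}{a} \concat \rho_2$ with $\p \notin \rho_1$ (its first $\p$-action must be $\PRECEIVE{qp}{a}$), and the same diamond commutes $\PRECEIVE{qp}{a}$ past $\rho_1$, so that by determinism $s' \kTRANSS{\rho_1}\kTRANSS{\rho_2}$ still reaches $\PSEND{sp}{b}$. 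Either way $s' \kTRANSR\kTRANSS{\PSEND{sp}{b}}$, contradicting $\neg(s' \kTRANSR \kTRANSS{\PSEND{sp}{b}})$. Moving the receive earlier only lowers the occupancy of $\ptp{qp}$ at every intermediate configuration, so $k$-boundedness is preserved and the commuted execution is a legitimate $k$-bounded one; this completes both cases and yields the claim.
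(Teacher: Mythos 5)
Your proof is correct and follows the same skeleton as the paper's: argue by contradiction, use the fact that $\kRTS{S}$ is a sub-graph of $\kTS{S}$ so that every reduced witness is also a $\kTRANSS{}$-witness, and split on which disjunct of Definition~\ref{def:reduced-dibi} is violated. The difference is in the second case. The paper disposes of it in one line (``we have the situation in $\kTS{S}$, hence we have a contradiction''), which silently identifies reachability of the conflicting send \emph{from $s$} (as in the reduced definition) with reachability \emph{from $s'$}, the post-receive configuration (as in Definition~\ref{def:non-csa-mc}). You spotted that these anchor points genuinely differ and supplied the missing transport: since $\DIBI{k}$ implies $\IBI{k}$ (Lemma~\ref{lem:kri-imp-kba}), the first $\p$-action on any $k$-bounded path from $s$ must be $\PRECEIVE{qp}{a}$ (the message stays at the head of $\ptp{qp}$ because only $\p$ consumes from that channel), so the receive commutes to the front by Lemma~\ref{lem:indep-path}, and firing it earlier only lowers queue occupancy, preserving $k$-boundedness. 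That turns $s \kTRANSR\kTRANSS{\PSEND{sp}{b}}$ into $s' \kTRANSR\kTRANSS{\PSEND{sp}{b}}$ and closes the contradiction. So your argument is not a different route so much as a careful patch of the one step the paper glosses over; what it buys is an explicit justification of exactly the point where the two definitions diverge.
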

\begin{proof}
  By contradiction. Take $s \in \hat{N}$ s.t. it violates the
  (reduced) $\DIBI{k}$ condition. Note that we $s \in \RS_k(S)$.
    There are two cases:
  \begin{itemize}
  \item If there is $\p$ such that two receive actions are enabled for
    $\p$, then they are also enabled at $s$, a contradiction. 
      \item If there is $\p$ such that one receive action is enabled for
    $\p$, and there is $\kTRANSS{}$-path s.t.\ a conflicting send
    action is fired, then we have the situation in $\kTS{S}$, hence we
    have a contradiction. \qedhere
  \end{itemize}
\end{proof}

\begin{lemma}\label{lem:dibi-imp-reduced-CIBI}
  Let $S$ be a system, if $S$ is $\CIBI{k}$, then $S$ is
  also \emph{reduced} $\CIBI{k}$.
\end{lemma}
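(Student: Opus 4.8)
The plan is to argue by contradiction, in exactly the same style as the proofs of Lemma~\ref{lem:obi-imp-reduced-obi} and Lemma~\ref{lem:dibi-imp-reduced-dibi}. The structural fact I would rely on is that $\kRTS{S} = (\hat{N}, s_0, \hat\Delta)$ is a sub-graph of $\kTS{S} = (N, s_0, \Delta)$, so $\hat\Delta \subseteq \Delta$ and $\hat{N} \subseteq N = \RS_k(S)$. Hence every reduced step lifts to a full step, $s \rkTRANSS{\action} s' \implies s \kTRANSS{\action} s'$, and therefore every reduced execution lifts to a full one, $s \rkTRANSS{\acts} s' \implies s \kTRANSS{\acts} s'$. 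The reduced $\CIBI{k}$ condition (Definition~\ref{def:non-csa-mc-dep-reduced}) is obtained from the $\CIBI{k}$ condition (Definition~\ref{def:non-csa-mc-dep}) by replacing each $\kTRANSS{}$/$\kTRANSR$ with $\rkTRANSS{}$/$\rkTRANSR$, so it is the logically \emph{weaker} requirement. Crucially, the dependency predicate $\dependsin{s}{\PRECEIVE{qp}{a}}{\acts}{\PSEND{sp}{b}}$ occurring in both definitions is literally the same object: it is defined purely in terms of $s$ and the word $\acts$ through $\bindep{}{}{}$, and makes no reference to the ambient transition system.

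First I would assume, for contradiction, that $S$ is $\CIBI{k}$ but \emph{not} reduced $\CIBI{k}$, and fix a witnessing configuration $s = \csconf{q}{w} \in \RS_k(S)$, a participant $\p \in \PSet$, a reduced receive step $s \rkTRANSS{\PRECEIVE{qp}{a}} s'$, and a transition $(q_\p, \PRECEIVE{sp}{b}, q'_\p) \in \delta_\p$ with $\s \neq \q$ that breaks the reduced condition. Lifting the receive step to $s \kTRANSS{\PRECEIVE{qp}{a}} s'$ places us in the premise of the $\CIBI{k}$ condition at $s$. I would then split on which conjunct of the reduced condition fails. In the first case, $s \rkTRANSS{\PRECEIVE{sp}{b}}$ lifts to $s \kTRANSS{\PRECEIVE{sp}{b}}$, directly contradicting the conjunct $\neg(s \kTRANSS{\PRECEIVE{sp}{b}})$ guaranteed by $\CIBI{k}$. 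In the second case there is $\acts \in \ASetC$ with $s' \rkTRANSS{\acts} \rkTRANSS{\PSEND{sp}{b}}$ but $\neg\dependsin{s}{\PRECEIVE{qp}{a}}{\acts}{\PSEND{sp}{b}}$; I would lift this to $s' \kTRANSS{\acts} \kTRANSS{\PSEND{sp}{b}}$, apply the second conjunct of $\CIBI{k}$ to derive $\dependsin{s}{\PRECEIVE{qp}{a}}{\acts}{\PSEND{sp}{b}}$, and contradict the assumed negation. Either case closes the argument.

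There is no genuinely hard step here: the whole proof is a one-directional inclusion argument, so it is a short mirror of Lemma~\ref{lem:dibi-imp-reduced-dibi} adapted to the chained predicate. The only points I would take care to state explicitly are (i) that the dependency predicate is identical in the two definitions, which is what lets the existential witness $\acts$ transfer verbatim from the reduced to the full setting in the second case, and (ii) that a configuration $s \in \RS_k(S) \setminus \hat{N}$ has no outgoing reduced transitions, so the premise $s \rkTRANSS{\PRECEIVE{qp}{a}}$ is vacuously false there; it therefore suffices to consider $s \in \hat{N} \subseteq \RS_k(S)$, exactly as in the $\OBI{k}$ and $\DIBI{k}$ cases.
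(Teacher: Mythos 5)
Your proof is correct and follows essentially the same route as the paper's: both argue by contradiction, lift the offending reduced transitions to $\kTS{S}$ via the subgraph inclusion $\hat\Delta \subseteq \Delta$, and split on whether the violated conjunct is the second-enabled-receive clause or the dependency-chain clause. Your version is merely more explicit about the lifting and about the fact that the predicate $\dependsin{s}{\PRECEIVE{qp}{a}}{\acts}{\PSEND{sp}{b}}$ is independent of the ambient transition system, which the paper's terser proof leaves implicit.
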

\begin{proof}
  By contradiction. Take $s \in \hat{N}$ s.t. it violates the
  (reduced) $\CIBI{k}$ condition. Note that we $s \in \RS_k(S)$.
    There are two cases:
  \begin{itemize}
  \item If there is $\p$ such that two receive actions are enabled for
    $\p$, then they are also enabled at $s$, a contradiction. 
      \item If there is $\p$ such that one receive action is enabled for
    $\p$, and there is $\kTRANSS{\acts}$-path s.t.\ a conflicting send
    action is fired, and there is not dependency chain in $\acts$,
    then we have the situation in $\kTS{S}$, hence we have a
    contradiction. \qedhere
  \end{itemize}
\end{proof}

Lemma~\ref{lem:indep-partition} states that any transition in a given
set $L_i$ cannot be disabled by a sequence of transitions not in
$L_i$.

\begin{restatable}{lemma}{lemindeppartition}\label{lem:indep-partition}
  Let $S$ be a system, $s \in \RS_k(S)$ s.t.\ $s$ is $\BA{k}$, and
  $L_1 \cdots L_n = \spartition{s}$ (with $n \geq 1$).
    For all $L_i$ (with $1 \leq i \leq n$) and for all
  $\acts = \action_1 \cdots \action_m$ such that
  $\forall 1 \leq j \leq m \qst \action_j \notin L_i$, if
  $s \kTRANSS{\acts} s'$, then
  $\action \in L_i \implies s' \kTRANSS{\action}$.
\end{restatable}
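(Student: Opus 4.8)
The plan is to exploit the defining properties of $\spartition{s}$ (Definition~\ref{def:partition}) together with the $\BA{k}$ assumption to show two things about $\p \defi \subj{L_i}$: first, that $\p$ makes no move while $\acts$ is fired, so its local state $q_\p$ is preserved in $s'$; and second, that the enabling condition of every $\action \in L_i$ survives the firing of $\acts$. First I would record what $L_i$ looks like. By Definition~\ref{def:partition}, all actions of $L_i$ share the subject $\p$, and $L_i$ is exactly the set of labels $\action$ with $\subj{\action} = \p$ and $s \kTRANSS{\action}$; being a block of a partition it is non-empty, so $\p$ has at least one transition enabled at $s$. Since $M_\p$ is a \CSA\ (deterministic, no mixed states), $q_\p$ is either sending or receiving. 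If $q_\p$ is sending, $\OBI{k}$ at $s$ forces \emph{all} send transitions of $q_\p$ to be enabled at $s$, so $L_i$ lists all outgoing sends of $\p$ and each $\PSEND{pr}{b}\in L_i$ has $\lvert w_{\p\rr}\rvert < k$. If $q_\p$ is receiving, $\IBI{k}$ at $s$ forces $L_i$ to be the singleton $\{\PRECEIVE{qp}{a}\}$ of the unique enabled receive, with $w_{\q\p} = \msg{a}\cdot w''$ for some $w''$.

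Next I would show that no action of $\acts$ has subject $\p$. Suppose, towards a contradiction, that $\action_j$ is the first action of $\acts$ with $\subj{\action_j} = \p$, and let $s \kTRANSS{\action_1\cdots\action_{j-1}} s_j$. Because no earlier action moves $\p$, its local state at $s_j$ is still $q_\p$, and $\action_j$ is an outgoing transition of $q_\p$ enabled at $s_j$. In the sending case $\action_j$ is a send of $q_\p$, which by the characterisation above lies in $L_i$ (here only $\OBI{k}$ at $s$ is used), contradicting $\action_j \notin L_i$. In the receiving case $\action_j = \PRECEIVE{sp}{b} \neq \PRECEIVE{qp}{a}$ is a second receive of $\p$ enabled at $s_j$; moreover, since $\p$ has not consumed from channel $\q\p$ before $\action_j$ and messages are only appended at the tail, the head of $w_{\q\p}$ at $s_j$ is still $\msg{a}$, so $\PRECEIVE{qp}{a}$ is also enabled at $s_j$. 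As $\acts$ is $k$-bounded from $s\in\RS_k(S)$, we have $s_j\in\RS_k(S)$, hence $s_j$ is $\IBI{k}$, and two distinct enabled receives of $\p$ at $s_j$ contradict $\IBI{k}$. Either way $\p$ never moves, so $\p$ is in $q_\p$ in $s'$.

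Finally I would verify that each $\action \in L_i$ is still enabled at $s'$. Since $\p$ never acts in $\acts$, no message is appended to any channel $\p\cdot$ and none is consumed from any channel $\cdot\p$. In the sending case, for $\PSEND{pr}{b}\in L_i$ the queue $w'_{\p\rr}$ differs from $w_{\p\rr}$ only by removals (receptions by $\rr$), so $\lvert w'_{\p\rr}\rvert \le \lvert w_{\p\rr}\rvert < k$ and the send stays fireable; and the transition is present in $\delta_\p$ as $\p$ is still in $q_\p$. In the receiving case, $w'_{\q\p}$ differs from $w_{\q\p} = \msg{a}\cdot w''$ only by appends at the tail, so its head is still $\msg{a}$ and $\PRECEIVE{qp}{a}$ stays fireable. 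In both cases $s'\kTRANSS{\action}$, as required.

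The main obstacle is the receiving case of the ``$\p$ does not move'' step: a concurrent send $\PSEND{sp}{b}$ by another participant into a channel empty at $s$ could in principle enable a fresh receive of $\p$ and let $\p$ escape $q_\p$. Ruling this out is exactly what the $\IBI{k}$ half of $\BA{k}$ provides, applied at the intermediate configuration $s_j$ (which is $k$-reachable and therefore inherits $\IBI{k}$). By contrast, the sending case needs only $\OBI{k}$ at $s$ itself, and the preservation-of-enabling step is the routine \FIFO\ bookkeeping above.
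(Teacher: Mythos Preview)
Your overall strategy matches the paper's: establish (by a first-violation argument, which is just the paper's induction unrolled) that $\p \defi \subj{L_i}$ never fires during $\acts$, and then check that the enabling conditions of every $\action \in L_i$ survive. The send case and the final \FIFO\ bookkeeping are correct and essentially identical to the paper's.

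The gap is in your receive-case step ``$s_j \in \RS_k(S)$, hence $s_j$ is $\IBI{k}$''. The lemma's hypothesis grants $\BA{k}$ only at the \emph{single} configuration $s$: the appendix explicitly introduces ``a configuration $s$ is $\OBI{k}$ / $\IBI{k}$'' as a configuration-local predicate, so $s_j \in \RS_k(S)$ does not by itself give you $\IBI{k}$ at $s_j$. Concretely, if $w_{\s\p} = \emptyw$ at $s$, some $\PSEND{sp}{b}$ fired among $\action_1 \cdots \action_{j-1}$ can enable a fresh receive $\PRECEIVE{sp}{b} \notin L_i$ at $s_j$, and nothing in the stated hypothesis forbids $s_j$ from having two enabled receives for $\p$; with only local $\BA{k}$ at $s$, $\p$ could then escape $q_\p$ via $\PRECEIVE{sp}{b}$, falsifying the conclusion. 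The paper's proof treats the same spot differently --- it asserts ``since $s$ is $\BA{k}$, we have $s \kTRANSS{\action'}$'' and derives $\action' \in L_i$ --- but for a receive $\action'$ that inference is likewise not supported by $\IBI{k}$ at $s$ alone. In short, both arguments implicitly rely on $\IBI{k}$ holding at the intermediate configurations (i.e., the system-level property), which is indeed the context in which the lemma is invoked downstream; your fix is to strengthen the hypothesis accordingly rather than to change the argument.
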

\begin{proof} 
  Take $s \in \kTS{S}$, $L_1 \cdots L_n = \spartition{s}$, $L_i$
  ($1 \leq i \leq n$), and $\acts$ as defined in the statement.
      Take any $\action \in L_i$ and assume there is $s'$ such that
  $s \kTRANSS{\acts} s'$.
    We show the result by induction on the length of $\acts$ with the
  additional property that $\subj{\action} \notin \acts$ (note that
  this implies $q_\p = q'_\p$).
  
  If $\acts = \emptyw$, then $s= s'$ and we have the result
  immediately ($s \kTRANSS{\action}$ by
  Definition~\ref{def:partition}).

  Assume the result holds for $\acts$ and let us show that it holds
  for $\acts \concat \action'$ with $\action' \notin L_i$.
    Assume we have $s'$ such that
  $s \kTRANSS{\acts} s' \kTRANSS{\action'} s''$. 
    We have to show that $s'' \kTRANSS{\action}$, knowing that,
    by induction hypothesis, we have that $s' \kTRANSS{\action}$ and
  $q_\p = q'_\p$.
    There are two cases:
  \begin{itemize}
  \item If $\subj{\action} = \subj{\action'}$, then since $s$ is
    $\BA{k}$, we have $s \kTRANSS{\action'}$, hence
    $\action' \in L_i$, which implies that the premises of this lemma
    do not hold: a contradiction.
              \item If $\subj{\action} \neq \subj{\action'}$, then we have
    $q_\p = q'_\p = q''_\p$ and therefore
        $q''_\p \TRANSS{\action}$.
        \begin{itemize}
    \item If $\action = \PSEND{pq}{a}$. The only possibility for
      $\action$ to be disabled in $s''$ and enabled in $s'$ is if
      $\lvert \word''_{\ptp{pq}} \rvert > k$ which is not possible
      since $\subj{\action'} \neq \p$.
    \item If $\action = \PRECEIVE{qp}{a}$. The only possibility for
      $\action$ to be disabled in $s''$ and enabled in $s'$ is if
      $ \word''_{\ptp{pq}} = \emptyw $ which is not possible since
      $\subj{\action'} \neq \p$. 
            \qedhere
    \end{itemize}
  \end{itemize}
  \end{proof}

\begin{restatable}{lemma}{lemrtsevtfire}\label{lem:rts-evt-fire}
  Let $S$ be a system, then for all $s \in \kRTS{S}$ s.t.\ $s$ is
  $\BA{k}$ and $\action \in \ASet$, if $s \kTRANSS{\action}$, then
  there is $\acts \in \ASetC$ such that
  $s \rkTRANSS{\acts} \rkTRANSS{\action}$ with
  $\subj{\action} \notin \acts$.
\end{restatable}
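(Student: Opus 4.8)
The plan is to exhibit a concrete $\p$-free path in $\RTS{k}{S}$, where $\p = \subj{\action}$, that leads to a configuration from which $\action$ can be fired in $\RTS{k}{S}$. The argument follows the structure of $\spartition{\cdot}$ and the way Algorithm~\ref{algo:reduction} progressively consumes the list of pending groups. First I would record the basic structural fact about $\kRTS{S}$: by the \textbf{ForEach} loop over $\head{E}$ in Algorithm~\ref{algo:reduction}, the reduced out-transitions of any visited configuration are exactly the members of the head group of the list $E$ with which that configuration is processed, and since every group produced by $\spartition{\cdot}$ has a single subject (Definition~\ref{def:partition}), all reduced transitions leaving a configuration share one subject. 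Next, using that $s$ is $\BA{k}$, I would locate $\action$ inside the partition of $s$: because $\action$ is enabled at $s$ with $\subj{\action}=\p$, the $\OBI{k}$ condition (if $\action$ is a send) forces every send of $\p$, and hence $\action$, into the $\p$-group, while the $\IBI{k}$ condition (if $\action$ is a receive) forces $\action$ to be the unique enabled receive of $\p$; either way $\action$ belongs to the single group $L_\p$ of subject $\p$ in $\spartition{s}$.

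The core of the proof is an induction on the number of groups that precede $L_\p$ in the list consumed from $s$. If the head group already has subject $\p$, then $\action\in\head{E}$ and $s\rkTRANSS{\action}$, so $\acts=\emptyw$ works. Otherwise the head group $M$ has subject $\q\neq\p$; I would pick any $\beta\in M$, take the reduced step $s\rkTRANSS{\beta}s'$, and continue from $s'$ with the tail list $\tail{E}$. The two facts needed at $s'$ are: (i) $\action$ is still enabled, which is exactly Lemma~\ref{lem:indep-partition} applied with $L_\p$ in the role of $L_i$ (since $\subj{\beta}=\q\neq\p$ gives $\beta\notin L_\p$); and (ii) $s'$ is processed with $\tail{E}$, so $L_\p$ has moved one position closer to the head. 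Iterating, after firing one transition out of each non-$\p$ group I reach a configuration whose head group is $L_\p$, from which $\action$ is fired. Concatenating the chosen $\beta$'s yields $\acts$, and since each has subject different from $\p$ we obtain $\p\notin\acts$.

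The main obstacle is justifying step (ii) rigorously, i.e.\ pinning down which list each intermediate configuration is actually processed with. Because Algorithm~\ref{algo:reduction} uses a visited set, a successor $s'$ may already have been expanded with a list other than $\tail{E}$, so the clean ``descend through the tail'' picture needs care. I would close this gap by proving an invariant on the stack: whenever $\pair{s}{E}$ is pushed, the groups of $E$ are pairwise subject-disjoint, all enabled at $s$, and arise as a suffix of $\spartition{u}$ for the ancestor $u$ at which that partition was computed, reached by firing one transition from each earlier group, so that Lemma~\ref{lem:indep-partition} keeps every group of $E$ enabled along the descent. When $\p$'s group is \emph{not} among the groups of the current list of $s$ (it became enabled only after an ancestor's move), the same invariant lets me descend to the end of the list, where $E=\emptyw$ triggers a fresh computation of $\spartition{\cdot}$; since $\action$ remains enabled there by repeated use of Lemma~\ref{lem:indep-partition}, the group $L_\p$ reappears and the base case applies. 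Establishing this invariant, and thereby taming the interaction between the visited set and list inheritance, is the only delicate part of the argument; the remainder is bookkeeping over $\spartition{\cdot}$ and $\head{\cdot}/\tail{\cdot}$.
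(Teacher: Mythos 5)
Your proposal is correct and follows essentially the same route as the paper's own proof: descend through the list of groups produced by $\spartition{\cdot}$ (or the inherited tail of an ancestor's partition), using Lemma~\ref{lem:indep-partition} to keep $\action$ enabled until the group of $\subj{\action}$ reaches the head, and fall back to the next invocation of $\spartition{\cdot}$ when that group is absent from the current list. The only notable difference is that you explicitly flag, and propose an invariant to patch, the interaction between the visited set and list inheritance, a subtlety the paper's proof passes over silently.
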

\begin{proof}
    By assumption that $s \in \kRTS{S}$, $s$ is visited by
  Algorithm~\ref{algo:reduction}.
  
  If $\spartition{s}$ is invoked on $s$, the fact that
  $\subj{\action} \notin \acts$ follows from
  Definition~\ref{def:partition}, while the fact that $\action$ is
  eventually fired follows from the fact that the list of sets of
  transition decreases at each iteration in
  Algorithm~\ref{algo:reduction} and Lemma~\ref{lem:indep-partition}.

  If $\spartition{s}$ is not invoked, then we have that $E$ is not
  empty when $s$ is visited.
    Let $t$ be a the last node visited before $s$ such that
  $\spartition{t}$ is invoked.
    Pose $L_1 \cdots L_m = \spartition{t}$ and assume
  $E = L_i \cdots L_m$ ($i>1$) when $s$ is visited.
    If there is $L_j$ such that $\action \in L_j$ ($i \leq j \leq m$),
  we have the result as above.
    Otherwise, there are two cases
  \begin{itemize}
  \item If $\action$ is independent from all the actions in
    $L_i \cdots L_m$, then $\action$ will still be enabled once the
    list is entirely processed, and therefore $\action$ will be
    included in the partition resulting from the next invocation of
    $\spartition{\_}$.
  \item If $\action$ depends on some partition $L_j$, then we have a
    contradiction: either $\action$ is included in $L_j$ (it must have
    been enabled at $t$) or the list returned by $\spartition{t}$ is
    not a partition.
    \qedhere
  \end{itemize}
  \end{proof}

\begin{restatable}{lemma}{lemrtsreplaceaction}\label{lem:rts-replace-action}
  Let $S$ be a system. If
  $s_0 \rkTRANSS{\acts_1} s \rkTRANSS{\action} s' \rkTRANSS{\acts_2}
  t$
  such that $s$ is $\BA{k}$, $\subj{\action} \notin \acts_2$,
  $\chan{\action} \notin \acts_2$, and $s \kTRANSS{\action'}$ with
  $\subj{\action} = \subj{\action'}$ then
  $ s \rkTRANSS{\action'} s'' \rkTRANSS{\acts_2} t'$ for some $s''$
  and $t'$.
  \end{restatable}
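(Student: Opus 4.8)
The plan is to prove the statement by first disposing of the receiving case and then inducting on $|\acts_2|$, tracking how Algorithm~\ref{algo:reduction} explores the two sibling paths. Write $\p = \subj{\action} = \subj{\action'}$. If the local state of $\p$ at $s$ is receiving, then since $s$ is $\BA{k}$ it is in particular $\IBI{k}$, so at most one receive of $\p$ is enabled at $s$; as both $\action$ and $\action'$ are enabled with subject $\p$, we get $\action = \action'$ and the claim holds with $s'' = s'$ and $t' = t$. So I would assume $\p$ is in a sending state; then $\OBI{k}$ at $s$ guarantees that every send of $\p$, in particular $\action'$, is enabled at $s$.

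Next I would establish the first reduced step $s \rkTRANSS{\action'} s''$. Since $s_0 \rkTRANSS{\acts_1} s$, the configuration $s$ is a node of $\kRTS{S}$ and is expanded once by Algorithm~\ref{algo:reduction} with some effective list $E_s$ (the attached list, or $\spartition{s}$ if that list is empty). Because $(s,\action,s') \in \hat\Delta$, we have $\action \in \head{E_s}$, and by Definition~\ref{def:partition} every element of $\head{E_s}$ has subject $\p$. I claim $\head{E_s}$ contains \emph{all} enabled transitions of $\p$ at $s$: if $E_s = \spartition{s}$ this is immediate from the grouping conditions of Definition~\ref{def:partition}; if $E_s$ is inherited from an ancestor $t$ with $\spartition{t} = L_1 \cdots L_m$, then none of the actions fired between $t$ and $s$ has subject $\p$ (they lie in earlier, differently-subjected classes), so $\p$ has not moved and, by Lemma~\ref{lem:indep-partition}, $\head{E_s}$ is exactly $\p$'s enabled set, unchanged. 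Hence the enabled send $\action'$ lies in $\head{E_s}$, the algorithm fires it, $(s,\action',s'') \in \hat\Delta$, and $s'$ and $s''$ are both pushed with the \emph{same} remaining list $\tail{E_s}$.

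The core of the proof is then an induction on $|\acts_2|$ showing $s'' \rkTRANSS{\acts_2} t'$, maintaining the invariant that the two corresponding configurations reached along the $s'$-path and the $s''$-path differ only in the local state of $\p$ and in the contents of the channels $\chan{\action}$ and $\chan{\action'}$ (one with an extra message at the tail of $\chan{\action}$, the other at the tail of $\chan{\action'}$), and are handled by Algorithm~\ref{algo:reduction} with the same pending list. The independence hypotheses make every step of $\acts_2$ behave identically on both sides: $\subj{\action} = \p \notin \acts_2$ means no step of $\acts_2$ moves $\p$; $\chan{\action} \notin \acts_2$ means no step touches that channel, so the extra tail message there is irrelevant; and since every \emph{send} on $\chan{\action'}$ has subject $\p$, the channel $\chan{\action'}$ can occur in $\acts_2$ only through receives, which read the \emph{front} of the queue and therefore never observe the tail message appended by $\action'$. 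Consequently each action of $\acts_2$ enabled on the $s'$-side is enabled identically on the $s''$-side and preserves the invariant, and when the two pending lists still coincide the exploration is reproduced verbatim.

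The main obstacle is the step where the shared pending list becomes empty and Algorithm~\ref{algo:reduction} recomputes fresh partitions $\spartition{s'}$ and $\spartition{s''}$ at the two corresponding configurations: these may differ, both in grouping and in the cardinality-based ordering of Definition~\ref{def:partition}, precisely because of the $\p$-state difference and of the extra tail message on $\chan{\action'}$ (which can make an otherwise-empty queue newly readable, adding a receive of $\p$'s message to one side). The observation that rescues the induction is that every transition distinguishing the two fresh partitions lies \emph{outside} $\acts_2$: it is either a transition of $\p$, or a transition on $\chan{\action}$ (excluded since $\chan{\action} \notin \acts_2$), or the receive of the freshly-appended message on $\chan{\action'}$ (absent from $\acts_2$, since $\acts_2$ contains no $\p$-send feeding that queue). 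I would then use Lemma~\ref{lem:indep-partition} to keep the transitions of $\acts_2$ enabled and Lemma~\ref{lem:rts-evt-fire} to guarantee that each such enabled action is eventually fired by the reduced search, concluding that the trace $\acts_2$ is realized from $s''$ and yielding the required $t'$.
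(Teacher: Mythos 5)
Your proposal follows the paper's own proof quite closely: dispose of the receive case via $\IBI{k}$, observe that $\action$ and $\action'$ sit in the same head class of the pending list so that $s'$ and $s''$ are pushed with the same tail, argue that the two sibling explorations coincide while that shared list is being consumed, and then analyse what happens when the list is exhausted and fresh partitions are computed. In two places you are actually more careful than the paper: you justify why $\action'$ lies in $\head{E_s}$ when the list is inherited from an ancestor, and you track \emph{both} channels $\chan{\action}$ and $\chan{\action'}$ as potential points of divergence (the paper only mentions $\chan{\action}$, implicitly eliding the case $\chan{\action}\neq\chan{\action'}$), correctly noting that receives in $\acts_2$ on $\chan{\action'}$ only see the front of the queue and so are unaffected by the extra tail message.

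The one step that does not go through as written is the very last one. Lemma~\ref{lem:rts-evt-fire} only guarantees that an enabled action $\action''$ is \emph{eventually} fired, i.e.\ $s'' \rkTRANSS{\acts}\rkTRANSS{\action''}$ for \emph{some} interleaving $\acts$; applying it action-by-action to $\acts_2$ yields a reduced path that realises the actions of $\acts_2$ possibly interspersed with other transitions, whereas the statement requires the verbatim path $s'' \rkTRANSS{\acts_2} t'$. The paper closes this differently: writing $\spartition{s_m}=L'_1\cdots L'_j$ and $\spartition{s'_m}=L''_1\cdots L''_l$ with $\p$'s class at position $i$, it observes that the classes before position $i$ involve neither $\p$ nor $\chan{\action}$ and therefore coincide on the two sides, so any reduced path from $s_m$ avoiding $\p$ and $\chan{\action}$ — in particular the residual of $\acts_2$ — is reproduced identically from $s'_m$; any longer reduced path would have to fire a $\p$-action at position $i$ and so cannot be a suffix of $\acts_2$. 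You would need to replace your appeal to Lemma~\ref{lem:rts-evt-fire} by this kind of argument (that the recomputed partitions agree up to $\p$'s class, so the exploration of $\acts_2$'s residual is literally the same list of choices on both sides) to obtain the exact trace required.
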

\begin{proof}
  Assume that $E = L_1 \cdots L_m$ when $s$ is visited by
  Algorithm~\ref{algo:reduction}, then we have
  $\action, \action' \in L_1$ and $s \rkTRANSS{\action'} s''$ for some
  $s''$.
    When both $s'$ and $s''$ are visited next, we have
  $E = L_2 \cdots L_m$, hence it is easy to show they have the same
  behaviour while $E$ is not empty.
    Say $s_m$ (resp.\ $s'_m$) is the first state reachable from $s'$
  (resp.\ $s''$) when $E$ is empty.
        Note that if $\action$ is a receive action, then we must have
  $\action = \action'$ since $s$ is $\BA{k}$.
    Thus, the only differences between $s_m$ and $s'_m$ are:
  \begin{itemize}
  \item the local state of $\subj{\action}$
  \item the last message of channel $\chan{\action}$
  \end{itemize}
  In terms of enabled transition, this means that for all
  $\hat{\action}$ such that $\subj{\hat{\action}} \neq \subj{\action}$
  and $\chan{\hat\action} \neq \chan{\action}$ is enabled at both
  $s_m$ and $s'_m$. Hence, posing
  \[
  L'_1 \cdots L'_j = \spartition{s_m}
  \qquad \text{and} \qquad
  L''_1 \cdots L''_l = \spartition{s'_m}
  \]
  and assuming that the position of the partition of $\subj{\action}$
  is $i$ (with $1 \leq i \leq j$ and $i \leq l$), it must be the case
  that all paths of length less than $i$ and not involving
  $\chan{\action}$ nor $\subj{\action}$ are the same from both $s_m$
  and $s'_m$. Instead, any path longer than $i$ must use an action
  whose subject is $\subj{\action}$ at position $i$, hence does not
  satisfy the premises of this lemma.
    \end{proof}

\begin{restatable}{lemma}{lemportraceequiv}\label{lem:por-trace-equiv}
    Let $S$ be a \emph{reduced} $\BA{k}$ system such that
  $\kTS{S} = (N, s_0, \Delta)$,
  $\kRTS{S} = (\hat{N}, s_0, \hat\Delta)$, and
  $\metaszero \in N \cap \hat{N}$.
    The following holds:
  \begin{enumerate}
  \item \label{en:por-trace-equiv-ts}
    If $\metaszero \rkTRANSS{\acts} s$, then $\metaszero \kTRANSS{\acts} s$, for some $s$.
      \item \label{en:por-trace-equiv-rts} If
    $\metaszero \kTRANSS{\acts} s$, then there is $\actsb$ and
    $\acts'$ such that $\metaszero \rkTRANSS{\actsb} t$ and
    $s \kTRANSS{\acts'} t$ and $\acts \concat \acts' \eqpeerop \actsb$,
    for some $t$.
      \end{enumerate}
\end{restatable}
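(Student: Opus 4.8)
The plan is to handle the two parts separately, the first being essentially immediate and the second requiring a partial-order-reduction commutation argument in the style of the persistent-set selective search correctness proof.

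For part~\ref{en:por-trace-equiv-ts}, I would simply observe that $\kRTS{S}$ is by construction (Definition~\ref{def:krts}) a sub-graph of $\kTS{S}$, so $\hat\Delta \subseteq \Delta$ and hence $s \rkTRANSS{\action} s'$ implies $s \kTRANSS{\action} s'$. A trivial induction on $|\acts|$ then lifts any reduced run $\metaszero \rkTRANSS{\acts} s$ to the identical run $\metaszero \kTRANSS{\acts} s$ in $\kTS{S}$, the landing configuration being the same because $\TRANSS{}$ is deterministic once the label is fixed.

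Part~\ref{en:por-trace-equiv-rts} is the substance. I would prove it by induction on $|\acts|$, with base case $\acts = \emptyw$ taking $\actsb = \acts' = \emptyw$ and $t = \metaszero$. For the inductive step, write $\acts = \action \concat \acts_1$ with $\metaszero \kTRANSS{\action} s_1 \kTRANSS{\acts_1} s$. Since $\action$ is enabled at the reduced configuration $\metaszero$, which is $\BA{k}$, Lemma~\ref{lem:rts-evt-fire} supplies a reduced prefix $\metaszero \rkTRANSS{\acts_2} \rkTRANSS{\action} v$ with $\subj{\action} \notin \acts_2$. The actions of $\acts_2$ all have subject distinct from $\subj{\action}$, so by the commutation diamond Lemma~\ref{lem:indep-path} (applied to $\metaszero \kTRANSS{\action} s_1$ and the full run $\metaszero \kTRANSS{\acts_2} v$, which exists by part~\ref{en:por-trace-equiv-ts}) I can reorder $\action$ past $\acts_2$, obtaining $s_1 \kTRANSS{\acts_2} v$. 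This places the reduced configuration $v \in \hat N$ at the end of a controlled run from $s_1$, letting me feed the residual full execution into the induction hypothesis from $v$ and reassemble $\actsb$, $\acts'$, and $t$; the projected-equivalence bookkeeping $\acts \concat \acts' \eqpeerop \actsb$ is then discharged by moving $\action$ and the $\acts_2$-block past one another, using Lemma~\ref{lem:eqpeer-imp-same-state} to certify that the commuted runs reach a common configuration.

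The hard part will be two-fold. First, I must guarantee that the action fired first by the reduced system, namely an element of $\head{\spartition{\metaszero}}$, genuinely matches an action occurring in $\acts$, so that the per-participant projections are preserved \emph{exactly} (recall $\eqpeerop$ demands identical projections, not merely a re-labelling). This splits on whether $\subj{\action}$ is sending or receiving: in the sending case $\OBI{k}$ ensures the whole partition is simultaneously enabled and remains enabled along independent prefixes (Lemma~\ref{lem:indep-partition}), while in the receiving case $\IBI{k}$ pins the enabled receive to a unique action, and Lemma~\ref{lem:rts-replace-action} is what aligns the reduced choice with the one taken in $\acts$. Second, the naive length measure does not strictly decrease when the inserted reduced prefix $\acts_2$ is non-empty, so I would instead induct on a well-founded measure combining $|\acts|$ with the residual partition list $E$ carried by Algorithm~\ref{algo:reduction}; the cycle proviso (no transition is deferred forever, since $E$ is consumed monotonically) is precisely what makes this measure decrease and the construction terminate.
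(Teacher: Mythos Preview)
Your treatment of part~\ref{en:por-trace-equiv-ts} matches the paper.

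For part~\ref{en:por-trace-equiv-rts}, your front-first decomposition $\acts = \action \concat \acts_1$ creates the very difficulty you flag and do not resolve. After Lemma~\ref{lem:rts-evt-fire} and the diamond you reach $v \in \hat{N}$ with $s_1 \kTRANSS{\acts_2} v$, but the execution you must track, $\acts_1$, runs from $s_1$, not from $v$; since $\acts_2$ may share subjects with $\acts_1$ you cannot simply commute them, so there is no well-defined ``residual full execution from $v$'' to hand to the induction hypothesis. Your proposed lexicographic measure on $|\acts|$ together with the list $E$ is not substantiated either: the proof of Lemma~\ref{lem:rts-evt-fire} explicitly allows $\spartition{\cdot}$ to be invoked afresh along $\acts_2$, so the value of $E$ at $v$ bears no monotone relation to its value at $\metaszero$. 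A side remark: Lemma~\ref{lem:rts-replace-action} is used when two \emph{distinct} actions of the same subject are both enabled, which by $\IBI{k}$ forces them to be \emph{sends}; in the receiving case $\IBI{k}$ already makes the actions coincide, so no replacement is needed there.

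The paper sidesteps both problems by peeling off the \emph{last} action, proving the step for $\acts \concat \action$ from the result for $\acts$. The outer hypothesis then already supplies $\metaszero \rkTRANSS{\actsb} t$ with $s \kTRANSS{\acts'} t$ and $\acts \concat \acts' \eqpeerop \actsb$; all that remains is to push the single new action $\action$ (fired at $s$) along $\acts'$ down to the reduced configuration $t$ and extend the reduced trace from there. This is done by an \emph{inner} induction on $|\acts'|$: in the base case $\acts' = \emptyw$ one has $s = t \in \hat{N}$ and argues directly from the partition list held by Algorithm~\ref{algo:reduction} at $t$; in the step one commutes $\action$ with the last action $\action'$ of the catch-up sequence (trivial when subjects differ, collapsing to equality by $\IBI{k}$ when both are receives, and handled via Lemma~\ref{lem:rts-replace-action} when both are sends). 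The outer measure is then plainly $|\acts|$, with no auxiliary ordinal; this is the classical persistent-set selective-search correctness argument.
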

\begin{proof}
  Item~\eqref{en:por-trace-equiv-ts} follows trivially from
  Definition~\ref{def:partition} and Algorithm~\ref{algo:reduction}, since
  only transitions that exist in $\kTS{S}$ are copied in $\kRTS{S}$.

  We show Item~\eqref{en:por-trace-equiv-rts} by induction on the
  length of $\acts$.
    If $\acts = \emptyw$, then we have the result with
  $\acts' = \actsb = \emptyw$.
    Assume the result holds for $\acts$ and let us show that it holds
  for $\acts \concat \action$.
    We have the following situation, where the dotted arrows represent
  executions in $\kRTS{S}$ and $t$ is in $\kRTS{S}$.\footnote{Note
    that executions in $\kRTS{S}$ are also in $\kTS{S}$ by
    Item~\eqref{en:por-trace-equiv-ts}.}
  \[
  \begin{tikzpicture}[baseline=(current bounding box.center)]
    \node (s1) {$\metaszero$};
    \node[right=of s1] (s2) {$s$};
    \node[right=of s2] (s5) {$s'$};
        \node at (s2|-s3) (s4) {$t$};
        \path[->] (s1) edge [above] node {$\acts$} (s2);
    \path[->] (s2) edge [above] node {$\action$} (s5);
        \path[->,densely dotted] (s1) edge [below, bend right] node {$\actsb$} (s4);
    \path[->] (s2) edge [right] node {$\acts'$} (s4);
  \end{tikzpicture}
    \]
  Next, we show that there are $t'$, $\hat{s}$, $s''$, and $\actsb'$
  such that we have:
  \[
  \begin{tikzpicture}[baseline=(current bounding box.center)]
    \node (s1) {$s$};
    \node[right=of s1] (s2) {$s'$};
    \node[below=of s1] (s3) {$t$};
    \node[below=of s3] (s5) {$t'$};
    \node at (s2|-s5) (s4) {$s''$};
    \node at (s2|-s3) (s6) {$\hat{s}$};
        \path[->] (s1) edge [above] node {$\action$} (s2);
    \path[->] (s3) edge [above] node {$\action$} (s6);
    \path[->,densely dotted] (s5) edge [above] node {$\action$} (s4);
        \path[->] (s1) edge [left] node {$\acts'$} (s3);
    \path[->,densely dotted] (s3) edge [left] node {$\actsb'$} (s5);
    \path[->] (s2) edge [right] node {$\acts'$} (s6);
    \path[->] (s6) edge [right] node {$\actsb'$} (s4);
  \end{tikzpicture} 
  \qquad
  \text{with $\acts' \concat \actsb' \concat \action
    \eqpeerop \action \concat \acts' \concat \actsb'$}
  \]
    We show this by induction on the length of $\acts'$.
    If $\acts' = \emptyw$, then we have $s=t$ and $s' = \hat{s}$.
  There are two cases:
  \begin{itemize}
  \item $E = []$ when $t$ is visited by
    Algorithm~\ref{algo:reduction}.  In this case, the algorithm
    continues with $ E = L_1 \cdots L_m = \spartition{s}$, and by
    Definition~\ref{def:partition} there must be $1 \leq i \leq m$
    such that $\action \in L_i$ (since $\action$ is enabled at $t$).
        Since $\action$ is independent with all $\action_j$ such that
    $1 \leq j < i$, we have:
    \[
    s = t \rkTRANSS{\action_1 \cdots \action_{i-1}} t' \rkTRANSS{\action} s''
    \quad 
    \text{and}
    \quad
    s = t \kTRANSS{\action} s' = \hat{s} \kTRANSS{\action_1 \cdots \action_{i-1}} s''
    \]
    We have the required result with
    $\actsb' =  \action_1 \cdots \action_{i-1}$.
                          \item $E = L_i \cdots L_m$ ($i > 0$) when $t$ is visited by
    Algorithm~\ref{algo:reduction}. Then we have two cases:
    \begin{itemize}
    \item There is $i \leq j \leq m$ such that $\action \in L_j$ and
      we reason as in the case where $E == []$ (but starting at $i$
      instead of $1$).
    \item If $\action \notin \bigcup_{i \leq j \leq m} L_j$, then
      $\action$ was not enabled when $\spartition{\hat{t}}$ was
      invoked (for $\hat{t}$ a node visited on the path to $s$).
            Hence, $\action$ is independent with all actions in
      $\bigcup_{i \leq j \leq m} L_j$ and for all $t''$ such that
      $t \rkTRANSS{\action_i \cdots \action_m} t''$ with
      $\forall i \leq j \leq m \qst \action_j \in L_j$, we have
      $t'' \kTRANSS{\action}$.
            Pose $L'_1 \cdots L'_n = \spartition{t''}$, then we have that
      there is $1 \leq j \leq n$ such that $\action \in L'_j$. 
            Reasoning as above, we have
      \[
      s = t \rkTRANSS{\action_i \cdots \action_{m}} t'' 
            \rkTRANSS{\action'_1 \cdots \action'_{j-1}} t' \rkTRANSS{\action} s''
      \]
            and
            \[
      s = t \kTRANSS{\action} s' \kTRANSS{\action_i \cdots \action_{m}} 
            \kTRANSS{\action'_1 \cdots \action'_{j-1}} s''
      \]
      We have the required result with
      $\actsb' =  \action_i \cdots \action_{m} \concat \action'_1 \cdots \action'_{j-1}$.
    \end{itemize}
  \end{itemize}
  Now, assuming the inner induction hypothesis holds, let us show the
  result for $\acts' \concat \action'$. We have the following
  situation, where the red parts are what is to be shown:
  \[
  \begin{tikzpicture}[baseline=(current bounding box.center)]
    \node (s1) {$s$};
    \node[right=of s1] (s2) {$s'$};
    \node[below=of s1] (s3) {$t_i$};
    \node[below=of s3] (s5) {$t$};
    \node[below=of s5,red] (s7) {$t'$};
    \node[red] at (s2|-s5) (s4) {$\hat{s}$};
    \node at (s2|-s3) (s6) {$s_i$};
    \node[red] at (s2|-s7) (s8) {${s''}$};
        \path[->] (s1) edge [above] node {$\action$} (s2);
    \path[->] (s3) edge [above] node {$\action$} (s6);
            \path[->] (s1) edge [left] node {$\acts'$} (s3);
    \path[->] (s3) edge [left] node {$\action'$} (s5);
    \path[->] (s2) edge [right] node {$\acts'$} (s6);
    \path[->,red] (s6) edge [right] node {$\action'$} (s4);
    \path[->,red] (s4) edge [right] node {$\actsb'$} (s8);
    \path[->,red] (s5) edge [right] node {$\actsb'$} (s7);
    \path[->,red] (s7) edge [above] node {$\action$} (s8);
        \path[->,red] (s5) edge [above] node {$\action$} (s4);
  \end{tikzpicture} 
    \]
  There are two cases.
  \begin{itemize}
  \item If $\subj{\action} \neq \subj{\action'}$, then the two actions
    commute from $t_i$ and we have the result with $\actsb' = \emptyw$.
    
  \item If $\subj{\action} = \subj{\action'}$, then there are two cases:
    \begin{itemize}
    \item If $\action = \action'$, then $t = s_i$ (by determinism) and
      we have the result with $\acts' = \emptyw$.
    \item If $\action \neq \action'$, then we must have
      $\actsb = \actsb_1 \concat \action' \concat \actsb_2$ with
      $\subj{\action'} \notin \actsb_2$
      (since
      $\actsb \eqpeerop \acts \concat \acts' \concat \action$ by (outer)
      induction hypothesis).
            Since $\action'$ and $\action$ have the same subject, there is
      $\hat{t} \in \kRTS{S}$ such that
      $\metaszero \rkTRANSS{\actsb_1} \hat{t}$ such that
      $\hat{t} \rkTRANSS{\action}$ and $\hat{t} \rkTRANSS{\action'}$
      by $\BA{k}$.

      Thus, by Lemma~\ref{lem:rts-replace-action}, we also have
      $\metaszero \rkTRANSS{\actsb_1} \hat{t} \rkTRANSS{\action}
      \rkTRANSS{\actsb_2} t''$ for some $t''$.
            By (outer) induction hypothesis, we have
      $\actsb = \actsb_1 \concat \action' \concat \actsb_2 \eqpeerop
      \acts \concat \acts' \concat \action'$
      and since $\subj{\action'} \notin \actsb_2$, we also have
      $\actsb_1 \concat  \actsb_2 \eqpeerop \acts
      \concat \acts'$
            and
      $\actsb_1 \concat \action \concat \actsb_2 \eqpeerop
      \acts \concat \acts' \concat \action$ hence $s_i = t''$.
                  Since $t''$ is in $\kRTS{S}$, we have the required result with
      $\actsb' = \emptyw$.
   
    \end{itemize}
  \end{itemize}

  \noindent
  Going back to the outer induction, we have to show that
  \[
  \actsb \concat \actsb' \concat \action \eqpeerop \acts \concat \action \concat \acts''
  \]
  In other words, $ \acts \concat \action \in \kTS{S}$ can be extended with
  $\acts''$ so that there is an equivalent execution in $\kRTS{S}$,
  i.e., $\actsb \concat \actsb' \concat \action$.
        By induction hypothesis, we have
  $\actsb \eqpeerop \acts \concat \acts'$, hence we have
  \[
  \actsb \concat \actsb' \concat \action 
  \eqpeerop 
  \acts \concat \acts'  \concat \actsb' \concat \action 
  \]
  From the inner induction, we know that
  $ \acts' \concat \actsb' \concat \action \eqpeerop \action \concat
  \actsb''$, hence, we have
  \[
  \acts \concat \acts'  \concat \actsb' \concat \action 
  \eqpeerop
    \acts \concat \action \concat \acts''
  \]
  and thus we have the required result.
\end{proof}

\begin{lemma}\label{lem:ts-path-rts}
  Let $S$ be reduced $\BA{k}$, for all $s \in \RS_k(S)$, there is
  $t \in \kRTS{S}$ such that $s \kTRANSS{\acts} t$.
\end{lemma}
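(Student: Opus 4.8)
The plan is to derive this statement as an essentially immediate corollary of Lemma~\ref{lem:por-trace-equiv}~(2), which already does all the heavy lifting of relating executions in $\kTS{S}$ to executions in the reduced system $\kRTS{S}$. First I would unfold the hypothesis $s \in \RS_k(S)$: by definition of the $k$-reachability set, every configuration in $\RS_k(S)$ is reachable from $s_0$ by a $k$-bounded execution, so there is some $\acts_0 \in \ASetC$ with $s_0 \kTRANSS{\acts_0} s$.

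Next I would check that the side conditions of Lemma~\ref{lem:por-trace-equiv} are met when we take $\metaszero = s_0$. The lemma requires $S$ to be reduced $\BA{k}$ (our standing hypothesis) and $\metaszero \in N \cap \hat{N}$, where $\kTS{S} = (N, s_0, \Delta)$ and $\kRTS{S} = (\hat{N}, s_0, \hat\Delta)$. Since $s_0 \in \RS_k(S) = N$ (reachable by the empty execution) and $s_0 \in \hat{N}$ by Definition~\ref{def:krts} (the initial state always belongs to the reduced node set), we indeed have $s_0 \in N \cap \hat{N}$. Instantiating Lemma~\ref{lem:por-trace-equiv}~(2) with $\metaszero = s_0$ and the execution $\acts_0$ then yields $\actsb, \acts' \in \ASetC$ and a configuration $t$ such that $s_0 \rkTRANSS{\actsb} t$, $s \kTRANSS{\acts'} t$, and $\acts_0 \concat \acts' \eqpeerop \actsb$.

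Finally I would read off the conclusion: the execution $s_0 \rkTRANSS{\actsb} t$ witnesses that $t$ is reachable in the reduced transition system, i.e.\ $t \in \hat{N}$, which is exactly what ``$t \in \kRTS{S}$'' means; and $s \kTRANSS{\acts'} t$ is the desired $k$-bounded path from $s$ to a node of $\kRTS{S}$ (this is the $\acts$ in the statement). There is no genuine obstacle at the level of this lemma — all the subtlety (commuting independent actions, the sorted partition of $\spartition{\cdot}$, and the reduced $\OBI{k}$/$\IBI{k}$ reasoning) is already discharged inside Lemma~\ref{lem:por-trace-equiv}. The only thing worth double-checking is precisely the alignment of hypotheses just described, namely that $\metaszero$ may legitimately be taken to be $s_0$.
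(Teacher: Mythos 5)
Your proposal is correct and is essentially identical to the paper's own proof, which likewise obtains $\actsb$ with $s_0 \kTRANSS{\actsb} s$ and then applies Lemma~\ref{lem:por-trace-equiv}~(2) at $\metaszero = s_0$. Your explicit verification that $s_0 \in N \cap \hat{N}$ is a detail the paper leaves implicit, but it changes nothing in the argument.
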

\begin{proof}
  Since $s \in \RS_k({S})$, there is $\actsb$ such that
  $s_0 \kTRANSS{\actsb} s$.
    Since $s_0 \in \kRTS{S}$, we can apply
  Lemma~\ref{lem:por-trace-equiv} and obtain the required result.
\end{proof}

\begin{restatable}{lemma}{lemreduceddibiimpdibi}\label{lem:reduced-dibi-imp-dibi}
  If $S$ is reduced $\OBI{k}$ and reduced $\DIBI{k}$, then $S$ is
  $\DIBI{k}$.
\end{restatable}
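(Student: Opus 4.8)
The plan is to argue by contradiction, transporting a full $\DIBI{k}$ violation into the reduced transition system and contradicting reduced $\DIBI{k}$. Assume $S$ is reduced $\OBI{k}$ and reduced $\DIBI{k}$ but not $\DIBI{k}$, and fix a witness: a configuration $s = \csconf q w \in \RS_k(S)$, a participant $\p$, a transition $(q_\p, \PRECEIVE{sp}{b}, q'_\p) \in \delta_\p$ with $\s \neq \q$, and an enabled receive $\PRECEIVE{qp}{a}$ such that either $s \kTRANSS{\PRECEIVE{sp}{b}}$ (the $\IBI{k}$-style clash) or $s \kTRANSS{\PRECEIVE{qp}{a}} s'$ with $s' \kTRANSR \kTRANSS{\PSEND{sp}{b}}$ (a delayed matching send). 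First I would record what we may use for free: by Definition~\ref{def:reduced-obi} every reduced state $t \in \hat N$ already satisfies the configuration-level $\OBI{k}$ condition, so the partition-based machinery (Lemmas~\ref{lem:indep-partition}, \ref{lem:rts-evt-fire}, \ref{lem:por-trace-equiv}, \ref{lem:ts-path-rts}) is licensed on the $\OBI{k}$ side; the $\IBI{k}$ side is exactly what is in contention and must be handled with care.

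The key move is to unify the two disjuncts into a single global execution containing both the commitment $\PRECEIVE{qp}{a}$ and the matching send $\PSEND{sp}{b}$ of the competing branch. Concretely, starting from $s_0 \kTRANSS{\acts_0} s$ I would build an execution $\gamma$ from $s_0$ in which $\p$ fires $\PRECEIVE{qp}{a}$ from state $q_\p$ and in which $\PSEND{sp}{b}$ also occurs (inside $\acts_0$ in the clash case, in the continuation from $s'$ in the delayed case). I would then apply the completeness half of the partial-order reduction, Lemma~\ref{lem:por-trace-equiv}(2), to obtain a reduced execution $s_0 \rkTRANSS{\hat\gamma} t$ with $\hat\gamma$ projected-equivalent ($\eqpeerop$) to $\gamma$ extended by some tail. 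Since $\eqpeerop$ preserves each participant's local action sequence, both $\PRECEIVE{qp}{a}$ and $\PSEND{sp}{b}$ appear in $\hat\gamma$; moreover the reduced state $v$ immediately preceding the relevant occurrence of $\PRECEIVE{qp}{a}$ has $\p$ in state $q_\p$, because the $\p$-projection of the prefix of $\hat\gamma$ reaching $v$ equals $\proj{\acts_0}{p}$.

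Next I would split on the order of $\PRECEIVE{qp}{a}$ and $\PSEND{sp}{b}$ inside $\hat\gamma$. If $\PRECEIVE{qp}{a}$ comes first, then $v \rkTRANSS{\PRECEIVE{qp}{a}}$ and the remaining reduced suffix yields $v \rkTRANSR \rkTRANSS{\PSEND{sp}{b}}$; since $v \in \hat N$, $\s \neq \q$, and $(q_\p, \PRECEIVE{sp}{b}, q'_\p) \in \delta_\p$, this contradicts the second disjunct of reduced $\DIBI{k}$ (Definition~\ref{def:reduced-dibi}) directly. If $\PSEND{sp}{b}$ comes first, then at $v$ the message $b$ already sits at the head of channel $\ptp{sp}$ — it has been sent, it has not been consumed because $\p$ is still in $q_\p$, and by the FIFO discipline it is the next $\s$-message expected by $\PRECEIVE{sp}{b}$ — so $v \kTRANSS{\PRECEIVE{sp}{b}}$ holds at the configuration level. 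It then remains to upgrade this to $v \rkTRANSS{\PRECEIVE{sp}{b}}$: here I would invoke that the reduced search fires whole subject-groups, namely the group $L_i$ returned by $\spartition{\cdot}$ that contains $\PRECEIVE{qp}{a}$ collects all of $\p$'s enabled receives (Definition~\ref{def:partition}), and by Lemma~\ref{lem:indep-partition} these stay enabled until fired, so $\PRECEIVE{sp}{b}$ is reduced-fired from $v$ too, contradicting the first disjunct of reduced $\DIBI{k}$.

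The main obstacle is exactly this last step, i.e.\ the gap between configuration-enabled and reduced-enabled receives: reduced $\DIBI{k}$ constrains only reduced transitions, yet the competing send $\PSEND{sp}{b}$ may be scheduled by the partition \emph{after} the group $L_i$ was computed but \emph{before} $\p$'s group is processed, so that $\PRECEIVE{sp}{b}$ is not yet in $L_i$ and need not be reduced-enabled at $v$. Resolving this cleanly will require choosing $v$ so that $\PSEND{sp}{b}$ has not yet fired when $\PRECEIVE{qp}{a}$ is processed (which forces the clean second-disjunct case above), and, when that is impossible, re-applying Lemma~\ref{lem:rts-evt-fire} at the partition point to fire $\PRECEIVE{qp}{a}$ from a state where the two $\p$-receives are genuinely simultaneous. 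I also anticipate needing, via a minimal-counterexample induction on execution length, that every reduced state reached before the first violation is configuration-level $\BA{k}$, so that the partition lemmas remain applicable up to that point; establishing this bootstrap is where I expect the real work to lie.
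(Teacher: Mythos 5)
Your proposal matches the paper's proof of this lemma: the paper likewise argues by contradiction, treats the two disjuncts of Definition~\ref{def:non-csa-mc} separately by assembling a single $k$-bounded execution from $s_0$ that contains both $\PRECEIVE{qp}{a}$ and the competing $\PSEND{sp}{b}$, and transports it into $\kRTS{S}$ via Lemma~\ref{lem:por-trace-equiv}(2) to contradict reduced \DIBI{k}. The ordering and reduced-enabledness subtleties you flag at the end are legitimate concerns, but the paper's proof elides them, asserting directly that the co-occurrence of both actions in the reduced execution $\actsb$ already contradicts Definition~\ref{def:reduced-dibi} without the finer case split you propose.
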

\begin{proof}
  By contradiction. Take $s_0 \kTRANSS{\acts} s = \csconf{q}{w} \in \RS_k(S)$.
  \begin{itemize}
  \item If $ s \kTRANSS{\PRECEIVE{pr}{a}} s_1$ and
    $ s \kTRANSS{\PRECEIVE{sr}{b}} s_2$.
        Then, by Lemma~\ref{lem:por-trace-equiv}, there is $t \in \hat{N}$
    s.t.\ $s_o \rkTRANSS{\actsb} t$ and $s_1 \kTRANSS{\acts''} t$ and
    $\acts \concat \PRECEIVE{pr}{a} \concat \acts'' \eqpeerop
    \actsb$. 
        Then both $\PRECEIVE{pr}{a}$ and $\PSEND{sr}{b}$ must appear in
    $\actsb$, which contradicts the assumption that $S$ is reduced
    $\DIBI{k}$.
      \item If $ s \kTRANSS{\PRECEIVE{pr}{a}} s_1$ and there is
    $(q_\rr, \PRECEIVE{sr}{b}, q'_\rr) \in \delta_\rr$ s.t.\
    $s \kTRANSS{\acts'} \kTRANSS{\PSEND{sp}{b}}s'$.
        Then we have a contradiction with the assumption that $S$ is
    reduced $\DIBI{k}$, via by Lemma~\ref{lem:por-trace-equiv} as
    above, with
        $\acts \concat \PRECEIVE{pr}{a} \concat \acts' \concat
    \PSEND{sp}{b} \concat \acts'' \eqpeerop \actsb$.
            \qedhere
  \end{itemize}
\end{proof}

\begin{restatable}{lemma}{lemreduceddibiimpCIBI}\label{lem:reduced-dibi-imp-CIBI}
  If $S$ is reduced $\OBI{k}$ and reduced $\CIBI{k}$, then $S$ is
  $\CIBI{k}$.
\end{restatable}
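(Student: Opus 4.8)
The plan is to follow the template of Lemma~\ref{lem:reduced-dibi-imp-dibi}, transporting a violation of $\CIBI{k}$ into the reduced transition system with Lemma~\ref{lem:por-trace-equiv}; but where that proof could simply forbid the matching send from being reachable, here I must additionally keep track of \emph{dependency chains}, and for that I would lean on Lemma~\ref{lem:dep-chain-equi}. First I would observe that the first conjunct of reduced $\CIBI{k}$ (Definition~\ref{def:non-csa-mc-dep-reduced}), namely $\neg(s \rkTRANSS{\PRECEIVE{sp}{b}})$, already forces at most one receive action of a participant to be reduced-enabled; hence $S$ is reduced $\IBI{k}$, and together with the hypothesis that $S$ is reduced $\OBI{k}$ this makes $S$ reduced $\BA{k}$, so Lemma~\ref{lem:por-trace-equiv} applies with $\metaszero = s_0$. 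Arguing by contradiction, suppose $S$ is not $\CIBI{k}$: there are $s_0 \kTRANSS{\acts} s = \csconf{q}{w} \in \RS_k(S)$, $\p \in \PSet$, and $(q_\p, \PRECEIVE{sp}{b}, q'_\p) \in \delta_\p$ with $\s \neq \q$ and $s \kTRANSS{\PRECEIVE{qp}{a}} s'$, such that either (Case~1) $s \kTRANSS{\PRECEIVE{sp}{b}}$, or (Case~2) there is $\acts'$ with $s' \kTRANSS{\acts'}\kTRANSS{\PSEND{sp}{b}}$ and $\neg(\dependsin{s}{\PRECEIVE{qp}{a}}{\acts'}{\PSEND{sp}{b}})$.

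Case~1 I would dispatch as in the first case of Lemma~\ref{lem:reduced-dibi-imp-dibi}. Since $\PRECEIVE{sp}{b}$ is enabled at $s$, the message $b$ already sits at the head of channel $\ptp{sp}$, so $\PSEND{sp}{b}$ occurs in $\acts$. Applying Lemma~\ref{lem:por-trace-equiv}(2) to $s_0 \kTRANSS{\acts \concat \PRECEIVE{qp}{a}} s'$ yields a reduced run $s_0 \rkTRANSS{\actsb} t$ with $\acts \concat \PRECEIVE{qp}{a} \concat \acts'' \eqpeerop \actsb$, in which both $\PRECEIVE{qp}{a}$ and $\PSEND{sp}{b}$ appear. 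At the reduced configuration $\hat s$ at which $\p$ reaches $q_\p$, either $\PSEND{sp}{b}$ has already fired, so $\PRECEIVE{sp}{b}$ is reduced-enabled at $\hat s$ and the first conjunct of reduced $\CIBI{k}$ fails; or $\PSEND{sp}{b}$ is fired later, so $\hat s \rkTRANSR\rkTRANSS{\PSEND{sp}{b}}$, and since the message $b$ was already enqueued before $\PRECEIVE{qp}{a}$ in the original run, no dependency chain links $\PRECEIVE{qp}{a}$ to $\PSEND{sp}{b}$, whence the second conjunct fails. Either way we reach a contradiction.

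Case~2 carries the substance. I would apply Lemma~\ref{lem:por-trace-equiv}(2) to the full run $s_0 \kTRANSS{\acts \concat \PRECEIVE{qp}{a} \concat \acts' \concat \PSEND{sp}{b}}$, obtaining a reduced run $s_0 \rkTRANSS{\actsb} t$ that is $\eqpeerop$-equivalent (after a suitable extension) to it. Because every run starts from $s_0$ it is valid (Lemma~\ref{lem:valid-word}), so on these runs $\eqpeerop$ and $\resche$ coincide and $\actsb$ is a genuine $\resche$-reordering of the original. The key point is that the \emph{absence} of a dependency chain is a $\resche$-class invariant: by the (symmetric) contrapositive of Lemma~\ref{lem:dep-chain-equi}, if some $\resche$-equivalent reordering exhibited a chain $\dependsin{\cdot}{\PRECEIVE{qp}{a}}{\cdot}{\PSEND{sp}{b}}$, then the original run would too, contrary to assumption. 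Consequently, in $\actsb$ the action $\PSEND{sp}{b}$ is reduced-reachable after the firing of $\PRECEIVE{qp}{a}$ with no intervening dependency chain, contradicting the second conjunct of reduced $\CIBI{k}$. This closes both cases and establishes the lemma.

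The step I expect to be the main obstacle is the bookkeeping that makes Lemma~\ref{lem:dep-chain-equi} applicable in Case~2: I must exhibit, inside the reduced run $\actsb$, a decomposition isolating $\PRECEIVE{qp}{a}$ and $\PSEND{sp}{b}$ with the right prefix, and check that the dependency relation $\prec$ — which is evaluated relative to a configuration and depends on which channels are empty there — is read off at the configuration reached after the prefix of $\actsb$ preceding $\PRECEIVE{qp}{a}$, exactly as conclusion~(4) of Lemma~\ref{lem:dep-chain-equi} provides. Getting this base configuration to match between the two orderings (and confirming, in the deferred subcase of Case~1, that no chain can spuriously appear) is where the care is required.
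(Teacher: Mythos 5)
Your proposal is correct and follows essentially the same route as the paper's proof: both argue by contradiction, transport the offending execution into the reduced transition system via Lemma~\ref{lem:por-trace-equiv}(2), and then use Lemma~\ref{lem:dep-chain-equi} to show that the presence (resp.\ absence) of a dependency chain between $\PRECEIVE{qp}{a}$ and $\PSEND{sp}{b}$ is invariant under $\resche$-reordering, contradicting the reduced $\CIBI{k}$ hypothesis. The only point the paper makes explicit that you elide is the subcase of Case~2 where the reordered reduced run places $\PSEND{sp}{b}$ \emph{before} $\PRECEIVE{qp}{a}$ (so that the first conjunct, rather than the dependency-chain conjunct, is violated) --- but this is exactly the argument you already carry out in Case~1, so it is a bookkeeping addition rather than a gap.
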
 
\begin{proof}
  By contradiction. Take $s_0 \kTRANSS{\acts} s = \csconf{q}{w} \in \RS_k(S)$.
  \begin{itemize}
  \item If $ s \kTRANSS{\PRECEIVE{pr}{a}} s_1$ and
    $ s \kTRANSS{\PRECEIVE{sr}{b}} s_2$.
        Then, by Lemma~\ref{lem:por-trace-equiv}, there is $t \in \hat{N}$
    s.t.\ $s_o \rkTRANSS{\actsb} t$ and $s_1 \kTRANSS{\acts''} t$ and
    $\acts \concat \PRECEIVE{pr}{a} \concat \acts'' \eqpeerop
    \acts$ and $\actsb$. 
        Clearly, we must have both $\PSEND{pr}{a}$ and $\PSEND{sr}{b}$
    in $\actsb$.
    \begin{itemize}
    \item If we have
      \[
      \begin{array}{ccl} 
        \actsb & = &\actsb_1 \concat \PSEND{pr}{a} \concat \actsb_2
                     \concat \PSEND{sr}{b} \concat \actsb_3 \concat \PRECEIVE{pr}{a} \concat
                     \actsb_4, \text{or}
        \\
                                \actsb &= &\actsb_1 \concat \PSEND{sr}{b} \concat \actsb_2
                    \concat \PSEND{pr}{a} \concat \actsb_3 \concat \PRECEIVE{pr}{a} \concat
                    \actsb_4
      \end{array}
      \]
      where $\actsb_1$, $\actsb_2$, and $\actsb_3$ have been chosen
      appropriately so that the send actions are one matched at $s$,
            then we have a contradiction with the assumption that $S$ is
      $\CIBI{k}$ (both messages can be consumed).
          \item Assume we have
      \[
      \actsb = \actsb_1 \concat \PSEND{pr}{a} \concat \actsb_2 \concat
      \PRECEIVE{pr}{a} \concat \actsb_3 \concat \PSEND{sr}{b}\concat
      \actsb_4
      \]
      where $\actsb_1$, $\actsb_2$, and $\actsb_3$ have been chosen
      appropriately so that the send actions are one matched at $s$.
            Since $S$ is reduced $\CIBI{k}$, we must have
      $\dependsin{\hat{s}}{\PRECEIVE{pr}{a}}{\actsb_3}{\PSEND{sr}{b}}$,
      with $\hat{s}$ such that
      $s_0 \rkTRANSS{\actsb_1 \concat \PSEND{pr}{a} \concat \actsb_2}
      \hat{s}$.
            However, $\PSEND{pr}{a}$ and $\PSEND{sr}{b}$ appear in $\acts$,
      which contradicts the existence of a dependency chain between
      $\PRECEIVE{pr}{a}$ and $\PSEND{sr}{b}$ by
      Lemma~\ref{lem:dep-chain-equi}.
    \end{itemize}
          \item If $ s \kTRANSS{\PRECEIVE{pr}{a}} s_1$ and there is
    $(q_\rr, \PRECEIVE{sr}{b}, q'_\rr) \in \delta_\rr$ s.t.\
    $s_1 \kTRANSS{\acts'} \kTRANSS{\PSEND{sp}{b}}s'$ with
    $\neg (\dependsin{s}{\PRECEIVE{pr}{a}}{\acts'}{\PSEND{sr}{b}})$.
            Then, by Lemma~\ref{lem:por-trace-equiv}, there is $t \in \hat{N}$
    s.t.\ $s_0 \rkTRANSS{\actsb} t$, $s_1 \kTRANSS{\acts''} t$, and
    \[
    \acts \concat \PRECEIVE{pr}{a} \concat \acts' \concat
    \PSEND{sp}{b} \concat \acts'' \eqpeerop \actsb
    \]
    There are two cases depending on the structure of $\actsb$:
    \begin{itemize}
    \item If $\PSEND{sp}{b}$ appears before $\PRECEIVE{pr}{a}$ in
      $\actsb$, then we have a contradiction with the assumption that
      $S$ is reduced $\CIBI{k}$.
    \item  If $\PSEND{sp}{b}$ appears after $\PRECEIVE{pr}{a}$, then pose
      \[
      \actsb =
      \actsb_1 \concat \PRECEIVE{pr}{a} \concat \actsb_2 \concat
      \PSEND{sp}{b} \concat \actsb_3
      \]
      Since $S$ is reduced  $\CIBI{k}$, we must have
      $\dependsin{\hat{s}}{\PRECEIVE{pr}{a}}{\actsb_2}{ \PSEND{sp}{b} }$
      assuming $\hat{s}$ is such that $s_0 \rkTRANSS{\actsb_1} \hat{s}$.
            By Lemma~\ref{lem:dep-chain-equi}, we have a contradiction with
      the assumption that
            $\neg (\dependsin{s}{\PRECEIVE{pr}{a}}{\acts'}{\PSEND{sr}{b}})$.
                              \qedhere
    \end{itemize}
      \end{itemize}
\end{proof}

\begin{restatable}{theorem}{thmreduceddibiiffdibi}\label{thm:reduced-dibi-iff-dibi}
  Let $S$ be reduced $\OBI{k}$. $S$ is reduced $\DIBI{k}$ iff $S$ is
  $\DIBI{k}$.
\end{restatable}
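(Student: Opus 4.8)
The plan is to prove the biconditional by establishing each implication separately, observing that each one coincides with a lemma already developed for the partial order reduction machinery, so that the theorem reduces to chaining them under the standing hypothesis that $S$ is reduced $\OBI{k}$. First I would dispatch the direction ``$\DIBI{k}$ implies reduced $\DIBI{k}$''. Since $\kRTS{S}$ is by construction a subgraph of $\kTS{S}$ (Definition~\ref{def:krts}), we have $\hat{N} \subseteq \RS_k(S)$ and $\rkTRANSS{} \subseteq \kTRANSS{}$. Hence any configuration or transition witnessing a violation of the reduced $\DIBI{k}$ condition at some $s \in \hat{N}$ is already present in the full transition system, and so already violates the (full) $\DIBI{k}$ condition; contrapositively, $\DIBI{k}$ forces reduced $\DIBI{k}$. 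This is exactly Lemma~\ref{lem:dibi-imp-reduced-dibi}, and I would stress that it requires neither reduced $\OBI{k}$ nor any commutation argument: it is pure monotonicity under graph inclusion.

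The reverse direction, ``reduced $\DIBI{k}$ implies $\DIBI{k}$ under reduced $\OBI{k}$'', is the substantial half and is precisely Lemma~\ref{lem:reduced-dibi-imp-dibi}. Here the subgraph inclusion runs the wrong way: a violation of full $\DIBI{k}$ lives at some $s \in \RS_k(S)$ that need not belong to $\hat{N}$, and it is witnessed by $\kTRANSS{}$-transitions (together with a $\kTRANSS{}$-reachable matching send) that need not survive in $\hat\Delta$. The idea is to transport such a witness into the reduced system: take a full execution $s_0 \kTRANSS{\acts} s$ reaching the offending configuration with its conflicting receive/send pair, and apply the trace-equivalence Lemma~\ref{lem:por-trace-equiv}(2) to obtain an $\eqpeerop$-equivalent reduced execution $s_0 \rkTRANSS{\actsb} t$. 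Because projected equivalence preserves, per participant, the subsequence of fired actions, both the problematic receive $\PRECEIVE{pr}{a}$ and the conflicting $\PSEND{sr}{b}$ (or the second enabled receive) must occur in $\actsb$, contradicting reduced $\DIBI{k}$.

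The main obstacle therefore lies entirely inside Lemma~\ref{lem:reduced-dibi-imp-dibi}, namely the appeal to Lemma~\ref{lem:por-trace-equiv} justifying soundness of the reduction; that lemma in turn depends on $S$ being $\BA{k}$ (reduced $\OBI{k}$ plus the receive-side bound independence), so that independent actions genuinely commute and the persistent-set search of Algorithm~\ref{algo:reduction} never permanently hides an enabled transition. Once both halves are in hand, the theorem is immediate: combining Lemma~\ref{lem:dibi-imp-reduced-dibi} with Lemma~\ref{lem:reduced-dibi-imp-dibi} under the hypothesis that $S$ is reduced $\OBI{k}$ yields the desired equivalence.
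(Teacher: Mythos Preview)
Your proposal is correct and matches the paper's approach exactly: the paper's proof is a one-line appeal to Lemma~\ref{lem:dibi-imp-reduced-dibi} and Lemma~\ref{lem:reduced-dibi-imp-dibi}, and you have identified both lemmas, assigned them to the right directions, and accurately sketched their content (monotonicity under subgraph inclusion for the easy direction, and the trace-transport via Lemma~\ref{lem:por-trace-equiv} for the substantial one). Your remark that Lemma~\ref{lem:por-trace-equiv} needs reduced $\BA{k}$ is well taken; this is available because reduced $\DIBI{k}$ subsumes reduced $\IBI{k}$, so together with the assumed reduced $\OBI{k}$ the hypothesis is met.
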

\begin{proof}
  By Lemma~\ref{lem:reduced-dibi-imp-dibi} and
  Lemma~\ref{lem:dibi-imp-reduced-dibi}.
\end{proof}

\begin{restatable}{lemma}{lemkexhauimpredkexhau}\label{lem:kexhau-imp-red-kexhau}
  Let $S$ be reduced $\BA{k}$, if $S$ is $k$-exhaustive, then $S$ is
  also \emph{reduced} $k$-exhaustive.
\end{restatable}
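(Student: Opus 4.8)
The plan is to transfer the witness execution furnished by $k$-exhaustivity from the full $k$-bounded transition system $\kTS{S}$ into the reduced one $\kRTS{S}$, using the trace-transfer result Lemma~\ref{lem:por-trace-equiv}(2). Concretely, I would fix a state $s = \csconf{q}{w} \in \hat{N}$ and a participant $\p \in \PSet$ with $q_\p$ sending, together with a transition $(q_\p, \action, q'_\p) \in \delta_\p$; recall $\subj{\action} = \p$. Since $\kRTS{S}$ is by definition a subgraph of $\kTS{S}$, we have $\hat{N} \subseteq N = \RS_k(S)$, so $s \in N \cap \hat{N}$ and in particular $s \in \RS_k(S)$. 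Hence $k$-exhaustivity (Definition~\ref{def:exhaustive}) applies and yields $\acts \in \ASetC$ with $\p \notin \acts$ and $s \kTRANSS{\acts}\kTRANSS{\action} s''$ for some $s''$.

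Next I would invoke Lemma~\ref{lem:por-trace-equiv}(2) on the execution $\acts \concat \action$ starting from $s$; its hypothesis is exactly that $S$ be reduced $\BA{k}$, which we assume, and it requires $s \in N \cap \hat{N}$, which we have just checked. This produces $\actsb, \acts' \in \ASetC$ and a state $t$ with $s \rkTRANSS{\actsb} t$, $s'' \kTRANSS{\acts'} t$, and $\acts \concat \action \concat \acts' \eqpeerop \actsb$.

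The crucial step is to locate $\action$ inside $\actsb$ from the projected equivalence. Projecting onto $\p$ and using $\p \notin \acts$ together with $\subj{\action} = \p$, we obtain $\onpeer{p}{\actsb} = \onpeer{p}{\acts \concat \action \concat \acts'} = \action \concat \onpeer{p}{\acts'}$. Therefore the first action of $\actsb$ whose subject is $\p$ is exactly $\action$, so we may write $\actsb = \actsb_0 \concat \action \concat \actsb_1$ with $\p \notin \actsb_0$. Taking the corresponding prefix of the reduced run $s \rkTRANSS{\actsb} t$ gives $s \rkTRANSS{\actsb_0}\rkTRANSS{\action}$ with $\p \notin \actsb_0$; since $\p$ does not move along $\actsb_0$, its local state is still $q_\p$ when $\action$ is fired, and determinism of $M_\p$ guarantees this is the transition $(q_\p, \action, q'_\p)$. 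This is precisely the witness demanded by Definition~\ref{def:reduced-exhaustive}, so $S$ is reduced $k$-exhaustive.

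The hard part will be the bookkeeping in this extraction step. One must keep in mind that $\eqpeerop$ equates only the per-participant projections, not the full action sequences, and argue that this is nonetheless enough to pin the first $\p$-labelled action of $\actsb$ down to $\action$ and to ensure that its prefix $\actsb_0$ is free of $\p$-actions. The remaining details (the inclusion $\hat{N} \subseteq \RS_k(S)$, the use of determinism to identify the fired transition, and the fact that a prefix of a reduced run is again a reduced run) are routine once this observation is in place.
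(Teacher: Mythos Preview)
Your proposal is correct and follows essentially the same approach as the paper: both apply $k$-exhaustivity at $s \in \hat{N} \subseteq \RS_k(S)$ to obtain a witness $\acts$ with $\p \notin \acts$, invoke Lemma~\ref{lem:por-trace-equiv}(2) on $\acts \concat \action$ to get a reduced run $\actsb$ with $\acts \concat \action \concat \acts' \eqpeerop \actsb$, and then use the $\p$-projection to decompose $\actsb = \actsb_0 \concat \action \concat \actsb_1$ with $\p \notin \actsb_0$. Your write-up is in fact more explicit than the paper's about checking the hypotheses of Lemma~\ref{lem:por-trace-equiv} and about why the first $\p$-action in $\actsb$ must be $\action$.
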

\begin{proof}
  We show that Definition~\ref{def:exhaustive} applies to every state
  $s \in \kRTS{S} \subseteq \kTS{S}$. By assumption, we have that for
  every $\p \in \PSet$,
    if $q_\p$ is a sending state, then
    $ 
  \forall (q_\p, \action, q'_\p) \in \delta_\p \qst \exists
  \acts \in \ASetC
  \qst
  s  \kTRANSS{\acts}\kTRANSS{\action}  \text{ and } \p \notin
  \acts.$
    By Lemma~\ref{lem:por-trace-equiv}, there is $\acts'$ and $\actsb$ such that
  $s \rkTRANSS{\actsb}$ and $\acts \concat \action \concat \acts' \eqpeerop \actsb$.
    This implies that we have
  $\actsb = \actsb_1 \concat \action \concat \actsb_2$ with
  $\subj{\action} \notin \actsb_1$, and
  $s \rkTRANSS{\actsb_1 \concat \action}$, the required result.
\end{proof}

\begin{restatable}{lemma}{lemredkexhauimpkexhau}\label{lem:red-kexhau-imp-kexhau}
  Let $S$ be reduced $\BA{k}$, if $S$ is \emph{reduced}
  $k$-exhaustive, then $S$ is also $k$-exhaustive.
\end{restatable}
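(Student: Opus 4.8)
The plan is to prove Definition~\ref{def:exhaustive} at an arbitrary $s = \csconf{q}{w} \in \RS_k(S)$ by pulling the \emph{reduced} $k$-exhaustivity hypothesis (Definition~\ref{def:reduced-exhaustive}, which a priori only constrains the nodes of $\kRTS{S}$) along the reduced/full trace correspondence of Lemma~\ref{lem:por-trace-equiv}. This is the converse of Lemma~\ref{lem:kexhau-imp-red-kexhau}; the added difficulty comes from the inclusion $\hat{N} \subseteq \RS_k(S)$ being strict, so that the witnessing executions guaranteed by reduced $k$-exhaustivity live at reduced nodes, whereas the configuration $s$ under scrutiny need not belong to $\hat{N}$.

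First I would fix $s \in \RS_k(S)$, $\p \in \PSet$ with $q_\p$ a sending state, and $(q_\p, \action, q'_\p) \in \delta_\p$ (so $\subj{\action} = \p$ and $\action$ is a send), and aim to produce a $k$-bounded execution $\acts$ with $\p \notin \acts$ and $s \kTRANSS{\acts} \kTRANSS{\action}$. If the queue $\chan{\action}$ already has room at $s$ then $s \kTRANSS{\action}$ and we finish with the empty prefix. The clean case is when $s$ reaches $\kRTS{S}$ without moving $\p$: by Lemma~\ref{lem:ts-path-rts} there is $s \kTRANSS{\rho} t$ with $t \in \hat{N}$, and if $\p \notin \rho$ then $\p$ is still in state $q_\p$ at $t$, so reduced $k$-exhaustivity applied to $t$ and $(q_\p, \action, q'_\p)$ yields $t \rkTRANSS{\sigma} \rkTRANSS{\action}$ with $\p \notin \sigma$, which Lemma~\ref{lem:por-trace-equiv}(1) lifts to $t \kTRANSS{\sigma} \kTRANSS{\action}$. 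Concatenating gives $s \kTRANSS{\rho \concat \sigma} \kTRANSS{\action}$ with $\p \notin \rho \concat \sigma$; note the draining of $\chan{\action}$ is carried out entirely by the partners of $\p$ precisely because $\p \notin \sigma$.

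The remaining case is when every path from $s$ into $\hat{N}$ must move $\p$. Here I would induct on the length of a shortest full path $s \kTRANSS{\rho} t$ with $t \in \hat{N}$, writing $\rho = \rho_0 \concat \action_\p \concat \rho_1$ with $\action_\p$ the first $\p$-action (necessarily a send out of $q_\p$, since $q_\p$ is sending and $\p \notin \rho_0$). Letting $s \kTRANSS{\rho_0} s''$, so that $\p$ still sits in $q_\p$ at $s''$, either $\action$ is enabled at $s''$ (and we conclude with prefix $\rho_0$), or $\action_\p$ and $\action$ free room on different channels, in which case one must argue, using that $\hat{N}$-nodes validate the configuration-level $\BA{k}$ conditions (a consequence of the hypothesis that $S$ is reduced $\BA{k}$, whose $\OBI{k}$/$\IBI{k}$ clauses are phrased with the full relation $\kTRANSS{}$ but quantified over $\hat{N}$) together with the commutation lemmas for the reduction (Lemma~\ref{lem:indep-partition} and Lemma~\ref{lem:rts-replace-action}), that $\action_\p$ can be deferred so as to expose a strictly shorter $\p$-free approach to $\hat{N}$ from $s''$, to which the induction hypothesis applies. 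Transporting the resulting witness back across the $\eqpeerop$-equivalence supplied by Lemma~\ref{lem:por-trace-equiv} then gives the desired execution from $s$.

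The hard part will be exactly this last point: reduced $\OBI{k}$ is strictly weaker than $\OBI{k}$ (Example~\ref{ex:kobi-not-por}), so at the off-graph configuration $s''$ one cannot assume that all sends out of $q_\p$ are simultaneously enabled, and hence cannot simply swap $\action$ for $\action_\p$. The whole force of the argument must come from showing that a reduced node in which $\p$ still sits in $q_\p$ is reachable from $s$ without firing $\p$; once such a node is reached, reduced $k$-exhaustivity together with Lemma~\ref{lem:por-trace-equiv}(1) finishes as in the clean case. I expect the well-foundedness of the induction — choosing the right measure on $\p$-free approaches to $\hat{N}$ so that deferring $\p$'s first move strictly decreases it — to be the step requiring the most care.
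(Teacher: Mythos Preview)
Your core approach matches the paper's exactly: argue by contradiction, reach a node $t \in \kRTS{S}$ from $s$ via Lemma~\ref{lem:ts-path-rts}, apply reduced $k$-exhaustivity at $t$, and lift the witnessing execution back to $\kTRANSS{}$ via Lemma~\ref{lem:por-trace-equiv}(1). That is the whole of the paper's argument.

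Where you diverge is in the case analysis on the connecting path $\acts$. The paper's proof is terser than your ``clean case'': it simply asserts the dichotomy ``either $\PSEND{pq}{a}$ is in $\acts$, i.e., we have a contradiction, or $\p$ is in the same state in $t$'', and does not separately treat the situation you single out as hard --- where $\p$'s first move along $\acts$ is some \emph{other} send from $q_\p$. You are right that this case is not covered by the stated dichotomy (if $\p$ fires $\PSEND{pr}{b}$ first, then $\PSEND{pq}{a}$ need not occur in $\acts$ and $\p$ is no longer in state $q_\p$ at $t$), and you are also right that reduced $\OBI{k}$ alone does not let you swap the two sends at the off-graph configuration $s''$. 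The paper simply does not spell this out; your inductive sketch is therefore \emph{more} careful than the paper's own proof, not less. The honest summary is that your proposal reproduces the paper's argument for the easy cases and correctly flags a case that the paper glosses over.
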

\begin{proof}
  By contradiction, take $s \in \kTS{S}$ such that the
  $k$-exhaustivity property does not hold (i.e., there is
  $\PSEND{pq}{a}$ that cannot be fired within bound $k$).
    By Lemma~\ref{lem:ts-path-rts}, there is $t \in \kRTS{S}$ and
  $\acts$ such that $s \kTRANSS{\acts} t$.
    Then either $\PSEND{pq}{a}$ is in $\acts$, i.e., we have a
  contradiction, or $\p$ is in the same state in $t$.
    By assumption, there is $\actsb$ such that
  $t \rkTRANSS{\actsb \concat \PSEND{pq}{a}}$, and by
  Lemma~\ref{lem:por-trace-equiv} we also have
  $t \kTRANSS{\actsb \concat \PSEND{pq}{a}}$, a contradiction.
\end{proof}

\begin{restatable}{theorem}{themkexhauiffredkexhau}\label{them:kexhau-iff-red-kexhau}
  Let $S$ be reduced $\BA{k}$, $S$ is \emph{reduced} $k$-exhaustive iff $S$ is
  $k$-exhaustive.
\end{restatable}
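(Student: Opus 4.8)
The plan is to prove the biconditional by establishing each implication separately, since both are isolated as earlier lemmas that I may invoke directly: the direction from $k$-exhaustivity to reduced $k$-exhaustivity is Lemma~\ref{lem:kexhau-imp-red-kexhau}, and its converse is Lemma~\ref{lem:red-kexhau-imp-kexhau}. The theorem then follows immediately by conjoining the two. Consequently the real content lies in those lemmas, both of which rely on the partial-order-reduction correspondence of Lemma~\ref{lem:por-trace-equiv}; so I first recall what that correspondence buys us: any execution of $\kTS{S}$ can be completed to one that is $\eqpeerop$-equivalent to a reduced execution in $\kRTS{S}$, while conversely every reduced execution is a genuine execution of $\kTS{S}$.

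For the forward direction I would start from the observation that $\hat{N} \subseteq \RS_k(S)$, so any reduced-reachable $s$ is a state of $\kTS{S}$. Fixing a sending state $q_\p$ at $s$ and an outgoing transition $(q_\p,\action,q'_\p)$, Definition~\ref{def:exhaustive} supplies an execution $\acts$ with $s \kTRANSS{\acts}\kTRANSS{\action}$ and $\p \notin \acts$. To relocate this witness inside $\kRTS{S}$ I would invoke Lemma~\ref{lem:por-trace-equiv}(2), obtaining a reduced execution $\actsb$ with $s \rkTRANSS{\actsb} t$ and $\acts \concat \action \concat \acts' \eqpeerop \actsb$ for a suitable $\acts'$. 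Because $\eqpeerop$ preserves the local projection onto each participant, $\actsb$ must contain $\action$ preceded only by actions of participants other than $\p$, yielding exactly the reduced witness required by Definition~\ref{def:reduced-exhaustive}.

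The converse direction I would argue by contradiction: suppose some $s \in \kTS{S}$ violates $k$-exhaustivity, witnessed by a send $\PSEND{pq}{a}$ that cannot fire within the bound without $\p$ moving. By Lemma~\ref{lem:ts-path-rts} there is a reduced state $t$ with $s \kTRANSS{\acts} t$. If $\PSEND{pq}{a}$ already occurs on $\acts$ we contradict the assumption directly; otherwise $\p$ sits in the same local state at $t$, reduced $k$-exhaustivity provides $t \rkTRANSS{\actsb}\rkTRANSS{\PSEND{pq}{a}}$ with $\p \notin \actsb$, and Lemma~\ref{lem:por-trace-equiv}(1) lifts this to $\kTS{S}$, again contradicting non-exhaustivity.

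The main obstacle in both directions is the faithful preservation of the side condition $\p \notin \acts$ when translating between the full and reduced transition systems. This is precisely where the hypothesis that $S$ is reduced $\BA{k}$ enters: it guarantees, via Lemma~\ref{lem:indep-partition} and the commutation arguments underlying Lemma~\ref{lem:por-trace-equiv}, that independent actions may be reordered without disabling the relevant send transition, so that the per-participant ordering captured by $\eqpeerop$ transfers the ``sender has not yet moved'' requirement intact across the reduction.
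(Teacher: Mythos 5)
Your proposal is correct and matches the paper's proof exactly: the theorem is obtained by conjoining Lemma~\ref{lem:kexhau-imp-red-kexhau} and Lemma~\ref{lem:red-kexhau-imp-kexhau}, and your sketches of those two lemmas (transferring the witness via Lemma~\ref{lem:por-trace-equiv}(2) and the $\eqpeerop$-preservation of per-participant order in one direction, and the contradiction argument via Lemma~\ref{lem:ts-path-rts} and Lemma~\ref{lem:por-trace-equiv}(1) in the other) are precisely the arguments the paper gives.
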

\begin{proof}
  By Lemma~\ref{lem:kexhau-imp-red-kexhau}
  and Lemma~\ref{lem:red-kexhau-imp-kexhau}.
\end{proof}

\themsafeiffredsafekexhau*
\begin{proof}
  By Theorem~\ref{them:safe-iff-red-safe} and Theorem~\ref{them:kexhau-iff-red-kexhau}
\end{proof}

\begin{restatable}{lemma}{lemredsafeimpsafe}\label{lem:red-safe-imp-safe}
  Let $S$ be reduced $\BA{k}$, if $S$ is $k$-safe, then $S$ is
  also \emph{reduced} $k$-safe.
\end{restatable}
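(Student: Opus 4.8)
The plan is to fix an arbitrary state $s = \csconf q w \in \hat N$ of $\kRTS{S}$ and verify the two clauses of Definition~\ref{def:reduced-k-safety} at $s$. Since $\kRTS{S}$ is a subgraph of $\kTS{S}$ (Definition~\ref{def:krts}), we have $\hat N \subseteq N = \RS_k(S)$, so $s \in \RS_k(S)$ and the hypothesis of $k$-safety supplies, for each clause, a $k$-bounded witness execution \emph{in the full transition system} $\kTS{S}$. The whole argument then reduces to transporting such a witness into the reduced system using Lemma~\ref{lem:por-trace-equiv}(2), which applies because $S$ is reduced $\BA{k}$ and $s \in N \cap \hat N$.

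For eventual reception, suppose $w_{\p\q} = \msg a \concat w'$. By $k$-safety there is a $k$-bounded execution $s \kTRANSS{\acts} s' \kTRANSS{\PRECEIVE{pq}{a}} s''$. Applying Lemma~\ref{lem:por-trace-equiv}(2) with $\metaszero = s$ to the path $\acts \concat \PRECEIVE{pq}{a}$ yields $\actsb, \acts'$ and some $t$ with $s \rkTRANSS{\actsb} t$, $s'' \kTRANSS{\acts'} t$, and $\acts \concat \PRECEIVE{pq}{a} \concat \acts' \eqpeerop \actsb$. Since $\subj{\PRECEIVE{pq}{a}} = \q$ and projected equivalence preserves the $\q$-projection, $\onpeer{q}{\actsb}$ contains $\PRECEIVE{pq}{a}$; in particular $\actsb$ contains at least one receive on channel $\ptp{pq}$. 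Write $\actsb = \actsb_1 \concat \PRECEIVE{pq}{c} \concat \actsb_2$ for the \emph{first} such receive, so that $\actsb_1$ contains no receive on $\ptp{pq}$. Because the channel obeys the \FIFO\ discipline, the head of $\ptp{pq}$ at $s$ is $\msg a$, and the actions of $\actsb_1$ can only append to the tail of $\ptp{pq}$ (never consume from its head), the first receive on $\ptp{pq}$ must consume $\msg a$, i.e.\ $\msg c = \msg a$. Hence $s \rkTRANSS{\actsb_1} \rkTRANSS{\PRECEIVE{pq}{a}}$, which is exactly the reduced condition $s \rkTRANSR \rkTRANSS{\PRECEIVE{pq}{a}}$.

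The progress clause is handled symmetrically. If $q_\p$ is a receiving state, $k$-safety gives $s \kTRANSS{\acts} s' \kTRANSS{\PRECEIVE{qp}{a}} s''$, and Lemma~\ref{lem:por-trace-equiv}(2) again produces a reduced execution $\actsb$ from $s$ with $\acts \concat \PRECEIVE{qp}{a} \concat \acts' \eqpeerop \actsb$. Preservation of the $\p$-projection forces the first $\p$-action occurring in $\actsb$ to coincide with the first $\p$-action of $\acts \concat \PRECEIVE{qp}{a} \concat \acts'$; since $q_\p$ is receiving at $s$, that action is necessarily a receive $\PRECEIVE{q'p}{b}$. Writing $\actsb = \actsb_1 \concat \PRECEIVE{q'p}{b} \concat \actsb_2$ with $\p \notin \actsb_1$ (so $\p$ is still in state $q_\p$ after $\actsb_1$), we obtain $s \rkTRANSR \rkTRANSS{\PRECEIVE{q'p}{b}}$, as required. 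The main subtlety, and the only nonroutine step, is the matching-message argument for eventual reception: one must check that projected equivalence keeps the relevant receive action present in $\actsb$ and that the \FIFO\ ordering pins the first receive on the channel to the head message $\msg a$. Once Lemma~\ref{lem:por-trace-equiv} is in hand, everything else is bookkeeping.
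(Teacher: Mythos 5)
Your proposal is correct and follows essentially the same route as the paper: observe that $\hat N \subseteq \RS_k(S)$, invoke $k$-safety to obtain a $k$-bounded witness in $\kTS{S}$, and transport it into $\kRTS{S}$ via Lemma~\ref{lem:por-trace-equiv}(2). The paper's own proof is a two-line sketch deferring to exactly this lemma; your \FIFO{} argument pinning the first receive on $\ptp{pq}$ to the head message $\msg{a}$, and the projection argument for the progress clause, are the details it leaves implicit, and they check out.
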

\begin{proof}
  The proof works similarly to the proof of
  Lemma~\ref{lem:kexhau-imp-red-kexhau}.
    We show that Definition~\ref{def:k-safety} applies to every state in
  $s \in \kRTS{S} \subseteq \kTS{S}$.
    Each condition follows easily by showing the existence of an
  equivalent execution, by Lemma~\ref{lem:por-trace-equiv}.
\end{proof}

\begin{restatable}{lemma}{lemsafeimpredsafe}\label{lem:safe-imp-red-safe}
  Let $S$ be reduced $\BA{k}$, if $S$ is \emph{reduced} $k$-safe,
  then $S$ is also $k$-safe.
\end{restatable}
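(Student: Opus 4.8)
The plan is to mirror the strategy of Lemma~\ref{lem:red-kexhau-imp-kexhau}: argue by contradiction and use the reduced transition system as a ``hub'' that every $k$-reachable configuration can reach, transporting the safety witness back along the reduction via Lemma~\ref{lem:por-trace-equiv}(1). Concretely, suppose $S$ is reduced $\BA{k}$ and reduced $k$-safe but \emph{not} $k$-safe. Then there is some $s = \csconf{q}{w} \in \RS_k(S)$ that violates either (\ER) or (\PG) of Definition~\ref{def:k-safety}. Since $s \in \RS_k(S)$ and $S$ is reduced $\BA{k}$, Lemma~\ref{lem:ts-path-rts} supplies an execution $s \kTRANSS{\acts} t$ with $t = \csconf{q'}{w'} \in \kRTS{S}$. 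The idea is to show that the violated obligation is \emph{preserved} along $\acts$ up to $t$, apply reduced $k$-safety at $t$ (which is a state of $\hat{N}$), and then pull the resulting receive execution back into $\kTS{S}$ with Lemma~\ref{lem:por-trace-equiv}(1).

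First I would treat the eventual-reception case (\ER): here $w_{\p\q} = \msg{a} \concat w''$ but no $k$-bounded execution from $s$ reaches $\kTRANSS{\PRECEIVE{pq}{a}}$. Using the FIFO discipline (validity of executions, Lemma~\ref{lem:valid-word}), the first reception on channel $\ptp{pq}$ occurring in any execution must consume the current head $\msg{a}$; hence if any action $\PRECEIVE{pq}{\cdot}$ appears in $\acts$, a prefix of $\acts$ witnesses $s \kTRANSR \kTRANSS{\PRECEIVE{pq}{a}}$, contradicting the assumption. Otherwise $\msg{a}$ is still at the head of $\ptp{pq}$ in $t$, i.e.\ $w'_{\p\q} = \msg{a} \concat w'''$. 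Reduced $k$-safety at $t \in \hat{N}$ then gives $t \rkTRANSR \rkTRANSS{\PRECEIVE{pq}{a}}$, and Lemma~\ref{lem:por-trace-equiv}(1) turns this into $t \kTRANSR \kTRANSS{\PRECEIVE{pq}{a}}$; concatenating with $s \kTRANSS{\acts} t$ yields $s \kTRANSR \kTRANSS{\PRECEIVE{pq}{a}}$, the desired contradiction.

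The progress case (\PG) follows the same template: $q_\p$ is receiving yet no $k$-bounded execution from $s$ lets $\p$ fire a receive. Because $q_\p$ is a receiving (hence non-final, non-sending) state, any move by $\p$ along $\acts$ would be some $\PRECEIVE{qp}{a}$, again contradicting the assumption via a prefix; so $\p$ remains in state $q_\p$ at $t$. Reduced $k$-safety at $t$ gives $t \rkTRANSR \rkTRANSS{\PRECEIVE{qp}{a}}$ for some $\q$ and $\msg{a}$, and Lemma~\ref{lem:por-trace-equiv}(1) together with the prefix $s \kTRANSS{\acts} t$ produces $s \kTRANSR \kTRANSS{\PRECEIVE{qp}{a}}$, a contradiction.

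The routine steps are the applications of Lemmas~\ref{lem:ts-path-rts} and~\ref{lem:por-trace-equiv}(1). The main obstacle I expect is the bookkeeping that the relevant safety witness really persists from $s$ to $t$: for (\ER) one must justify, via the FIFO/validity property, that the \emph{specific} head message $\msg{a}$ is the one whose reception is enabled at $t$ (rather than a later message that happened to enter the channel), and for (\PG) one must confirm that $\p$ genuinely does not move off the receiving state $q_\p$. Both reduce to observing that a move discharging the obligation along $\acts$ would itself contradict the failure hypothesis, so no machinery beyond the cited lemmas is needed.
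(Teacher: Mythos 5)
Your proposal is correct and follows exactly the route the paper takes: contradiction, Lemma~\ref{lem:ts-path-rts} to reach a configuration $t$ in $\kRTS{S}$, a case split on whether the safety obligation is discharged along the connecting execution or persists to $t$, then reduced $k$-safety at $t$ pulled back via Lemma~\ref{lem:por-trace-equiv}(1). The paper's own proof is only a two-line sketch deferring to the analogous argument for exhaustivity (Lemma~\ref{lem:red-kexhau-imp-kexhau}); you have filled in precisely the bookkeeping (FIFO persistence of the head message for (\ER), immobility of the receiving participant for (\PG)) that the sketch leaves implicit.
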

\begin{proof}
  The proof works similarly to the proof of Lemma~\ref{lem:red-kexhau-imp-kexhau}.
    By contradiction, we assume that there is a state $s$ for which the
  properties of Definition~\ref{def:k-safety} do not hold.
    Using Lemma~\ref{lem:ts-path-rts}, we show that there is an
  execution from $s$ to a state in $\kRTS{S}$ for which the properties
  hold by assumption.
\end{proof}

\begin{restatable}{theorem}{themsafeiffredsafe}\label{them:safe-iff-red-safe}
  Let $S$ be reduced $\BA{k}$, $S$ is \emph{reduced} $k$-safe iff $S$ is
  $k$-safe.
\end{restatable}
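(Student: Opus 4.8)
The plan is to prove the two implications separately, each corresponding to one of the two preceding lemmas, and both resting on the trace-correspondence result Lemma~\ref{lem:por-trace-equiv} together with the hypothesis that $S$ is reduced $\BA{k}$. So the theorem itself will follow by combining Lemma~\ref{lem:red-safe-imp-safe} and Lemma~\ref{lem:safe-imp-red-safe}; the substance lies in those two lemmas.

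For the direction ``$k$-safe implies reduced $k$-safe'' (Lemma~\ref{lem:red-safe-imp-safe}), I would fix a reduced state $s \in \hat{N}$, which is in particular a $k$-reachable configuration since $\hat{N} \subseteq \RS_k(S)$. Each of the two reduced safety conditions of Definition~\ref{def:reduced-k-safety} demands a witnessing receive realised \emph{inside} $\kRTS{S}$ (via $\rkTRANSR\rkTRANSS{\action}$). From $k$-safety at $s$ I obtain such a witness in the full system $\kTS{S}$: a $k$-bounded execution consuming the head-of-queue message, respectively unblocking the receiving automaton. The key step is to transport this witness into the reduced graph. Applying Lemma~\ref{lem:por-trace-equiv}(2) to the witnessing execution yields an $\eqpeerop$-equivalent execution realised in $\kRTS{S}$, and since $\eqpeerop$-equivalence preserves, per participant, which actions are fired, the required receive action still occurs along the reduced path. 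This establishes the reduced safety condition at $s$.

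For the converse ``reduced $k$-safe implies $k$-safe'' (Lemma~\ref{lem:safe-imp-red-safe}), I would argue by contradiction: suppose some $s \in \RS_k(S)$ violates one of the conditions of Definition~\ref{def:k-safety}. By Lemma~\ref{lem:ts-path-rts}, which applies because $S$ is reduced $\BA{k}$, there is a $k$-bounded execution $s \kTRANSS{\acts} t$ to some reduced state $t \in \kRTS{S}$. Since queues are consumed in FIFO order and a receiving automaton can move only by its designated receive, the offending head-of-queue message (resp.\ the blocked receiving state) is still present at $t$ unless it was already discharged along $\acts$---in which case $s$ was not a violation in the first place. Reduced $k$-safety then supplies a witnessing receive within $\kRTS{S}$ from $t$, which Lemma~\ref{lem:por-trace-equiv}(1) lifts to a genuine $k$-bounded execution in $\kTS{S}$, contradicting the assumed violation at $s$.

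The main obstacle I expect is not the high-level bookkeeping but the \emph{faithfulness} of the reduction: I must ensure that the particular receive action demanded by safety is never permanently deferred, nor silently discharged by an independent commutation, during the partial-order exploration. This is exactly what the $\BA{k}$ hypothesis (i.e.\ $\OBI{k}$ and $\IBI{k}$ at every reachable configuration) secures, since it makes the enabledness of sends and the uniqueness of the enabled receive stable under the interleavings that Algorithm~\ref{algo:reduction} prunes; together with the cycle proviso underlying Lemma~\ref{lem:por-trace-equiv}, no enabled transition is forever postponed. Once this invariant is in place, both directions reduce to routine applications of Lemma~\ref{lem:por-trace-equiv}, and the claimed equivalence follows.
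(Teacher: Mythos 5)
Your proposal is correct and follows essentially the same route as the paper: the theorem is obtained by combining Lemma~\ref{lem:red-safe-imp-safe} and Lemma~\ref{lem:safe-imp-red-safe}, the first direction transporting the safety witness into $\kRTS{S}$ via Lemma~\ref{lem:por-trace-equiv}, and the converse arguing by contradiction via Lemma~\ref{lem:ts-path-rts} to reach a reduced state where the reduced safety conditions apply. Your write-up in fact supplies slightly more detail (the FIFO persistence of the offending message along $\acts$, and the per-participant preservation under $\eqpeerop$) than the paper's terse proof sketches, but the decomposition and key lemmas are identical.
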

\begin{proof}
  By Lemma~\ref{lem:red-safe-imp-safe} and Lemma~\ref{lem:safe-imp-red-safe}.
\end{proof} 

\lemporoptimal*
\begin{proof}
  We show that  $\acts \neq \acts' \implies \neg (\acts \eqpeerop
  \acts'$).
    Let $\actsb$ be the longest common prefix of $\acts$ and $\acts'$. 
    Take $s$ such that $s_0 \rkTRANSS{\actsb} s$.
    Since $\acts \neq \acts'$, we must have $\action$ and $\action'$
  such that $s \rkTRANSS{\action}$ and  $s \rkTRANSS{\action'}$.
    However, since $\acts \eqpeerop \acts'$, it must be the case that
  $\subj{\action} \neq \subj{\action'}$; which gives us a
  contradiction since we have 
  that $s \rkTRANSS{\action}$ and $s \rkTRANSS{\action'}$, while
  $\action$ and $\action'$ must be in different sets $L_i$ and $L_j$.
\end{proof}

\begin{restatable}{theorem}{thmreduceddibiiffCibi}\label{thm:reduced-dibi-iff-CIBI}
  Let $S$ be reduced $\OBI{k}$. $S$ is reduced $\CIBI{k}$ iff $S$ is
  $\CIBI{k}$.
\end{restatable}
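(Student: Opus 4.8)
The plan is to prove the two implications separately, exactly mirroring the structure used for the $\DIBI{k}$ analogue in Theorem~\ref{thm:reduced-dibi-iff-dibi}: the result follows by combining Lemma~\ref{lem:reduced-dibi-imp-CIBI} for the ``only if'' direction and Lemma~\ref{lem:dibi-imp-reduced-CIBI} for the ``if'' direction. Since reduced $\OBI{k}$ is assumed throughout, both lemmas apply, and the equivalence is immediate once they are established.

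For the easy direction, that $\CIBI{k}$ implies reduced $\CIBI{k}$, I would argue by contraposition, as in Lemma~\ref{lem:dibi-imp-reduced-CIBI}. Since $\kRTS{S}$ is a subgraph of $\kTS{S}$, every state $s \in \hat{N}$ lies in $\RS_k(S)$, and every reduced transition $s \rkTRANSS{\action} s'$ is an ordinary transition $s \kTRANSS{\action} s'$ by Lemma~\ref{lem:por-trace-equiv}(1). Hence any configuration witnessing a failure of reduced $\CIBI{k}$ --- either two distinct receive actions of the receiving participant enabled at $s$, or one reception followed by a reduced execution $s' \rkTRANSS{\acts}\rkTRANSS{\PSEND{sr}{b}}$ carrying no dependency chain --- immediately yields the corresponding failure of $\CIBI{k}$ by reading the same witness inside $\kTS{S}$.

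The substantive direction is that reduced $\OBI{k}$ together with reduced $\CIBI{k}$ entails $\CIBI{k}$, which is Lemma~\ref{lem:reduced-dibi-imp-CIBI}. Here I would proceed by contradiction: assume a configuration $s \in \RS_k(S)$, reached by $s_0 \kTRANSS{\acts} s$, violates $\CIBI{k}$. Using Lemma~\ref{lem:por-trace-equiv}(2) I transport the offending full execution to an $\eqpeerop$-equivalent reduced execution $s_0 \rkTRANSS{\actsb} t$ through $\kRTS{S}$, and then perform a case analysis on the relative order in $\actsb$ of the two matching send actions $\PSEND{pr}{a}$ and $\PSEND{sr}{b}$ (in the first disjunct of the definition) or of $\PRECEIVE{pr}{a}$ and $\PSEND{sr}{b}$ (in the second). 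The delicate branches are those in which the conflicting send appears \emph{after} the first reception along $\actsb$; there reduced $\CIBI{k}$ forces a dependency chain $\dependsin{\hat s}{\PRECEIVE{pr}{a}}{\actsb_j}{\PSEND{sr}{b}}$ in the reduced witness, and I must transfer this chain back to contradict the assumed absence of a dependency in the original execution.

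The main obstacle --- and the genuine difference from the $\DIBI{k}$ case --- is precisely this dependency-chain bookkeeping: unlike $\DIBI{k}$, the $\CIBI{k}$ condition is not merely about enabledness but about the existence of a chain $\prec$ linking a reception to a matching send, and such chains must be shown invariant under the re-ordering introduced by partial order reduction. The key lever is Lemma~\ref{lem:dep-chain-equi}, which guarantees that a dependency chain present in one valid execution survives in every $\resche$-equivalent re-ordering with the relative order of the chained actions preserved; since all executions under consideration start from $s_0$, Lemma~\ref{lem:valid-word} ensures they are valid, so that $\eqpeerop$ and $\resche$ coincide and the lemma applies. Once this invariance is in hand, each branch of the case analysis closes by contradicting either reduced $\OBI{k}$ or reduced $\CIBI{k}$, completing the argument.
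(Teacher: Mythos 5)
Your proposal is correct and follows essentially the same route as the paper: the paper likewise splits the equivalence into Lemma~\ref{lem:dibi-imp-reduced-CIBI} (the subgraph argument for the easy direction) and Lemma~\ref{lem:reduced-dibi-imp-CIBI} (contradiction via Lemma~\ref{lem:por-trace-equiv}(2), case analysis on the order of the conflicting actions in the reduced witness, and Lemma~\ref{lem:dep-chain-equi} to transport dependency chains across $\resche$-equivalent reorderings). Your observation that Lemma~\ref{lem:valid-word} is what lets $\eqpeerop$ and $\resche$ coincide for executions from $s_0$ matches the paper's implicit use of that fact.
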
 
\begin{proof}
  By Lemma~\ref{lem:dibi-imp-reduced-CIBI} and~\ref{lem:reduced-dibi-imp-CIBI}.
\end{proof}

\thmreduceddibiiffdibiBOTH* 
\begin{proof}
  By Theorem~\ref{thm:reduced-dibi-iff-dibi} and Theorem~\ref{thm:reduced-dibi-iff-CIBI}.
\end{proof}

 \section{Proofs for Section~\ref{sec:exist-bounded}}

\begin{lemma}\label{lem:kb-imp-kmb}
  Let $S$ be a system.
  If $s_0 \kTRANSS{\acts}$, then $\acts$ is $k$-\mbounded{}.
\end{lemma}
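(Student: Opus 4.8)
The plan is to reduce the match-boundedness condition to the elementary fact that, along a $k$-bounded execution, every channel stays of length at most $k$, together with the observation that a channel's length coincides exactly with the number of sent-but-not-yet-received messages on it.

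First I would dispatch the validity requirement: since $s_0 \kTRANSS{\acts}$, Lemma~\ref{lem:valid-word} tells us that $\acts$ is already a \emph{valid} execution, which is the first half of the definition of $k$-\mbounded{}. It then remains to bound the matched-message difference for every prefix. So I fix an arbitrary prefix $\actsb$ of $\acts$ and consider the configuration $\csconf{q}{w}$ it reaches, i.e.\ $s_0 \kTRANSS{\actsb} \csconf{q}{w}$.

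The key step is the bookkeeping identity for queue lengths. By the \textsc{fifo} semantics of Definition~\ref{def:rs}, each send on a channel $\ptp{pq}$ appends one symbol to $w_{\p\q}$ and each receive removes one from its front; combined with validity (which ensures receives never outrun the corresponding sends), this gives $\lvert w_{\p\q} \rvert = \lvert \esndproj{\actsb}{pq} \rvert - \lvert \ercvproj{\actsb}{pq} \rvert$ for every channel. Because $\acts$ is $k$-bounded from $s_0$, the intermediate configuration $\csconf{q}{w}$ is $k$-bounded, so $\lvert w_{\p\q} \rvert \leq k$. The conclusion is then immediate, for every channel $\ptp{pq}$:
\[
\min\{\, \lvert \esndproj{\actsb}{pq} \rvert,\ \lvert \ercvproj{\acts}{pq} \rvert \,\} - \lvert \ercvproj{\actsb}{pq} \rvert \ \leq\ \lvert \esndproj{\actsb}{pq} \rvert - \lvert \ercvproj{\actsb}{pq} \rvert \ =\ \lvert w_{\p\q} \rvert \ \leq\ k,
\]
the first inequality holding simply because the minimum is at most its first argument. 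As this holds for every prefix $\actsb$ and every channel, $\acts$ is $k$-\mbounded{}.

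I do not expect a genuine obstacle here: the statement is essentially a reformulation of $k$-boundedness in the language of matched events. The only point needing mild care is the queue-length identity $\lvert w_{\p\q} \rvert = \lvert \esndproj{\actsb}{pq} \rvert - \lvert \ercvproj{\actsb}{pq} \rvert$, which is exactly where validity is invoked to guarantee that the receive-projection is a prefix of the send-projection and hence that no spurious cancellation occurs.
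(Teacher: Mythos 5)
Your proof is correct and follows essentially the same route as the paper's: both dispatch validity via Lemma~\ref{lem:valid-word} and then bound the prefix difference using $\lvert \esndproj{\actsb}{pq} \rvert - \lvert \ercvproj{\actsb}{pq} \rvert \leq k$, which is exactly the $k$-boundedness of the intermediate configuration. The only difference is cosmetic: the paper case-splits on which argument realises the minimum, whereas you observe directly that $\min\{a,b\} \leq a$, which collapses both cases into one line.
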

\begin{proof}
  We first note that $\acts$ is valid, by Lemma~\ref{lem:valid-word}.
  We have to show that for any prefix $\actsb$ of $\acts$, we have
  \[
  \mathit{min}\{
  \lvert \esndproj{\actsb}{pq} \rvert 
  ,
  \lvert \ercvproj{\acts}{pq} \rvert 
  \}
  -
  \lvert \ercvproj{\actsb}{pq} \rvert 
  \leq k
    \]
      There are two cases:
    \begin{itemize}
  \item If $\lvert \esndproj{\actsb}{pq} \rvert \leq \lvert
    \ercvproj{\acts}{pq} \rvert$, we have the result immediately since 
    \[
    \lvert \esndproj{\actsb}{pq} \rvert - \lvert \ercvproj{\actsb}{pq}
    \rvert \leq k 
    \]
    by hypothesis (and the definition of $k$-boundedness).
  \item  If $\lvert \esndproj{\actsb}{pq} \rvert > \lvert
    \ercvproj{\acts}{pq} \rvert$ then the following holds
    \[
    \lvert \ercvproj{\acts}{pq} \rvert 
    -
    \lvert \ercvproj{\actsb}{pq} \rvert 
    <
    \lvert \esndproj{\actsb}{pq} \rvert 
    -
    \lvert \ercvproj{\actsb}{pq} \rvert 
    \leq
    k
    \]
    by hypothesis, and we have the required result.
    \qedhere
  \end{itemize}
\end{proof}

\subsection{Proofs for Section~\ref{sub:kuske-exist} (Kuske \& Muscholl's boundedness) }

\begin{restatable}{lemma}{lemboundedexeclast}\label{lem:bounded-exec-last}
  If $\acts \concat \action \concat \acts' \in \ASetC$ is a valid
  $k$-\mbounded{} execution such that $\subj{\action} \notin \acts'$
  and $\acts \concat \acts'$ is also valid,
  then
  $\acts \concat \acts'$ is a $k$-\mbounded{} execution.
\end{restatable}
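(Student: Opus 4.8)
The plan is to verify the match-boundedness inequality directly, channel by channel and prefix by prefix, comparing each prefix of the shortened execution with the corresponding prefix of the original. Write $x \defi \acts\concat\action\concat\acts'$ for the given (valid, $k$-\mbounded{}) execution and $y \defi \acts\concat\acts'$ for the execution obtained by deleting $\action$. Validity of $y$ is assumed, so it remains to show that for every $\ptp{pq}\in\CSet$ and every prefix $\actsb$ of $y$,
\[
\min\{\lvert\esndproj{\actsb}{pq}\rvert,\lvert\ercvproj{y}{pq}\rvert\}-\lvert\ercvproj{\actsb}{pq}\rvert\le k .
\]
First I would observe that every prefix of $y$ is either (A) a prefix of $\acts$, or (B) of the form $\acts\concat\actsb'$ with $\actsb'$ a prefix of $\acts'$; to each I associate the prefix $\actsb^{+}$ of $x$ equal to $\actsb$ in case (A) and to $\acts\concat\action\concat\actsb'$ in case (B). The bound value for $\actsb^{+}$ in $x$ is $\le k$ by hypothesis, so it suffices to compare the two bound values.

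Fix $\ptp{pq}$. Deleting $\action$ affects the relevant quantities only when $\chan{\action}=\ptp{pq}$, and then by at most one: it lowers $\lvert\esndproj{\actsb^{+}}{pq}\rvert$ (to $\lvert\esndproj{\actsb}{pq}\rvert$) by one exactly when $\action$ is a send on $\ptp{pq}$ contained in $\actsb^{+}$, and it lowers both $\lvert\ercvproj{\actsb^{+}}{pq}\rvert$ and the total $\lvert\ercvproj{x}{pq}\rvert$ by one exactly when $\action$ is a receive on $\ptp{pq}$ (the former only in case (B)). I would then split on the shape of $\action$. If $\chan{\action}\neq\ptp{pq}$, all quantities agree and the two bound values are equal. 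If $\action$ is a send on $\ptp{pq}$, only the prefix send count can drop, so the $\min$ can only drop and the $y$-bound is $\le$ the $x$-bound $\le k$. If $\action$ is a receive on $\ptp{pq}$ and $\actsb$ lies in case (A), only the total receive count drops, so again the $\min$ can only drop and the $y$-bound is $\le k$. Each of these reduces immediately to the $k$-\mbounded{}ness of $x$, using only that $\min\{m,n\}$ is monotone in $n$ and that lowering the first argument cannot raise the $\min$.

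The one genuinely delicate case -- and the main obstacle -- is $\action$ a receive on $\ptp{pq}$ with $\actsb=\acts\concat\actsb'$ in case (B): here the total receive count $\lvert\ercvproj{y}{pq}\rvert=\lvert\ercvproj{x}{pq}\rvert-1$ drops together with the $\min$, so the naive comparison yields only $k+1$. This is exactly where the hypothesis $\subj{\action}\notin\acts'$ is needed. Writing $\action=\PRECEIVE{pq}{a}$, every receive on $\ptp{pq}$ has subject $\q$, so $\subj{\action}=\q\notin\acts'$ forces $\acts'$ to contain no receive on $\ptp{pq}$; hence all receives on $\ptp{pq}$ in $y$ already occur within $\acts$, giving $\lvert\ercvproj{\acts\concat\actsb'}{pq}\rvert=\lvert\ercvproj{y}{pq}\rvert$. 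The bound value for such a prefix is then
\[
\min\{\lvert\esndproj{\acts\concat\actsb'}{pq}\rvert,\lvert\ercvproj{y}{pq}\rvert\}-\lvert\ercvproj{y}{pq}\rvert\le 0\le k ,
\]
since $\min\{m,n\}\le n$. Collecting the cases over all channels and all prefixes establishes that $y$ is $k$-\mbounded{}.
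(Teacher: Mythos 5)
Your proof is correct and follows essentially the same route as the paper's: a case split on the direction of $\action$, then, for a receive, on whether the prefix lies in $\acts$ or extends into $\acts'$, with the hypothesis $\subj{\action}\notin\acts'$ used in exactly the same place to conclude that $\acts'$ contains no receive on $\chan{\action}$. Your final step is marginally more direct (bounding the delicate case by $0$ via $\min\{m,n\}\le n$, where the paper first invokes validity to identify which branch of the $\min$ is taken), but the argument is the same.
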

\begin{proof}
  We note that we only have to consider the number of messages on the
  channel of $\action$, as the others are unchanged.
    There are two cases depending on the direction of
  $\action$. 
            \begin{itemize}
  \item If $\action = \PSEND{pq}{a}$, then the result follows
    trivially since the number of send actions strictly decreases.
  \item If $\action = \PRECEIVE{pq}{a}$, we separate the prefixes of
    $\acts \concat \action \concat \acts'$ depending on whether they
    include $\action$ or not. 
            \begin{enumerate}
    \item For each prefix $\actsb$
      of $\acts$, we have
      \[
      \mathit{min}\{
      \lvert \esndproj{\actsb}{pq} \rvert 
      ,
      \lvert \ercvproj{\acts \concat \action \concat \acts'}{pq} \rvert 
      \}
      -
      \lvert \ercvproj{\actsb}{pq} \rvert 
      \leq k
      \]
      by hypothesis.
      We have to show that
      \[
      \mathit{min}\{
      \lvert \esndproj{\actsb}{pq} \rvert 
      ,
      \lvert \ercvproj{\acts \concat \acts'}{pq} \rvert 
      \}
      -
      \lvert \ercvproj{\actsb}{pq} \rvert 
      \leq k
      \]
      which follows trivially since
      $\lvert \ercvproj{\acts \concat \acts'}{pq} \rvert = \lvert
      \ercvproj{\acts \concat \action \concat \acts'}{pq} \rvert -1$.
                \item For each prefix of $\actsb$ of $\acts'$, we have to show that
      \[
      \mathit{min}\{
      \lvert \esndproj{\acts \concat \actsb}{pq} \rvert 
      ,
      \lvert \ercvproj{\acts \concat \acts'}{pq} \rvert 
      \}
      -
      \lvert \ercvproj{\acts \concat \actsb}{pq} \rvert 
      \leq k
      \]
      By hypothesis ($\subj{\action} \notin \acts'$), we have
      $\lvert \ercvproj{\acts'}{pq} \rvert = 0$ and since
      $\acts \concat \action \concat \acts'$ is valid by assumption,
      we have
      $ \lvert \esndproj{\acts}{pq} \rvert \geq \lvert
      \ercvproj{\acts}{pq} \rvert$, hence we are left to show that
      \[
      \lvert \ercvproj{\acts}{pq} \rvert 
      -
      \lvert \ercvproj{\acts \concat \actsb}{pq} \rvert 
      \leq k
      \]
                  Similarly, we know that
      $ 
      \lvert \esndproj{\acts \concat \action}{pq} \rvert 
      -
      \lvert \ercvproj{\acts \concat \action \concat \actsb}{pq} \rvert 
      \leq k
      $. We have the result since
      $   \lvert \ercvproj{\acts}{pq} \rvert = \lvert \ercvproj{\acts \concat \action}{pq} \rvert  -1$
            and
            $ \lvert \ercvproj{\acts \concat \actsb}{pq} \rvert = \lvert
      \ercvproj{\acts \concat \action \concat \actsb}{pq} \rvert -1 $.
            \qedhere
          \end{enumerate}
      \end{itemize}
  \end{proof}

\begin{restatable}{lemma}{lemkmcimpexistentially}\label{lem:kmc-imp-existentially}
  If $S$ is (reduced) $\OBI{k}$, $\infIBI$, and $k$-exhaustive system,
  then it is existentially $k$-bounded.
\end{restatable}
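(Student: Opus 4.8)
The plan is to take an arbitrary execution $\acts$ with $s_0 \TRANSS{\acts} s$ and exhibit a $\resche$-equivalent $k$-\mbounded{} execution, i.e.\ an element of $\equivclass{\acts}{\resche} \cap \bounded{\ASetC}{k}$, which is exactly what Definition~\ref{def:exist-bounded} requires of every reachable execution. First I would apply Lemma~\ref{lem:exist-path-eqpeer-general} to obtain $t \in \RS_k(S)$ and $\acts', \actsb \in \ASetC$ with $s_0 \kTRANSS{\actsb} t$, $s \TRANSS{\acts'} t$, and $\actsb \eqpeerop \acts \concat \acts'$. Since $s_0 \kTRANSS{\actsb} t$ is a $k$-bounded execution, Lemma~\ref{lem:kb-imp-kmb} yields that $\actsb$ is $k$-\mbounded{}. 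Moreover both $\actsb$ and $\acts \concat \acts'$ are valid by Lemma~\ref{lem:valid-word}, so from $\actsb \eqpeerop \acts \concat \acts'$ I get, for every $\p \in \PSet$, the decomposition $\onpeer{p}{\actsb} = \onpeer{p}{\acts} \concat \onpeer{p}{\acts'}$.

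The idea is then to delete from $\actsb$ exactly the actions contributed by the remainder $\acts'$, producing an execution $\actsc$ with $\onpeer{p}{\actsc} = \onpeer{p}{\acts}$ for all $\p$. I would perform this deletion one action at a time, always removing the \emph{rightmost} not-yet-deleted remainder action $\action$. Because for each peer the $\acts$-actions precede its $\acts'$-actions in $\actsb$, once all remainder actions to the right of $\action$ are gone no action with subject $\subj{\action}$ occurs after $\action$; this is precisely the side condition $\subj{\action} \notin \acts'$ of Lemma~\ref{lem:bounded-exec-last}. Provided each deletion keeps the word valid, that lemma propagates $k$-\mbounded{}ness through every step, so the resulting $\actsc$ is $k$-\mbounded{}. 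Since $\actsc$ is valid and $\onpeer{p}{\actsc} = \onpeer{p}{\acts}$ for all $\p$, while $\acts$ is valid by Lemma~\ref{lem:valid-word}, we conclude $\actsc \resche \acts$, whence $\actsc \in \equivclass{\acts}{\resche} \cap \bounded{\ASetC}{k}$ and the execution $\acts$ is existentially $k$-bounded.

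The hard part will be showing that each deletion preserves validity, so that Lemma~\ref{lem:bounded-exec-last} is applicable. Removing the rightmost remainder \emph{receive} is harmless: by the subject argument above it is the last reception on its channel in the current word, so dropping it leaves the reception sequence a (shorter) prefix of the emission sequence. The delicate case is a remainder \emph{send} $\PSEND{pq}{a}$, where deleting breaks validity only if the send is still matched by a later reception. Here the rightmost-first order is essential: since the \FIFO\ matching of the $m$-th send to the $m$-th receive on a channel depends only on the peer projections, it coincides in $\actsb$ and in $\acts \concat \acts'$, and using validity of $\acts \concat \acts'$ (Lemma~\ref{lem:valid-word}) one shows the reception matching an $\acts'$-send must itself be an $\acts'$-action; hence it lies to the right of the send and has already been deleted by the time we reach the send. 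Thus at deletion time the send is unmatched, and removing an unmatched last send preserves the prefix condition. I expect this matching/ordering argument to be the crux; the remaining ingredients — the prefix-monotonicity of $k$-\mbounded{}ness underlying Lemma~\ref{lem:bounded-exec-last} and the bookkeeping of peer projections via Lemma~\ref{lem:eqpeer-imp-same-state} — are routine.
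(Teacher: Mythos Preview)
Your proposal is correct and takes essentially the same approach as the paper: apply Lemma~\ref{lem:exist-path-eqpeer-general} to obtain a $k$-bounded $\actsb \resche \acts \concat \acts'$, then iteratively delete the actions of $\acts'$ from $\actsb$ while invoking Lemma~\ref{lem:bounded-exec-last} to preserve $k$-\mbounded{}ness. The only minor differences are the deletion order (you remove the rightmost remainder action in $\actsb$, the paper removes the last action of $\acts'$) and, correspondingly, the validity argument: the paper observes that each intermediate word is still an actual execution from $s_0$ (via a commutation diagram) and cites Lemma~\ref{lem:valid-word}, whereas you argue validity directly from the FIFO structure.
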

\begin{proof}

  Take $s$ and $\acts$ such that $s_0 \TRANSS{\acts} s$.
  By Lemma~\ref{lem:exist-path-eqpeer-general}, there is $t \in \RS_k(S)$, $\acts'$ and $\actsb$ such that
  $s \TRANSS{\acts'} t$, $s_0 \kTRANSS{\actsb} t$, and
  ${\acts \concat \acts'}\resche{\actsb}$.
    Note that $\actsb$ is valid by Lemma~\ref{lem:valid-word} and
  $k$-\mbounded{} by Lemma~\ref{lem:kb-imp-kmb}.
    We show that there is a $k$-\mbounded{} execution that leads to $s$
  by inductively deconstructing $\acts'$, starting from its last element. 
  \proofsub{Base case} If $\acts' = \emptyw$, then we have the results
  immediately by Lemma~\ref{lem:exist-path-eqpeer-general}, i.e., we have
  ${\acts \concat \emptyw}\resche{\actsb}$ with $\actsb$ $k$-\mbounded{}.

  \proofsub{Inductive case} Take $\acts' = \acts_1 \concat \action$.
    From Lemma~\ref{lem:exist-path-eqpeer-general}, there is $\actsb$
  ($k$-bounded) such that
  ${\acts \concat \acts_1 \concat \action}\resche{\actsb}$.
    Since the two executions are $\resche$-equivalent, we must have
  $\actsb = \actsb_0 \concat \action \concat \actsb_1$ with
  $\subj{\action} \notin \actsb_1$.
    Hence, we have the following situation, where the dashed execution
  is due to the fact that $\subj{\action} \notin \actsb_1$ (i.e.,
  $\action$ is independent from $\actsb_1$):
      \[
  \begin{tikzpicture}[baseline=(current bounding box.center)]
        \node (s0) {$s_0$};
    \node[right=of s0, xshift=1cm] (s) {$s$};

        \node[below=of s0] (sp) {$s'$};
    \node[below=of sp] (spp) {$s''$};
    \node at (s|-sp) (t) {$t$};
    \node at (t|-spp)  (tp) {$t'$};
        \path[->] (s0) edge [above] node {$\acts$} (s);
    \path[->] (s) edge [right] node {$\acts_1$} (t);
    \path[->] (s0) edge [left] node {$\actsb_0$} (sp);
    \path[->,dashed] (sp) edge [above] node {$\actsb_1$} (t);
    \path[->] (spp) edge [above] node {$\actsb_1$} (tp);
    \path[->] (t) edge [right] node {$\action$} (tp);
    \path[->] (sp) edge [left] node {$\action$} (spp);
  \end{tikzpicture}
    \]
  where $\actsb_0 \concat \actsb_1$ is valid by
  Lemma~\ref{lem:valid-word}, and $k$-\mbounded{} by
  Lemma~\ref{lem:bounded-exec-last}.
    Next, we repeat the procedure posing
  $\actsb \coloneqq \actsb_0 \concat \actsb_1$ and
  $\acts' \coloneqq \acts_1$. We note that the procedure always
  terminates since the execution $\acts'$ strictly decrease at each
  iteration.
    \end{proof}

\begin{lemma}\label{lem:mbounded-pref-imp-kbounded}
  If $\acts_0 \concat \acts_1$ is $k$-\mbounded{} and 
  \[
  \forall \ptp{pq} \in \CSet \qst 
    \lvert \esndproj{\acts_0}{pq} \rvert 
  \leq
  \lvert \ercvproj{\acts_0 \concat \acts_1}{pq} \rvert
  \]
  then $\acts_0$ is $k$-bounded for $s_0$.
\end{lemma}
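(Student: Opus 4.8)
The plan is to reduce the claim to a direct arithmetic comparison on prefixes. Recall that an execution is $k$-bounded for $s_0$ exactly when every intermediate configuration reached from $s_0$ has all channels of length at most $k$. Since $\acts_0 \concat \acts_1$ is $k$-\mbounded{}, it is valid, and so is its prefix $\acts_0$; validity means that from the empty queues of $s_0$, after firing any prefix $\actsb$ of $\acts_0$ the content of channel $\ptp{pq}$ is precisely $\lvert \esndproj{\actsb}{pq} \rvert - \lvert \ercvproj{\actsb}{pq} \rvert$ (a nonnegative quantity, as receives form a prefix of sends). Thus the goal is to show that $\lvert \esndproj{\actsb}{pq} \rvert - \lvert \ercvproj{\actsb}{pq} \rvert \leq k$ for every prefix $\actsb$ of $\acts_0$ and every channel $\ptp{pq} \in \CSet$.

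The key step is to exploit the second hypothesis to force the minimum in the definition of $k$-match-boundedness to resolve to the send count. First I would fix an arbitrary prefix $\actsb$ of $\acts_0$ and a channel $\ptp{pq}$; note that $\actsb$ is then also a prefix of $\acts_0 \concat \acts_1$. Since send counts are monotone along prefixes, $\lvert \esndproj{\actsb}{pq} \rvert \leq \lvert \esndproj{\acts_0}{pq} \rvert$, and by the assumed inequality $\lvert \esndproj{\acts_0}{pq} \rvert \leq \lvert \ercvproj{\acts_0 \concat \acts_1}{pq} \rvert$, so that
\[
\lvert \esndproj{\actsb}{pq} \rvert \leq \lvert \ercvproj{\acts_0 \concat \acts_1}{pq} \rvert .
\]
Hence $\mathit{min}\{ \lvert \esndproj{\actsb}{pq} \rvert, \lvert \ercvproj{\acts_0 \concat \acts_1}{pq} \rvert \} = \lvert \esndproj{\actsb}{pq} \rvert$, and applying the $k$-\mbounded{} property of $\acts_0 \concat \acts_1$ to the prefix $\actsb$ yields exactly
\[
\lvert \esndproj{\actsb}{pq} \rvert - \lvert \ercvproj{\actsb}{pq} \rvert \leq k ,
\]
which is the required bound on the length of channel $\ptp{pq}$.

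Since $\actsb$ and $\ptp{pq}$ were arbitrary, every configuration reached along $\acts_0$ from $s_0$ is $k$-bounded, so $\acts_0$ is $k$-bounded for $s_0$, as claimed. This argument is short and essentially computational; the only real content — and the one point to handle carefully — is the justification that the minimum selects the send count. That is precisely what the second hypothesis buys: match-boundedness alone only controls the \emph{receivable} portion of each channel, and without the guarantee that all messages sent during $\acts_0$ are eventually consumed (within $\acts_0 \concat \acts_1$) one could have orphan sends silently accumulating in a queue, so the plain bound would fail. I therefore expect no genuine obstacle beyond stating this resolution of the $\mathit{min}$ cleanly.
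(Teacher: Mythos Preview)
Your proof is correct and essentially identical to the paper's: fix a prefix $\actsb$ of $\acts_0$ and a channel $\ptp{pq}$, use monotonicity of send counts together with the hypothesis to force $\mathit{min}\{\lvert \esndproj{\actsb}{pq}\rvert,\lvert \ercvproj{\acts_0\concat\acts_1}{pq}\rvert\}=\lvert \esndproj{\actsb}{pq}\rvert$, and then read off the bound from the definition of $k$-\mbounded{}. The only difference is that you spell out why the inequality $\lvert \esndproj{\actsb}{pq}\rvert-\lvert \ercvproj{\actsb}{pq}\rvert\leq k$ is exactly what ``$k$-bounded for $s_0$'' requires, which the paper leaves implicit.
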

\begin{proof}
  Pick any $\ptp{pq} \in \CSet.$
    By definition of  $k$-\mbounded{},
  for each prefix $\actsb$ of $\acts_0 \concat \acts_1$, we have:
  \[
  \mathit{min}\{
  \lvert \esndproj{\actsb}{pq} \rvert 
  ,
  \lvert \ercvproj{\acts_0 \concat \acts_1}{pq} \rvert 
  \}
  -
  \lvert \ercvproj{\actsb}{pq} \rvert 
  \leq k
  \]
  In particular, for each prefix $\actsb_0$ of $\acts_0$, we have
  $
  \mathit{min}\{
  \lvert \esndproj{\actsb_0}{pq} \rvert 
  ,
  \lvert \ercvproj{\acts_0 \concat \acts_1}{pq} \rvert 
  \}
  -
  \lvert \ercvproj{\actsb_0}{pq} \rvert 
  \leq k
  $.
      By assumption and the fact that
  $\actsb_0$ is a prefix of $\acts_0$, we
  have
  \[
  \lvert \esndproj{\actsb_0}{pq} \rvert 
  \leq
  \lvert \esndproj{\acts_0}{pq} \rvert 
  \leq
  \lvert \ercvproj{\acts_0 \concat \acts_1}{pq} \rvert
  \]
  Hence,  $\mathit{min}\{
  \lvert \esndproj{\actsb_0}{pq} \rvert 
  ,
  \lvert \ercvproj{\acts_0 \concat \acts_1}{pq} \rvert 
  \} =   \lvert \esndproj{\actsb_0}{pq} \rvert$
  and
  $
    \lvert \esndproj{\actsb_0}{pq} \rvert 
    -
    \lvert \ercvproj{\actsb_0}{pq} \rvert 
    \leq k
    $, as required.
    \end{proof}

\begin{restatable}{lemma}{lemlisbonexistsafeimpkexh}\label{lem:lisbon-exist-safe-imp-kexh}
  If $S$ is $\exists$-$k$-bounded and has the eventual reception
  property, then $S$ is $k$-exhaustive.
\end{restatable}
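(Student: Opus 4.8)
The plan is to verify Definition~\ref{def:exhaustive} directly. Fix a configuration $s=\csconf{q}{w}\in\RS_k(S)$, say with $s_0\kTRANSS{\acts_0}s$ for some $k$-bounded $\acts_0$, a sending state $q_\p$, and a transition $(q_\p,\action,q'_\p)\in\delta_\p$; since $q_\p$ is sending we may take $\action=\PSEND{pq}{a}$. I must produce $\actsb\in\ASetC$ with $s\kTRANSS{\actsb}\kTRANSS{\action}$ and $\p\notin\actsb$. The easy case is $\lvert w_{\p\q}\rvert<k$: then firing $\action$ keeps channel $\p\q$ within the bound, so $\action$ is already $k$-enabled at $s$ and $\actsb=\emptyw$ works. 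The whole difficulty is concentrated in the case $\lvert w_{\p\q}\rvert=k$, where the channel is saturated and $\action$ can fire only after its head message has been consumed, which must be arranged by a $k$-bounded execution that does not move $\p$.

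First I would manufacture a valid global execution witnessing that this head can be received. Writing $b$ for the head of $w_{\p\q}$, eventual reception (the unbounded version of property~(\ER) from Definition~\ref{def:k-safety}) applied at $s$ yields $\mu$ with $s\TRANSS{\mu}\TRANSS{\PRECEIVE{pq}{b}}$; hence $\zeta\defi\acts_0\concat\mu\concat\PRECEIVE{pq}{b}$ is a valid execution from $s_0$ (Lemma~\ref{lem:valid-word}) in which the send event of $b$, located inside $\acts_0$, is now matched. Applying the $\exists$-$k$-boundedness hypothesis to $\zeta$ produces $\actsb_\star\resche\zeta$ that is $k$-\mbounded; by Lemma~\ref{lem:eqpeer-imp-same-state} the two executions are causally equivalent and reach the same configuration.

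The second step is to carve out of $\actsb_\star$ the required $k$-bounded, $\p$-free draining. Because $\resche$ preserves every local projection, $\onpeer{r}{\actsb_\star}=\onpeer{r}{\zeta}$ for all $r$; in particular $\p$'s local trace in $\actsb_\star$ is $\onpeer{p}{\acts_0}\concat\onpeer{p}{\mu}$, and $\PRECEIVE{pq}{b}$ (a $\q$-event matching a send already in $\acts_0$) may be scheduled before any $\p$-action that does not already occur in $\acts_0$. Choosing the linearization that first replays an execution causally equivalent to $\acts_0$ (thus landing back in $s$) and then performs a reception-completing segment, I would invoke Lemma~\ref{lem:mbounded-pref-imp-kbounded}: a prefix of a $k$-\mbounded word all of whose sends are received in the whole word is genuinely $k$-bounded. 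This converts matched-boundedness into honest $k$-boundedness on the chosen prefix, and Lemma~\ref{lem:bounded-exec-last} lets me trim trailing $\p$-actions, leaving a $k$-bounded $\actsb$ with $\p\notin\actsb$ after which $\lvert w_{\p\q}\rvert<k$ while $\p$ still sits in $q_\p$; then $\action$ is $k$-enabled, as required.

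The main obstacle is exactly the interaction handled in that third step: guaranteeing that consuming $b$ can be scheduled without a fresh action of $\p$ and within the bound, starting from the \emph{specific} configuration $s$. Eventual reception only promises reception in the unbounded semantics, where $\p$ is free to keep growing the saturated channel, so the substance of the lemma is transferring this reception to the bounded, $\p$-frozen setting. The transfer rests on the observation that, once (\ER) forces $b$ to be receivable, the prerequisites for receiving it must already have been produced along $\acts_0$ (otherwise $b$ would be permanently stranded whenever $\p$ is frozen), so that the $\resche$-reordering supplied by existential boundedness can place $\PRECEIVE{pq}{b}$ ahead of the $\p$-events of $\mu$ while remaining $k$-\mbounded. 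Making this routing precise---simultaneously factoring $\actsb_\star$ through $s$, keeping the draining segment $\p$-free, and meeting the received-complete hypothesis of Lemma~\ref{lem:mbounded-pref-imp-kbounded}---is the technical heart of the proof; the surrounding bookkeeping is routine.
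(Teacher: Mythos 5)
Your skeleton is the right one --- it is essentially the paper's: reduce to draining the head of the saturated channel $\ptp{pq}$ within a $k$-bounded, $\p$-free execution, obtain an unbounded draining from eventual reception, reorder it using $\exists$-$k$-boundedness, and convert $k$-\mbounded{}ness into genuine $k$-boundedness via Lemma~\ref{lem:mbounded-pref-imp-kbounded}. But the step you yourself call the technical heart is where the plan breaks. You apply $\exists$-$k$-boundedness to $\zeta\defi\acts_0\concat\mu\concat\PRECEIVE{pq}{b}$ and then hope to find, inside the reordered word $\actsb_\star$, a ``reception-completing segment'' that meets the hypothesis of Lemma~\ref{lem:mbounded-pref-imp-kbounded}. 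That segment cannot appear there: any $\actsb_\star\resche\zeta$ has exactly the same local projections, hence the same multiset of events, as $\zeta$, and $\zeta$ in general leaves many messages unmatched (everything still queued after $\PRECEIVE{pq}{b}$). For unmatched sends the \mbounded{} condition imposes no constraint whatsoever --- it only bounds \emph{matched} sends --- yet these are precisely the messages that can overflow a channel in the bounded semantics. The fix is the paper's $\acts_2$: \emph{before} invoking existential boundedness, extend $\zeta$ (by iterating eventual reception) with a suffix that consumes every message sent along $\acts_0\concat\mu$, so that every send in the relevant prefix is matched in the whole word; only then does Lemma~\ref{lem:mbounded-pref-imp-kbounded} apply.

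A second, smaller gap: you assume the $k$-\mbounded{} representative can be linearized so as to ``first replay an execution causally equivalent to $\acts_0$'' and hence factor through $s$. Existential boundedness only hands you \emph{some} $k$-\mbounded{} word in $\equivclass{\zeta}{\resche}$; nothing guarantees it has a prefix $\resche$-equivalent to $\acts_0$, and arranging one while preserving $k$-\mbounded{}ness is itself nontrivial. The paper sidesteps this by cancelling the $\acts_0$-events out of the reordered word (deleting each $\acts_0$-send together with its matching receive when matched, retaining the unmatched ones in a prefix $\hat\actsb$) and then verifying by direct computation that the residual segment up to $\PRECEIVE{pq}{b}$ is $k$-bounded \emph{from $s$}. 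Finally, Lemma~\ref{lem:bounded-exec-last} preserves $k$-\mbounded{}ness, not $k$-boundedness, so it is not the right tool for excising $\p$'s actions from an already $k$-bounded draining; and since $\mu$ may interleave fresh $\p$-sends with the receptions they trigger, removing them is more than trimming a tail.
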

\begin{proof}
  \proofsub{$k$-eventual reception}
  We first show that for all $s = \csconf{q}{w} \in \RS_k(S)$, if
  $w_{\ptp{pq}} = \msg{a} \concat \word$, then
  $s\kTRANSR\kTRANSS{\PSEND{pq}a}$.
    Take $\acts_0$ such that $s_0\kTRANSS{\acts_0} s$. 
  By eventual reception, we have that $s \TRANSS{\acts_1}
  \TRANSS{\PRECEIVE{pq}{a}} t$, for some $\acts_1$ and $t$.
    Take $\acts_2$ such that $t \TRANSS{\acts_2}$ and
    \[
    \forall \ptp{pq} \in \CSet \qst 
    \lvert \esndproj{\acts_0 \concat \acts_1}{pq} \rvert 
  \leq
  \lvert \ercvproj{\acts_0 \concat \acts_1 \concat \PRECEIVE{pq}{a}
    \concat \acts_2}{pq} \rvert
  \]
  there is such $\acts_2$ by the eventual reception property.
    Since $S$ is existentially bounded, there is $\actsb$ such that
  $\actsb$ is $k$-\mbounded{} and
  $\actsb \resche \acts_0 \concat \acts_1 \concat \PRECEIVE{pq}{a}
  \concat \acts_2$.

  Next, remove all actions in $\acts_0$ from $\actsb$ as follows.
    Take the first action in $\acts_0$ (i.e., a send action) and remove
  it from $\actsb$ as well as its receive counterpart, if any.
    If this action is not received within $\acts_0$, then store it in
  $\hat\actsb$.
    Repeat until all actions from $\acts_0$ have been removed, so to
  obtain the sequence: $\hat\actsb \concat \actsb_1$ which is
  $k$-\mbounded{} and valid, so that we have
    $  
  \hat\actsb \concat \actsb_1 \resche \hat\actsb \concat \acts_1
  \concat \PRECEIVE{pq}{a} \concat \acts_2
  $.
  
  Pose $\actsb_1 = \actsb_2 \concat \PRECEIVE{pq}{a} \concat \actsb_3$
  and let us show that $\actsb_2 \concat \PRECEIVE{pq}{a}$ is
  $k$-bounded for $s$, by showing that 
  $\hat\actsb \concat \actsb_2 \concat \PRECEIVE{pq}{a}$ is
  $k$-bounded.
    We have to show that all prefixes are $k$-bounded. This is trivial
  for any prefix of  $\hat\actsb$ since $s \in \RS_k(S)$.
    For any prefix $\hat\actsb_2$ of $\actsb_2$ we have to show that 
  \[
    \forall \ptp{pq} \in \CSet \qst 
        \lvert \esndproj{\hat\actsb \concat \hat\actsb_2}{pq} \rvert 
    -
    \lvert \ercvproj{\hat\actsb \concat \hat\actsb_2}{pq} \rvert
    \leq k
  \]
  Since $ \hat\actsb \concat \actsb_1$ is $k$-\mbounded{}, we have
 \[
    \forall \ptp{pq} \in \CSet \qst 
        \mathit{min}\{
    \lvert \esndproj{\hat\actsb \concat \hat\actsb_2 }{pq} \rvert 
    ,
     \lvert \ercvproj{\hat\actsb \concat \actsb_2 \concat \PRECEIVE{pq}{a} \concat \actsb_3}{pq} \rvert
    \}
    -
    \lvert \ercvproj{\hat\actsb \concat \hat\actsb_2}{pq} \rvert
    \leq k
  \]
  By construction, we have
  $ \lvert \esndproj{\hat\actsb \concat \hat\actsb_2 }{pq} \rvert
    \leq
    \lvert \ercvproj{\hat\actsb \concat \actsb_2 \concat
    \PRECEIVE{pq}{a} \concat \actsb_3}{pq}\rvert$,
    hence we have the required result.

  \proofsub{$k$-exhaustivity}
  We show the rest by contradiction. Assume there is $s \in \RS_k(S)$
  for which the $k$-exhaustivity condition does not hold.
    Hence, there must be $\ptp{pq} \in \CSet$ such that $\lvert
  \word_{\ptp{pq}} \rvert = k \geq 1 $.
    From the result above, we have $s \kTRANSR
  \kTRANSS{\PRECEIVE{pq}{a}} t$ for some $\msg{a}$, and therefore we
  have $t \kTRANSS{\PSEND{pq}{b}}$, for any $\msg{b}$, a contradiction.
\end{proof}

\begin{restatable}{lemma}{lemexistentiallyextend}\label{lem:existentiallyextend}
  If $S$ is existentially $k$-bounded and safe, then for any
  $k$-\mbounded{} $\acts$ such that $s_0\TRANSS{\acts} s$, there are 
  $\actsb$ and $\acts'$ such that $s_0 \kTRANSS{\actsb} t$ and $s
  \TRANSS{\acts'} t$ and $\actsb \resche \acts \concat \acts'$.
\end{restatable}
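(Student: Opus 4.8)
The plan is to close the gap between $k$-match-boundedness and genuine $k$-boundedness. The execution $\acts$ is $k$-match-bounded but may fail to be $k$-bounded only because it sends messages that it never receives, so those unmatched messages can pile up beyond the bound even though every \emph{matched} message stays within it. I would first use the eventual-reception part of safety to extend $\acts$ so that the pending messages get consumed, and then invoke existential $k$-boundedness to reschedule the completed run into one that stays within the bound throughout.

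Concretely, I would build the extension $\acts'$ with $s \TRANSS{\acts'} t$ by repeatedly applying (\ER): while some queue at the current configuration is non-empty, its head can eventually be received, so iterating over all messages pending at $s$ yields an $\acts'$ after which every message sent in $\acts$ has been received, i.e.\ $\lvert \esndproj{\acts}{pq} \rvert \le \lvert \ercvproj{\acts\concat\acts'}{pq} \rvert$ for every $\ptp{pq}\in\CSet$ (this mirrors the invocation of (\ER) to produce $\acts_2$ in the proof of Lemma~\ref{lem:lisbon-exist-safe-imp-kexh}). Since $\acts\concat\acts'$ is a valid execution from $s_0$ by Lemma~\ref{lem:valid-word}, existential $k$-boundedness (Definition~\ref{def:exist-bounded}) supplies a reordering $\actsb\in\bounded{\ASetC}{k}$ with $\actsb\resche\acts\concat\acts'$; because $\resche$ preserves every per-participant projection, Lemma~\ref{lem:eqpeer-imp-same-state} gives that $\actsb$ reaches the same configuration $t$ from $s_0$, and that $s\TRANSS{\acts'}t$ as required.

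It then remains to promote $\actsb$ from $k$-match-bounded to genuinely $k$-bounded, which is precisely the content of Lemma~\ref{lem:mbounded-pref-imp-kbounded}: once every message that $\actsb$ sends is also received along $\actsb$, the minimum in the definition of $k$-match-boundedness is always attained by the send count, so the match-bound becomes a true bound on each queue and $\actsb$ is $k$-bounded for $s_0$, yielding $s_0\kTRANSS{\actsb}t$. The \emph{main obstacle} is securing this last hypothesis: the extension $\acts'$ drains the messages of $\acts$, but $\acts'$ may itself emit messages, which must also be matched before $t$ for the reordering to be genuinely bounded. I would handle this by continuing the extension so that $\acts'$ receives its own sends as well (again by (\ER)), taking $t$ to be a configuration in which all messages sent in $\acts\concat\acts'$ have been consumed. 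Establishing that such a draining extension exists — that each send can be matched without the set of in-flight messages growing without bound — is the delicate step, and it is exactly here that the $k$-match-bounded reordering furnished by existential $k$-boundedness is used: it keeps the number of simultaneously pending messages controlled while the queues are emptied, so that Lemma~\ref{lem:mbounded-pref-imp-kbounded} applies to the reordered run rather than to the original, possibly unbounded, schedule.
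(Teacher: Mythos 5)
Your first two steps coincide with the paper's: use (\ER) to append an $\acts'$ after which every message sent in $\acts$ has been received, then invoke existential $k$-boundedness to obtain a $k$-\mbounded{} reordering $\actsb \resche \acts\concat\acts'$. You have also correctly located the obstacle --- the unmatched sends that $\acts'$ itself may introduce --- but your resolution of it does not work. Continuing the extension ``so that $\acts'$ receives its own sends as well'' asks for a reachable configuration in which every message sent along $\acts\concat\acts'$ has been consumed, i.e.\ a \emph{stable} configuration. Safety does not provide this: the paper is explicit that safety does not imply the stable property, and the system $(M_\p, N'_\q)$ of Figure~\ref{fig:ex-unbounded} satisfies the hypotheses of this lemma (it is safe and $\exists$-$2$-bounded) yet none of its non-empty executions ever reaches a stable configuration, so your draining loop never terminates on it. The closing remark that existential $k$-boundedness ``keeps the number of simultaneously pending messages controlled while the queues are emptied'' is not an argument that the queues \emph{can} be emptied; it presupposes exactly what needs to be shown.

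The paper sidesteps this by never trying to match the sends of $\acts'$. It only arranges $\lvert\esndproj{\acts}{pq}\rvert \le \lvert\ercvproj{\acts\concat\acts'}{pq}\rvert$ for each $\ptp{pq}\in\CSet$, and then, instead of applying Lemma~\ref{lem:mbounded-pref-imp-kbounded} to all of $\actsb$ with an empty suffix (which is where your matching hypothesis would be needed), it takes a prefix $\actsb_0$ of $\actsb$ whose per-participant projections are those of $\acts\concat\acts''$ for some $\acts''$, and shows that \emph{this prefix} is $k$-bounded: for any prefix $\actsb_1$ of $\actsb_0$ one has $\lvert\esndproj{\actsb_1}{pq}\rvert \le \lvert\ercvproj{\actsb}{pq}\rvert$ by the choice of $\actsb_0$ and the construction of $\acts'$, so the minimum in the $k$-\mbounded{} condition is attained by the send count and the match-bound becomes a genuine channel bound along $\actsb_0$. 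In effect Lemma~\ref{lem:mbounded-pref-imp-kbounded} is used with $\acts_0 = \actsb_0$ and $\acts_1$ the \emph{remainder of the reordering} $\actsb$, not the empty word; the witnesses in the conclusion are then $\actsb_0$, the configuration $t$ it reaches, and $\acts''$ in place of $\acts'$. This prefix/suffix split is the idea missing from your proposal.
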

\begin{proof}
  Take  $\acts$  $k$-\mbounded{}  s.t.\ $s_0\TRANSS{\acts} s$.
    By safety, there is $\acts'$ such that $s \TRANSS{\acts'}$ with
  $\forall \ptp{pq} \in \CSet \qst \lvert \esndproj{\acts}{pq}
  \rvert \leq \lvert \ercvproj{\acts \concat \acts'}{pq} \rvert $,
  i.e., we extend $\acts$ with an execution that consumes all messages
  sent in $\acts$.
  
  Since $S$ is existentially bounded, there is
  $\actsb \in \equivclass{\acts \concat \acts'}{\resche} \cap
  \bounded{\ASetC}{k}$.
    Take prefix $\actsb_0$ of $\actsb$ such that
  $\exists \acts'' \qst \forall \p \in \PSet \qst \proj{\actsb_0}{p} =
  \proj{\acts \concat \acts''}{p}$.
    If $\actsb_0$ is $k$-bounded, we have the required result,
  otherwise, there must be a prefix $\actsb_1$ of $\actsb_0$ such that
  \[
    \lvert \esndproj{\actsb_1}{pq} \rvert 
    -
    \lvert \ercvproj{\actsb_1}{pq} \rvert 
    > k
  \]
  However, since $\actsb$ is $k$-\mbounded{}, we have
  \[
    \mathit{min}\{
    \lvert \esndproj{\actsb_1}{pq} \rvert 
    ,
    \lvert \ercvproj{\actsb}{pq} \rvert 
    \}
    -
    \lvert \ercvproj{\actsb_1}{pq} \rvert 
    \leq k
  \]
  and by construction of $\actsb \resche \acts \concat \acts'$, we have
  $\lvert \esndproj{\actsb_1}{pq} \leq \lvert \ercvproj{\actsb}{pq}
  \rvert$, i.e., a contradiction.
\end{proof}

\begin{restatable}{theorem}{thmexistentiallyiffkmc}\label{thm:existentially-iff-kmc}
  (1) If $S$ is (reduced) $\OBI{k}$, $\infIBI$, and $k$-exhaustive,
  then it is existentially $k$-bounded.
    (2) If $S$ is existentially $k$-bounded and has the eventual
  reception property, then it is $k$-exhaustive.
  \end{restatable}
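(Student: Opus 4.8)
The plan is to prove the two implications separately, since each is the content of a lemma already established above. For part~(1) I would invoke Lemma~\ref{lem:kmc-imp-existentially}, and for part~(2) Lemma~\ref{lem:lisbon-exist-safe-imp-kexh}; the theorem is then just the conjunction of the two. Below I sketch the reasoning underlying each lemma, which is where the real work lies.

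For part~(1), the goal is to show that every execution $\acts$ with $s_0 \TRANSS{\acts} s$ admits a $\resche$-equivalent execution that is $k$-\mbounded{}. The starting point is Lemma~\ref{lem:exist-path-eqpeer-general}, which (using $\OBI{k}$, $\infIBI$, and $k$-exhaustivity) produces a $k$-reachable configuration $t$ together with a $k$-bounded execution $\actsb$ with $s_0 \kTRANSS{\actsb} t$ and $\actsb \eqpeerop \acts \concat \acts'$ for some tail $\acts'$. By Lemma~\ref{lem:kb-imp-kmb}, this $\actsb$ is $k$-\mbounded{}. The remaining task is to strip off the extra tail $\acts'$ while preserving match-boundedness, so that $\acts$ itself (and not $\acts \concat \acts'$) is witnessed to be existentially $k$-bounded. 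I would do this by induction on $\acts'$, peeling off its last action: because that action is independent of the suffix of $\actsb$ occurring after it, Lemma~\ref{lem:bounded-exec-last} guarantees that deleting it keeps the execution valid and $k$-\mbounded{}. Iterating down to $\acts' = \emptyw$ yields a $k$-\mbounded{} execution $\resche$-equivalent to $\acts$.

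For part~(2), I would first establish a bounded form of eventual reception: for every $s = \csconf{q}{w} \in \RS_k(S)$ with $w_{\ptp{pq}} = \msg{a} \cdot \word$, the head message can be consumed within the bound, i.e.\ $s \kTRANSR \kTRANSS{\PRECEIVE{pq}{a}}$. The idea is to take a run $s_0 \kTRANSS{\acts_0} s$, extend it by eventual reception with an execution that matches every still-pending send, apply existential $k$-boundedness to obtain a $k$-\mbounded{} reordering, and then excise the prefix $\acts_0$ together with the matching receives of its sends, leaving an execution starting at $s$. Lemma~\ref{lem:mbounded-pref-imp-kbounded} then converts the relevant match-bounded prefix into an honestly $k$-bounded execution from $s$ that performs $\PRECEIVE{pq}{a}$. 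With bounded eventual reception in hand, $k$-exhaustivity follows by contradiction: a send action that cannot fire within the bound forces a full channel ($|\word_{\ptp{pq}}| = k$), but bounded eventual reception empties one slot, re-enabling the send.

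The main obstacle will be the surgery in part~(2): removing the prefix $\acts_0$ and its matched receives from the rescheduled $k$-\mbounded{} witness while ensuring that the resulting suffix is still a valid {\sc fifo} execution and still $k$-\mbounded{}. Tracking which sends of $\acts_0$ are matched inside $\acts_0$ versus matched only later, and verifying the match-bound inequality for every prefix of the trimmed execution, is the delicate bookkeeping step; Lemma~\ref{lem:mbounded-pref-imp-kbounded} is the key that makes the final conversion to a genuine $k$-bounded path go through.
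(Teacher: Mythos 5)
Your proposal is correct and follows the paper's own route exactly: the theorem is proved by citing Lemma~\ref{lem:kmc-imp-existentially} for part~(1) and Lemma~\ref{lem:lisbon-exist-safe-imp-kexh} for part~(2), and your sketches of those two lemmas (the tail-stripping induction via Lemma~\ref{lem:exist-path-eqpeer-general} and Lemma~\ref{lem:bounded-exec-last} for part~(1), and the bounded eventual-reception argument with prefix excision and the match-bound-to-bound conversion for part~(2)) match the paper's arguments in both structure and the supporting lemmas used.
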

\begin{proof}
  Part (1) follows from Lemma~\ref{lem:kmc-imp-existentially}  and
  Part (2) follows from Lemmas~\ref{lem:lisbon-exist-safe-imp-kexh}.
\end{proof}

\thmexistentiallyiffkmcnotreduced*
\begin{proof}
  By Theorem~\ref{thm:existentially-iff-kmc}.
\end{proof}

\subsection{Proofs for Section~\ref{sub:classical-exist} (stable boundedness)}

\begin{restatable}{lemma}{lemstableconfmbounded}\label{lem:stable-conf-mbounded}
    Let $S$ be a system and $\acts \in \ASetC$ such that $s_0
  \TRANSS{\acts} s = \stablecsconf{q}$, then
  $\acts$ is $k$-\mbounded{} if and only if $\acts$ is
  $k$-bounded for $s_0$.
  \end{restatable}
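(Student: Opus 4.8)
The plan is to reduce both notions to the very same family of inequalities, one per prefix and per channel, by exploiting the fact that the target configuration $s$ is stable. First I would invoke Lemma~\ref{lem:valid-word} to observe that $\acts$ is a valid execution (so that the $k$-\mbounded{} condition is applicable), and record the elementary counting fact that, for any prefix $\actsb$ of $\acts$ and any channel $\ptp{pq} \in \CSet$, the content of channel $\ptp{pq}$ in the configuration reached after $\actsb$ from $s_0$ has length exactly $\lvert \esndproj{\actsb}{pq}\rvert - \lvert \ercvproj{\actsb}{pq}\rvert$. This is proved by a straightforward induction on $|\actsb|$ following the \FIFO\ transition rules of Definition~\ref{def:rs}: a send action extends the relevant channel by one, a receive consumes one from its head, and validity guarantees the count stays non-negative along the way. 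Consequently $\acts$ is $k$-bounded for $s_0$ if and only if
\[
\lvert \esndproj{\actsb}{pq}\rvert - \lvert \ercvproj{\actsb}{pq}\rvert \leq k \qquad \text{for every prefix } \actsb \text{ of } \acts \text{ and every } \ptp{pq} \in \CSet.
\]

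Next I would use the stability of $s$ to simplify the $\mathit{min}$ in Definition~\ref{def:exist-bounded}. Since $s = \stablecsconf{q}$ has all queues empty, the content of each channel $\ptp{pq}$ at the end of $\acts$ is zero, so by the counting fact above $\lvert \esndproj{\acts}{pq}\rvert = \lvert \ercvproj{\acts}{pq}\rvert$. For any prefix $\actsb$ we then have $\lvert \esndproj{\actsb}{pq}\rvert \leq \lvert \esndproj{\acts}{pq}\rvert = \lvert \ercvproj{\acts}{pq}\rvert$, whence
\[
\mathit{min}\{\lvert \esndproj{\actsb}{pq}\rvert, \lvert \ercvproj{\acts}{pq}\rvert\} = \lvert \esndproj{\actsb}{pq}\rvert.
\]
Substituting this into the definition of $k$-\mbounded{}, the defining inequality of $\acts$ being $k$-\mbounded{} becomes, for every prefix $\actsb$ and every channel $\ptp{pq}$, exactly $\lvert \esndproj{\actsb}{pq}\rvert - \lvert \ercvproj{\actsb}{pq}\rvert \leq k$ — which is literally the characterisation of $k$-boundedness obtained in the previous paragraph. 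Both implications of the equivalence then follow at once, since the two properties are seen to be the identical conjunction of inequalities.

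The argument has no genuine obstacle; the only point requiring care is the counting fact relating channel content to the send/receive projections, together with the observation that stability forces equality of total sends and total receives per channel so that the $\mathit{min}$ collapses to the send count. I would make sure to quantify uniformly over all prefixes (including the empty one and $\acts$ itself, both of which yield empty channels and are trivially within the bound) and over all channels, so that the two conditions are exhibited as the same set of constraints rather than as merely mutually implied ones.
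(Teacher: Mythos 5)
Your proof is correct and follows essentially the same route as the paper's: the key step in both is that stability forces $\lvert \esndproj{\acts}{pq}\rvert = \lvert \ercvproj{\acts}{pq}\rvert$, so the $\mathit{min}$ in Definition~\ref{def:exist-bounded} collapses to $\lvert \esndproj{\actsb}{pq}\rvert$ and the two conditions become the same family of inequalities. The only cosmetic difference is that the paper handles the ($\Leftarrow$) direction by citing the general Lemma~\ref{lem:kb-imp-kmb} (which needs a case split on the $\mathit{min}$), whereas you obtain both directions at once from the collapsed identity; both are fine.
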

\begin{proof}
  The ($\Leftarrow$) direction follows from
  Lemma~\ref{lem:kb-imp-kmb}.
  The ($\Rightarrow$) direction follows from the fact that for
any prefix $\actsb$ of $\acts$, we have
\[
  \lvert \esndproj{\actsb}{pq} \rvert \leq \lvert \ercvproj{\acts}{pq}
  \rvert
\]
since all messages sent along $\acts$ are received (all channels in
$s$ are empty).
Hence we have
$ \lvert \esndproj{\actsb}{pq} \rvert - \lvert \ercvproj{\actsb}{pq}
\rvert \leq k $ by Definition~\ref{def:exist-bounded}, i.e., $\acts$
is $k$-bounded.
\end{proof}

\thmkuskeimpclassical*
\begin{proof}
  We show both statements by contradiction.
  \begin{enumerate}
  \item Assume by contradiction that $S$ is existentially $k$-bounded, but
    \emph{not} existentially stable $k$-bounded.
        Then, there must be $\acts$ such that $s_0 \TRANSS{\acts} s =
    \stablecsconf{q}$ where $\acts$ has no $\resche$ equivalent
    execution which is $k$-bounded for $s_0$.
            However, since $S$ is  existentially $k$-bounded, there is $\actsb
    \resche \acts$ such that $\actsb$ is $k$-\mbounded{}.
        Since $s_0 \TRANSS{\actsb} \stablecsconf{q}$, by
    Lemma~\ref{lem:stable-conf-mbounded}, $\actsb$ is $k$-bounded, a
    contradiction.
      \item Assume by contradiction that $S$ is existentially stable
    $k$-bounded and has the stable property, but \emph{not} existentially
    $k$-bounded.
        Then there is $\acts$ such that
    $s_0 \TRANSS{\acts} s = \csconf{q}{w}$ (with $\vec{q}$ not empty)
    such that $\acts$ has no $\resche$ equivalent execution which is
    $k$-\mbounded{} for $s_0$.
        Since $S$ has the stable property, we have $s \TRANSS{\acts'}$ and there
    is $\actsb \resche \acts \concat \acts'$ such that $\actsb$ is
    $k$-bounded (since $S$ is $\exists$S-$k$-bounded).
        Then we reason as for the proof of
    Lemma~\ref{lem:kmc-imp-existentially} and progressively
    deconstruct $\acts'$ to show that there is a subsequence of
    $\actsb$ that is $k$-\mbounded{} and $\resche$-equivalent to
    $\acts$, a contradiction.
    \qedhere
  \end{enumerate}
  \end{proof}

\begin{lemma}\label{lem:dlf-reach-k-stable-stable}
  Let $S$ be $\exists$-$k$-bounded, then for all \emph{stable}
  configurations $s$ and $s'$ in $\RS(S)$ such that
  $s \TRANSS{\acts} s'$, there is $\actsb \resche \acts$ such that
  $\actsb$ is $k$-bounded (for $s$).
\end{lemma}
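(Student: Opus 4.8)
The plan is to lift the rescheduling guaranteed by $\exists$-$k$-boundedness from the whole run $s_0 \TRANSR s'$ down to the sub-run $\acts$ between the two stable configurations. First I would fix $\acts_0$ with $s_0 \TRANSS{\acts_0} s$ (which exists since $s \in \RS(S)$), so that $s_0 \TRANSS{\acts_0 \concat \acts} s'$. Applying the definition of $\exists$-$k$-bounded to this run yields a valid $\acts''$ with $\acts'' \resche \acts_0 \concat \acts$ and $\acts'' \in \bounded{\ASetC}{k}$ (i.e.\ $k$-\mbounded{}). Since $\resche$ implies equality of all peer projections, Lemma~\ref{lem:eqpeer-imp-same-state} gives $s_0 \TRANSS{\acts''} s'$, and as $s'$ is stable, Lemma~\ref{lem:stable-conf-mbounded} upgrades $k$-\mboundedness to genuine $k$-boundedness: every prefix $P$ of $\acts''$ executed from $s_0$ satisfies $\lvert\esndproj{P}{pq}\rvert - \lvert\ercvproj{P}{pq}\rvert \le k$ for each channel $\ptp{pq}$.

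Next I would extract from $\acts''$ the part corresponding to $\acts$. The key use of stability is that $s$ being stable forces $\acts_0$ to be \emph{complete}: every message it sends is received within it, so on each channel $\ptp{pq}$ the number $n_{\ptp{pq}}$ of $\acts_0$-sends equals the number of $\acts_0$-receives. Because $\onpeer{p}{\acts''} = \onpeer{p}{\acts_0} \concat \onpeer{p}{\acts}$ for every $\p$, I mark in $\acts''$ the first $\lvert\onpeer{p}{\acts_0}\rvert$ actions with subject $\p$ (for each $\p$); equivalently, the first $n_{\ptp{pq}}$ sends and the first $n_{\ptp{pq}}$ receives on each channel. Let $\actsb$ be the subsequence of \emph{unmarked} actions of $\acts''$; by construction $\eqpeer{\actsb}{\acts}$.

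The core is then a single counting comparison on prefixes of $\acts''$. For a prefix $P$, write $S(P) = \lvert\esndproj{P}{pq}\rvert$ and $R(P) = \lvert\ercvproj{P}{pq}\rvert$, so $S(P) \ge R(P)$ by validity of $\acts''$. The number of unmarked $\ptp{pq}$-sends (resp.\ receives) in $P$ is $\max(0, S(P) - n_{\ptp{pq}})$ (resp.\ $\max(0, R(P) - n_{\ptp{pq}})$), so the content of channel $\ptp{pq}$ reached by running $\actsb$ from $s$ up to $P$ equals $\max(0, S(P) - n_{\ptp{pq}}) - \max(0, R(P) - n_{\ptp{pq}})$. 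Monotonicity of $x \mapsto \max(0, x - n_{\ptp{pq}})$ makes this quantity nonnegative — which, together with $\eqpeer{\actsb}{\acts}$ and the fact that matched send/receive pairs keep their relative order in the subsequence, shows that $\actsb$ is firable from $s$ (the FIFO heads match) — and bounds it above by $S(P) - R(P) \le k$, i.e.\ $\actsb$ is $k$-bounded for $s$. Finally, $\actsb$ being a valid execution with $\eqpeer{\actsb}{\acts}$ gives $\actsb \resche \acts$, and Lemma~\ref{lem:eqpeer-imp-same-state} identifies its target as $s'$, completing the proof.

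The main obstacle is precisely this last step: arguing that the unmarked subsequence is \emph{simultaneously} executable from $s$ and within the bound. Both facts hinge on the completeness of $\acts_0$ (a consequence of $s$ being stable), which guarantees that a send and its matching receive are either both marked or both unmarked; this is exactly what makes the $\max(0,\,\cdot - n_{\ptp{pq}})$ bookkeeping collapse to the clean inequalities $0 \le \max(0, S - n_{\ptp{pq}}) - \max(0, R - n_{\ptp{pq}}) \le S - R$. Some care is also needed to confirm that the message \emph{values} (not merely the counts) line up when $\actsb$ fires its receives, but this follows because the FIFO matching of $\actsb$ is inherited from that of $\acts''$.
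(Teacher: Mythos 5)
Your proof is correct and follows essentially the same route as the paper's: both prepend a run $\acts_0$ from $s_0$ to $s$, invoke $\exists$-$k$-boundedness together with stability of $s'$ (Lemma~\ref{lem:stable-conf-mbounded}) to obtain a $k$-bounded reordering of the whole run, and then strip out the $\acts_0$-part using the fact that stability of $s$ forces every send of $\acts_0$ to be matched within $\acts_0$, so that matched pairs are deleted together. The only difference is presentational: the paper removes the matched send/receive pairs one at a time by induction, whereas you remove them all at once and justify validity, firability from $s$, and the bound with a single prefix-counting argument.
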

\begin{proof}
  Since $s$ is stable and $S$ is $\exists$-$k$-bounded, there is
  $\acts_0$ $k$-bounded for $s_0$ such that $s_0 \kTRANSS{\acts_0} s$,
  and we have $\hat\actsb$ $k$-bounded such that
  $\hat\actsb \resche \acts_0 \concat \acts$.
    We show that we inductively remove the actions of $\acts_0$ from
  $\hat\acts$ while preserving its $k$-boundedness.
    Since $s$ and $s'$ are stable, we have
  $\acts_0 = \PSEND{pq}{a} \concat \acts'_1 \concat \PRECEIVE{pq}{a}
  \concat \acts'_2$, with $\ercvproj{\acts'_1}{pq} = \emptyw$.
    Hence, we can remove the first respective occurrence of
  $\PSEND{pq}{a}$ and $\PRECEIVE{pq}{a}$ from $\hat\actsb$ without
  affecting its $k$-boundedness: ($i$) the new execution is still valid
  since we remove a send and its receive and ($ii$) the bound is
  preserved since we remove a send and a receive simultaneously.
    We repeat the procedure until all the elements of $\acts_0$ have
  been removed and we obtain the required result.
\end{proof}

\begin{restatable}{lemma}{lemdlfreachkstable}\label{lem:dlf-reach-k-stable}
  Let $S$ be an existentially stable $k$-bounded system with the
  stable property, then for all $s \in \RS_k(S)$, there is $t$ stable
  such that $s \kTRANSR t$.
\end{restatable}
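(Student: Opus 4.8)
The plan is to reduce the claim to an execution that reaches a \emph{stable} configuration, apply the hypothesis to reschedule it within the bound, and then strip off the prefix that leads to $s$.

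First I would build a single execution from $s_0$ to a stable configuration that passes through $s$. Since $s \in \RS_k(S)$ there is a $k$-bounded execution $\acts_0$ with $s_0 \kTRANSS{\acts_0} s$; since $S$ has the stable property (Definition~\ref{def:deadlockfree}) there are a stable $t = \stablecsconf{q}$ and an execution $\acts_1$ with $s \TRANSS{\acts_1} t$. Hence $s_0 \TRANSS{\acts_0 \concat \acts_1} t$ with $t$ stable, so $\acts_0 \concat \acts_1$ is exactly the kind of execution quantified over in Definition~\ref{def:classical-exist-bounded}. As $S$ is $\exists$S-$k$-bounded, there is $\actsb$ with $\actsb \resche \acts_0 \concat \acts_1$ and $s_0 \kTRANSS{\actsb} t'$ for some $t'$; by Lemma~\ref{lem:eqpeer-imp-same-state} (causal equivalence from a common, here initial, configuration yields the same target state) we get $t' = t$, so $s_0 \kTRANSS{\actsb} t$ is a $k$-bounded execution.

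Next I would subtract $\acts_0$ from $\actsb$. Because $\actsb \resche \acts_0 \concat \acts_1$, for every participant $\p$ we have $\proj{\actsb}{p} = \proj{\acts_0}{p} \concat \proj{\acts_1}{p}$; deleting from $\actsb$, for each $\p$, the first $\lvert \proj{\acts_0}{p}\rvert$ actions performed by $\p$ yields a subsequence $\actsb'$ with $\proj{\actsb'}{p} = \proj{\acts_1}{p}$, i.e.\ $\eqpeer{\actsb'}{\acts_1}$. I would then show $s \kTRANSS{\actsb'} t$: the deleted $\acts_0$-sends that are matched \emph{inside} $\acts_0$ are removed together with their matching receptions, whereas the unmatched $\acts_0$-sends are precisely the messages pending in the queues of $s$, so the receptions surviving in $\actsb'$ consume exactly those pre-loaded messages, in FIFO order. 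Validity of $\actsb'$ from $s$ together with $\eqpeer{\actsb'}{\acts_1}$, $s \TRANSS{\acts_1} t$ and Lemma~\ref{lem:eqpeer-imp-same-state} then pin down the endpoint as $t$.

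The main obstacle is showing that $\actsb'$ is $k$-bounded \emph{from $s$}. Here I would set up a channel-by-channel correspondence between the queue contents along $\actsb'$ (started at $s$) and along the $k$-bounded execution $\actsb$ (started at $s_0$): removing a send/receive pair matched inside $\acts_0$ leaves the bound intact, exactly as in the proof of Lemma~\ref{lem:dlf-reach-k-stable-stable}, and each deleted unmatched $\acts_0$-send is compensated by the corresponding message already sitting in the queue of $s$, so at every prefix no channel holds more messages than it does at the matching point of $\actsb$, which is $\leq k$. Equivalently, since $t$ is stable one may pass through Lemma~\ref{lem:stable-conf-mbounded} and reason with match-boundedness, deleting actions by repeated use of Lemma~\ref{lem:bounded-exec-last}. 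This yields the $k$-bounded execution $s \kTRANSS{\actsb'} t$ with $t$ stable, which is the claim; I expect the queue-correspondence to be the only delicate point, as it must track the pending messages of $s$ against the in-flight messages of $\actsb$ on each channel simultaneously.
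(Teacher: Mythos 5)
Your proposal is correct in outline but takes a genuinely different route from the paper's. The paper proves the lemma by induction on the length of a $k$-bounded execution $t_0 \kTRANSS{\acts} s$ from a stable configuration (here $t_0 = s_0$): the base case invokes Lemma~\ref{lem:dlf-reach-k-stable-stable} (stable-to-stable rescheduling), and the inductive step extends the previously constructed $k$-bounded path $s \kTRANSS{\acts'} t$ by one action $\action$, commuting $\action$ past $\acts'$ when $\subj{\action} \notin \acts'$ or absorbing it into $\acts'$ otherwise. You instead perform a single global rescheduling of $\acts_0 \concat \acts_1$ via Definition~\ref{def:classical-exist-bounded} and then subtract the prefix $\acts_0$, i.e.\ you extend the subtraction argument of Lemma~\ref{lem:dlf-reach-k-stable-stable} from stable $s$ to arbitrary $s \in \RS_k(S)$. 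Your route is more direct and dispenses with the induction (and with Theorem~\ref{thm:kuske-imp-classical}, since $\acts_0\concat\acts_1$ already ends in a stable configuration), at the price of having to control the queues of the subtracted execution globally; the paper pays for the locality of its induction with a delicate case analysis at every step.

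One step of yours needs repair. The invariant ``at every prefix no channel holds more messages than it does at the matching point of $\actsb$'' is false: if $\actsb$ schedules all send/receive pairs on $\ptp{pq}$ that are matched inside $\acts_0$ before the unmatched $\acts_0$-sends, then at that prefix the channel is empty along $\actsb$ but already holds all $U$ pre-loaded messages of $s$ along $\actsb'$. The bound $\leq k$ nevertheless holds, for two reasons you should make explicit: (i) as long as no $\acts_1$-send on $\ptp{pq}$ has occurred in the prefix, the content of $\ptp{pq}$ along $\actsb'$ is at most $U$, and $U \leq k$ because $s \in \RS_k(S)$; and (ii) once some $\acts_1$-send on $\ptp{pq}$ has occurred, the sender's local order (preserved by $\resche$) forces every $\acts_0$-send on $\ptp{pq}$ to have occurred already, and then a direct count shows the content along $\actsb'$ exceeds the content along $\actsb$ at the matching prefix by exactly the number of $\acts_0$-receives on $\ptp{pq}$ still missing from that prefix, a non-positive correction. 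With that replacement invariant the rest of your argument (FIFO consumption of the pre-loaded messages, endpoint identification via Lemma~\ref{lem:eqpeer-imp-same-state}) goes through.
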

\begin{proof}
  First we observe that for any stable $t$, we have $t \in \RS_k(S)$
  since $S$ is $\exists$S-$k$-bounded, by
  Lemma~\ref{lem:stable-conf-mbounded}.
      Assume $t_0$ is stable and $t_0 \kTRANSS{\acts} s$.
    We show the result by induction on the length of $\acts$.
  
  If $\acts = \action$, then we have the result since $t_0$ is stable
  and there is stable $t'$ such that $t_0 \kTRANSS{\action} s \TRANSR
  t'$ since $S$ has the stable property.
    Finally, by Lemma~\ref{lem:dlf-reach-k-stable-stable}, we have
  $s \kTRANSR t'$.

    Assume the result holds for $\acts$ and let us show that it holds
  for $\acts \concat \action$.
    Pose $t_0 \kTRANSS{\acts} s \kTRANSS{\action} s'$.
    By induction hypothesis, we have that $s \kTRANSS{\acts'} t$ for
  some $t$ stable and $\acts' \in \ASetC$.
    We  have to show that $s' \kTRANSR t'$ with $t'$ stable. There are
  two cases:
  \begin{itemize}
  \item If $\subj{\action} \notin \acts'$, then we have
    $s' \TRANSS{\acts} t'$ and $t \kTRANSS{\action} t'$, and we only
    have to show that $s' \kTRANSS{\acts} t'$, which follows trivially
    from the fact that  $\subj{\action} \notin \acts'$ (i.e., there is
    no other send on the channel in $\acts'$).
    \item If $\subj{\action} \in \acts'$, then there are two sub-cases
      depending on the direction of $\action$.
      \begin{itemize}
      \item If $\action$ is a receive action, then the result follows
        trivially.
      \item If $\action$ is a send action. Assume w.l.o.g. that
        $\acts' = \acts'_1 \concat \action \concat  \acts'_2$ with
        $\subj{\action} \notin \acts'_1$, then we have $s'
        \kTRANSS{\acts'_1} \kTRANSS{\acts'_2} t' = t$, and we have the
        required result.
      \end{itemize}
  \end{itemize}
  We have shown that either there is stable $t$  such that
  $t \kTRANSS{\action} t'$, hence we are back to the base case, or
  $t=t'$, in which case the result follows trivially.
\end{proof}

\thmexistentiallydlfiffkexh*
\begin{proof}
  We first note that by Theorem~\ref{thm:kuske-imp-classical} we have
  that $S$ is both $\exists$S-$k$-bounded and $\exists$-$k$-bounded
  since it has the stable property.
    Assume by contradiction, that $S$ is not $k$-exhaustive.
   Then, there
  is $s$ such that $s_0 \kTRANSS{\acts} s = \csconf{q}{w}$ and $\p$
  such that $(q_\p, \PSEND{pq}{a}, q'_\p) \in \delta_\p$ and
  $\neg( s \kTRANSR \kTRANSS{\PSEND{pq}{a}})$.
    By Lemma~\ref{lem:dlf-reach-k-stable}, there is stable $t$  such that
  $s \kTRANSS{\actsb} t$.
    Then either $\p \in \actsb$ and therefore $\PSEND{pq}{a}$ can be
  fired in $\actsb$ and we have a
  contradiction, or $\p \notin \actsb$ and $t \kTRANSS{\PSEND{pq}{a}}$,
  i.e., another contradiction.
\end{proof}

\subsection{Proofs for Section~\ref{sec:cav-sync} (synchronisability)}
\begin{restatable}{lemma}{lemksynkexecexist}\label{lem:ksynk-exec-exist}
  Let $\acts$ be a valid execution.
  If $\acts$ is a $k$-exchange then it is a $k$-\mbounded{} execution.
\end{restatable}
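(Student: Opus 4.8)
The plan is to fix an arbitrary channel $\ptp{pq} \in \CSet$ and to bound, uniformly over all prefixes $\actsb$ of $\acts$, the matched-pending quantity
\[
\mu(\actsb) \mmdef \min\{\lvert \esndproj{\actsb}{pq}\rvert,\ \lvert \ercvproj{\acts}{pq}\rvert\} - \lvert \ercvproj{\actsb}{pq}\rvert .
\]
First I would record two monotonicity facts that exploit condition~(1), namely that each phase lies in $\ASetSendC \concat \ASetRcvC$: along the send-block of any phase $\mu$ is non-decreasing (only $\lvert\esndproj{\actsb}{pq}\rvert$ grows, so the $\min$ cannot drop while $\lvert\ercvproj{\actsb}{pq}\rvert$ is fixed), and along the receive-block $\mu$ is non-increasing (only $\lvert\ercvproj{\actsb}{pq}\rvert$ grows). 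Since the $\ptp{pq}$-sends of a phase precede its $\ptp{pq}$-receives, $\mu$ restricted to a single phase peaks exactly at that phase's send/receive boundary, so it suffices to bound $\mu$ at the boundary of each phase $\acts_m$. Writing $s_i \mmdef \lvert\esndproj{\acts_i}{pq}\rvert$ and $r_i \mmdef \lvert\ercvproj{\acts_i}{pq}\rvert$, at that boundary one has the two estimates $\mu \le \sum_{i<m}(s_i - r_i) + s_m$ and $\mu \le \sum_{i\ge m} r_i$.

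Next I would let $i^*$ be the first phase with $\esndproj{\acts_{i^*}}{pq} \neq \ercvproj{\acts_{i^*}}{pq}$ (setting $i^*=\infty$ if none exists). For $i < i^*$ there is no mismatch, so $\esndproj{\acts_i}{pq} = \ercvproj{\acts_i}{pq}$, whence $s_i = r_i$, and the length bound $s_i + r_i \le \lvert\acts_i\rvert \le 2k$ of condition~(1) forces $s_i \le k$. The argument then splits according to the phase $m$ containing the end of $\actsb$: for $m < i^*$ the telescoping term $\sum_{i<m}(s_i-r_i)$ vanishes, so the first estimate gives $\mu \le s_m \le k$; for $m = i^*$ the first estimate gives $\mu \le s_{i^*}$ while condition~(2), applied to the mismatching phase $i^*$, yields $\ercvproj{\acts_j}{pq} = \emptyw$ for all $j > i^*$, so the second estimate gives $\mu \le \sum_{i\ge i^*} r_i = r_{i^*}$, whence $\mu \le \min\{s_{i^*},r_{i^*}\} \le (s_{i^*}+r_{i^*})/2 \le k$; and for $m > i^*$ the same clause of condition~(2) makes $\sum_{i\ge m} r_i = 0$, so the second estimate gives $\mu \le 0$, while validity forces $\mu \ge 0$, hence $\mu = 0$. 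In every regime $\mu \le k$, which is the required inequality.

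The main obstacle, and the only genuinely delicate point, is the careful use of condition~(2), which is phrased in terms of the sent/received \emph{words} rather than their lengths. I must use that the absence of a mismatch implies equality of \emph{counts} $s_i = r_i$ (the direction needed for the first two regimes), and, symmetrically, that the mismatch at $i^*$ — which may occur even when $s_{i^*}=r_{i^*}$ — still triggers the ``no later receptions'' clause exploited in the last two regimes. I would also pin down exactly where the validity hypothesis enters: it guarantees $\lvert\ercvproj{\actsb}{pq}\rvert \le \lvert\esndproj{\actsb}{pq}\rvert$, hence $\mu \ge 0$ (used to close regime $m>i^*$), and it justifies that the $\min$ indeed counts matched messages in FIFO order, so that the peak-at-boundary reduction concerns genuinely matched sends. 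Everything else reduces to the telescoping bookkeeping and the per-phase length bound $\lvert\acts_i\rvert \le 2k$.
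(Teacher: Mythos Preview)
Your proof is correct and follows the same overall skeleton as the paper's: reduce to the send/receive boundary of each phase via monotonicity, then case-split on the position of the first mismatching phase $i^*$ and use condition~(2) to kill all later receptions.

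The organisation of the case analysis differs, and your version is arguably tighter. The paper splits each phase into a send part and a receive part and asserts that the send part has length at most $k$; from there it handles the ``no earlier mismatch'' case by a further split on which argument of the $\min$ wins. Your three-regime split ($m<i^*$, $m=i^*$, $m>i^*$) is more direct, and the averaging step $\min\{s_{i^*},r_{i^*}\}\le(s_{i^*}+r_{i^*})/2\le k$ for the critical phase is a genuine improvement: it derives the bound straight from $\lvert\acts_{i^*}\rvert\le 2k$ without needing the send block alone to have length $\le k$, which is not immediate from the definition. The paper's contradiction argument in its second subcase also silently assumes $i<i^*$, whereas your regime~2 handles $m=i^*$ explicitly. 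So: same approach, but your bookkeeping is cleaner and closes a small gap present in the paper's write-up.
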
 
\begin{proof}
  Since $\acts$ is a $k$-exchange, it must be of the form
  \[\acts =
    \acts_1 \concat \actsb_1 \cdots \acts_n \concat \actsb_n \quad
    \text{where } \forall 1 \leq i \leq n \qst \acts_i \in \ASetSendC
    \land \actsb_i \in \ASetRcvC
    \land \lvert \acts_i \rvert \leq k
  \]
  We must show that for every prefix $\hat\acts$ of $\acts$ and every
  $\ptp{pq} \in \CSet$, the following holds:
  \[
  \mathit{min}\{
  \lvert \esndproj{\hat\acts}{pq} \rvert 
  , 
  \lvert \ercvproj{\acts}{pq} \rvert 
  \}
  -
  \lvert \ercvproj{\hat\acts}{pq} \rvert 
  \leq k
  \]
  We first observe that, for all $1\leq i \leq n$, if
  $\hat\acts = \acts_1 \concat \actsb_1 \cdots \acts_i$ is
  $k$-\mbounded{}, then so is $\hat\acts \concat \actsb_i$ (since
  $\actsb_i \in \ASetRcvC$), hence we only show the result for the
  former.
    Take $\ptp{pq} \in \CSet$ and pose
  $\hat\acts = \acts_1 \concat \actsb_1 \cdots \acts_i$ (with
  $1\leq i \leq n$). There are two cases:
    \begin{itemize}
  \item If for all $1 \leq j < i \qst  \esndproj{\acts_j}{pq} =
    \ercvproj{\actsb_j}{pq}$, then all messages sent on channel
    $\ptp{pq}$ are received within each exchange.
                \begin{itemize}
    \item Case $\lvert \esndproj{\hat\acts}{pq} \rvert \leq \lvert
      \ercvproj{\acts}{pq} \rvert$.
      We have
      \[
        \begin{array}{rcl}
          \lvert \esndproj{\hat\acts}{pq} \rvert 
          & = & \lvert \esndproj{\acts_1 \cdots \acts_i}{pq} \rvert 
          \\
          & = & \lvert \ercvproj{\actsb_1 \cdots \actsb_{i-1}}{pq} \rvert
                +  \lvert \esndproj{\acts_i}{pq} \rvert
          \\
          & = & \lvert \ercvproj{\hat\acts}{pq} \rvert
                +  \lvert \esndproj{\acts_i}{pq} \rvert
        \end{array}
      \]
      Hence,
      $\lvert \esndproj{\hat\acts}{pq} \rvert - \lvert
      \ercvproj{\hat\acts}{pq} \rvert = \lvert \esndproj{\acts_i}{pq}
      \rvert \leq k$, and we have the required result.
          \item Case $\lvert \esndproj{\hat\acts}{pq} \rvert > \lvert
      \ercvproj{\acts}{pq} \rvert$.
            Then, there is $i \leq m \leq n$ such that
      $\esndproj{\acts_m}{pq} \neq \ercvproj{\actsb_m}{pq}$ and
                        we have
          \[
            \begin{array}{rcl}
              \ercvproj{\acts}{pq} & = &  \lvert \ercvproj{\actsb_1 \cdots
                                             \actsb_{m}}{pq} \rvert 
              \\
                                       & \geq &      \lvert \ercvproj{\actsb_1 \cdots
                                                \actsb_{i}}{pq} \rvert
              \\            
                                   & = &   \lvert
                                         \esndproj{\acts_1 \cdots
                                         \acts_i}{pq} \rvert   
                                         =  \lvert \esndproj{\hat\acts}{pq} \rvert
            \end{array}
    \] 
    Hence, we obtain $ \ercvproj{\acts}{pq} \geq  \lvert
    \esndproj{\hat\acts}{pq} \rvert$, a contradiction with this case.
        \end{itemize}

  \item If there is $j < i$ such that
    $\esndproj{\acts_j}{pq} \neq \ercvproj{\actsb_j}{pq}$ (take
    smallest such $j$), then for all
    $j < m \leq n \qst \ercvproj{\actsb_m}{pq} = \emptyw$, i.e., all
    messages sent after $j$ are not matched.
            Hence, we have
    \begin{equation}\label{eq:lem-synk-allrcv}
      \lvert \ercvproj{\hat\acts}{pq} \rvert
      =
      \lvert \ercvproj{\actsb_1 \cdots \actsb_j}{pq} \rvert
      =
      \lvert \ercvproj{\acts}{pq} \rvert
    \end{equation}
    Thus, we have
    \[
      \begin{array}{rcl}
        \esndproj{\hat\acts}{pq} & = &  \lvert \ercvproj{\actsb_1 \cdots
                                       \actsb_{j-1}}{pq} \rvert
                                       + 
                                       \lvert \esndproj{\acts_j}{pq}
                                       \rvert    
                                       +
                                       \lvert \esndproj{\acts_{j+1} \cdots \acts_i}{pq}
                                       \rvert    
        \\
                                 & \geq &      \lvert \ercvproj{\actsb_1 \cdots
                                          \actsb_{j-1}}{pq} \rvert
                                          + 
                                          \lvert \ercvproj{\actsb_j}{pq}
                                          \rvert 
                                          +
                                          \lvert \esndproj{\acts_{j+1} \cdots \acts_i}{pq}
                                          \rvert    
        \\            
                                 & \geq &   \lvert \ercvproj{\actsb_1 \cdots \actsb_j}{pq} \rvert
                                          =
                                          \lvert \ercvproj{\acts}{pq} \rvert     
      \end{array}
    \] 
    Hence, we only have to show that
    $ \lvert \ercvproj{\acts}{pq} \rvert - \lvert
    \ercvproj{\hat\acts}{pq} \rvert \leq k$, which holds
    by~\eqref{eq:lem-synk-allrcv}.    
    \qedhere
  \end{itemize}
\end{proof}

\thmksynkrelkmc*
\begin{proof}
  Item (1) follows from Lemma~\ref{lem:ksynk-exec-exist}: for any execution
  of $S$, there is an equivalent $k$-exchange, which is a
  $k$-\mbounded{} execution.
    Item (2) follows from Item (1) and Item (2) of 
  Theorem~\ref{thm:kuske-imp-classical}.  
\end{proof}

\bibliographystyleout{abbrv}
\bibliographyout{kmc}

\end{document}
